\documentclass[EES, biber]{nowfnt}

\usepackage[utf8]{inputenc}

\title{Power Grid Topology Control}

\renewcommand{\maketitle}{}

\subtitle{~}

\maintitleauthorlist{
Tong Han \\
Nanyang Technological University \\
than@ncepu.edu.cn
\and 
Yan Xu \\
Nanyang Technological University \\
xuyan@ntu.edu.sg
\and
David J. Hill \\
Monash University \\
University of Sydney \\
DavidJ.Hill@monash.edu

}

\usepackage{multirow}
\usepackage{mathtools}

\usepackage{tabularx}

\usepackage{amssymb}
\usepackage{makecell}

\usepackage{booktabs}
\usepackage{amsmath}
\usepackage{amsthm}

\DeclareMathOperator*{\argmax}{arg\,max}
\DeclareMathOperator*{\argmin}{arg\,min}
\DeclareMathOperator*{\arginf}{arg\,inf}

\DeclareMathOperator*{\diag}{diag}

\usepackage{amsfonts}
\usepackage{bm}
\usepackage{bbm}
\DeclareMathOperator{\Tr}{Tr}

\usepackage{tikz}
\usepackage{pgf}
\usetikzlibrary{arrows,automata}
\usetikzlibrary{shapes}
\tikzstyle{data44}=[rectangle split,rectangle split parts=2,draw,text centered]

\newtheorem{assumption}{Assumption}
\newtheorem{problem}{Problem} 
\newtheorem{condition}{Condition} 

\newtheorem{criterion}{Criterion}

\usepackage[linesnumbered,ruled,vlined]{algorithm2e}
\SetKwInput{KwInput}{Input}                
\SetKwInput{KwOutput}{Output}   
\let\oldnl\nl
\newcommand{\nonl}{\renewcommand{\nl}{\let\nl\oldnl}}

\usepackage{etoolbox}

\makeatletter
\patchcmd{\@algocf@start}
  {-1.5em}
  {0pt}
  {}{}
\makeatother

\newcommand{\HR}{^{\circ \frac{1}{2}}}
\newcommand{\T}{^{\scriptscriptstyle\rm T}}
\newcommand{\B}{^{\scriptscriptstyle\rm min}}  
\newcommand{\U}{^{\scriptscriptstyle\rm max}}
\newcommand{\D}{^{\diamond}}

\newcommand{\A}{^{\scriptscriptstyle\rm A}}
\newcommand{\PW}{^{\circ 2}}

\usepackage{stmaryrd}
\usepackage{subcaption}

\usepackage{acro}
\acsetup{
    first-style = short,
    make-links = false
    }

\DeclareAcronym{fnn}{
  short=FNN,
  long=Feedforward Neural Network,
  short-plural=FNNs,
  long-plural=Feedforward Neural Networks,
}

\DeclareAcronym{mlf}{
  short=MLF,
  long=Multiple Lyapunov Function,
  short-plural=MLFs,
  long-plural=Multiple Lyapunov Functions,
}

\DeclareAcronym{mmg}{
  short=MMG,
  long=Multi-microgrid,
  short-plural=MMGs,
  long-plural=Multi-microgrids,
}

\DeclareAcronym{pcc}{
  short=PCC,
  long=Point of Common Coupling,
  short-plural=PCCs,
  long-plural=Points of Common Coupling,
}

\DeclareAcronym{ssa}{
  short=SSA,
  long=Stability and Stabilizability Assessment,
  short-plural=SSAs,
  long-plural=Stability and Stabilizability Assessments,
}

\DeclareAcronym{roa}{
  short=ROA,
  long=Region of Attraction,
  short-plural=ROAs,
  long-plural=Regions of Attraction,
}

\DeclareAcronym{smt}{
  short=SMT,
  long=Satisfiability Modulo Theories,
  short-plural=SMTs,
  long-plural=Satisfiability Modulo Theories,
}

\DeclareAcronym{ici}{
  short=ICI,
  long=Intentional Controlled Islanding,
  short-plural=ICIs,
  long-plural=Intentional Controlled Islandings,
}

\DeclareAcronym{dnr}{
  short=DNR,
  long=Distribution Network Reconfiguration,
  short-plural=DNRs,
  long-plural=Distribution Network Reconfigurations,
}

\DeclareAcronym{ots}{
  short=OTS,
  long=Optimal Transmission Switching,
  short-plural=OTSs,
  long-plural=Optimal Transmission Switching Problems,
}

\DeclareAcronym{minlp}{
  short=MINLP,
  long=Mixed-Integer Nonlinear Programming,
  short-plural=MINLPs,
  long-plural=Mixed-Integer Nonlinear Programming Problems,
}

\DeclareAcronym{avc}{
  short=AVC,
  long=Auxiliary Control Variable,
  short-plural=AVCs,
  long-plural=Auxiliary Control Variables,
}

\DeclareAcronym{svc}{
  short=SVC,
  long=Static VAR Compensator,
  short-plural=SVCs,
  long-plural=Static VAR Compensators,
}

\DeclareAcronym{statcom}{
  short=STATCOM,
  long=Static Synchronous Compensator,
  short-plural=STATCOMs,
  long-plural=Static Synchronous Compensators,
}

\DeclareAcronym{tcsc}{
  short=TCSC,
  long=Thyristor-Controlled Series Compensator,
  short-plural=TCSCs,
  long-plural=Thyristor-Controlled Series Compensators,
}

\DeclareAcronym{dvc}{
  short=DVC,
  long=Dynamic VAR Compensator,
  short-plural=DVCs,
  long-plural=Dynamic VAR Compensators,
}

\DeclareAcronym{ess}{
  short=ESS,
  long=Energy Storage System,
  short-plural=ESSs,
  long-plural=Energy Storage Systems,
}

\DeclareAcronym{sg}{
  short=SG,
  long=Synchronous Generator,
  short-plural=SGs,
  long-plural=Synchronous Generators,
}

\DeclareAcronym{cig}{
  short=CIG,
  long=Converter-Interfaced Generator,
  short-plural=CIGs,
  long-plural=Converter-Interfaced Generators,
}

\DeclareAcronym{acv}{
  short=ACV,
  long=Auxiliary Control Variable,
  short-plural=ACVs,
  long-plural=Auxiliary Control Variables,
}

\DeclareAcronym{oltc}{
  short=OLTC,
  long=On-Load Tap Changer,
  short-plural=OLTCs,
  long-plural=On-Load Tap Changers,
}

\DeclareAcronym{misocp}{
  short=MISOCP,
  long=Mixed-Integer Second-Order Cone Programming,
  short-plural=MISOCPs,
  long-plural=Mixed-Integer Second-Order Cone Programming Problems,
}

\DeclareAcronym{nlp}{
  short=NLP,
  long=Nonlinear Programming,
  short-plural=NLPs,
  long-plural=Nonlinear Programming Problems,
}

\DeclareAcronym{vre}{
  short=VRE,
  long=Variable Renewable Energy,
  short-plural=VREs,
  long-plural=Variable Renewable Energy Sources,
}

\DeclareAcronym{kkt}{
  short=KKT,
  long=Karush-Kuhn-Tucker,
  short-plural=KKT conditions,
  long-plural=Karush-Kuhn-Tucker Conditions,
}

\DeclareAcronym{ccg}{
  short=CCG,
  long=Column-and-Constraint Generation,
  short-plural=CCGs,
  long-plural=Column-and-Constraint Generation Methods,
}

\DeclareAcronym{ggnn}{
  short=GGNN,
  long=Gated Graph Neural Network,
  short-plural=GGNNs,
  long-plural=Gated Graph Neural Networks,
}

\DeclareAcronym{opf}{
  short=OPF,
  long=Optimal Power Flow,
  short-plural=OPFs,
  long-plural=Optimal Power Flow Problems,
}

\DeclareAcronym{pjm}{
  short=PJM,
  long=Pennsylvania-New Jersey-Maryland,
  short-plural=PJM systems,
  long-plural=PJM Interconnections,
}

\DeclareAcronym{ems}{
  short=EMS,
  long=Energy Management System,
  short-plural=EMSs,
  long-plural=Energy Management Systems,
}

\DeclareAcronym{milp}{
  short=MILP,
  long=Mixed-Integer Linear Programming,
  short-plural=MILPs,
  long-plural=Mixed-Integer Linear Programming Problems,
}

\DeclareAcronym{mip}{
  short=MIP,
  long=Mixed-Integer Programming,
  short-plural=MIPs,
  long-plural=Mixed-Integer Programming Problems,
}

\DeclareAcronym{tso}{
  short=TSO,
  long=Transmission System Operator,
  short-plural=TSOs,
  long-plural=Transmission System Operators,
}

\usepackage{enumitem}
\usepackage{threeparttable}
\usepackage{longtable}  

\usepackage{float} 

\usepackage{mwe}

\author[1]{Han, Tong}
\author[1]{Xu, Yan}
\author[2,3]{Hill, David J.}

\affil[1]{Center for Power Engineering, School of Electrical and Electronic Engineering, Nanyang Technological University, Singapore 639798, Singapore} 
\affil[2]{Department of Electrical and Computer Systems Engineering, Monash University, Clayton, VIC 3800, Australia}
\affil[3]{School of Electrical and Computer Engineering, University of Sydney, Sydney, NSW 2006, Australia}

\articledatabox{\nowfntstandardcitation}

\begin{document}

\makeabstracttitle

\begin{abstract}
    Power grids are facing major challenges from growing renewable integration and worsening climate impacts. While flexibility on both the demand and generation sides has been widely explored to address these challenges, network-side flexibility, especially in network topology, remains highly underutilized. Advances in communication, power electronics, and circuit breakers have made network topology increasingly controllable. However, leveraging this topological flexibility poses substantial challenges, primarily due to the inherent non-convexity and hybrid dynamics in associated optimization and control problems. 
    This monograph surveys the development of power grid topology control in both early and recent years. 
    It begins by discussing the fundamental topological constraints involved in topology control problems. Subsequently, it introduces steady-state topology control for transmission and distribution networks separately, covering fundamentals, a state-of-the-art review, and representative recent advances. Additionally, the network topology transition problem, which addresses the implementation of optimal topology solutions and has garnered increasing attention in recent years, is further modeled and analyzed. Beyond utilizing the flexibility of steady-state network topology, controlling network topology during transients can also contribute to system stabilization. Traditional approaches, such as intentional controlled islanding for transmission networks, as well as recently
    developed topology control methods for microgrid stabilization, exemplify this concept. Finally, a summary of this monograph is provided. 
\end{abstract}

\tableofcontents   

\printacronyms

\graphicspath{{chapter_1/Figs/}}
\chapter{Introduction}\label{chapter-1}

Network topology, determining the physical interconnections among electrical components, serves as the foundational structure of power grids. 
To address the challenges posed by growing renewable integration and worsening climate impacts, flexibility on both the demand and generation sides of the power grid has been extensively explored. 
However, network-side flexibility, particularly in terms of network topology, remains significantly underutilized and lacks comprehensive understanding. 
From a systemic perspective, this monograph offers an in-depth and holistic introduction to topology control of power grids.

\section{Overview of Power Grid Topology Control}

Dispatch and control of power grids can be divided into three categories according to the nature of the adopted flexibility resources: the generation-based, load-based, and network-based, as shown in Fig. \ref{fig-b1-2}. Different from the former two categories which utilize the flexibility of generation and load, the network-based approach mainly leverages the flexibility of network structure, including network topology and branch impedances. 
The fundamental mechanism of network topology control lies in the ability to alter power flow and system dynamics by switching transmission or feeder lines. By reconfiguring the network topology, it is possible to influence the operational state of the power grid and thereby enhance specific performance metrics. Compared to other categories of power grid control, topology control offers two distinctive advantages. 

\begin{figure}[h]
	\centering
	\includegraphics[width=1\linewidth]{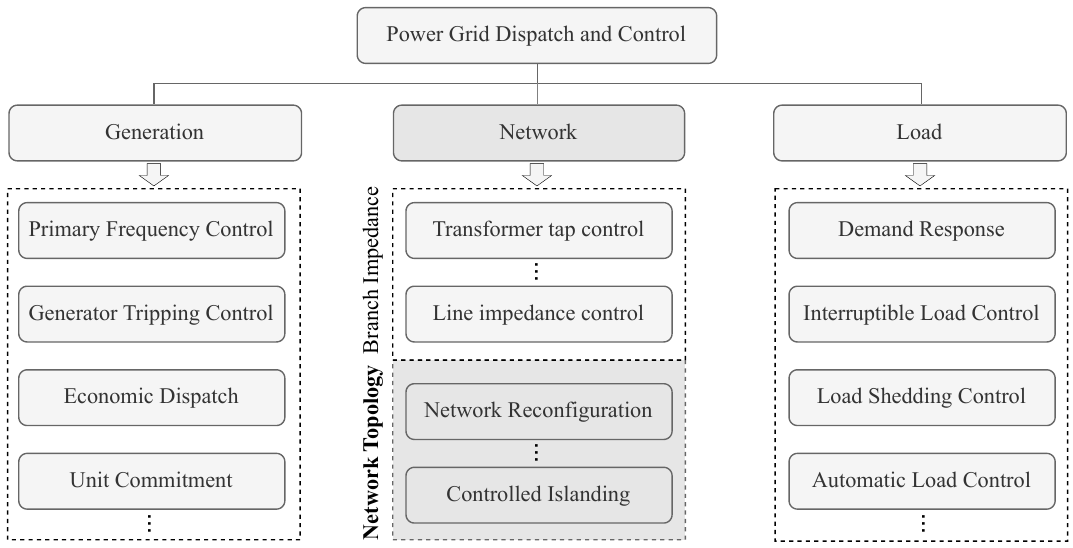}
	\caption{Classification of power grid dispatch and control.}
	\label{fig-b1-2}
\end{figure}

First, network topology, as the foundational structure of power grids, plays a critical role in system operations due to its extensive and global influence on system performance \citep{4-999-173, 4-b1-60, 4-1621, 4-b1-61}. Since transmission and feeder lines span the entire network, changes in topology can significantly affect power flow \citep{4-1484}, stability \citep{4-664}, and operational efficiency \citep{4-b1-62}. In contrast, control actions based on generation or load are inherently more localized. Moreover, control of branch impedances, such as that achieved through devices like Thyristor-Controlled Series Compensators (\ac{tcsc}s) or On-Load Tap Changer (\ac{oltc}) transformers, is limited to a small subset of branches with controllable impedances or transformers.

Second, from a cost perspective, adjusting generation or load often incurs substantial operational costs or requires ancillary service compensation \citep{4-b1-63, 4-b1-64}. Similarly, branch impedances regulation typically necessitates equipment upgrades. In contrast, topology control relies exclusively on existing infrastructure, namely, remotely controllable circuit breakers that are already integral components of modern power systems. These devices not only support low-cost operational adjustments but also exhibit long service lives, often capable of withstanding tens of thousands of switching operations \citep{4-b1-65}. Therefore, topology control is a highly economical solution for enhancing power grid performance.

\begin{figure}[h]
	\centering
	\includegraphics[width=1\linewidth]{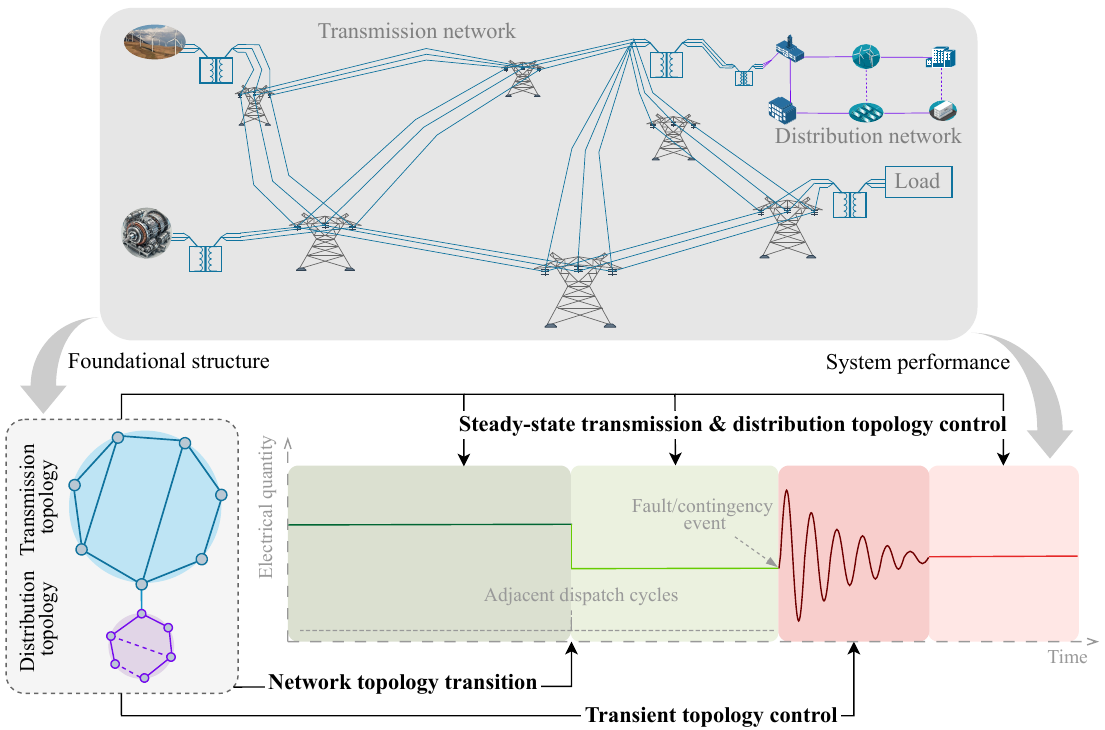}
	\caption{Overview of power grid topology control.}
	\label{fig-b1-1-1}
\end{figure}

Power grid topology control has various forms and is referred to by different terms in the existing literature, such as distribution network reconfiguration, optimal transmission switching, network topology optimization, corrective topology control, line switching, intentional controlled islanding, stabilizing switching, etc. Essentially, as shown in Fig. \ref{fig-b1-1-1}, according to the system state in which the network topology is adjusted or the state that the adjustment is intended to affect, power grid topology control can be classified into two categories: \textit{steady-state topology control} and \textit{transient topology control}. Steady-state topology control alters the network topology between steady states, typically to enhance power flow-related performance, or to improve dynamic performance, particularly system stability, in a preventive or corrective manner. For instance, corrective topology control after a contingency to ensure $N-1$ security of transmission networks and distribution network reconfiguration for loss reduction are both steady-state topology control. 
In contrast, transient topology control alters the network topology during the transient period following a disturbance, with the primary objective of stabilizing the system. For instance, intentional controlled islanding of transmission networks falls under the category of transient topology control. 
Moreover, the practical implementation of steady-state topology control raises another critical issue: how to perform the transition between network topologies across two adjacent dispatch cycles, where each dispatch cycle denotes a steady-state operational interval during which the network topology remains fixed and the system is operated under an optimized dispatch. This \textit{network topology transition} problem becomes particularly important when steady-state topology control is executed at high frequencies to maximally exploit the potential of topological flexibility.

Topology control will face increasing demands in future power grids, while the network topology itself will become increasingly controllable. 
Future power grids will confront significant operational challenges. For instance, high renewable penetration introduces substantial uncertainties into system operation and gives rise to new dynamic issues due to the increasing dominance of inverters within the grid \citep{4-990, 4-1302}. Additionally, climate change intensifies grid challenges by increasing extreme weather events that disrupt generation and damage infrastructure \citep{4-b1-66}. Given the notable impacts of network topology on both steady-state and dynamic system performance, power grid topology control has substantial importance for developing future power grids that are stronger, more resilient, and affordable, thereby enabling more effective handling of the emerging challenges. 
Meanwhile, network topology will be more controllable, driven by advancements in communications technology, power electronics, and reliable circuit breakers \citep{4-b1-68, 4-b1-67}. This provides the physical foundation for not only more frequent steady-state topology control but also more effective transient topology control. 

Motivated by the importance and potential of power grid topology control, this monograph aims to provide a comprehensive introduction by presenting the fundamentals and surveying both the authors’ recent work (e.g., \cite{4-995-ea}, \cite{4-1355}, \cite{4-1371}, \cite{4-1372}, \cite{4-1308}, and \cite{4-1582}) and relevant studies from researchers all over the world. 
The content spans from steady-state topology control to transient topology control,  and also addresses fundamental network topology constraints and the emerging issue of network topology transition.

\section{Organization of This Monograph}

The next seven chapters introduce in turn the preliminaries of this monograph, formulations of common network topology constraints, steady-state topology control, network topology transition, and transient topology control. The following outlines the contents of each chapter: 

Chapter \ref{chapter-2} introduces the mathematical notations used throughout the monograph, along with some basic concepts from graph theory. 

Chapter \ref{chapter-3} introduces the mathematical formulation of common network topology constraints in topology control problems. In particular, a recently developed electric flow-based network connectedness formulation, along with a formulation addressing network connectedness under contingencies, is discussed in more detail. 

Chapter \ref{chapter-4} discusses steady-state transmission topology control, covering associated fundamentals, a comprehensive state-of-the-art review of existing methodologies, and a recent advance in steady-state transmission topology control under multiple uncertainties. 

Chapter \ref{chapter-5} discusses steady-state distribution topology control, covering the basic model and a classical solution method, a comprehensive state-of-the-art review of existing solution methodologies, and a recently developed hybrid learning-heuristic solution paradigm. 

Chapter \ref{chapter-6} focuses on the network topology transition problem which arises when implementing steady-state transmission or distribution topology control. In particular, the bumpless topology transition method developed recently is discussed. 

Chapter \ref{chapter-7} introduces transient topology control, covering the basic form, i.e., Intentional Controlled Islanding (\ac{ici}) for transmission networks, a state-of-the-art review of various forms of transient topology control, and a recent advance in transient topology control designed for Multi-microgrid (\ac{mmg}) system stabilization. 

Chapter \ref{chapter-8} presents a summary of this monograph and outlines prospects for future directions in power grid topology control. 
\graphicspath{{chapter_2/Figs/}}
\chapter{Preliminaries}\label{chapter-2}

This chapter introduces the common notations used throughout this monograph and some basic concepts from graph theory relevant to the development of power grid topology control.

\section{List of Notations}

The notations that are commonly used by all the subsequent chapters in the monograph are defined in this chapter. The notations defined in each other chapter are applicable only within that chapter. Frequently used symbols are introduced at the beginning of the chapter, while others are defined as needed throughout the text.

\subsection{Mathematical Notations}\label{sec-2-1}

\setlist[description]{nosep, labelindent=0pt,style=multiline,leftmargin=1.85cm}

The following list summarizes the mathematical notations. Bold lowercase letters (e.g., $\bm{x}$) denote vectors, bold uppercase letters (e.g., $\bm{X}$) denote matrices, and calligraphic letters (e.g., $\mathcal{X}$) denote sets.

\begin{longtable}{|p{0.11\textwidth} p{0.81\textwidth}|}
    \toprule
    \multicolumn{2}{|c|}{Vector} \\
    \midrule
    \endfirsthead
    \endhead
    \bottomrule
    \endlastfoot

    $x_i$ &                   $i$-th entry or entry associated with the object $i$ of $\bm{x}$ \\
    $\bm{x}\A$ &              entry-wise absolute value of $\bm{x}$ \\
    $\bm{x}\T$ &              transpose of $\bm{x}$ \\
    $\bm{x}\D$ &              diagonal matrix with entries of $\bm{x}$ on its main diagonal \\
    $\bm{x}\PW$ &             entry-wise power of $\bm{x}$ \\
    $\bm{x}\HR$ &             entry-wise square root of $\bm{x}$  \\
    $\Vert \bm{x} \Vert_1$ &  1-norm of $\bm{x}$  \\
    $\Vert \bm{x} \Vert_2$ &  2-norm of $\bm{x}$ \\
    $\Vert \bm{x} \Vert_{2, \bm{w}}$ &  weighted 2-norm of $\bm{x}$ with $\bm{w}$ being the weight vector  \\
    $\Vert \bm{x} \Vert_{-2 \infty}$ &  second smallest absolute value of entries of $\bm{x}$ \\
    $\bm{0}$ &                          all-zeros vector of appropriate dimensions \\
    $\bm{0}_n$ &                        $n$-dimensional all-zeros vector \\
    $\bm{1}$ &                          all-ones vector of appropriate dimensions \\
    $\bm{1}_n$  &                       $n$-dimensional all-ones vector   
\end{longtable}

\vspace{-18pt}
\begin{longtable}{|p{0.11\textwidth} p{0.81\textwidth}|}
    \toprule
    \multicolumn{2}{|c|}{Matrix} \\
    \midrule
    \endfirsthead
    \endhead
    \bottomrule
    \endlastfoot

    $X_{ij}$ &                        $(i,j)$-th entry or entry associated with objects $(i, j)$ of $\bm{X}$ \\
    $\bm{X}\A$ &                      entry-wise absolute value of $\bm{X}$  \\
    $\bm{X}\T$ &                      transpose of $\bm{X}$ \\
    $\bm{X}\D$ &                      vector formed by the main diagonal of square matrix $\bm{X}$  \\
    $\Tr(\bm{X})$ &                   trace of $\bm{X}$ if $\bm{X}$ is a square matrix \\
    $\bm{I}_n$ &                      $n \times n$ identity matrix \\
    $\bm{J}$ &                        all-ones matrix of appropriate dimensions \\
    $\bm{J}_{n}$ &                    $n \times n$ all-ones matrix \\
    $\bm{J}_{n \times m}$ &           $n \times m$ all-ones matrix \\
    $\bm{X} \circ \bm{Y}$ &           element-wise product of $\bm{X}$ and $\bm{Y}$ of the same dimension  
\end{longtable}

\vspace{-18pt}
\begin{longtable}{|p{0.11\textwidth} p{0.81\textwidth}|}
    \toprule
    \multicolumn{2}{|c|}{Set} \\
    \midrule
    \endfirsthead
    \endhead
    \bottomrule
    \endlastfoot

    $\mathbb{R}$ &    set of real numbers \\
    $\mathbb{B}$ &    set of binary numbers \\
    $\mathbb{C}$ &    set of complex numbers \\
    $\mathbb{R}^n$ &  $n$-dimensional vector space over $\mathbb{R}$ \\
    $\mathbb{B}^n$ &  $n$-dimensional vector space over $\mathbb{B}$ \\
    $\mathbb{C}^n$ &  $n$-dimensional vector space over $\mathbb{C}$ \\
    $\mathbb{R}^{n \times m}$ &  $n \times m$ matrix space over $\mathbb{R}$ \\
    $\mathbb{B}^{n \times m}$ &  $n \times m$ matrix space over $\mathbb{B}$ \\ 
    $\mathbb{C}^{n \times m}$ &  $n \times m$ matrix space over $\mathbb{C}$ \\
    $\llbracket m, n \rrbracket$ &  integer set $\{ m, m+1, \cdots, n - 1, n \}$ \\
    $\mathcal{X} \triangle \mathcal{Y}$ &    symmetric difference of $\mathcal{X}$ and $\mathcal{Y}$ \\
    $\mathcal{Y} \backslash \mathcal{X}$ &   set difference of $\mathcal{Y}$ and $\mathcal{X}$ \\ 
    $|\mathcal{X}|$ &  cardinality of $\mathcal{X}$    
\end{longtable}

\subsection{Power grid notations} 

The following list collects notations associated with power grids.

\begin{longtable}{|p{0.11\textwidth} p{0.81\textwidth}|}
    \toprule
    \multicolumn{2}{|c|}{Component} \\
    \midrule
    \endfirsthead
    \endhead
    \bottomrule
    \endlastfoot

    $\mathcal{V}$ &          set of buses  \\
    $\mathcal{E}$ &          set of branches (including lines and transformers) \\ 
    $\mathcal{E}_{\rm us}$ & set of unswitchable branches for topology control \\
    $n_{\rm n}$ &            number of buses \\   
    $n_{\rm e}$ &            number of branches \\
    $n_{\rm g}$ &            number of generators \\
    $n_{\rm d}$ &            number of loads \\
    $n_{\rm gs}$ &           number of conventional \ac{sg}s \\
    $n_{\rm gv}$ &           number of \ac{vre}-based generators   
\end{longtable}

\vspace{-18pt}
\begin{longtable}{|p{0.11\textwidth} p{0.81\textwidth}|}
    \toprule
    \multicolumn{2}{|c|}{Network topology and admittance} \\
    \midrule
    \endfirsthead
    \endhead
    \bottomrule
    \endlastfoot

    $\bm{z}$ &   $\mathbb{B}^{n_{\rm e}}$, vector parameterizing the network topology, with entries 1 or 0 indicating branches are switched on or off \\
    $\bm{g}_{\rm b} \!+\! j \bm{b}_{\rm b}$ &   $\mathbb{C}^{n_{\rm e}}$, admittance of branches \\
\end{longtable}

\vspace{-18pt}
\begin{longtable}{|p{0.05\textwidth} p{0.87\textwidth}|}
    \toprule
    \multicolumn{2}{|c|}{State and power} \\
    \midrule
    \endfirsthead
    \endhead
    \bottomrule
    \endlastfoot

    $\bm{v}$ &  $\mathbb{R}^{n_{\rm n}}$, voltage magnitudes of buses \\
    $\bm{\theta}$ & $\mathbb{R}^{n_{\rm n}}$, voltage phase angles of buses \\ 
    $\bm{p}_{\rm g}$ & $\mathbb{R}^{n_{\rm g}}$, active power outputs of generators \\
    $\bm{q}_{\rm g}$ & $\mathbb{R}^{n_{\rm g}}$, reactive power outputs of generators \\
    $\bm{p}_{\rm d}$ & $\mathbb{R}^{n_{\rm d}}$, active powers drawn by loads \\
    $\bm{q}_{\rm d}$ & $\mathbb{R}^{n_{\rm d}}$, reactive powers drawn by loads \\
    $\bm{p}_{\rm fb}$ & $\mathbb{R}^{n_{\rm e}}$, active powers flowing into branches at starting buses \\
    $\bm{p}_{\rm tb}$ & $\mathbb{R}^{n_{\rm e}}$, active powers flowing into branches at end buses \\
    $\bm{p}_{\rm b} $ & $\mathbb{R}^{n_{\rm e}}$, branch active powers \\
    $\bm{q}_{\rm fb}$ & $\mathbb{R}^{n_{\rm e}}$, reactive power injections into branches at starting buses\\
    $\bm{q}_{\rm tb}$ & $\mathbb{R}^{n_{\rm e}}$, reactive power injections into branches at end buses \\
    $\bm{q}_{\rm b} $ & $\mathbb{R}^{n_{\rm e}}$, branch reactive powers  \\
    $\bm{s}_{\rm gv}$ & $\mathbb{R}^{n_{\rm gv}}$, apparent powers injected by \ac{vre}-based generators \\
    $\bm{s}_{\rm b} $ & $\mathbb{R}^{n_{\rm e}} $, apparent powers of branches \\
    $\bm{p}_{\rm gs}$ & $\mathbb{R}^{n_{\rm gs}}$, subvector of $\bm{p}_{\rm g}$ associated with conventional \ac{sg}s \\
    $\bm{p}_{\rm gv}$ & $\mathbb{R}^{n_{\rm gv}}$, subvector of $\bm{p}_{\rm g}$ associated with \ac{vre}-based generators \\
    $\bm{q}_{\rm gs}$ & $\mathbb{R}^{n_{\rm gs}}$, subvector of $\bm{q}_{\rm g}$ associated with conventional \ac{sg}s \\
    $\bm{q}_{\rm gv}$ & $\mathbb{R}^{n_{\rm gv}}$, subvector of $\bm{q}_{\rm g}$ associated with \ac{vre}-based generators \\
    $\bm{\phi}_{\rm gv}$ & $\mathbb{R}^{n_{\rm gv}}$, power factors of \ac{vre}-based generators  
\end{longtable}

\vspace{-18pt}
\begin{longtable}{|p{0.05\textwidth} p{0.87\textwidth}|}
    \toprule
    \multicolumn{2}{|c|}{Contingency} \\
    \midrule
    \endfirsthead
    \endhead
    \bottomrule
    \endlastfoot

    $\bm{o}$ &  $\mathbb{B}^{n_{\rm g} + n_{\rm e}}$, vector parameterizing $N-k$ contingencies, with entries of 1 and 0 indicating normal and failed component states \\
    $\bm{o}_{\rm g}$ &   $\mathbb{B}^{n_{\rm g}}$, subvector of $\bm{o}$ associated with generator \\
    $\bm{o}_{\rm b}$ &   $\mathbb{B}^{n_{\rm e}}$, subvector of $\bm{o}$ associated with branches \\
    $\bm{o}_{\rm gs}$ &  $\mathbb{B}^{n_{\rm gs}}$, subvector of $\bm{o}_{\rm g}$ associated with conventional \ac{sg}s \\
    $\bm{o}_{\rm gv}$ &  $\mathbb{B}^{n_{\rm gv}}$, subvector of $\bm{o}_{\rm g}$ associated with \ac{vre}-based generators \\
    $\Omega$ &  support of $\bm{o}$ \\
    $O$ &  probability distribution of $\bm{o}$ \\
    $\mathcal{O}$ &  ambiguity set of $O$ 
\end{longtable}

\vspace{-18pt}
\begin{longtable}{|p{0.955\textwidth}|}
    \toprule
    \multicolumn{1}{|c|}{Bound} \\
    \midrule
    \endfirsthead
    \endhead
    \bottomrule
    \endlastfoot

    For any variable vector $\bm{x}$, $\bm{x}\B$ and $\bm{x}\U$ denote the vectors of lower and upper bounds of $\bm{x}$ that is operationally feasible, respectively. For instance, $\bm{v}\B$ and $\bm{v}\U$ are the vectors of the minimal and maximal feasible values of bus voltage magnitudes, respectively. In addition,  $\bm{\theta}_{\rm b}\B \in \mathbb{R}^{n_{\rm e}}$ and $\bm{\theta}_{\rm b}\U \in \mathbb{R}^{n_{\rm e}}$ denote the vectors of lower and upper bounds of branch phase angle differences, respectively. 
\end{longtable}

\vspace{-18pt}
\begin{longtable}{|p{0.05\textwidth} p{0.87\textwidth}|}
    \toprule
    \multicolumn{2}{|c|}{Incidence matrix} \\
    \midrule
    \endfirsthead
    \endhead
    \bottomrule
    \endlastfoot

    \multicolumn{2}{|p{0.955\textwidth}|}{The incidence matrix between any two sets $\mathcal{X}$ and $\mathcal{Y}$ is a $|\mathcal{X}| \times |\mathcal{Y}|$ matrix showing the incidence relation between them, which has one row for each element of $\mathcal{X}$ and one column for each element of $\mathcal{Y}$. The entry of the incidence matrix is 1 if its associated elements in $\mathcal{X}$ and $\mathcal{Y}$ are related and 0 if they are not.} \\ \midrule
    $\bm{E}_{\rm g}$ &   $\mathbb{R}^{n_{\rm n} \times n_{\rm g}}$, incidence matrix between $\mathcal{V}$ and the set of generators  \\ 
    $\bm{E}_{\rm us}$ &  $\mathbb{R}^{|\mathcal{E}_{\rm us}| \times n_{\rm e}}$, incidence matrix between $\mathcal{E}_{\rm us}$ and $\mathcal{E}$
\end{longtable}

\section{Basic Concepts from Graph Theory}\label{sec-2-2}

\begin{figure}[h]
	\centering
	\includegraphics[width=0.98\linewidth]{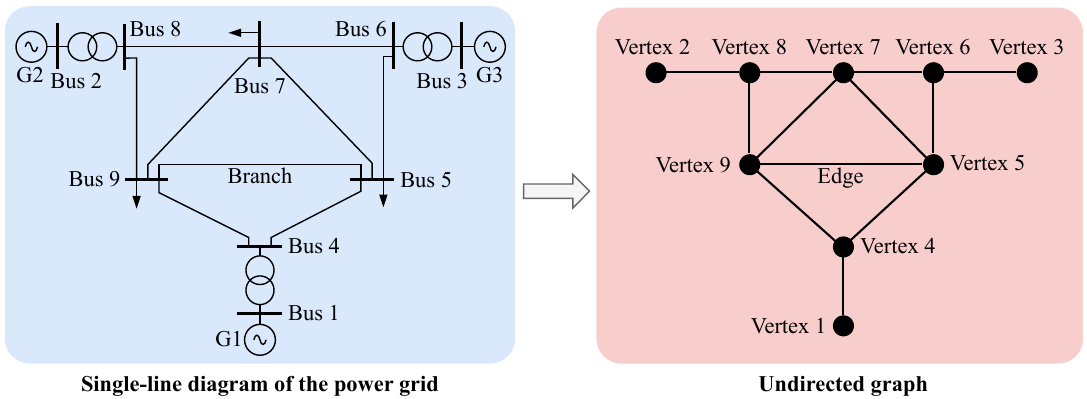}
	\caption{{Graph representation of the power grid.}}
    \label{fig-b1-2-1} 
\end{figure}

As illustrated in Fig. \ref{fig-b1-2-1}, the power grid can be topologically represented as an undirected graph $\mathcal{G}(\mathcal{V}, \mathcal{E})$, where $\mathcal{V}$ and $\mathcal{E}$ are the sets of buses (vertices) and branches (edges) that include lines and transformers, respectively. 
In this monograph, the terms \textit{bus} and \textit{vertex} will be used interchangeably, as will the terms \textit{branch} and \textit{edge}. Some basic concepts along with the associated notations are introduced as follows and also illustrated in Fig. \ref{fig-b1-2-2}:
\begin{figure}[h]
	\centering
	\includegraphics[width=0.98\linewidth]{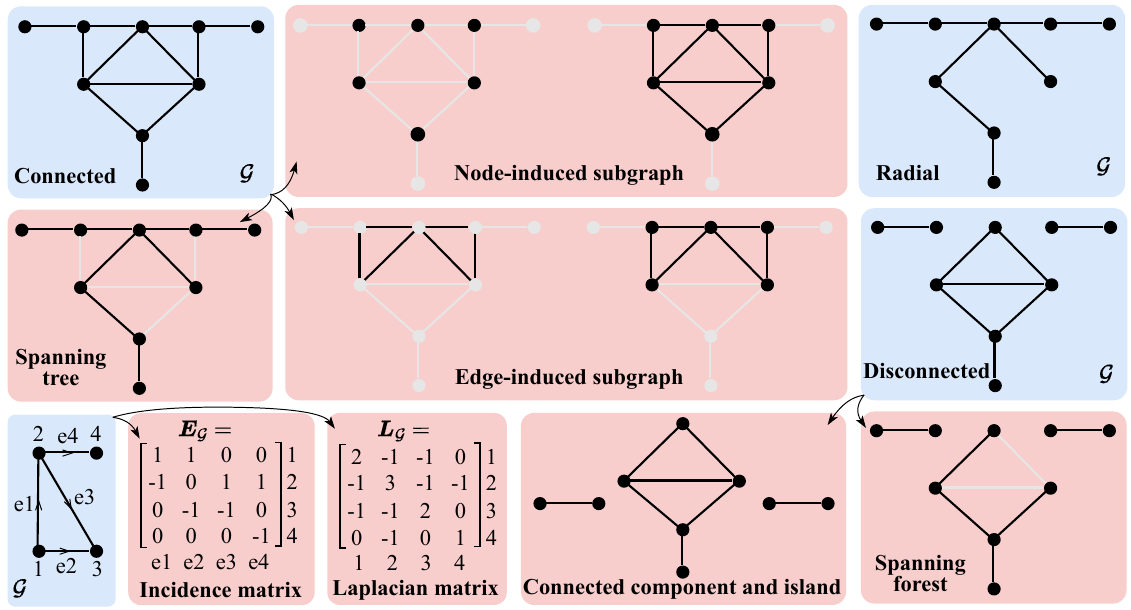}
	\caption{{Illustration of the basic concepts for graphs.}}
    \label{fig-b1-2-2} 
\end{figure}
\begin{itemize}
    \item \textit{Node-induced subgraph:} A node-induced subgraph of graph $\mathcal{G}(\mathcal{V}, \mathcal{E})$ refers to an arbitrary nonempty subset of $\mathcal{V}$ together with all the edges whose endpoints are both in this subset.
    \item \textit{Edge-induced subgraph:} An edge-induced subgraph of graph $\mathcal{G}(\mathcal{V}, \mathcal{E})$ is a subset of $\mathcal{E}$ together with all the nodes that are their endpoints.  
    \item \textit{Connectedness:} The undirected graph $\mathcal{G}(\mathcal{V}, \mathcal{E})$ is said to be connected if, for every pair of distinct vertices $u, v \in \mathcal{V}$, there exists a path in graph $\mathcal{G}$ that connects $u$ and $v$. If a graph is not connected, it is said to be disconnected. 
    \item \textit{Radiality:} The undirected graph $\mathcal{G}(\mathcal{V}, \mathcal{E})$ is said to be radial if it forms a tree; that is, it is connected and acyclic. 
    \item \textit{Connected component and island:} A connected component of graph $\mathcal{G}(\mathcal{V}, \mathcal{E})$ is a connected subgraph that is not part of any larger connected subgraph. Such a connected component is termed an \textit{island} in power system terminology. 
    \item \textit{Spanning tree:} A spanning tree of the undirected graph $\mathcal{G}(\mathcal{V}, \mathcal{E})$, where $\mathcal{G}$ is connected, is a minimal connected subgraph of graph $\mathcal{G}$ that includes all the vertices of the graph and forms a tree; that is, it is connected and acyclic.
    \item \textit{Spanning forest:} A spanning forest of the undirected graph $\mathcal{G}(\mathcal{V}, \mathcal{E})$ is a subgraph of graph $\mathcal{G}$ that  is a forest (i.e., contains no cycles), spans all vertices of graph $\mathcal{G}$, and in which each connected component is a spanning tree of a connected component of $\mathcal{G}$.
    \item \textit{Oriented incidence matrix:} With each edge of graph $\mathcal{G}(\mathcal{V}, \mathcal{E})$ assigned an arbitrary and fixed orientation, the oriented incidence matrix of $\mathcal{G}(\mathcal{V}, \mathcal{E})$, denoted as $\bm{E}_{\mathcal{G}} \in \mathbb{R}^{n_{\rm n} \times n_{\rm e}}$, is defined as 
    \begin{equation}
        {E}_{\mathcal{G}, ij} := 
        \left\{
            \begin{aligned}
                & 1   && \text{if bus $i$ is the starting bus of branch $j$} \\
                & -1  && \text{if bus $i$ is the end bus of branch $j$}  \\
                & 0   && \text{otherwise}
            \end{aligned}
        \right.
    \end{equation}
    with $i \in \mathcal{V}$ and $j \in \mathcal{E}$.
    \item \textit{Laplacian matrix:} The Laplacian matrix of graph $\mathcal{G}(\mathcal{V}, \mathcal{E})$, denoted as $\bm{L}_{\mathcal{G}} \in \mathbb{R}^{n_{\rm n} \times n_{\rm n}}$, is defined as
    \begin{equation}\!\!\!\!\!\!
        {L}_{\mathcal{G}, ij} \!:=\! 
        \left\{
            \begin{aligned}
                & \text{deg}(i)   && \text{if bus $i = j$} \\
                & -1       && \text{if bus $i \neq j$ and bus $i$ is adjacent to bus $j$}  \\
                & 0        && \text{otherwise}
            \end{aligned}
        \right.
    \end{equation}
    with $i, j \in \mathcal{V}$ and {deg}($i$) being the degree of bus $i$. 
    \item \textit{Other notations:} The matrices $\bm{E}_{\mathcal{G}, \rm f}$ and $\bm{E}_{\mathcal{G}, \rm t}$ are formed by replacing all entries equal to -1 and 1 in the matrix $\bm{E}_{\mathcal{G}}$ with 0, respectively.  The edge-induced subgraph of $\mathcal{G}(\mathcal{V}, \mathcal{E})$ associated with the set of edges in $\mathcal{E}$ whose corresponding entries in $\bm{z}$ are equal to 1 is denoted by $\mathcal{G}_{\bm{z}}$. Let $\bm{E}_{\mathcal{G}_{\bm{z}}}$ and $\bm{L}_{\mathcal{G}_{\bm{z}}}$ denote the oriented incidence matrix and Laplacian matrix of graph $\mathcal{G}_{\bm{z}}$, respectively. 
\end{itemize}

\graphicspath{{chapter_3/Figs/}}
\chapter{Network Topology Constraints}\label{chapter-3}

This chapter introduces the mathematical formulation of network topology constraints relevant to topology control problems. As illustrated in Fig. \ref{fig-b1-3-1}, four commonly encountered topological requirements are considered: connectedness, connectedness with contingencies, multi-island structure, and radiality.

\begin{figure}[h]
	\centering
	\includegraphics[width=1\linewidth]{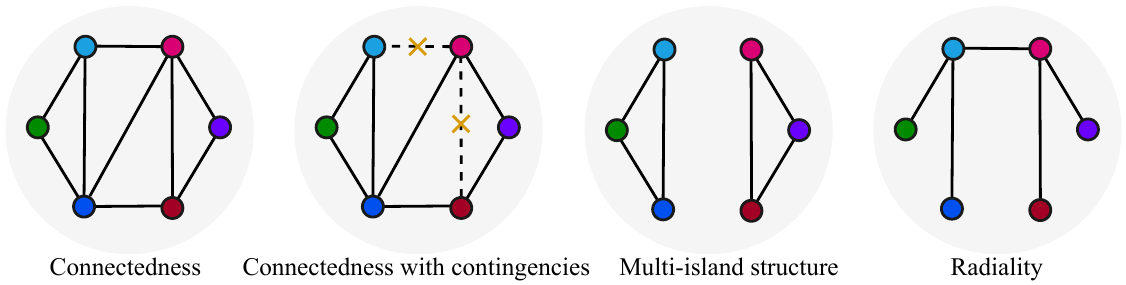}
	\caption{{Different requirements on network topology.}}
    \label{fig-b1-3-1} 
\end{figure}

\section{Connectedness of Network Topology}

Transmission networks are required to operate with a connected topology in the normal state to ensure system reliability and maintain synchronization across all generators. This section introduces three different formulations of network connectedness constraints. 

\subsection{Miller-Tucker-Zemli Formulation}

The Miller-Tucker-Zemli formulation of network topology connectedness for topology optimization problems is proposed by \cite{4-1283}. This formulation is based on the Miller-Tucker-Zemlin constraints, which were previously applied to connected facility location, traveling salesman, and steiner tree problems. 
Recall the underlying undirected graph of the power network, i.e., $\mathcal{G} = (\mathcal{V}, \mathcal{E})$. This graph can be transformed into a bidirected graph $(\mathcal{V}, \tilde{\mathcal{E}})$ by replacing the edge $(i, j) \in \mathcal{E}$ with two arcs (i.e., directed edges) $(i, j) \in \tilde{\mathcal{E}}$ and $(j, i) \in \tilde{\mathcal{E}}$. In this graph, an arbitrary vertex, denoted as $r \in \mathcal{V}$ is selected as the root vertex. 
Two binary variables $y_{ij}$ and $y_{ji}$ are further introduced for each edge $(i, j) \in \mathcal{E}$. 

A graph is connected if there is a path between any pair of vertices. Therefore, if arc $(j,k)$ is closed, at least one arc $(i,j)$ is connected to vertex $j$. 
This constraint can be formulated as:
\begin{equation}\label{eq-b1-3-1}
    \sum_{i \in \mathcal{V} \setminus \{k\}} y_{ij} \geq y_{jk}, \quad \forall j \in \mathcal{V} \setminus \{r\}, \forall k \in \mathcal{V}.
\end{equation}

Constraint (\ref{eq-b1-3-1}) is a necessary but not sufficient condition for connectedness. If there is a cycle in the graph, the constraint cannot guarantee the connectedness of the graph. To eliminate the cycles, a dummy voltage level variable $u_i$ for each vertex is introduced and the graph is viewed as a circuit with the root vertex grounded. According to the Kirchhoff’s loop law, there is no loop current if the loop does not have a source. This property can be used to eliminate the cycles. 

Assume the resistance of every line in the circuit network is one, and the current of every line is more than one per unit. If there is line $(i,j)$ with current flow from vertex $j$ to vertex $i$, $u_j - u_i \geq 1$. Otherwise, if there is no line between vertices $i$ and $j$, the voltage of the two vertices satisfies $u_j - u_i \geq -M + 1$, where $M$ is a large number. Therefore, the following constraints can be used to eliminate the cycles in graph $\mathcal{G}$:
\begin{subequations}\label{eq-b1-3-2}
    \begin{align}
        & My_{ij} + u_i \leq u_j + M - 1 \quad \forall (i, j) \in \tilde{\mathcal{E}}, i \in \mathcal{V}, j \in \mathcal{V} \\
        & u_i \geq 0 \quad \forall i \in \mathcal{V} \setminus \{r\} \\
        & u_r = 0.
    \end{align}
\end{subequations}

Moreover, it is practical to impose the following constraint on islanded buses that have both generators and loads:
\begin{equation}\label{eq-b1-3-3}
    \sum_{(i, j) \in \tilde{\mathcal{E}_i}} y_{ij} \geq 1, \quad \forall i \in \mathcal{V}_{\rm d}.
\end{equation}

Since the arcs that are used to deduce the connectedness constraints are bidirected, there are two variables $y_{ij}$ and $y_{ji}$ for a line $(i,j)$ while there is only one variable $z_{ij}$ for the line $(i,j)$ in topology control problems. Thus, it is necessary to add the following constraint to link the two variables:
\begin{equation}\label{eq-b1-3-4}
    z_{ij} = y_{ij} + y_{ji}, \quad \forall (i,j) \in \mathcal{E}.
\end{equation}
which also enforces only one direction $y_{ij}$ or $y_{ji}$ of the line $(i,j)$ that is selected. 
Finally, constraints (\ref{eq-b1-3-1})-(\ref{eq-b1-3-4}) form the Miller-Tucker-Zemli formulation of network topology connectedness. 

\subsection{Electrical-Flow-Based Formulation}

The electrical-flow-based formulation of network topology connectedness for topology optimization problems is proposed by \cite{4-995-ea}. The general idea is to construct an auxiliary electrical flow network with the same topology as $\mathcal{G}_{\bm{z}}$, and then the existence of its electrical flow is guaranteed if and only if graph $\mathcal{G}_{\bm{z}}$ is connected. Accordingly, determination of the connectedness of $\mathcal{G}_{\bm{z}}$ can be converted into that of the feasibility of vertex potential equations of the auxiliary electrical flow network. The latter condition can be directly embedded into topology optimization models.

First introduce an electrical flow network (allow multiple edges between two vertices) with the same topology as $\mathcal{G}_{\bm{z}}$, unit resistance for each edge, vertex potential denoted as $\bm{\vartheta} \in \mathbb{R}^{ n_{\rm n} }$, and vertex electrical flow injection given by $\bm{c}_{\rm nc} \in \mathbb{R}^{n_{\rm n}}$ \citep{4-p-0-1-1}. 
The Laplacian matrix links $\bm{\vartheta}$ and $\bm{c}_{\rm nc}$ as $\bm{L}_{\mathcal{G}_{\bm{z}}} \bm{\vartheta} = \bm{c}_{\rm nc} $. 
Suppose that graph $\mathcal{G}$ has $n_{\rm s}$ connected node-induced subgraphs and their node sets are collected by $\{ \mathcal{V}_i | i \in \llbracket 1, n_{\rm s} \rrbracket \}$. In such manner, $\{ \mathcal{V}_i | i \in \llbracket 1, n_{\rm s} \rrbracket \}$ includes the node sets of connected components of all possible $\mathcal{G}_{\bm{z}}$.

Without loss of generality, it is assumed that $\mathcal{V}_1 = \mathcal{V}$ is the node set of the node-induced subgraph equal to $\mathcal{G}$. 
Let $\bm{E}_{\rm ns} \in \mathbb{R}^{ n_{\rm s} \times n_{\rm n} } $ be the constant matrix satisfying $\forall  i \in  \llbracket 1, n_{\rm s} \rrbracket$, ${E}_{{\rm ns}, ij} = 1$ if $j \in \mathcal{V}_j$ and ${E}_{{\rm ns}, ij} = 0$ otherwise. Fig. \ref{fig-0-1-2} illustrates the set $\{ \mathcal{V}_i | i \in \llbracket 1, n_{\rm s} \rrbracket \}$ and corresponding matrix $\bm{E}_{\rm ns}$ for a 4-node graph. With the matrix $\bm{E}_{\rm ns}$, some balance properties of $\bm{c}_{\rm nc}$ are defined by Definition \ref{cond-0-1-1}.
 
\begin{figure}[h]
	\centering
	\includegraphics[width=0.95\linewidth]{fig-0-1-2.pdf}
	\caption{Illustration of set $\{ \mathcal{V}_i |i \in \llbracket 1, n_{\rm s} \rrbracket \}$ and matrix $\bm{E}_{\rm ns}$.}
	\label{fig-0-1-2}
\end{figure} 

\begin{definition}[Unique balance, unbalance]\label{cond-0-1-1}
  Let $\bm{b}_{\rm nc} = \bm{E}_{\rm ns} \bm{c}_{\rm nc} \in \mathbb{R}^{n_{\rm s}} $. Then $\bm{c}_{\rm nc}$ is multiply-balanced if $\forall i \in \llbracket 1, n_{\rm m} \rrbracket $, $b_{{\rm nc}, i} = 0$ and $\forall i \in \llbracket n_{\rm m} + 1, n_{\rm s} \rrbracket $, $b_{{\rm nc}, i} \neq 0$, with $1 \leq n_{\rm m} \leq n_{\rm s}$. If $n_{\rm m} = 1$, $\bm{c}_{\rm nc}$ is uniquely-balanced. Moreover, $\bm{c}_{\rm nc}$ is unbalanced if $\forall i \in \llbracket 1, n_{\rm s} \rrbracket$, $b_{{\rm nc}, i} \neq 0$.
\end{definition}

\begin{lemma}\label{lem-0-1-1}
  Given any uniquely-balanced vector $\bm{c}_{\rm nc}$, solutions of the vertex potential equation of the electrical flow network, i.e., $\bm{L}_{\mathcal{G}_{\bm{z}}} \bm{\vartheta} = \bm{c}_{\rm nc} $, exist iff graph $\mathcal{G}_{\bm{z}}$ is connected.
\end{lemma}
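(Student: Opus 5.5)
The plan is to use the standard range characterization of a graph Laplacian. $\bm{L}_{\mathcal{G}_{\bm{z}}}$ is symmetric positive semidefinite, so $\bm{L}_{\mathcal{G}_{\bm{z}}}\bm{\vartheta} = \bm{c}_{\rm nc}$ is solvable iff $\bm{c}_{\rm nc}$ is orthogonal to $\ker \bm{L}_{\mathcal{G}_{\bm{z}}}$. The first step is to identify this kernel. Write $\bm{L}_{\mathcal{G}_{\bm{z}}} = \bm{E}_{\mathcal{G}_{\bm{z}}}\bm{E}_{\mathcal{G}_{\bm{z}}}\T$ (multiple edges and unit resistances are allowed). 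Then $\bm{x}\T\bm{L}_{\mathcal{G}_{\bm{z}}}\bm{x} = \sum_{(i,j)} (x_i - x_j)^2$, summed over edges of $\mathcal{G}_{\bm{z}}$, so $\bm{x}$ lies in the kernel iff it is constant on every connected component. Buses not touched by any closed branch are treated as isolated vertices of $\mathcal{G}_{\bm{z}}$, i.e., singleton components; I would make this convention explicit. Hence $\ker \bm{L}_{\mathcal{G}_{\bm{z}}}$ is spanned by the indicator vectors $\bm{1}_{\mathcal{C}}$ of the node sets $\mathcal{C}$ of the islands of $\mathcal{G}_{\bm{z}}$. Solvability is therefore equivalent to $\bm{1}_{\mathcal{C}}\T \bm{c}_{\rm nc} = 0$ for every island $\mathcal{C}$.

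Next I translate this into the $\bm{b}_{\rm nc}$ language of Definition \ref{cond-0-1-1}. Each island of $\mathcal{G}_{\bm{z}}$ is connected using edges of $\mathcal{G}$, so its node set is the node set of a connected node-induced subgraph of $\mathcal{G}$. It therefore equals some $\mathcal{V}_k$, and $\bm{1}_{\mathcal{C}}\T\bm{c}_{\rm nc} = b_{{\rm nc},k}$. (I read the definition of $\bm{E}_{\rm ns}$ as $E_{{\rm ns},ij}=1$ iff $j\in\mathcal{V}_i$, correcting the evident typo.) Uniquely balanced means that, among all the $\mathcal{V}_k$, only $\mathcal{V}_1 = \mathcal{V}$ has zero net injection.

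For sufficiency, suppose $\mathcal{G}_{\bm{z}}$ is connected. Its only island is $\mathcal{V} = \mathcal{V}_1$, and $b_{{\rm nc},1} = \bm{1}\T\bm{c}_{\rm nc} = 0$. So $\bm{c}_{\rm nc} \perp \ker\bm{L}_{\mathcal{G}_{\bm{z}}}$ and a solution exists; it is unique up to adding a multiple of $\bm{1}$.

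For necessity, suppose $\mathcal{G}_{\bm{z}}$ is disconnected. Then it has an island $\mathcal{C} \subsetneq \mathcal{V}$, and $\mathcal{C} = \mathcal{V}_k$ for some $k \neq 1$. Unique balance gives $b_{{\rm nc},k} \neq 0$, so $\bm{1}_{\mathcal{C}}\T\bm{c}_{\rm nc} \neq 0$. But any solution $\bm{\vartheta}$ would satisfy $\bm{1}_{\mathcal{C}}\T\bm{c}_{\rm nc} = \bm{1}_{\mathcal{C}}\T\bm{L}_{\mathcal{G}_{\bm{z}}}\bm{\vartheta} = (\bm{L}_{\mathcal{G}_{\bm{z}}}\bm{1}_{\mathcal{C}})\T\bm{\vartheta} = 0$, a contradiction. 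Physically, this is just conservation of flow within an isolated island.

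The main obstacle is the bookkeeping that every island of every $\mathcal{G}_{\bm{z}}$ appears in the family $\{\mathcal{V}_i\}$, including singleton islands from isolated buses. Those singletons are trivially connected node-induced subgraphs, so they must be counted among the $\mathcal{V}_i$. Once that is settled, the rest is the routine kernel computation above.
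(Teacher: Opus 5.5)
Your proof is correct and follows essentially the same route as the paper. The paper's solvability test $\bm{L}_{\mathcal{G}_{\bm{z}}}\bm{L}_{\mathcal{G}_{\bm{z}}}^+\bm{c}_{\rm nc}=\bm{c}_{\rm nc}$ is exactly your condition $\bm{c}_{\rm nc}\perp\ker\bm{L}_{\mathcal{G}_{\bm{z}}}$; for connected graphs it takes the projector $\bm{I}-\frac{1}{n_{\rm n}}\bm{J}$ from a cited lemma, and its necessity step is the same zero-net-injection-per-island argument on the block-diagonal Laplacian. Working directly with the kernel makes your version self-contained, and treating isolated buses as singleton islands that must appear among the $\mathcal{V}_i$ makes explicit bookkeeping the paper leaves implicit.
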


\begin{proof}
  Sufficiency. Suppose graph $\mathcal{G}_{\bm{z}}$ is connected. A necessary and sufficient condition for any solution(s) of $\bm{L}_{\mathcal{G}_{\bm{z}}} \bm{\vartheta} = \bm{c}_{\rm nc} $ to exist is that $\bm{W}_{\rm nc} \bm{c}_{\rm nc} = \bm{c}_{\rm nc}$ where $\bm{W}_{\rm nc} = \bm{L}_{\mathcal{G}_{\bm{z}}} \bm{L}_{\mathcal{G}_{\bm{z}}}^+$ with $\bm{L}_{\mathcal{G}_{\bm{z}}}^+$ being the Moore-Penrose pseudoinverse of $\bm{L}_{\mathcal{G}_{\bm{z}}}$. 

  Since $\mathcal{G}_{\bm{z}}$ is connected, by \cite[Lemma~3]{4-969}, $\bm{W}_{\rm nc}$ is given as $W_{{\rm nc}, ij} = - \frac{1}{ n_{\rm n} }$ for $i \neq j$ and $W_{{\rm nc}, ij} =  \sum_{ j'=1, j' \neq j }^{n_{\rm n}} W_{{\rm nc}, ij'}  = \frac{n_{\rm n} - 1}{ n_{\rm n} }$ for $i = j$. Recall that $\bm{c}_{\rm nc}$ is uniquely-balanced, thus we have $\bm{1}_{n_{\rm n}}\T \bm{c}_{\rm nc} = 0$. Therefore, $\forall i \in \llbracket 1, n_{\rm n} \rrbracket$, we have 
  \begin{equation}
      \begin{aligned}
        [\bm{W}_{\rm nc} \bm{c}_{\rm nc}]_i & = \sum_{j=1}^{n_{\rm n}} W_{{\rm nc}, ij} c_{{\rm nc}, j} = W_{{\rm nc}, ii} c_{{\rm nc}, i} + \sum_{j=1, j\neq i}^{n_{\rm n}} W_{{\rm nc},ij} c_{{\rm nc}, j} \\
        & =  \frac{n_{\rm n} - 1}{ n_{\rm n} } c_{{\rm nc}, i} + \frac{1}{ n_{\rm n} } c_{{\rm nc}, i} = c_{{\rm nc}, i},
      \end{aligned}
  \end{equation}
  i.e., $\bm{W}_{{\rm nc}} \bm{c}_{\rm nc} = \bm{c}_{\rm nc}$. Thus, solutions of $\bm{L}_{\mathcal{G}_{\bm{z}}} \bm{\vartheta} = \bm{c}_{\rm nc} $ exist.

  Necessity. Suppose that graph $\mathcal{G}_{\bm{z}}$ is disconnected and consists of $n_{\rm cc}$ connected components, each of size $n_k$, $\forall k \in \llbracket 1, n_{\rm cc} \rrbracket$. By rearranging the nodes of each connected component, we have $\bm{L}_{\mathcal{G}_{\bm{z}}} = \diag\{ \bm{L}_{\mathcal{G}_{\bm{z}}}^k | k \in \llbracket 1, n_{\rm cc} \rrbracket \} $ with $\bm{L}_{\mathcal{G}_{\bm{z}}}^k$ being the Laplacian matrix of the $k$-th connected component of $\mathcal{G}_{\bm{z}}$. Let $\bm{\vartheta}^k$ and $\bm{c}_{\rm nc}^k$ be subvectors of $\bm{\vartheta}$ and $\bm{c}_{\rm nc}$ corresponding to the $k$-th component, respectively; and $\bm{W}^k_{\rm nc} = \bm{L}_{\mathcal{G}_{\bm{z}}}^k\bm{L}_{\mathcal{G}_{\bm{z}}}^{k +} $. Consider the existence of solutions of $\bm{L}_{\mathcal{G}_{\bm{z}}}^k \bm{\vartheta}^k = \bm{c}_{\rm nc}^k$. Following the proof of sufficiency, but since $\bm{1}_{n_k}\T \bm{c}_{\rm nc}^k \neq 0 $ by its unique balance, it is trivial that $\bm{W}^k_{\rm nc} \bm{c}_{\rm nc}^k \neq\bm{c}_{\rm nc}^k$. Thus, $\forall k \llbracket 1, n_{\rm cc} \rrbracket$, solutions of $\bm{L}_{\mathcal{G}_{\bm{z}}}^k \bm{\vartheta}^k = \bm{c}_{\rm nc}^k$ do not exist and the same for $\bm{L}_{\mathcal{G}_{\bm{z}}} \bm{\vartheta} = \bm{c}_{\rm nc} $. 
\end{proof}

\begin{theorem}\label{theo-0-1-1}
  Graph $\mathcal{G}_{\bm{z}}$ is connected iff the following set of constraints is feasible: 
  \begin{subequations}\label{eq-0-1-1} 
    \begin{align}
      & - M (\bm{1} - \bm{z} ) \leq \bm{E}_{\mathcal{G}}\T \bm{\vartheta} - \bm{\rho}  \leq M (\bm{1} - \bm{z} )  \label{eq-0-1-1:0} \\  
      & - M \bm{z} \leq \bm{\rho} \leq M \bm{z}   \label{eq-0-1-1:2} \\ 
      & \bm{E}_{\mathcal{G}} \bm{\rho} = \bm{c}_{\rm nc}  \label{eq-0-1-1:3}
    \end{align}
  \end{subequations}
  where $\bm{\rho} \in \mathbb{R}^{ n_{\rm e} }$ is a vector of auxiliary variables, and $\bm{c}_{\rm nc}$ is any uniquely-balanced vector.
\end{theorem}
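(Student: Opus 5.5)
The plan is to reduce the big-$M$ system (\ref{eq-0-1-1}) to the vertex potential equation $\bm{L}_{\mathcal{G}_{\bm{z}}} \bm{\vartheta} = \bm{c}_{\rm nc}$ and then invoke Lemma \ref{lem-0-1-1}. The key observation is that, for binary $\bm{z}$, constraints (\ref{eq-0-1-1:0})--(\ref{eq-0-1-1:2}) act as a switch on each branch $j$.

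\begin{itemize}
\item If $z_j = 1$, then (\ref{eq-0-1-1:0}) forces $\rho_j = [\bm{E}_{\mathcal{G}}\T \bm{\vartheta}]_j$, while (\ref{eq-0-1-1:2}) only imposes $|\rho_j| \leq M$.
\item If $z_j = 0$, then (\ref{eq-0-1-1:2}) forces $\rho_j = 0$, while (\ref{eq-0-1-1:0}) only imposes $|[\bm{E}_{\mathcal{G}}\T\bm{\vartheta}]_j| \leq M$.
\end{itemize}

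Hence $\bm{\rho} = \bm{z}\D \bm{E}_{\mathcal{G}}\T \bm{\vartheta}$. Substituting into (\ref{eq-0-1-1:3}) gives $\bm{E}_{\mathcal{G}} \bm{z}\D \bm{E}_{\mathcal{G}}\T \bm{\vartheta} = \bm{c}_{\rm nc}$. The matrix $\bm{E}_{\mathcal{G}} \bm{z}\D \bm{E}_{\mathcal{G}}\T$ equals $\bm{E}_{\mathcal{G}_{\bm{z}}}\bm{E}_{\mathcal{G}_{\bm{z}}}\T = \bm{L}_{\mathcal{G}_{\bm{z}}}$, the Laplacian of the unit-resistance multigraph. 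This step uses $z_j^2 = z_j$ and the fact that deleted columns contribute nothing. I would verify this identity entrywise in one line.

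\emph{Necessity} (feasible $\Rightarrow$ connected). Take a feasible $(\bm{\vartheta}, \bm{\rho})$. The reduction above shows that $\bm{\vartheta}$ solves $\bm{L}_{\mathcal{G}_{\bm{z}}} \bm{\vartheta} = \bm{c}_{\rm nc}$. The necessity part of Lemma \ref{lem-0-1-1} then gives connectedness of $\mathcal{G}_{\bm{z}}$.

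One subtlety is that the lemma's node set should be all of $\mathcal{V}$. A bus isolated by $\bm{z}$ forms its own connected component. Its row of $\bm{L}_{\mathcal{G}_{\bm{z}}}$ is zero, while unique balance requires the corresponding $c_{{\rm nc},i} \neq 0$ (because $\{i\}$ is one of the $\mathcal{V}_k$). So the equation is infeasible in that case, which is consistent with the lemma.

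\emph{Sufficiency} (connected $\Rightarrow$ feasible). By Lemma \ref{lem-0-1-1}, there is a solution $\bm{\vartheta}^\ast$ of $\bm{L}_{\mathcal{G}_{\bm{z}}}\bm{\vartheta} = \bm{c}_{\rm nc}$. Set $\bm{\rho}^\ast = \bm{z}\D\bm{E}_{\mathcal{G}}\T\bm{\vartheta}^\ast$, so that (\ref{eq-0-1-1:3}) holds by the identity above. What remains is to check the big-$M$ bounds: $|\rho^\ast_j| \leq M$ on switched-on branches and $|[\bm{E}_{\mathcal{G}}\T\bm{\vartheta}^\ast]_j| \leq M$ on all branches.

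\textbf{Main obstacle.} This bound is the real content, because the theorem implicitly requires $M$ to be "sufficiently large". My plan is to fix the gauge by choosing the minimum-norm solution $\bm{\vartheta}^\ast = \bm{L}_{\mathcal{G}_{\bm{z}}}^+ \bm{c}_{\rm nc}$ and then bound its size uniformly over all connected $\mathcal{G}_{\bm{z}}$.

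Two routes give such a bound:
\begin{enumerate}
\item Only finitely many topologies $\bm{z}$ exist, so take $M \geq \max_{\bm{z}} 2\Vert \bm{L}_{\mathcal{G}_{\bm{z}}}^+\bm{c}_{\rm nc}\Vert_\infty$.
\item For an explicit constant, use the effective-resistance argument. Every potential difference is bounded by $\Vert\bm{c}_{\rm nc}\Vert_1$ times the maximal effective resistance, which is at most $n_{\rm n}-1$ on a connected unit-resistance graph. This gives $M \geq (n_{\rm n}-1)\Vert\bm{c}_{\rm nc}\Vert_1$.
\end{enumerate}

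I would state this choice of $M$ explicitly as the standing assumption. With it, all of (\ref{eq-0-1-1}) holds and the equivalence is complete.
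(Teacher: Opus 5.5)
Your proposal is correct and follows essentially the paper's route. The paper likewise reduces (\ref{eq-0-1-1}) to $\bm{L}_{\mathcal{G}_{\bm{z}}}\bm{\vartheta}=\bm{c}_{\rm nc}$ and invokes Lemma \ref{lem-0-1-1}; it phrases your branch-by-branch case split as exactness of McCormick envelopes for binary $\bm{z}$, together with $\bm{z}\circ\bm{\rho}=\bm{\rho}$ from (\ref{eq-0-1-1:2}). The one difference is that you make explicit the requirement that $M$ dominate $|\bm{E}_{\mathcal{G}}\T\bm{\vartheta}|$ for some solution, which the paper's ``exact'' claim leaves implicit; your bound $M\ge (n_{\rm n}-1)\Vert\bm{c}_{\rm nc}\Vert_1$ is valid, so your sufficiency direction is the more careful one.
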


\begin{proof}
  The equivalence between $\bm{L}_{\mathcal{G}_{\bm{z}}} \bm{\vartheta} = \bm{c}_{\rm nc} $ and (\ref{eq-0-1-1}) is first proved. It holds that $\bm{L}_{\mathcal{G}_{\bm{z}}} \bm{\vartheta} = \bm{c}_{\rm nc} $ $\Leftrightarrow$ $\bm{E}_{\mathcal{G}_{\bm{z}}} \bm{E}_{\mathcal{G}_{\bm{z}}}\T \bm{\vartheta} = \bm{c}_{\rm nc} $ $\Leftrightarrow$  $\{ \bm{E}_{\mathcal{G}_{\bm{z}}}\T \bm{\vartheta} = \bm{\rho}, \bm{E}_{\mathcal{G}_{\bm{z}}} \bm{\rho} = \bm{c}_{\rm nc} \}$ $\Leftrightarrow$  
  \begin{subequations}\label{eq-0-1-2} 
    \begin{align}
      & \bm{z} \circ ( \bm{E}_{\mathcal{G}}\T \bm{\vartheta}) = \bm{\rho} \label{eq-0-1-2:0} \\ 
      & \bm{E}_{\mathcal{G}} (\bm{z} \circ \bm{\rho} ) = \bm{c}_{\rm nc}. \label{eq-0-1-2:1}
    \end{align}
  \end{subequations}
  Regarding the left-hand side of (\ref{eq-0-1-2:0}) as bilinear terms of $\bm{z}$ and $\bm{E}_{\mathcal{G}}\T \bm{\vartheta}$, (\ref{eq-0-1-1:0}) and (\ref{eq-0-1-1:2}) are just the McCormick envelopes of (\ref{eq-0-1-2:0}), which are exact given $\bm{z} \in \mathbb{B}^{n_{\rm e}}$ \citep{4-2200}. For the bilinear terms $\bm{z} \circ \bm{\rho} $ in (\ref{eq-0-1-2:1}), it holds that $\bm{z} \circ \bm{\rho} = \bm{\rho}$ deriving from their exact McCormick envelopes with upper and lower bounds of $\bm{\rho}$ given by (\ref{eq-0-1-1:2}). Thus, (\ref{eq-0-1-2:1}) and (\ref{eq-0-1-1:3}) are equivalent if (\ref{eq-0-1-1:2}) holds. Then the equivalence between $\bm{L}_{\mathcal{G}_{\bm{z}}} \bm{\vartheta} = \bm{c}_{\rm nc} $ and (\ref{eq-0-1-1}) can be concluded, which together with Lemma \ref{lem-0-1-1} gives Theorem \ref{theo-0-1-1}.
\end{proof}

Constraints (\ref{eq-0-1-1}) forms the electrical-flow-based formulation of network connectedness. 
According to Theorem \ref{theo-0-1-1}, by adding them with auxiliary variables $\bm{\vartheta} \in \mathbb{R}^{n_{\rm n}}$ and $\bm{\rho} \in \mathbb{R}^{n_{\rm e}} $ to the topology optimization models, network connectedness can be guaranteed. 

\subsection{Network-Flow-Based Formulation}

Based on the electrical-flow-based formulation (\ref{eq-0-1-1}), by taking a special case of vector $\bm{c}_{\rm nc}$ and ignoring constraint (\ref{eq-0-1-1:0}), the network-flow-based formulation of network connectedness is obtained \citep{4-1130}:
\begin{subequations}\label{eq-b1-3-0} 
    \begin{align}
        & \sum_{(j, i) \in \mathcal{E}} \rho_{ji} - \sum_{(i, j) \in \mathcal{E}} \rho_{ij} = c_{{\rm nc}, i}  \quad i \in \mathcal{V} \\
        & |\rho_{ij}| \leq   (n_{\rm n} - 1) z_{ij} \quad \forall (i,j) \in \mathcal{E} \\ 
        &  c_{{\rm nc}, i} = 1 \quad \forall i \in \mathcal{V} \backslash \{i^*\} \\
        & c_{{\rm nc}, i^*} = 1 - n_{\rm n}  
    \end{align}
\end{subequations}
where $i^*$ is an arbitrarily selected bus from $\mathcal{V}$; the entry of the vector $\bm{c}_{\rm nc}$ corresponding to bus $i$ is assigned the value $1 - n_{\rm n}$, while all other entries are assigned the value 1.

\section{Connectedness of Network Topology with Contingencies}\label{sec-b1-3-2}

For security-constraint topology optimization problems that account for system contingencies, the issue of network connectedness is further complicated by branch contingencies and corrective topology control actions. 
This section focuses on the criteria and formulation of network connectedness in such scenarios \citep{4-1355}. 

\subsection{Criteria of Network Connectedness with Contingencies}

Accounting for contingencies, the issue of network connectedness arises from three operating states, i.e., the normal state, the post-contingency state, and the post-control state. Fig. \ref{fig-0-2-1} illustrates the different topologies under these three states. Let $\tilde{\bm{z}} \in \mathbb{B}^{n_{\rm e}}$ be the counterpart of $\bm{z}$ for the post-contingency topology. 
Recall that $\bar{\bm{z}} \in \mathbb{B}^{n_{\rm e}}$ is the counterpart of $\bm{z}$ for the post-control topology. 
Hereafter, the vectors $\bm{z}$, $\tilde{\bm{z}}$ and $\bar{\bm{z}}$ are also used to refer to the corresponding topologies for brevity. 

\begin{figure}[h]
	\centering
	\includegraphics[width=0.7\linewidth]{fig-0-2-1.pdf} 
	\caption{Different topologies and inevitable network disconnection.}
	\label{fig-0-2-1}
\end{figure}

In addition, the concept \textit{inevitable network disconnection} is introduced and explained using Fig. \ref{fig-0-2-1}. Assume that the green graph is the network with all branches switched on, and the dashed green line shown separately is switched off under the normal state. Three different contingencies result in disconnected topologies $\tilde{\bm{z}}^1$, $\tilde{\bm{z}}^2$ and $\tilde{\bm{z}}^3$. Topologies $\tilde{\bm{z}}^1$ and $\tilde{\bm{z}}^2$ are considered \textit{inevitably disconnected}, as these network disconnections will occur even if the green line is switched on. This type of network disconnection is referred to as \textit{inevitable network disconnection}. In contrast, $\tilde{\bm{z}}^3$ is not inevitably disconnected, as $\tilde{\bm{z}}^3$ will be connected if the green line is switched on. 

The network connectedness constraints in the last section can be used to ensure network connectedness of $\bm{z}$, while for $\tilde{\bm{z}}$ and $\bar{\bm{z}}$, these constraints are inapplicable due to the following practical situations: 
\begin{itemize}
    \item For many transmission networks, an $N-1$ contingency on the network with all branches closed may cause network disconnection. 
    It is reasonable to specially treat such inevitable network disconnection, as it is caused by inherent local weak connection (e.g. between states in Australia where just have one interconnector) instead of line switching. 
    For topology optimization models with predefined contingency scenarios, identifying contingencies that inevitably disconnect the network can preserve the applicability of the previous connectedness constraints. However, robust and distributionally robust models are not scenario-based, rendering such pre-identification infeasible.
    \item In general, it is impossible to ensure network connectedness of $\tilde{\bm{z}}$ for all $N-k$ contingencies as $k$ get larger. Network connectedness can reasonably be ensured only when the number of fault components is below a threshold. Moreover, the network connectedness of $\bar{\bm{z}}$ cannot always be ensured, as the post-contingency topology may be disconnected. Therefore, the requirements for the network connectedness of $\bar{\bm{z}}$ should depend on the connectedness of the post-contingency topology.
\end{itemize}

Then, define \textit{$k_{\rm b}$-branch contingencies} as the contingencies whose number of fault branches is not more than $k_{\rm b}$ with $k_{\rm b} \in \mathbb{Z}^+$. 
The above analysis leads to the following two criteria for network connectedness with contingencies: 
\begin{criterion}\label{crit-0-2-1}
    $\tilde{\bm{z}}$ is connected for all $k_{\rm b}$-branch contingencies, ignoring inevitable network disconnections.
\end{criterion}
\begin{criterion}\label{crit-0-2-2}
    Corrective line switching should not further disconnect the network when $\tilde{\bm{z}}$ is connected or disconnected with only inevitable network disconnection.
\end{criterion}
\begin{remark}
    Criterion \ref{crit-0-2-2} indicates that when $\tilde{\bm{z}}$ is connected, $\bar{\bm{z}}$ should be connected; when $\tilde{\bm{z}}$ is disconnected with only inevitable network disconnection, $\bar{\bm{z}}$ should not create new network disconnection; and otherwise, $\bar{\bm{z}}$ can be connected or not. The principle of the above criteria is to preserve network connectedness after contingencies and corrective line switching as much as possible within reasonable limits.
\end{remark}

\subsection{Formulation of Network Connectedness with Contingencies}\label{sec-thesis-nc-3-2}

The formulation of the above two criteria can be derived based on the network connectedness constraints (\ref{eq-0-1-1}). To begin with, consider a parameterized region $\mathcal{C}(\phi, \bm{c}_{\rm sc}, \bm{d}) = $
\begin{equation}\label{eq-0-2-4} 
      \left\{ \!\!
     \bm{u} \!\in\! \mathbb{B}^{n_{\rm e}} \! \left|
    \begin{aligned} 
        & M (\bm{u}  - \bm{1} -  \phi \bm{1}) \leq \bm{E}_{\mathcal{G}}\T \bm{\vartheta} - \bm{\rho}  \leq M (\bm{1} - \bm{u} +   \phi \bm{1})  \\  
        & - M (\bm{u} + \phi \bm{1}) \leq \bm{\rho} \leq M (\bm{u} + \phi \bm{1})  \\
        & - \phi M \bm{1} \leq \bm{E}_{\mathcal{G}} \bm{\rho} - (\bm{c}_{\rm sc} + \bm{d}) \leq \phi M \bm{1}, \bm{\vartheta} \in \mathbb{R}^{n_{\rm n}}, \bm{\rho} \in \mathbb{R}^{n_{\rm e}}
    \end{aligned} 
    \right.
    \!\!\right\}
\end{equation}
with $\phi \in \mathbb{R}$, $\bm{d} \in \mathbb{R}^{n_{\rm n}}$, and $\bm{c}_{\rm sc}$ being an $n_{\rm n}$-dimensional constant uniquely-balanced vector (see Definition \ref{cond-0-1-1}). When $\phi=0$ and $\bm{d} = \bm{0}$, constraints in (\ref{eq-0-2-4}) reduce to constraints (\ref{eq-0-1-1}), such that $\mathcal{C}$ is the region of $\bm{u}$ whose associated topology is connected. When $\phi\geq1$ and $\bm{d} = \bm{0}$, constraints in (\ref{eq-0-2-4}) are all invalid. 
The general idea to formulate the two criteria is to design $\bm{c}_{\rm sc}$ particularly rather than just being uniquely-balanced, such that $\bm{d}$, satisfying certain conditions, can be used to identify and ensure more complex cases of network connectedness.

Let $\mathcal{W}(k_{\rm b}) = \{ (\mathcal{L}_i, \mathcal{N}_i)| i=1,2,...,n_{\rm w} \}$ be the set of all $n_{\rm w}$ pairs of ($\mathcal{L}_i, \mathcal{N}_i$) that satisfy the following three conditions: 
\begin{itemize}
    \item $\mathcal{L}_i \subset \mathcal{E}$, $\mathcal{N}_i \subset \mathcal{V}$, and $|\mathcal{L}_i| \leq k_{\rm b}$.
    \item Removing ${\mathcal{L}}_i$ from $\mathcal{G}(\mathcal{V}, \mathcal{E})$ causes network disconnection with $\mathcal{N}_i$ being the set of buses not in the main connected component. 
    \item There does not exist a nonempty subset $\mathcal{L}' \subset \mathcal{L}_i$ such that removing $\mathcal{L}_i \backslash \mathcal{L}'$ from $\mathcal{G}(\mathcal{V}, \mathcal{E})$ causes network disconnection with $\mathcal{N}_i$ being the set of buses not in the main connected component. 
\end{itemize}
For short, network disconnections in the second condition are referred to as $\mathcal{W}(k_{\rm b})$ \textit{network disconnections} 
and the associated graphs are $\mathcal{W}(k_{\rm b})$-\textit{disconnected}. 
Thereby, the set of $\mathcal{W}(k_{\rm b})$ network disconnections and that of all inevitable network disconnections in Criterion \ref{crit-0-2-1} are identical. Note that lines in $\cup_{i=1}^{n_{\rm w}} \mathcal{L}_i$ are assumed not to be switched off in $\bm{z}$ since such switching facilitates inevitable network disconnection.

Next, introduce matrix $\bm{E}_{\rm w} \in \mathbb{R}^{n_{\rm w} \times n_{\rm n} }$ satisfying that $\forall i = 1,2,...,n_{\rm w}$, ${E}_{{\rm w}, ij} = 1$ if $j \in \mathcal{N}_i$ and ${E}_{{\rm w}, ij} = 0$ otherwise; and matrix $\bm{H}_{\rm w}$ formed by deleting the common rows between $\bm{E}_{\rm ns}$ and $\bm{E}_{\rm w}$ and between $\bm{E}_{\rm ns}$ and $\bm{J} - \bm{E}_{\rm w}$, and the all-ones row, from $\bm{E}_{\rm ns}$. Denote by $n_{\rm u}$ the maximal number of connected components among all $\mathcal{W}(k_{\rm b})$ network disconnections, and $n_{\rm rh}$ the row dimension of $\bm{H}_{\rm w}$. Then a balance property of $\bm{c}_{\rm sc}$ is defined by Definition \ref{def-0-2-1}. 

\begin{definition}[$\mathcal{W}(k_{\rm b})$-unique balance]\label{def-0-2-1}
    Let $\bm{b}_{\rm sc} = \bm{E}_{\rm ns} \bm{c}_{\rm sc} \in \mathbb{R}^{n_{\rm s}}$. Then $\bm{c}_{\rm sc}$ is $\mathcal{W}(k_{\rm b})$-uniquely balanced if ${b}_{\rm sc, 1}=0$ and ${b}_{{\rm sc}, i} \neq0$ for $i= 2,3,..., n_{\rm s}$, namely that $\bm{c}_{\rm sc}$ is uniquely balanced; and $\exists r \!\in \mathbb{R}$, \text{s.t.}, $\Vert \bm{E}_{\rm w} \bm{c}_{\rm sc} \Vert_{\infty} \!\leq r$ and $\Vert \bm{H}_{\rm w} \bm{c}_{\rm sc} \Vert_{-\infty} \geq n_{\rm u} r$. 
\end{definition}

Unlike uniquely balanced $\bm{c}_{\rm sc}$ which has a trivial special case, it may be far from easy to find such a special case for $\mathcal{W}(k_{\rm b})$-uniquely balanced $\bm{c}_{\rm sc}$. An ad hoc approach is to solve the following Mixed-Integer Linear Programming (\ac{milp}) model:
\begin{subequations}\label{eq-0-2-c-milp}
    \begin{align}
        &\!\!\!\!\!\!\!\!\!\!\!\!\!\!\!\!\!\!\!\!\!\!\!\!\!\!\!\!\!\!\!\! \min_{\bm{c}_{\rm sc} \in \mathbb{R}^{n_{\rm n}}, \bm{b}_{\rm sc} \in \mathbb{R}^{n_{\rm s}}, \bm{\beta} \in \mathbb{B}^{n_{\rm s} - 1} \bm{\gamma} \in \mathbb{B}^{n_{\rm rh}}, r \in \mathbb{R} } ~~ r  \\
        \rm{s.t.} ~& \bm{b}_{\rm sc} = \bm{E}_{\rm ns} \bm{c}_{\rm sc} \\
        & b_{\rm sc, 1} = 0  \\
        &  -r \bm{1}  \leq \bm{E}_{\rm w} \bm{c}_{\rm sc}  \leq r \bm{1} \\ 
        & \epsilon -  M {\beta}_{i-1} \leq b_{{\rm sc}, i}  \leq -  \epsilon + M (1 - {\beta}_{i-1}), i \in \llbracket 2, n_{\rm s} \rrbracket \!\!\!\!\!\!\!\!\!\!\! \\
        &  n_{\rm u} r \bm{1} -  M \bm{\gamma} \leq  \bm{H}_{\rm w} \bm{c}_{\rm sc}  \leq - n_{\rm u} r \bm{1}  + M (1 -\bm{\gamma}) 
    \end{align}
\end{subequations}
where $\bm{\beta}$, $\bm{\gamma}$ and $r$ are auxiliary variables, and $\epsilon$ is a positive constant. This \ac{milp} model is derived from Definition \ref{def-0-2-1}, with the objective function set as $r$ that can be replaced by any objective function with a finite lower bound, and constraints in the form $|\cdot| > 0$ being linearized. Any feasible solution of (\ref{eq-0-2-c-milp}) yields a $\mathcal{W}(k_{\rm b})$-uniquely balanced $\bm{c}_{\rm sc}$.

With $\bm{c}_{\rm sc}$ being an $n_{\rm n}$-dimensional constant $\mathcal{W}(k_{\rm b})$-uniquely balanced vector, consider the following $\tilde{\bm{z}}$-parameterized LP model:
\begin{equation}\label{eq-0-2-5} 
         \min_{\bm{d}_+, \bm{d}_- \in \mathbb{R}^{n_{\rm n}}}  \bm{1}\T (\bm{d}_+ + \bm{d}_- )  ~\text{s.t.} ~  \mathcal{C}(0, \bm{c}_{\rm sc}, \bm{d}_+ - \bm{d}_-)  \owns \tilde{\bm{z}}, \bm{d}_+ \geq \bm{0}, \bm{d}_- \geq \bm{0} 
\end{equation}
Denote by $\mathcal{D}(\tilde{\bm{z}})$ the set of all optima of (\ref{eq-0-2-5}). Then network connectedness of $\tilde{\bm{z}}$ and the optima of (\ref{eq-0-2-5}) are associated by Proposition \ref{prop-0-2-1}.

\begin{proposition}\label{prop-0-2-1}
    $\forall (\bm{d}_+^*, \bm{d}_-^*) \in \mathcal{D}(\tilde{\bm{z}})$, (\uppercase\expandafter{\romannumeral1}) $\bm{1}\T (\bm{d}_+^* + \bm{d}_-^*) = 0$ iff $\tilde{\bm{z}}$ is connected; 
    (\uppercase\expandafter{\romannumeral2}) $\bm{1}\T (\bm{d}_+^* + \bm{d}_-^*) \geq 2 \Vert \bm{b}_{\rm sc} \Vert_{-2 \infty} > 0$ iff $\tilde{\bm{z}}$ is disconnected; 
    and (\uppercase\expandafter{\romannumeral3}) $\bm{1}\T (\bm{d}_+^* + \bm{d}_-^*) \leq n_{\rm u} r$ iff $\tilde{\bm{z}}$ is connected or $\mathcal{W}(k_{\rm b})$-disconnected. 
\end{proposition}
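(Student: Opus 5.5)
The plan is to turn the LP (\ref{eq-0-2-5}) into a closed-form expression and then read off (I)--(III) from the balance properties of $\bm{c}_{\rm sc}$.

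\textbf{Step 1: reduce feasibility to a per-component balance condition.} With $\phi=0$ the constraints of $\mathcal{C}(0,\bm{c}_{\rm sc},\bm{d})$ are exactly (\ref{eq-0-1-1}) with right-hand side $\bm{c}_{\rm sc}+\bm{d}$. The first part of the proof of Theorem \ref{theo-0-1-1} therefore applies: $\tilde{\bm{z}}\in\mathcal{C}(0,\bm{c}_{\rm sc},\bm{d})$ iff $\bm{L}_{\mathcal{G}_{\tilde{\bm{z}}}}\bm{\vartheta}=\bm{c}_{\rm sc}+\bm{d}$ is solvable. By the block-diagonal argument in the proof of Lemma \ref{lem-0-1-1}, this holds iff $\bm{1}\T(\bm{c}_{\rm sc}+\bm{d})_K=0$ on every connected component $K$ of $\mathcal{G}_{\tilde{\bm{z}}}$. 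Here $\bm{1}\T(\cdot)_K$ denotes the sum of the entries indexed by the nodes of $K$. (I would treat isolated buses as singleton components; the node set of each $K$ is a connected node-induced subgraph of $\mathcal{G}$, so it is one of the $\mathcal{V}_i$.)

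\textbf{Step 2: evaluate the LP in closed form.} Let $b_K$ denote the entry of $\bm{b}_{\rm sc}$ indexed by the node set of $K$. For any feasible $\bm{d}=\bm{d}_+-\bm{d}_-$, the triangle inequality gives
\[
\bm{1}\T(\bm{d}_++\bm{d}_-)\ \ge\ \|\bm{d}\|_1\ \ge\ \sum_K|\bm{1}\T\bm{d}_K| \;=\; \sum_K |b_K|.
\]
The bound is attained by placing $-b_K$ on a single node of each component. Hence every optimum in $\mathcal{D}(\tilde{\bm{z}})$ has value $\sum_K|b_K|$, and (I)--(III) become statements about this sum.

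\textbf{Step 3: items (I) and (II).} If $\tilde{\bm{z}}$ is connected, there is a single component with node set $\mathcal{V}_1=\mathcal{V}$, so the value is $b_{{\rm sc},1}=0$. If $\tilde{\bm{z}}$ is disconnected, every component is a proper set $\mathcal{V}_i$ with $i\ge2$, so $|b_K|\neq0$ by unique balance. Since $b_{{\rm sc},1}=0$ is the smallest absolute entry, each such $|b_K|$ is at least $\Vert\bm{b}_{\rm sc}\Vert_{-2\infty}>0$. With at least two components, the value is at least $2\Vert\bm{b}_{\rm sc}\Vert_{-2\infty}$. The two implications are complementary, so both "iff"s follow.

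\textbf{Step 4: item (III), and the main obstacle.} Suppose $\tilde{\bm{z}}$ is $\mathcal{W}(k_{\rm b})$-disconnected, with pair $(\mathcal{L}_i,\mathcal{N}_i)$. I will argue that each component's node set is either a row set of $\bm{E}_{\rm w}$ or the complement of one. The main component is $\mathcal{V}\setminus\mathcal{N}_i$; because $\bm{1}\T\bm{c}_{\rm sc}=0$, its sum is $-\bm{E}_{{\rm w},i}\bm{c}_{\rm sc}$. Each non-main component $K\subset\mathcal{N}_i$ is itself split off by a subset of $\mathcal{L}_i$, of size at most $k_{\rm b}$, so after minimalization it also appears as some $\mathcal{N}_j$. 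Thus $|b_K|\le\Vert\bm{E}_{\rm w}\bm{c}_{\rm sc}\Vert_\infty\le r$ for every component. There are at most $n_{\rm u}$ components, so the value is at most $n_{\rm u}r$.

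Conversely, suppose $\tilde{\bm{z}}$ is disconnected but not $\mathcal{W}(k_{\rm b})$-disconnected. Then some component's node set is neither a row set of $\bm{E}_{\rm w}$, nor the complement of one, nor $\mathcal{V}$. That set is therefore a row of $\bm{H}_{\rm w}$, so $|b_K|\ge\Vert\bm{H}_{\rm w}\bm{c}_{\rm sc}\Vert_{-\infty}\ge n_{\rm u}r$. At least one further component contributes a strictly positive amount by Step 3, so the value exceeds $n_{\rm u}r$. Together with the connected case from (I), this gives (III).

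The delicate step is this combinatorial identification. Two things need care: that every component of a $\mathcal{W}(k_{\rm b})$-disconnection is captured by a row of $\bm{E}_{\rm w}$ or its complement, and that every "bad" component is captured by a row of $\bm{H}_{\rm w}$. Here I expect to lean on the standing assumption that lines in $\cup_i\mathcal{L}_i$ are not switched off in $\bm{z}$, and on the minimality condition in the definition of $\mathcal{W}(k_{\rm b})$.
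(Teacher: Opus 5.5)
Your Steps 1--3 follow the paper's proof. Solvability of (\ref{eq-0-2-6}) reduces to per-component balance of $\bm{c}_{\rm sc}+\bm{d}$ through the block-diagonal Laplacian. Every optimum of (\ref{eq-0-2-5}) then has value $\sum_K|b_K|$, and (I)--(II) are read off from this.

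The one local difference is how you certify that value. You use the bound $\bm{1}\T(\bm{d}_++\bm{d}_-)\ge\sum_K|\bm{1}\T\bm{d}_K|$ plus an attaining point, where the paper uses the dual (\ref{eq-0-2-11}) and an explicit primal--dual pair. Your version is shorter and equally rigorous.

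The forward half of Step 4 also goes through with the minimality condition, fixing the main component (e.g.\ by a reference bus). If some $e\in\mathcal{L}_i$ did not join the main component to a non-main one, removing $\mathcal{L}_i\setminus\{e\}$ would still leave exactly $\mathcal{N}_i$ outside the main component, contradicting minimality. Hence the set $\delta(K)\subseteq\mathcal{L}_i$ of lines with exactly one end in $K$ cuts off exactly $K$, and $(\delta(K),K)$ is itself a $\mathcal{W}(k_{\rm b})$ pair.

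The genuine gap is the converse half of Step 4: the claim that a disconnected $\tilde{\bm{z}}$ which is not $\mathcal{W}(k_{\rm b})$-disconnected has a component whose node set is a row of $\bm{H}_{\rm w}$. You give no argument for it. Under the reading your forward half uses (the components of $\tilde{\bm{z}}$ are those of $\mathcal{G}$ with some $\mathcal{L}_i$ removed), it is false. Here is a counterexample.
\begin{itemize}
\item Take a core that no two lines can cut, e.g.\ a complete graph on five buses. Attach buses $p$ and $q$ to it by one line each, add a line $p$--$q$, and let $k_{\rm b}=2$.
\item The $\mathcal{W}(2)$ pairs cut off $\{p\}$, $\{q\}$ or $\{p,q\}$, each leaving two components, so $n_{\rm u}=2$.
\item Opening all three lines gives the components $\mathcal{V}\setminus\{p,q\}$, $\{p\}$, $\{q\}$. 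These are a row of $\bm{J}-\bm{E}_{\rm w}$ and two rows of $\bm{E}_{\rm w}$, so no row of $\bm{H}_{\rm w}$ is hit. Yet this is not the component structure of any $\mathcal{G}\setminus\mathcal{L}_i$.
\item Writing $c_p,c_q$ for the entries of $\bm{c}_{\rm sc}$ at $p,q$, the optimal value is $|c_p+c_q|+|c_p|+|c_q|=2\max\{|c_p+c_q|,|c_p|,|c_q|\}\le 2r=n_{\rm u}r$.
\item $\mathcal{W}(2)$-uniquely balanced vectors exist, e.g.\ $r=10$, $c_p=3$, $c_q=4$, core entries $100,200,400,800,-1507$.
\end{itemize}
So under this reading, the implication from $\bm{1}\T(\bm{d}_+^*+\bm{d}_-^*)\le n_{\rm u}r$ to ``connected or $\mathcal{W}(k_{\rm b})$-disconnected'' fails outright.

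The paper's proof makes the same unsupported step (``there must be at least one connected component associated with a row of $\bm{E}_{\rm ns}$ which is also in $\bm{H}_{\rm w}$''). It therefore cannot be repaired by following the paper; the notion of $\mathcal{W}(k_{\rm b})$-disconnectedness itself has to be sharpened. Suppose you define it as ``every component's node set is a row of $\bm{E}_{\rm w}$ or of $\bm{J}-\bm{E}_{\rm w}$''. The converse then becomes immediate, but your forward half needs the number of components to be at most $n_{\rm u}$. The definition of $n_{\rm u}$ does not give that: the example above has three components while $n_{\rm u}=2$.
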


\begin{proof}   
    {(\uppercase\expandafter{\romannumeral1})}. (Sufficiency) By $\bm{d}_+^* \geq 0$, $\bm{d}_-^* \geq 0$, and $\bm{1}\T (\bm{d}_+^* + \bm{d}_-^*) = 0$, we have $\bm{d}_+^* = \bm{d}_-^* = \bm{0}$, and thus $\tilde{\bm{z}} \in \mathcal{C}(0, \bm{c}_{\rm sc}, \bm{0})$, i.e., $\tilde{\bm{z}}$ is connected. (Necessity) Ignoring the constraint $\tilde{\bm{z}} \in \mathcal{C}(\cdot)$ in (\ref{eq-0-2-5}), the unique optimum of (\ref{eq-0-2-5}) is $(\bm{d}_+^*, \bm{d}_-^*) = (\bm{0}, \bm{0})$. Since program (\ref{eq-0-2-5}) with $\tilde{\bm{z}}$ being connected is also feasible at $(\bm{d}_+^*, \bm{d}_-^*)$, $(\bm{d}_+^*, \bm{d}_-^*)$ is also the unique optimum of this program. Therefore, if $\tilde{\bm{z}}$ is connected, $\bm{1}\T (\bm{d}_+^* + \bm{d}_-^*) = 0$. 

    {(\uppercase\expandafter{\romannumeral2})}. (Sufficiency) By {(\uppercase\expandafter{\romannumeral1})}, $\bm{1}\T (\bm{d}_+^* + \bm{d}_-^*) \neq 0$ yields that $\tilde{\bm{z}}$ is disconnected. (Necessity) By the proof of Theorem \ref{theo-0-1-1}, the constraints in $\tilde{\bm{z}} \in \mathcal{C}(0, \bm{c}_{\rm sc}, \bm{d}_+ - \bm{d}_-)$ are equivalent to 
    \begin{equation}\label{eq-0-2-6}
        \bm{L}_{\mathcal{G}_{\tilde{\bm{z}}}} \bm{\vartheta} = \bm{c}_{\rm sc} + \bm{d}_+ - \bm{d}_-
    \end{equation}
    where $\bm{L}_{\mathcal{G}_{\tilde{\bm{z}}}}$ is the Laplacian matrix of graph $\mathcal{G}_{\tilde{\bm{z}}} (\mathcal{V}, \tilde{\mathcal{E}})$ with $\tilde{\mathcal{E}}$ being the set of branches in $\tilde{\bm{z}}$. Assume that graph $\mathcal{G}_{\tilde{\bm{z}}}(\mathcal{V}, \tilde{\mathcal{E}})$ contains $n_{\rm c}$ connected components, and let $[\bm{L}_{\mathcal{G}_{\tilde{\bm{z}}}}]_i$ be the principle submatrix of $\bm{L}_{\mathcal{G}_{\tilde{\bm{z}}}}$ associated with the $i$-th connected component, $[\bm{\vartheta}]_i$ be the subvector of $\bm{\vartheta}$ associated with the $i$-th connected component, and $[\bm{c}_{\rm sc}]_i$, $[\bm{d}_+]_i$, and $[\bm{d}_-]_i$ are analogous. 
    
    Then (\ref{eq-0-2-6}) can be reformulated as:
    \begin{equation}\label{eq-0-2-7}
         \begin{bmatrix}
            [\bm{L}_{\mathcal{G}_{\tilde{\bm{z}}}}]_1 &\cdots&\cdots& 0 \\
            \vdots& [\bm{L}_{\mathcal{G}_{\tilde{\bm{z}}}}]_2 && \vdots  \\
            \vdots && \ddots & \vdots \\
            0 &\cdots&\cdots& [\bm{L}_{\mathcal{G}_{\tilde{\bm{z}}}}]_{n_{\rm c}}
        \end{bmatrix} 
        \begin{bmatrix}
            [\bm{\vartheta}]_{1} \\
            [\bm{\vartheta}]_{2} \\
            \vdots\\
            [\bm{\vartheta}]_{n_{\rm c}}  
        \end{bmatrix}
        = 
        \begin{bmatrix}
            [\bm{c}_{\rm sc}]_1 + [\bm{d}_+]_1 - [\bm{d}_-]_1 \\
            [\bm{c}_{\rm sc}]_2 + [\bm{d}_+]_2 - [\bm{d}_-]_2 \\
            \vdots \\
            [\bm{c}_{\rm sc}]_{n_{\rm c}} + [\bm{d}_+]_{n_{\rm c}} - [\bm{d}_-]_{n_{\rm c}} 
        \end{bmatrix}
    \end{equation}
    which together with the definition of Laplacian matrices, gives
    \begin{equation}\label{eq-0-2-8}
        \bm{1}\T ( [\bm{c}_{\rm sc}]_i + [\bm{d}_+]_i - [\bm{d}_-]_i ) = 0, i = 1, 2, ..., n_{\rm c}
    \end{equation}
    Furthermore, for any connected component of $\mathcal{G}_{\tilde{\bm{z}}}(\mathcal{V}, \tilde{\mathcal{E}})$, there exists a node-induced subgraph of $\mathcal{G}(\mathcal{V}, \mathcal{E})$ whose node set is equal to that of the connected component and unequal to $\mathcal{V}$. By the fact that $\bm{c}_{\rm sc}$ is uniquely balanced, we have $\forall i=1,2,...,n_{\rm c}$, $\bm{1}\T [\bm{c}]_i = {b}_{{\rm sc}, j(i)}$ with $j(i) \in \llbracket 2, n_{\rm s} \rrbracket$, and thus 
    \begin{equation}\label{eq-0-2-9}
        \bm{1}\T ([\bm{d}_+]_i - [\bm{d}_-]_i ) = {b}_{{\rm sc}, j(i)} , i=1, 2, ..., n_{\rm c}
    \end{equation}
    Then (\ref{eq-0-2-5}) is equivalent to 
    \begin{subequations}\label{eq-0-2-10}
        \begin{align}
            \min_{\bm{d}_+ \in \mathbb{R}^{n_{\rm n}}, \bm{d}_- \in \mathbb{R}^{n_{\rm n}}} & \bm{1}\T (\bm{d}_+ + \bm{d}_- ) \label{eq-0-2-10:1} \\
            \text{s.t.} ~~~~~~& \bm{1}\T ([\bm{d}_+]_i - [\bm{d}_-]_i ) = {b}_{{\rm sc}, j(i)} , i=1, 2, ..., n_{\rm c} \label{eq-0-2-10:2} \\
            & \bm{d}_+ \geq \bm{0}, \bm{d}_- \geq \bm{0} \label{eq-0-2-10:3}
        \end{align}
    \end{subequations}
    To obtain the optimum of (\ref{eq-0-2-10}), we write its dual problem as
    \begin{subequations}\label{eq-0-2-11}
        \begin{align}
            \max_{\bm{y} \in \mathbb{R}^{n_{\rm c}}} & \sum_{i=1}^{n_{\rm c}} {y}_i {b}_{{\rm sc}, j(i)} ~~\text{s.t.}~ -1 \leq {y}_i \leq  1, i=1,2,..., n_{\rm c}
        \end{align}
    \end{subequations}
    Consider a solution of (\ref{eq-0-2-10}), denoted as $(\bm{d}_+^{\star}, \bm{d}_-^{\star})$, which satisfies that  $\forall i=1,2,..., n_{\rm c}$, if ${b}_{{\rm sc}, j(i)} > 0$, then $[\bm{d}_-^{\star}]_{i} = \bm{0}$ and $[\bm{d}_+^{\star}]_{i}$ contains only one nonzero element being ${b}_{{\rm sc}, j(i)}$; and if ${b}_{{\rm sc}, j(i)} < 0$, then $[\bm{d}_+^{\star}]_{i} = \bm{0}$ and $[\bm{d}_-^{\star}]_{i}$ contains only one nonzero element being $-{b}_{{\rm sc}, j(i)}$. Also consider a solution of (\ref{eq-0-2-11}), denoted as $\bm{y}^{\star}$, which satisfies that $\forall i=1,2,..., n_{\rm c}$, if ${b}_{{\rm sc}, j(i)} > 0$, then ${y}_i^{\star} = 1$, and if ${b}_{{\rm sc}, j(i)} < 0$, then ${y}_i^{\star} = -1$. Since $(\bm{d}_+^{\star}, \bm{d}_-^{\star})$ and $\bm{y}^{\star}$ are feasible solutions and $\bm{1}\T (\bm{d}_+^{\star} + \bm{d}_-^{\star} ) = \sum_{i=1}^{n_{\rm c}} {y}_i^{\star} {b}_{{\rm sc}, j(i)}$, by strong duality, $(\bm{d}_+^{\star}, \bm{d}_-^{\star})$ is a global optimum of (\ref{eq-0-2-10}). Therefore, $\bm{1}\T (\bm{d}_+^* + \bm{d}_-^*) = \bm{1}\T (\bm{d}_+^{\star} + \bm{d}_-^{\star}) = \sum_{i=1}^{n_{\rm c}} |{b}_{{\rm sc}, j(i)}| $, which together with $n_{\rm c} \geq 2$ and $j(i) \neq 1$, gives $\bm{1}\T (\bm{d}_+^* + \bm{d}_-^*) \geq 2 \Vert \bm{b}_{\rm sc} \Vert_{-2 \infty} > 0$.

    {(\uppercase\expandafter{\romannumeral3})} (Sufficiency) By {(\uppercase\expandafter{\romannumeral1})}, if $\bm{1}\T (\bm{d}_+^* + \bm{d}_-^*) = 0$, $\tilde{\bm{z}}$ is connected, and if $0 < \bm{1}\T (\bm{d}_+^* + \bm{d}_-^*) \leq n_{\rm u} r$, $\tilde{\bm{z}}$ is disconnected. Next, we prove that in the latter case, $\tilde{\bm{z}}$ is also $\mathcal{W}(k_{\rm b})$ disconnected by contradiction. Assume that $\tilde{\bm{z}}$ is not $\mathcal{W}(k_{\rm b})$ disconnected, then there must be at least one connected component associated with a row of $\bm{E}_{\rm ns}$ which is also in $\bm{H}_{\rm w}$. Following the proof of necessity of {(\uppercase\expandafter{\romannumeral2})}, it indicates that $\exists i \in \llbracket 1,n_{\rm c} \rrbracket$ s.t. $|{b}_{{\rm sc}, j(i)}| \geq \Vert \bm{H}_{\rm w} \bm{c}_{\rm sc} \Vert_{- \infty}$. Further by $\Vert \bm{H}_{\rm w} \bm{c}_{\rm sc} \Vert_{- \infty} \geq n_{\rm u} r$, $|{b}_{{\rm sc}, j(i)}| > 0$ with $i \in \llbracket 1, n_{\rm c} \rrbracket$, and $n_{\rm c} \geq 2$, we have $\bm{1}\T (\bm{d}_+^* + \bm{d}_-^*) = \sum_{i=1}^{n_{\rm c}} |{b}_{{\rm sc}, j(i)}| > n_{\rm u} r$, which contradicts $\bm{1}\T (\bm{d}_+^* + \bm{d}_-^*) \leq n_{\rm u} r$. This proves that $\tilde{\bm{z}}$ is $\mathcal{W}(k_{\rm b})$ disconnected, completing the proof of sufficiency.

    (Necessity) According to {(\uppercase\expandafter{\romannumeral1})}, if $\tilde{\bm{z}}$ is connected, $\bm{1}\T (\bm{d}_+^* + \bm{d}_-^*) = 0 < n_{\rm u} r$. When $\tilde{\bm{z}}$ is $\mathcal{W}(k_{\rm b})$ disconnected, $\bm{1}\T (\bm{d}_+^* + \bm{d}_-^*) = \sum_{i=1}^{n_{\rm c}} |{b}_{{\rm sc}, j(i)}|$, from the necessity proof of {(\uppercase\expandafter{\romannumeral2})}. The row of $\bm{E}_{\rm ns}$ associated with ${b}_{{\rm sc}, j(i)}$ is in $\bm{E}_{\rm w}$ or $\bm{J} - \bm{E}_{\rm w}$, for any $i \in \llbracket 1, n_{\rm c} \rrbracket$. Since $\bm{1}\T  \bm{c} = 0$, $\Vert (\bm{J} - \bm{E}_{\rm w}) \bm{c}_{\rm sc} \Vert_{\infty} = \Vert \bm{E}_{\rm w} \bm{c}_{\rm sc} \Vert_{\infty} \leq r$, which along with $n_{\rm c} \leq n_{\rm u}$, gives $\bm{1}\T (\bm{d}_+^* + \bm{d}_-^*) \leq n_{\rm u} r$.
\end{proof}

For Criterion \ref{crit-0-2-1}, introduce a variable $\phi_1 \in \mathbb{B}$ to indicate if a contingency is a $k_{\rm b}$-branch one. This variable satisfies the following inequality:
\begin{equation}\label{eq-0-2-12}
    1 - \phi_1 (k_{\rm b} + 1)  \leq  (n_{\rm e} - \bm{1}\T \bm{o}_{\rm b}) - k_{\rm b} \leq  (1 - \phi_1) (k_{\rm max}  - k_{\rm b} )
\end{equation}
where $k_{\rm max}$ is the maximal number of fault components. Here, $\phi_1 = 1$ if $n_{\rm e} - \bm{1}\T \bm{o}_{\rm b} \leq k_{\rm b}$, indicating that the contingency is a $k_{\rm b}$-branch one, and $\phi_1 = 0$ otherwise. According to Proposition \ref{prop-0-2-1}-{(\uppercase\expandafter{\romannumeral3})}, Criterion \ref{crit-0-2-1} can be formulated as 
\begin{equation}\label{eq-0-2-13}
    \bm{1}\T(\bm{d}_+^* + \bm{d}_-^*) \leq n_{\rm u} r + (1 - \phi_1) M
\end{equation}

For Criterion \ref{crit-0-2-2}, introduce variables $\phi_2, \phi_3 \in \mathbb{B}$ satisfying
\begin{subequations}\label{eq-0-2-14} 
    \begin{align}
        & M \phi_2 \geq 2 \Vert \bm{b}_{\rm sc} \Vert_{- 2 \infty}  - \bm{1}\T (\bm{d}_+^* + \bm{d}_-^*) \geq M (\phi_2 - 1)  \\ 
        & M \phi_3 \geq \bm{1}\T (\bm{d}_+^* + \bm{d}_-^*) - n_{\rm u} r \geq M (\phi_3 - 1) 
    \end{align}
\end{subequations}
Here, $\phi_2 = \phi_3 = 0$ iff $2 \Vert \bm{b}_{\rm sc} \Vert_{- 2 \infty}  \leq \bm{1}\T (\bm{d}_+^* + \bm{d}_-^*) \leq n_{\rm u} r$, i.e., $\tilde{\bm{z}}$ is $\mathcal{W}(k_{\rm b})$ disconnected, by Proposition \ref{prop-0-2-1}. Then Criterion \ref{crit-0-2-2} is formulated as
\begin{equation}\label{eq-0-2-15} 
     \bar{\bm{z}} \in \mathcal{C}(\frac{\bm{1}\T (\bm{d}_+^* + \bm{d}_-^*)}{2 \Vert \bm{b}_{\rm sc} \Vert_{-2 \infty}}, \bm{c}_{\rm sc}, \bm{0} ), \bar{\bm{z}} + \bm{1} - \bm{o}_{\rm b} \in \mathcal{C}(\phi_2 + \phi_3, \bm{c}_{\rm sc}, \bm{0} )  
\end{equation}

Constraints (\ref{eq-0-2-12})\text{-}(\ref{eq-0-2-15}) together with $(\bm{d}_+^*, \bm{d}_-^*) \in\arg (\ref{eq-0-2-5})$ formulate the network connectedness with contingencies. 
However, adding them into the topology optimization model will introduce additional structural complicity. Since problem (\ref{eq-0-2-5}) is a linear program, it can be replaced by its Karush-Kuhn-Tucker (\ac{kkt}) conditions, which with the \ac{kkt} complementarity condition reformulated in a mixed-integer form, is written as 
 \begin{equation}\label{eq-0-2-17}
         \bm{A} \bm{y} \leq \bm{w}(\tilde{\bm{z}}), \bm{A}\T \bm{\lambda} = \bm{h}, \bm{\lambda} \geq \bm{0}, \bm{w}(\tilde{\bm{z}}) - \bm{A} \bm{y} \leq M (\bm{1} - \bm{\xi}), \bm{\lambda} \leq M \bm{\xi}   
 \end{equation}
where $\bm{y} = [{\bm{d}_+^*}\T, {\bm{d}_-^*}\T, \bm{\vartheta}\T, \bm{\rho}\T ]\T$, $\bm{\lambda} \in \mathbb{R}^{4(n_{\rm n} + n_{\rm e})}$, $\bm{\xi} \in \mathbb{B}^{4(n_{\rm n} + n_{\rm e})}$, and $\bm{h}$, $\bm{w}(\tilde{\bm{z}})$ and $\bm{A}$ are coefficient matrices of the equivalent compact form of (\ref{eq-0-2-5}), i.e., $\min_{\bm{y} \in \mathbb{R}^{n_{\rm e} + 3 n_{\rm n}}} \bm{h}\T \bm{y} ~\text{s.t.}~ \bm{A} \bm{y} \leq \bm{w}(\tilde{\bm{z}})$. 

Accordingly, constraints (\ref{eq-0-2-12})-(\ref{eq-0-2-17}) constitute a tractable formulation of network connectedness with contingencies, ensuring compliance with Criterion \ref{crit-0-2-1} and Criterion \ref{crit-0-2-2}. This formulation is expressed in a mixed-integer linear form, thereby avoiding additional complexities in topology optimization problems.

\section{Multi-Island Structure of Network Topology}\label{sec-3-3}

For intentional controlled islanding of transmission networks, the objective is not to maintain a fully connected topology but to establish a multi-island structure, with each island containing a specific group of buses. To formulate it, the network is assumed to be partitioned into $n_{\rm is}$ islands, represented by the set $\mathcal{K} = \llbracket 1, n_{\rm is} \rrbracket$; let $\mathcal{V}_k^{\rm gen}$ denote the set of generator buses that are required to be located in island $k$.  
Then, the multi-island topology can be enforced through partitioning constraints, connectedness constraints, and co-location constraints. 

Firstly, the partitioning constraints ensure that the network is partitioned into $n_{\rm is}$ islands, expressed as follows:
\begin{subequations}\label{eq-b1-3-5}
    \begin{align}
        & W_{ij, h} \leq X_{i, h}, && \forall (i, j) \in \mathcal{E}, h \in \mathcal{K} \label{eq-b1-3-5:1} \\
        & W_{ij, h} \leq X_{j, h}, && \forall (i, j) \in \mathcal{E}, h \in \mathcal{K} \label{eq-b1-3-5:2} \\
        & z_{ij} = \sum_{h \in \mathcal{K}} W_{ij, h}, && \forall (i, j) \in \mathcal{E} \label{eq-b1-3-5:3} \\
        & \sum_{k \in \mathcal{K}} X_{i, k} = 1, && \forall i \in \mathcal{V}  \label{eq-b1-3-5:4}
    \end{align}
\end{subequations}
where binary variable $X_{i, h} \in \mathbb{B}$ indicates whether vertex $i$ belongs to island $h$; variable $W_{ij, h} \in [0, 1]$ indicates whether edge $(i, j)$ belongs to island $h$. 
Constraints (\ref{eq-b1-3-5:1}) and (\ref{eq-b1-3-5:2}) ensure that if both vertex $i$ and vertex $j$ belong to island $h$, edge $(i, j)$ must also belong to island $h$; otherwise, the edge does not belong to that island. 
Constraint (\ref{eq-b1-3-5:3}) ensures that edge $(i, j)$ is closed if it belongs to any island and remains open otherwise. 
Constraint (\ref{eq-b1-3-5:4}) enforces that each vertex belongs to exactly one island. 

Secondly, the following connectedness constraints enforce connectedness of each island:
\begin{subequations}\label{eq-b1-3-6}
    \begin{align}
        & \sum_{i=1}^j \frac{X_{i, h}}{n_{\rm n}} \leq Y_{j, h} \leq  \sum_{i=1}^j X_{i, h}, && \forall j \in \mathcal{V}, h \in \mathcal{K} \label{eq-b1-3-6:1}\\
        & Y_{j, h} \geq X_{j,h}, && j \in \mathcal{V}, h \in \mathcal{K} \label{eq-b1-3-6:2}\\
        & U_{j, h} = Y_{j,h} - Y_{j-1, h}, && j \!\in\! \mathcal{V} \setminus \{1\}, h \!\in\! \mathcal{K}   \label{eq-b1-3-6:3}\\
        &  U_{1,h} = Y_{1,h}, && h \in \mathcal{K}  \label{eq-b1-3-6:4} \\
        &  \sum\nolimits_{j \in \mathcal{V}} U_{j,h} = 1, && h \in \mathcal{K}  \label{eq-b1-3-6:5} \\
        & U_{j,h} \sum_{i \in \mathcal{V}} X_{i,h} \!-\! X_{j,h}  \!+\!\!\! \sum_{(v,j) \in \mathcal{E}} \!\! F_{vj,h} \!=\!\!\!\! \sum_{(j,v) \in \mathcal{E}} \!\! F_{jv,h}, && j \in \mathcal{V}, h \in \mathcal{K} \label{eq-b1-3-6:6} \\
        & 0 \leq F_{ij,h} \leq n_{\rm n} z_{ij}, && (i,j) \in \mathcal{E}   \label{eq-b1-3-6:7} 
    \end{align}
\end{subequations}
Here, constraints (\ref{eq-b1-3-6:1})-(\ref{eq-b1-3-6:5}) indicate a selection procedure that chooses the source vertex in each island for transmitting a unit of flow to the other vertices inside the same island. 
Variable $Y_{j, h} \in [0, 1]$ for defining variable $U_{j, h}$ indicates the source vertex in each island. The smallest index vertex is chosen in each island as the source vertex. 
Constraints (\ref{eq-b1-3-6:6}) indicate flow conservation of each vertex. Continuous variable $f_{jv, h} \in \mathbb{R}$ indicates the amount of flow of the edge $(j, v) \in \mathcal{E}$ for island $h$. 

Thirdly, the co-location constraints ensure that island $k$ contains all generator buses in $\mathcal{V}_{k}^{\rm gen}$, expressed as follows: 
\begin{equation}\label{eq-b1-3-7}
    X_{i, k} = 1, \quad \forall i \in \mathcal{V}_k^{\rm gen}, k \in \mathcal{K}
\end{equation}

Finally, constraints (\ref{eq-b1-3-5})-(\ref{eq-b1-3-7}) constitute the formulation of the multi-island structure of network topology.

\section{Radiality of Network Topology}

Unlike transmission networks, distribution networks are typically operated in a radial topology in order to simplify operation and protection. In a distribution network with a single substation, radiality requires that the underlying graph forms a spanning tree; while for a distribution network with multiple substations, radiality necessitates that the underlying graph forms a spanning forest, where each island contains exactly one substation. 
This section first introduces different radiality formulations of a distribution network with a single substation. The extension to multiple substations is then discussed. Note that all formulations exclude transfer nodes. 

\subsection{Network-Flow-Based Formulation}

The network-flow-based formulation enforces network radiality by ensuring that the underlying graph remains connected and consists of $n_{\rm n} - 1$ edges \citep{4-836}. The graph connectedness can be maintained by the network-flow-based model. Specifically, the network-flow-based radiality formulation is expressed as follows: 
\begin{subequations}\label{eq-b1-3-7}
    \begin{align}
        & \sum\nolimits_{(i, j)\in \mathcal{E}} z_{ij} = n_{\rm n} - 1 \\
        & \sum\nolimits_{(j, i) \in \mathcal{E}} \rho_{ji} - \sum\nolimits_{(i, j) \in \mathcal{E}} \rho_{ij} = c_{{\rm nc}, i}, && i \in \mathcal{V} \\
        & |\rho_{ij}| \leq   (n_{\rm n} - 1) z_{ij}, && \forall (i,j) \in \mathcal{E} \\ 
        & \rho_{ij}, \rho_{ji} \in \mathbb{R}, && \forall (i, j) \in \mathcal{E}
    \end{align}
\end{subequations}
where $c_{{\rm nc}, i_{\rm ss}} = 1 - n_{\rm n}$ with $i_{\rm ss}$ being the substation node, and $c_{{\rm nc}, i} = 1$, $\forall i \in \mathcal{V} \backslash \{i_{\rm ss}\}$.

\subsection{Parent-Child Relation-Based Formulation}

The parent-child relation-based formulation is based on the following characteristic of the spanning tree: every node except for the root has exactly one parent \citep{4-817}. This formulation is expressed as: 
\begin{subequations}\label{eq-b1-3-9}
    \begin{align}
        & \zeta_{ij} + \zeta_{ji} = z_{ij},  && \forall (i, j) \in \mathcal{E} \label{eq-b1-3-9:1}\\
        & \sum\nolimits_{(i, j) \in \mathcal{E} \lor (j, i) \in \mathcal{E}  } \zeta_{ij} = 1, && \forall i \in \mathcal{V} \backslash \{ i_{\rm ss} \} \label{eq-b1-3-9:2}\\
        & \zeta_{i_{\rm ss}j} = 0, && \forall (i_{\rm ss},j) \in \mathcal{E} \lor (j, i_{\rm ss}) \in \mathcal{E} \label{eq-b1-3-9:3}\\
        & \zeta_{ij} \in \mathbb{B}, \zeta_{ji} \in \mathbb{R}, && \forall (i, j) \in \mathcal{E}  \label{eq-b1-3-9:4}
    \end{align}
\end{subequations}
Eq. (\ref{eq-b1-3-9:1}) states that a line $(i, j)$ is in the spanning tree (i.e., $z_{ij} = 1$) if node $j$ is the parent of node $i$ (i.e., $\zeta_{ij} = 1$) or if node $i$ is the parent of node $j$ (i.e., $\zeta_{ji} = 1$). Eq. (\ref{eq-b1-3-9:2}) ensures that every node, except for the substation node, has exactly one parent. Eq. (\ref{eq-b1-3-9:3}) specifies that the substation node has no parents.

\subsection{Directed Multi-Commodity Flow-Based Formulation}

The directed multi-commodity flow-based formulation defines a fictitious commodity for each vertex $k \neq i_{\rm ss}$, and enforces 1 unit of commodity $k$ delivered from the substation node $i_{\rm ss}$ to vertex $k$. This formulation has extended flexibility than the other alternatives \citep{4-1871}. Specifically, this formulation is expressed as follows: 
\begin{subequations}\label{eq-b1-3-8}
    \begin{align}
        & \sum_{(j, i_{\rm ss}) \in \mathcal{E}} \rho_{ji_{\rm ss}}^k \!\!-\!\!\! \sum_{(i_{\rm ss}, j) \in \mathcal{E}} \rho_{i_{\rm ss}j}^k = -1, && \forall k \in \mathcal{V} \backslash \{ i_{\rm ss} \}  \label{eq-b1-3-8:1}\\
        & \sum_{(j, k) \in \mathcal{E}} \rho_{jk}^k - \sum_{(k, j) \in \mathcal{E}} \rho_{kj}^k = 1, && \forall k \in \mathcal{V} \backslash \{ i_{\rm ss} \} \label{eq-b1-3-8:2} \\
        & \sum_{(j, i) \in \mathcal{E}} \rho_{ji}^k - \sum_{(i, j) \in \mathcal{E}} \rho_{ij}^k = 0, && \forall k \in \mathcal{V} \backslash \{ i_{\rm ss} \}, i \in \mathcal{V} \backslash \{i_{\rm ss}, k\} \label{eq-b1-3-8:3}\\
        & 0 \leq \rho_{ij}^k \leq \lambda_{ij}, 0 \leq \rho_{ji}^k \leq \lambda_{ji}, && \forall k \in \mathcal{V} \backslash \{i_{\rm ss}\}, (i, j) \in \mathcal{E} \label{eq-b1-3-8:4} \\
        & \sum\nolimits_{(i,j) \in \mathcal{E}} (\lambda_{ij} + \lambda_{ji}) = n_{\rm n} - 1 && ~  \label{eq-b1-3-8:5} \\
        & \lambda_{ij} + \lambda_{ji} = z_{ij}, &&  \forall (i, j) \in \mathcal{E} \label{eq-b1-3-8:6} \\
        & \lambda_{ij} \in \mathbb{B}, \lambda_{ji} \in \mathbb{R}; \rho_{ij}^k, \rho_{ji}^k \in \mathbb{R}, && \forall (i, j) \in \mathcal{E}, k \in \mathcal{V} \backslash \{ i_{\rm ss} \} \label{eq-b1-3-8:7}
    \end{align}
\end{subequations}
where variable $\rho_{ij}^k$ represents the flow of commodity $k$ from vertex $i$ to vertex $j$, and variable $\lambda_{ij}$ indicates whether arc $(i, j)$ is included in the directed spanning tree. 
Eq. (\ref{eq-b1-3-8:1}) ensures that 1 unit of commodity $k$ flows out from the substation node $i_{\rm ss}$, while Eq. (\ref{eq-b1-3-8:2}) let that 1 unit of commodity $k$ flow into node $k$. Eq. (\ref{eq-b1-3-8:3}) prevents each commodity $k$ from flowing into other nodes. Eq. (\ref{eq-b1-3-8:4}) ensures that commodity flows only along arcs is included in the directed spanning tree defined by variables $\lambda_{ij}$. Eq. (\ref{eq-b1-3-8:5}) specifies the number of arcs in the spanning tree, while Eq. (\ref{eq-b1-3-8:6}) assigns the edge connection status $z_{ij}$ based on arc inclusion in the spanning tree. 

\subsection{Extension to Multiple Substations}

\begin{figure}[h]
	\centering
	\includegraphics[width=0.5\linewidth]{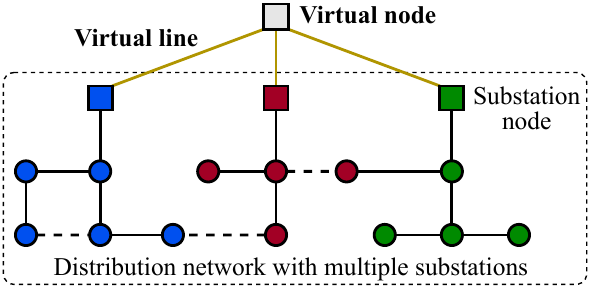}
	\caption{Illustration of handling distribution networks with multiple substations.}
    \label{fig-b1-3-2} 
\end{figure}

When the distribution network contains multiple substations, the underlying graph needs to form a spanning forest with each island containing exactly one substation. In this case, the aforementioned radiality formulations can still be applied by adding a \textit{virtual node} and certain \textit{virtual lines}. Specifically, as shown in Fig. \ref{fig-b1-3-2}, the virtual node is connected with each substation nodes via virtual lines. It is evident that the augmented graph is radial if and only if the original distribution network graph forms a spanning forest, where each tree (i.e., connected component) contains exactly one substation. Therefore, the radiality formulation for the augmented graph, with all virtual lines remaining closed, guarantees compliance with the topology requirements of distribution networks with multiple substations.

\graphicspath{{chapter_4/Figs/}}
\chapter{Steady-State Transmission Topology Control}\label{chapter-4}

This chapter presents the steady-state topology control of transmission networks. First, some fundamentals are introduced, including the phenomena of Braess's paradox in transmission networks and basic forms of steady-state transmission topology control. Next, a comprehensive state-of-the-art review of existing methodologies for steady-state transmission topology control is provided. Finally, a recent advance in steady-state transmission topology control under multiple uncertainties is discussed.

\section{Fundamentals}

\subsection{Braess's Paradox}\label{sec-b1-braess}

Braess's paradox is a counterintuitive phenomenon in road networks where adding one or more roads can slow down overall traffic flow over the network. Such analogous paradoxical behaviour has been observed in other network systems including power systems, especially transmission networks whose network topology is typically highly meshed \citep{4-34, 4-551, 4-1466, 4-1621}. We will refer to this as Braess's paradox for power networks whereby the improvement of certain system performance could be accomplished by removing specific transmission lines. 
Accordingly Braess's paradox suggests the counterintuitive possibility of enhancing system performance via steady-state topology control for transmission networks. 
In practice, most transmission networks normally operate with all transmission lines in service. However, steady-state topology control leverages this paradox to selectively switch off specific lines such that system performance metrics are optimized. 
The following are two examples of Braess's paradox in transmission networks.

\subsubsection{Economic dispatch performance}

\begin{figure}[h]
	\centering
	\includegraphics[width=1\linewidth]{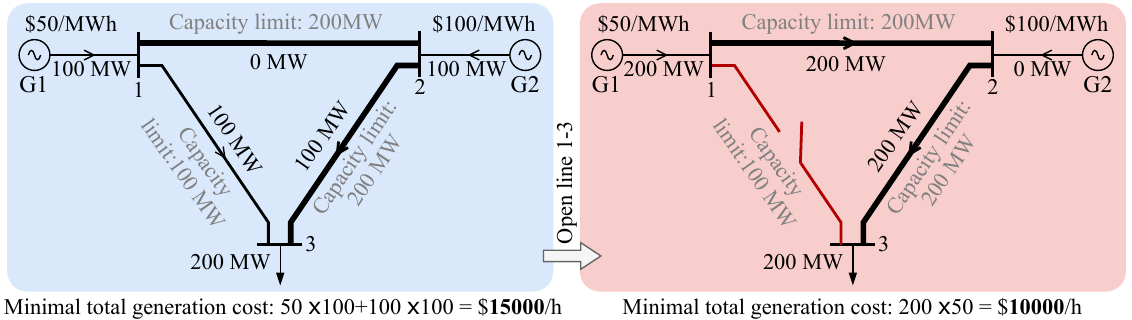}  
	\caption[xxx.]{ Braess's paradox for economic dispatch performance.}  
	\label{fig-b1-4-1}  
\end{figure}

Consider the three-bus transmission network shown in Fig. \ref{fig-b1-4-1}. The network contains one load of 200 MW, two generators G1 and G2 whose unit generation costs are \$50/MWh and \$100/MWh, respectively; and three transmission lines with identical impedance. The capacity limits of lines 1-2 and 2-3 are both 200 MW, that of line 1-3 is 100 MW. For simplicity, generator capacity limits and line resistances are neglected. 

As illustrated on the left side of Fig. \ref{fig-b1-4-1}, under the network topology with all transmission lines in service, the economic dispatch results in both generators G1 and G2 producing 100 MW, achieving a minimal total generation cost of \$15000/h. 
However, as shown on the right side of Fig. \ref{fig-b1-4-1}, if transmission line 1-3 is switched off, the economic dispatch yields 200 MM output from only generator G1, resulting in a minimal total generation cost of \$10000/h. 
Therefore, the best economic dispatch performance may not be achieved under the network topology with all lines in service, but can instead be achieved by switching off certain transmission lines.

\subsubsection{Stability} 

\begin{figure}[h]
	\centering
	\includegraphics[width=1\linewidth]{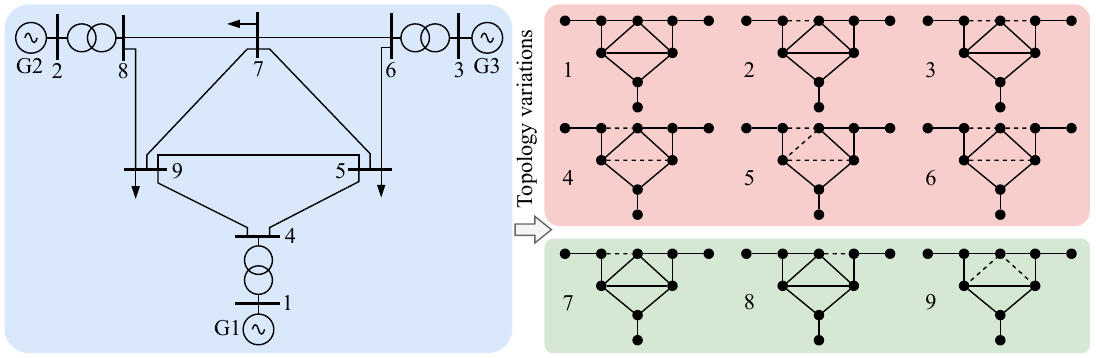}  
	\caption{Diagram of the modified IEEE 9-bus system and its topology variations.}  
	\label{fig-b1-4-2}  
\end{figure}

Consider the modified IEEE 9-bus system depicted in Fig. \ref{fig-b1-4-2}. The system dynamics considered are limited to generator dynamics, represented by the classical second-order model. Nine network topologies are examined: topology 1 with all transmission lines switched on, and topologies 2 to 9, each corresponding to a line-open scenario.

\begin{figure}[h]
	\centering
	\includegraphics[width=0.9\linewidth]{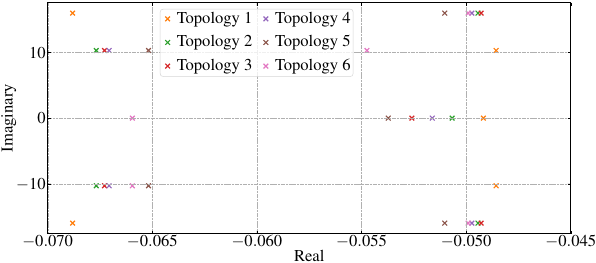}  
	\caption{Eigenvalue spectrum under various network topologies.}  
	\label{book-1-4-r1}  
\end{figure}

Fig. \ref{book-1-4-r1} shows the eigenvalue spectrum of the linearized system under network topologies 1 to 6, computed around their respective equilibrium point. 
Small-disturbance stability margin is measured by the largest eigenvalue real part. 
Compared to the system with topology 1 where all lines are switched on, topologies 2 to 6 exhibit eigenvalue spectrums with smaller largest real parts, indicating larger small-disturbance stability margins. This suggests that small-disturbance stability of the transmission network can be enhanced by switching off certain lines. 

For large-disturbance stability, consider the three-phase short-circuit fault occurring at bus 7 when $t=0.1$ s and cleared when $t=0.48$ s. Fig. \ref{book-1-4-r2} shows the time-domain responses of rotor angle $\delta_i$ of all generators for the system under topology 1 and topologies 7 to 9. The system exhibits transient instability under topology 1, whereas it maintains transient stability under topologies 7, 8, and 9. Thus, switching off certain transmission lines can improve the system's large-disturbance stability, at least under some disturbance scenarios. 

\begin{figure}[h]
	\centering
	\includegraphics[width=0.9\linewidth]{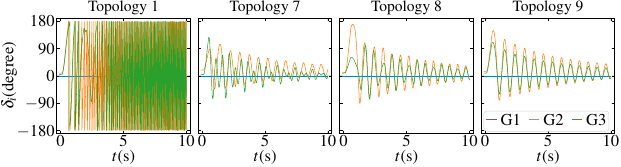}  
	\caption{Time-domain response of generator rotor angle $\delta_i$ for different topologies.}  
	\label{book-1-4-r2}  
\end{figure}

\subsection{Optimal Transmission Switching}\label{sec-4-1a}

\cite{4-62} first proposed the concept and model of Optimal Transmission Switching (\ac{ots}), which can be seen as one of the most basic forms of steady-state transmission topology control. This \ac{ots} model is formulated in matrix form as: 
\begin{subequations}\label{eq-0-1-0}
  \begin{align}
       \min_{\bm{z} \in \mathbb{B}^{n_{\rm e}}, \bm{p}_{\rm g} \in \mathbb{R}^{n_{\rm g}}}  ~& f_{\rm ots}(\bm{p}_{\rm g})   \label{eq-0-1-0:1}\\ 
     \text{s.t.~~~~~} & - (\bm{1} - \bm{z}) M \leq   \bm{b}_{\rm b}\D  \bm{E}_{\mathcal{G}}\T  \bm{\theta} - \bm{p}_{\rm b} \leq (\bm{1} - \bm{z}) M  \label{eq-0-1-0:6} \\ 
     & \bm{E}_{\rm g} \bm{p}_{\rm g} - \bm{E}_{\rm d} \bm{p}_{\rm d} = \bm{E}_{\mathcal{G}} \bm{p}_{\rm b}  \label{eq-0-1-0:4} \\
     &  \bm{\theta}_{\rm b}\B - (\bm{1} - \bm{z})M \leq \bm{E}_{\mathcal{G}}\T \bm{\theta} \leq \bm{\theta}_{\rm b}\U + (\bm{1} - \bm{z}) M   \label{eq-0-1-0:2} \\
     &  \bm{p}_{\rm b}\B \circ \bm{z} \leq \bm{p}_{\rm b} \leq \bm{p}_{\rm b}\U \circ \bm{z}  \label{eq-0-1-0:5} \\
     &  \bm{p}_{\rm g}\B \leq \bm{p}_{\rm g} \leq \bm{p}_{\rm g}\U  \label{eq-0-1-0:2-1} \\
     & \bm{E}_{\rm us} \bm{z} = \bm{1}   \label{eq-0-1-0:8}  
  \end{align}
\end{subequations}
where the objective function $f_{\rm ots}(\bm{p}_{\rm g})$ represents the total generation cost; 
constraint (\ref{eq-0-1-0:6}) enforces the DC power flow equations, incorporating the variable network topology parameterized by $\bm{z}$; constraint (\ref{eq-0-1-0:4}) ensures power balance at each bus; 
(\ref{eq-0-1-0:2}) to (\ref{eq-0-1-0:8}) impose operational limits on branch phase angle differences, branch power flows, generation outputs, and branch switching actions, respectively. 

\begin{table}[h]
    \centering
    \caption{Results of \ac{ots} for the IEEE 118-bus system.}
    \begin{tabular}{|cccc|}
        \hline\hline
        $n_{\rm sw}$  & Lines switched off & $f_{\rm ots}$ (\$/h) & Reduction rate  \\ \hline\hline
        0 & --- & 2898.57 & 0 \\
        1 & 8-30 & 2715.29 & 6.32\% \\
        2 & 8-30, 65-66   & 2652.60 & 8.49\% \\
        3 & 8-30, 65-66, 12-16   & 2575.97 & 11.13\% \\
        4 & 8-30, 65-66, 12-16, 2-12    & 2543.83 & 12.24\% \\
        5 & 8-30, 65-66, 12-16, 2-12, 11-12    & 2521.86 & 13.00\% \\
        $\infty$  &  42 lines (“x” symbol in Fig. \ref{fig-b1-4-3})   & 2357.06  & 18.68\%  \\
        \hline\hline
    \end{tabular}
    \label{table-b1-4-1}
\end{table}

Table \ref{table-b1-4-1} provides the solution results of the \ac{ots} model (\ref{eq-0-1-0}) on the IEEE 118-bus system. The congested operating condition data of the system from the PGLib-OPF networks is used \citep{4-1280}. To evaluate the effectiveness of \ac{ots}, the constraint that the number of lines switched off does not exceed $n_{\rm sw}$ is added to model (\ref{eq-0-1-0}). The model is solved for different values of $n_{\rm sw}$. When $n_{\rm sw}=0$, line switching is not considered, and model (\ref{eq-0-1-0}) reduces to the DC economic dispatch model; $n_{\rm sw}=\infty$ indicates that the number of lines switched off is not limited. 
According to Table \ref{table-b1-4-1}, \ac{ots} can reduce the total generation cost by up to 18.68\% with respect to the case without steady-state topology control, achieved by switching off 42 transmission lines as marked by symbol ``x'' in Fig. \ref{fig-b1-4-3}. However, switching off such a large portion of transmission lines may compromise system security and stability. A key observation is that a significant reduction in total generation cost can be achieved by disconnecting only a small subset of transmission lines. For instance, switching off only 5 transmission lines can reduce the total generation cost by 13.00\%. 

\begin{figure}[t]
	\centering
    \includegraphics[width=1\linewidth]{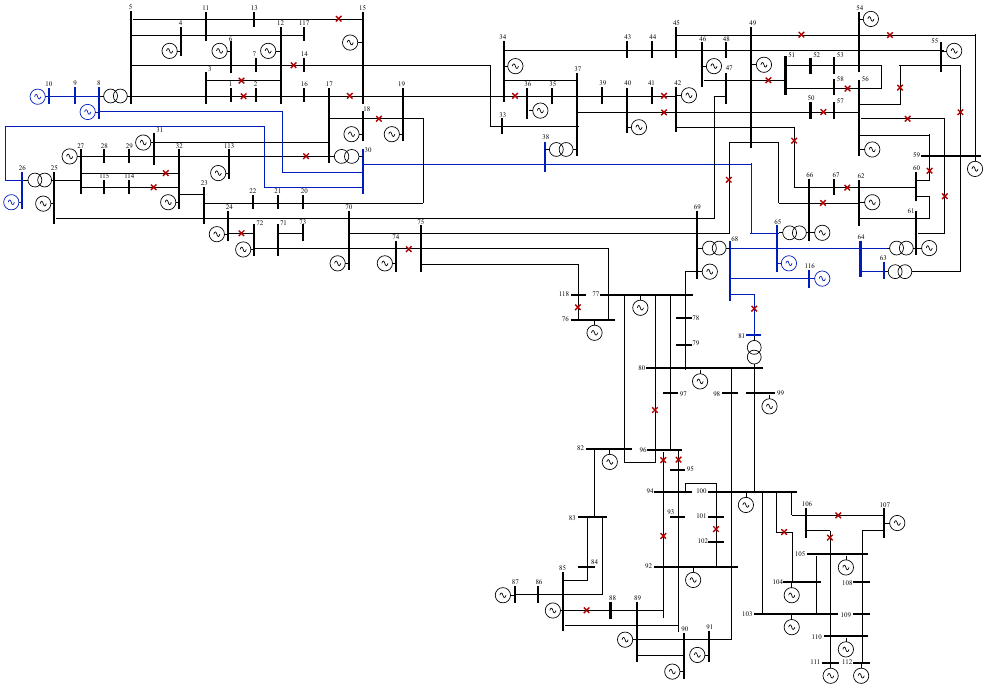}  
	\caption{Illustration of the \ac{ots} solution for the IEEE 118-bus system.}
	\label{fig-b1-4-3}
\end{figure}


\subsection{Substation-Level Transmission Topology Control}

In the above mentioned \ac{ots} formulation, the network is modeled by the bus-branch representation where each substation is reduced to a single bus and only the switching actions of transmission branches can be considered. 
However, certain substation arrangements enable the substation to be split into multiple separate busbars, a process referred to as bus splitting. 
Similar to line switching, bus splitting can also be utilized as a means of steady-state transmission topology control. This requires adopting the node-breaker representation rather than the bus-branch representation of the network, thereby enabling substation-level transmission topology control. 

\begin{figure}[h]
	\centering
    \includegraphics[width=1\linewidth]{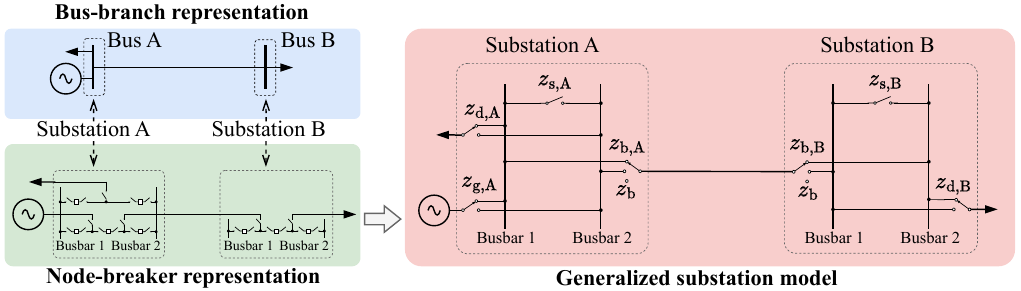}  
	\caption{Illustration of the breaker-and-a-half substation arrangement and the generalized substation model.}
	\label{fig-b1-4-4}
\end{figure}

Fig. \ref{fig-b1-4-4} illustrates the typical configuration of the breaker-and-a-half layout, which is a well-known and widely-used substation arrangement in power systems. The diagram depicts two substations interconnected by a transmission line. Each substation contains two busbars. 
Line switching, bus splitting and generator/load switching actions can be performed by switching the associated circuit breakers. 
The right side of Fig. \ref{fig-b1-4-4} shows the corresponding generalized substation model developed by \cite{4-1227}. In each substation, this model employs one zero-impedance line to allow bus splitting, and several fictitious switches to connect generator, load, line, etc to either of two busbars inside the substation. 
The following constraints are imposed on Substation A: 
\begin{subequations}\label{eq-b1-4-s1}
    \begin{align}
        & - M (1 - z_{\rm s, A}) \leq \theta_{\rm 1, A} -  \theta_{\rm 2, A}  \leq  M(1 - z_{\rm s, A}) \label{eq-b1-4-s1:1}\\
        & (1 - z_{\rm g, A}) p_{\rm g, A}^{\min} \leq p_{\rm g, A1} \leq  (1 - z_{\rm g, A}) p_{\rm g, A}^{\max} \label{eq-b1-4-s1:2}\\
        & z_{\rm g, A} \cdot p_{\rm g, A}^{\min} \leq p_{\rm g, A2} \leq  z_{\rm g, A} \cdot p_{\rm g, A}^{\max} \label{eq-b1-4-s1:3}\\
        & p_{\rm d, A1} = (1 - z_{\rm d, A}) p_{\rm d, A}, p_{\rm d, A2} =  z_{\rm d, A} \cdot p_{\rm d, A} \label{eq-b1-4-s1:4} 
    \end{align}
\end{subequations}
where $z_{\rm s, A}$ is the binary variable indicating if Busbar 1 and Busbar 2 in Substation A are operated in the split mode or the connected mode; 
$z_{\rm g, A}$ and $z_{\rm d, A}$ are the binary variables specifying the busbars to which the generator and the load in Substation A are connected, respectively. 
Constraint (\ref{eq-b1-4-s1:1}) indicates the conditional statement that if the zero-impedance line is closed, i.e., Busbar 1 and Busbar 2 in Substation A are operated in the connected mode and $z_{\rm s, A}=1$, the phase angles of Busbar 1 and Busbar 2 are equal; otherwise, this restriction is not required. 
Constraints (\ref{eq-b1-4-s1:2}) and (\ref{eq-b1-4-s1:3}) ensure that the generator can be connected to either Busbar A or Busbar B in Substation A. 
If $z_{\rm g, A} = 0$, the generator injects power to Busbus 1; otherwise, the generator delivers power to Busbar 2. 
Similarly, the load is connected to either of the two busbars, determined by binary variable $z_{\rm d, A} $ in constraint (\ref{eq-b1-4-s1:4}).

Analogously, the following constraints are imposed on Substation B: 
\begin{subequations}\label{eq-b1-4-s2}
    \begin{align}
        & - M (1 - z_{\rm s, B}) \leq \theta_{\rm 1, B} -  \theta_{\rm 2, B}  \leq  M(1 - z_{\rm s, B})  \\
        & p_{\rm d, B1} = (1 - z_{\rm d, B}) p_{\rm d, B}, p_{\rm d, B2} =  z_{\rm d, B} \cdot p_{\rm d, B}  
    \end{align}
\end{subequations}

Moreover, power flow through the transmission line between the two substations satisfies the following constraints:
\begin{subequations}\label{eq-b1-4-s3}
    \begin{align}
        & - (1 - z_{\rm b, A}) p_{\rm b}^{\max} \leq  p_{\rm b, A1} \leq (1 - z_{\rm b, A}) p_{\rm b}^{\max} \label{eq-b1-4-s3:1} \\
        & - (1 - z_{\rm b, B}) p_{\rm b}^{\max} \leq  p_{\rm b, B1} \leq (1 - z_{\rm b, B}) p_{\rm b}^{\max} \label{eq-b1-4-s3:2} \\
        & - z_{\rm b, A} p_{\rm b}^{\max} \leq  p_{\rm b, A2} \leq  z_{\rm b, A} p_{\rm b}^{\max}, - z_{\rm b, B} p_{\rm b}^{\max} \leq  p_{\rm b, B2} \leq  z_{\rm b, B} p_{\rm b}^{\max} \label{eq-b1-4-s3:3} \\
        & - z_{\rm b} p_{\rm b}^{\max} \leq  p_{\rm b, A1} \leq  z_{\rm b} p_{\rm b}^{\max}, - z_{\rm b} p_{\rm b}^{\max} \leq  p_{\rm b, B1} \leq  z_{\rm b} p_{\rm b}^{\max} \label{eq-b1-4-s3:4} \\ 
        & z_{\rm b} - z_{\rm b, A} \geq 0, z_{\rm b} - z_{\rm b, B} \geq 0  \label{eq-b1-4-s3:5} \\ 
        & p_{\rm b} = p_{\rm b, A1} + p_{\rm b, A2}, p_{\rm b} = p_{\rm b, B1} + p_{\rm b, B2}  \label{eq-b1-4-s3:6} \\
        & -M(1 - z_{\rm b}) \leq   b_{\rm b}( \theta_{\rm b, A} - \theta_{\rm b, B} ) - p_{\rm b} \leq  M(1 - z_{\rm b}) \label{eq-b1-4-s3:7} \\
        & - z_{\rm b, A} \theta^{\max} \leq \theta_{\rm b, A} - \theta_{\rm 1, A} \leq z_{\rm b, A} \theta^{\max}  \label{eq-b1-4-s3:8} \\
        & - (1 - z_{\rm b, A}) \theta^{\max} \leq \theta_{\rm b, A} - \theta_{\rm 2, A} \leq z_{\rm b, A} \theta^{\max}  \label{eq-b1-4-s3:9} \\
        & - z_{\rm b, B} \theta^{\max} \leq \theta_{\rm b, B} - \theta_{\rm 1, B} \leq z_{\rm b, B} \theta^{\max} \label{eq-b1-4-s3:10} \\
        & - (1 - z_{\rm b, B}) \theta^{\max} \leq \theta_{\rm b, B} - \theta_{\rm 2, B} \leq z_{\rm b, B} \theta^{\max}  \label{eq-b1-4-s3:11}
    \end{align}
\end{subequations}
where $z_{\rm b, A}$, $z_{\rm b, B}$ and $z_{\rm b}$ are binary variables that indicate which busbar the line is connected to and whether the transmission line is switched on or off. 
If $z_{\rm b} = 0$ that indicates the line is open, binary variables $z_{\rm b, A}$ and $z_{\rm b, B}$ are forced to be zero by constraints (\ref{eq-b1-4-s3:5}). 
Consequently, the power flows through the line from each busbar, denoted by $p_{\rm b, A1}$, $p_{\rm b, A2}$, $p_{\rm b, B1}$ and $p_{\rm b, B2}$, are also set to zero by constraints (\ref{eq-b1-4-s3:3}) and (\ref{eq-b1-4-s3:4}). 
If $z_{\rm b} = 1$ that indicates the line is closed, binary variables $z_{\rm b, A}$ and $z_{\rm b, B}$ are either one or zero given constraints (\ref{eq-b1-4-s3:5}). 
In this case, if $z_{\rm b, A} = 0$ that indicates the line is connected to Busbar 1, the power flow through the line from Busbar 2 in Substation A, denoted by $p_{\rm b, A2}$, is set to zero by the first constraint of (\ref{eq-b1-4-s3:3}). If $z_{\rm b, A} = 1$ that indicates the line is connected to Busbar 2, $p_{\rm b, A1}=0$ according to constraint (\ref{eq-b1-4-s3:1}). 
Constraints (\ref{eq-b1-4-s3:6}) represent that the power flow through the line, denoted by $p_{\rm b}$, equals the sum of the power flows from all busbars within the substation. 
When $z_{\rm b} = 1$, constraint (\ref{eq-b1-4-s3:7}) forms the DC power flow equation. 
Constraints (\ref{eq-b1-4-s3:8}) to (\ref{eq-b1-4-s3:11}) enforce that the phase angles at the two terminals of the line are respectively the phase angles of the busbars to which the line is connected.

\begin{figure}[h]
	\centering
    \includegraphics[width=1\linewidth]{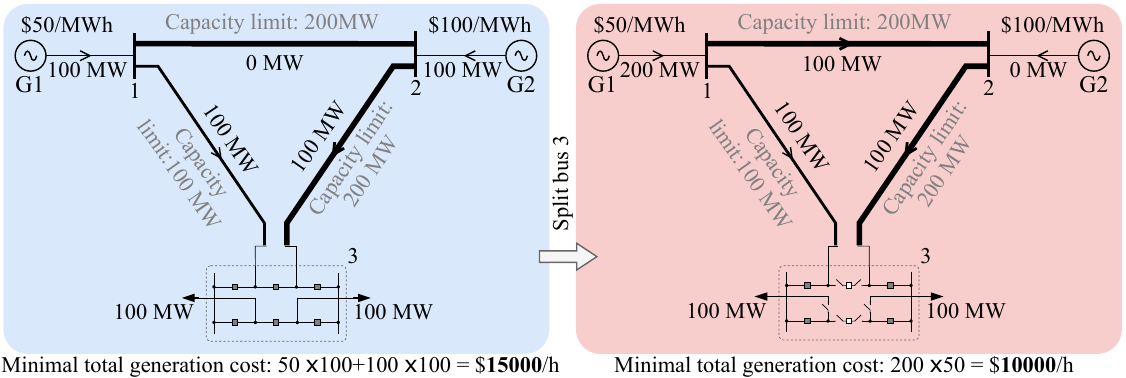}  
	\caption{Effectiveness of substation-level transmission topology control.}
	\label{fig-b1-4-5}
\end{figure}

Fig. \ref{fig-b1-4-5} demonstrates the effectiveness of substation-level transmission topology control on improving economic dispatch performance, using the same three-bus transmission network in Fig. \ref{fig-b1-4-1} of Section \ref{sec-b1-braess}. For simplicity, only the substation corresponding to bus 3 is represented by the node-breaker model. 
When the two busbars in the substation are operated in the connected mode, as illustrated on the left side of Fig. \ref{fig-b1-4-5}, the economic dispatch results in both generators G1 and G2 producing 100 MW, achieving a minimal total generation cost of \$15000/h. However, as shown on the right side of Fig. \ref{fig-b1-4-5}, if the two busbars corresponding to bus 3 is operated in the split mode, the economic dispatch yields 200 MM output from only generator G1, resulting in a minimal total generation cost of only \$10000/h. Therefore, the bus-splitting action at the substation level yields a cost reduction effect comparable to that of the line-switching action shown in Fig. \ref{fig-b1-4-1}. 

It is important to note that for substation-level transmission topology control problems, the combinatorial explosion of topological actions is particularly severe. \cite{4-1939} derived a formula for the action space size of substation-level transmission topology control problems, showing that even for the IEEE 14-bus system, the number of possible topologies can reach $3.9 \times 10^{11}$, and remains as large as $3.3 \times 10^{5}$ under $N\!-\!1$ security constraints. 
Therefore, substation-level transmission topology control is considerably more challenging than topology control limited to line switching.

\section{State-of-the-Art Review}

The idea of steady-state transmission topology control, i.e, controlling the steady-state network topology to improve transmission network performance, was first proposed by \cite{4-b1-4} and \cite{4-b1-5}. Since then, researchers have primarily focused on exploring and expanding the application domains of steady-state transmission topology control, and have proposed various solution methods to address the associated control problems. In the existing literature, steady-state transmission topology control appears under various terminologies, including network switching, optimal switching configuration, corrective transmission switching, network topology optimization, optimal transmission switching, and transmission switching. While these terms emphasize different aspects, they all fundamentally refer to the control of steady-state topology of transmission networks.

\begin{figure}[h]
	\centering
    \includegraphics[width=1\linewidth]{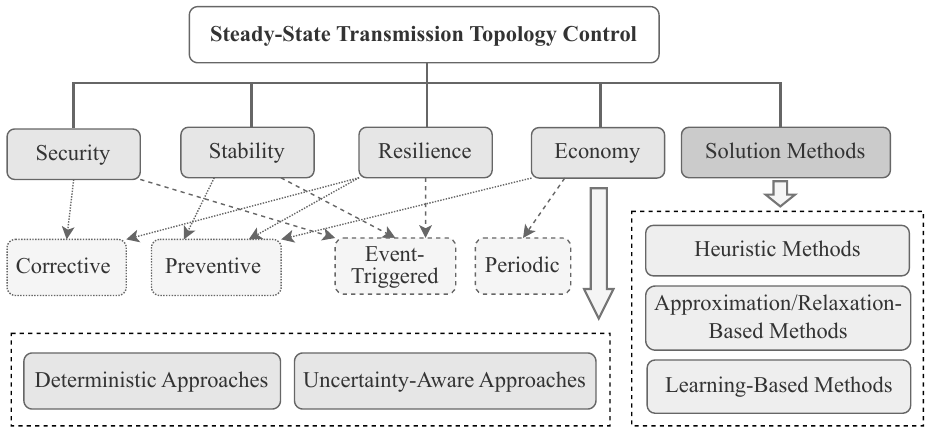}  
	\caption{Structure of the state-of-the-art review on steady-state transmission topology control.}
	\label{fig-b1-4-6}
\end{figure}

This section reviews the existing literature on steady-state transmission topology control according to control objectives, following the structure illustrated in Fig. \ref{fig-b1-4-6}. These objectives primarily include enhancing system security, stability, resilience, and economic efficiency. 
Based on the temporal relationship between topology control actions and system contingencies, steady-state transmission topology control may be implemented in either a preventive or corrective form. 
Preventive control involves topology switching actions that are implemented to improve system adequacy, stability, security, reliability, or resilience margins prior to contingencies, often incorporating anticipated corrective actions for potential future events. Corrective control, in contrast, executes topology switching actions after contingencies occur to alleviate operational violations such as overloads or instability. 
In the existing literature, the adopted control form depends on the specific control objective. For some objectives, only corrective or preventive topology control is addressed, whereas for others both preventive and corrective formulations are available. 

Moreover, the implementation paradigm may differ across objectives. For security-, stability-, and resilience-oriented objectives, steady-state transmission topology control is typically event-triggered, meaning that control actions are activated in response to specific system conditions or contingency occurrences. For economic objectives, it is generally periodic, where control actions are executed at fixed time intervals based on updated system operating points or forecasts. 

In particular, due to the extensive attention received by economy-oriented steady-state transmission topology control, it is further classified in the literature into deterministic and uncertainty-aware formulations for systematic review. 
In addition to addressing steady-state transmission topology control under different objectives, this review also examines representative solution methods, including heuristic methods, approximation/relaxation-based methods, and learning-based methods.

\subsection{Security}

Security enhancement is the major objective of corrective steady-state topology control in the existing literature. 
The idea of steady-state transmission topology control was originally proposed to ensure system security, more specifically, alleviate branch overloads \citep{4-b1-4, 4-b1-5}. 
Subsequent work has focused on the detail formulation and the development of corresponding solution methods for the control problem. 
\cite{4-229} reviewed very early studies on switching operation modeling and the associated search and optimization techniques, emphasizing that switching actions, including transmission lines and bus couplers, can serve as effective control measures for enhancing system security and limiting short-circuit currents. 
\cite{4-860} developed a fast algorithm based on linear sensitivity factors to select network switching actions to relieve system overloads. 
\cite{4-69} proposed a method for general network topology optimization problems, where the objective can be either line current or short-circuit current. The method identifies lines for switching sequentially by employing a current injection model combined with linear programming in each iteration. 
\cite{4-578} proposed a linear programming based methodology for the general corrective control problem that incorporates discrete controls including branch switching.  
\cite{4-b1-8} employed the existing Optimal Power Flow (\ac{opf}) framework to enhance system security by corrective switching, enabling contingencies to be addressed similarly to corrective actions. 
\cite{4-b1-10} presented a fast mathematical technique based on distribution factors to sequentially construct the network topology that eliminates branch overloads and nodal voltage violations. 
\cite{4-b1-7} was the first to consider line switching as a corrective control measure to ensure $N-1$ security in the security-constrained \ac{opf} problem, and developed a single step solution method, drawing on the integration of contingencies and switching actions proposed by \cite{4-b1-8}. 
It is noted that, while the security-constrained \ac{opf} problem is preventive in nature, line switching actions within the formulation are modeled as corrective, as they are executed in the post-contingency stage. 
\cite{4-b1-11} described a security enhancement package integrated within the real-time EMS of a large utility, where network switching is used as an control measure.  
\cite{4-b1-12} presented one of the first practical implementation of the network switching algorithm for overload relief. The algorithm is based on a linearized DC model and verification using AC load flows, with the unique ability to process complicated substation busbar switching scenarios. 
\cite{4-71} advanced earlier research on corrective switching for addressing overloads and voltage violations by proposing an efficient solution method that leverages sparse inverse techniques and fast decoupled power flow to accelerate convergence and reduce computational effort. The same corrective switching problem is further solved using a binary integer programming technique by \cite{4-b1-13}. The adoption of  mixed integer programming techniques for steady-state transmission topology control began during that period \citep{4-b1-14, 4-b1-16}. 
\cite{4-778} proposed a corrective steady-state topology control model for load shed recovery after contingencies to reduce the amount of load shed. 
\cite{4-1136} demonstrated the superiority of corrective switching solutions for the Pennsylvania-New Jersey-Maryland (\ac{pjm}) transmission network, obtained through state-of-the-art corrective topology control methods, compared to those based on operators' prior knowledge and offline studies. They concluded that corrective topology switching is ready for industry adoption. 
\cite{4-779, 4-780} proposed an Energy Management System (\ac{ems}) procedure that includes corrective topology switching in existing \ac{ems} with the concept of branch pseudo limit. It shows that corrective topology switching can achieve significant congestion cost reduction with the proposed \ac{ems} procedure. 
In contrast to previous studies that follow a deterministic control framework, \cite{4-538} proposed a robust corrective topology control approach based on a robust optimization framework, which ensures post-contingency system security under demand uncertainty. 
Recently, the first decision support tool for topological remedial actions, named GridOptions tool, is deployed in real-world \ac{tso} control rooms for day-ahead congestion management \citep{4-1934}.

\subsection{Stability}

Steady-state topology control can also contribute to enhancing the stability of transmission networks in a preventive form. 
\cite{4-841} made an early contribution by proposing the use of preventive steady-state transmission topology control for stability enhancement. 
They developed a three-stage online line switching methodology for increasing load margins to the static stability limit. 
For small-disturbance stability, \cite{4-361} proposed to utilize preventive line switching to enhance the stability margin. A similar three-stage approach, comprising screening, ranking, and detailed evaluation, was designed to enable fast online implementation. 
\cite{4-859} proposed the use of transmission switching to improve synchronism of low-inertia transmission networks. A synchronism metric based on $\mathcal{H}_2$ norm and its sensitivity to perturbation in network susceptance was employed for identifying candidate lines for switching. 
\cite{4-2004} utilized a similar $\mathcal{H}_2$-norm based stability metric to study the optimization of network topology. The task was reformulated as a mixed-integer linear program via McCormick linearization, with its graph-theoretic structure further exploited to improve computational efficiency. 
Compared to the aforementioned stability issues that rely on power flow models or linearized system dynamic models, where sensitivity information is readily available, large-disturbance stability poses significantly greater challenges in the context of steady-state topology control. Recently, \cite{4-1757} proposed a topology optimization method for improving transient stability, utilizing the genetic algorithm and a composite robustness metric as a non-simulation-based proxy transient stability measure.

\subsection{Resilience} 

Power system resilience has drawn growing attention in recent years due to the increasing frequency and severity of extreme weather events and cyber-physical threats. Steady-state transmission topology control can also contribute to enhancing system resilience in either preventive or corrective forms. 
\cite{4-75} proposed the utilization of preventive topology control to mitigate the effects of geomagnetic disturbances. An optimization framework was formulated to enhance power system resilience against the impacts of geomagnetically induced currents by coordinating line switching, generator redispatch, and load shedding. 
To enhance the system resilience, \cite{4-1768} developed an optimization model for corrective steady-state topology control following coordinated cyber-physical attacks on the transmission networks, with the objective of reducing load curtailment. 
To reduce the wildfire risk under high temperatures, some electricity utilities implement public safety power shutoff in which equipment in regions with high wildfire risk is de-energized \cite{4-b1-25}. Transmission lines are the primary components subject to de-energization; consequently, public safety power shutoffs can be regarded as a form of preventive steady-state topology control. \cite{4-b1-24} proposed the optimal power shutoff problem formulated as an optimization model that accounts for the impact of preventive wildfire risk measures on both wildfire risk and short-term power system reliability. \cite{4-b1-23} further improved the optimal power shutoff model by explicitly considering $N-1$ security constraints. 
\cite{4-1942} incorporated decision-dependent wildfire-driven failure probabilities into the public safety power shutoff problem. 
\cite{4-2170} proposed one of the first dynamic topology optimization models amidst wildfire risks as a multi-stage distributionally robust optimization formulation with decision-dependent uncertainty. 

\subsection{Economy}

Improving economic efficiency is the primary objective of preventive steady-state transmission topology control in the existing literature. 
Earlier studies demonstrated the potential of steady-state transmission topology control for operation cost (including network loss) reduction \citep{4-229, 4-69, 4-b1-6}. \cite{4-b1-22, 4-b1-20} proposed the concept of optimal reconfiguration of transmission network with transient stability constraints, wherein transmission systems is optimized by switching lines so that the power system has better performance in terms of both stability and economics. However, \cite{4-62} was the first to specifically utilize steady-state transmission topology control to improve the results of economic dispatch, building on the concept of dispatchable network for transmission markets proposed by \cite{4-b1-15}. This led to the concept of \ac{ots} and a corresponding Mixed-Integer Linear Programming (\ac{milp}) model based on DC power flow. 
Since then, the OTS method has been extended through integration with unit commitment and other operational coordination strategies, and has also been further refined by incorporating practical considerations. 
The following reviews existing work on economy-oriented steady-state transmission topology control by classifying it into deterministic and uncertainty-aware approaches.

\subsubsection{Deterministic approaches}

Most early studies formulated the economy-oriented steady-state transmission topology control problem as a deterministic one, where uncertainties are not explicitly modeled. 
After \cite{4-62} proposed the OTS concept and its preliminary model, \cite{4-64} studied  the impact of \ac{ots} on nodal prices, load payment, generation revenues, cost, and rents, congestion rents, and flowgate, finding that \ac{ots} typically results in lower load payments and higher generation rents. \cite{4-63} further considered $N$-1 security in \ac{ots} and developed a corresponding \ac{milp} formulation, showing that the percentage of cost savings from \ac{ots} was higher when $N$-1 security constraints were considered compared to the case without such considerations. \cite{4-b1-19} accounted for revenue adequacy constraints in \ac{ots}. Furthermore, \cite{4-670} are the first to extended the application of \ac{ots} to the unit commitment problem, where the network topology is adjusted on an hourly basis in coordination with the unit commitment schedule. Around the same time, \cite{4-627} incorporated \ac{ots} into the security-constrained unit commitment problem and applied the Benders decomposition method to enhance computational efficiency. \cite{4-b1-17} analyzed the economics (e.g., generation rent, congestion cost, and load payment) of the \ac{ots} problem integrated with unit commitment, based on the associated dual formulation with integer variables fixed at their optimal solutions. 
\cite{4-1003} addressed static switching security in multi-period transmission switching problems to ensure system security is maintained both prior to and following line switching operations of each time period.
\cite{4-735} considered both voltage security and $N-1$ security in economy-oriented steady-state topology control, with the resulting optimization model solved using Benders decomposition. 
\cite{4-761} addressed network connectedness requirement in transmission switching problems using a tailored branching algorithm, in contrast to earlier work that incorporated insufficient constraints into the optimization models.

While the aforementioned \ac{ots} methods accounted for $N$-1 security, the corrective steady-state transmission topology control was not incorporated as a response to contingencies. 
In fact, under economy-oriented steady-state topology control, corrective actions following a contingency may involve not only switching off but also switching on transmission lines, both of which can be leveraged by corrective steady-state transmission topology control to enhance security. 
\cite{4-b1-18} proposed the just-in-time transmission concept that enables operators to optimize network topology for economic efficiency by preventive steady-state topology control, while ensuring system security after contingencies via corrective steady-state topology control by switching lines back into service. 
Most steady-state topology control methods are designed as a centralized manner and thus more suitable for regional transmission networks. However, for interconnected transmission networks, the implementation of steady-state topology control may need to be carried out in a decentralized manner. 
\cite{4-1141} developed a decentralized method for steady-state transmission topology control, where augmented Lagrangian relaxation is used to realize decomposition. 


\subsubsection{Uncertainty-aware approaches}

For uncertainty-aware formulations of economy-oriented steady-state transmission topology control, uncertainty is explicitly represented within the optimization framework and handled through stochastic, robust, or distributionally robust approaches, depending on how it is modeled. 
In such formulations, uncertainties primarily arise from two sources: contingency-related uncertainties in $N-1$ security constraints, and the uncertainties of Variable Renewable Energy (\ac{vre}) in systems with high levels of renewable energy integration. 
A common approach to consider $N-1$ security is to impose the operational feasibility on a pre-defined contingency set \cite{4-63, 4-670, 4-79, 4-1142, 4-1248}, which however, becomes computationally intensive for large-scale power grids. More importantly, this approach can be prohibitive for $N-k$ security due to the massive number of possible contingencies. 
Accordingly, \cite{4-1127} and \cite{4-1140} formulated the steady-state topology control as a robust optimization model considering uncertain contingencies. Solving the model is computationally efficient but can result in conservative solutions. 

For renewable transmission networks, deterministic steady-state topology control probably fails to ensure cost efficiency and operational security since they optimize with only one possible scenario. Accordingly, most existing economy-oriented steady-state topology control methods tackle \ac{vre} uncertainty under the framework of stochastic optimization or robust optimization where the uncertainties are modelled in terms of probability distributions or uncertainty bounds respectively. Another approach of distributionally robust optimization attempts to combine the strengths of the robust and stochastic approaches using an ambiguity  set of distributions. 
In the stochastic methods, \ac{vre} outputs are modeled as uncertain parameters with known probability distributions. \cite{4-1137} developed a chance-constrained steady-state topology control method where the chance constraints are approximated by the sample average approach. The stochastic steady-state topology control methods with \ac{vre} uncertainty handled by the point estimation method are proposed by \cite{4-79, 4-1223, 4-1158}. 
Two-stage stochastic \ac{ots} methods are developed by \cite{4-1139} and \cite{4-1142}, which differing from the single-stage version \citep{4-1137, 4-79, 4-1223, 4-1158}, capture the corrective control reacting to \ac{vre} uncertainty. 
Unlike deterministic steady-state topology control \citep{4-62, 4-670}, the stochastic control can achieve optimal average performance. However, the computational challenge of stochastic optimization models becomes significant when considering the $N-k$ security or for high-level \ac{vre} penetrated power grids with numerous \ac{vre}-based generators. 
In the robust methods, \ac{vre} outputs are treated as uncertain parameters in an uncertainty set that ignores all distributional knowledge, retaining only the support (i.e., the set of all possible values of \ac{vre} outputs). 
\cite{4-1226} proposed an interval-based two-stage robust steady-state topology control method which maximizes the span of the possible wind variation interval. 
A two-stage robust steady-state topology control method for wind integrated hybrid AC/DC power grids was developed by \cite{4-1127}, where both uncertainties in wind generation and generator failure are considered. In contrast to the stochastic models, the robust models are computationally cheaper and the support information of uncertain parameters (i.e., the range of wind generation outputs and all feasible realizations of generator states) is easily accessible. However, they often raise concerns of over-conservatism, i.e., pursuing better performance in the rare worst-case scenario while sacrificing the average performance. 
More recently, \cite{4-b1-39} developed a two-stage distributionally robust chance-constrained method for steady-state topology control, assuring limited constraint violations for any uncertainty distribution of \ac{vre} outputs within an ambiguity set. 
\cite{4-2127} developed a two-level, three-stage formulation for robust AC \ac{ots}, considering injection uncertainties in the day-ahead operation of power networks. Unlike most formulations, this method avoids power flow approximations and thus guarantees AC-feasible solutions. 
\cite{4-1998} developed a multistep and stochastic model of day-ahead steady-state topology control, optimizing the grid topology hourly with \ac{vre} uncertainty. The model is applicable to both AC grids and hybrid AC/DC grids, while simultaneously accounting for line switching and bus splitting.

\subsection{Solution Methods}

Mathematically, steady-state transmission topology control problems come down to solving Mixed-Integer Nonlinear Programming (\ac{minlp}) models, whose non-convexity from AC power flow equations and combinatorial nature over network topology cause NP-hardness and thus the hindrance to high-performance solution approaches. Moreover, accounting for system dynamics (e.g., stability) and $N-1$ security constraints further exacerbates the solution difficulty due to increased non-convexity and problem scale.

\subsubsection{Approximation/relaxation-based methods}

Approximation/relaxation-based methods enable the resulting optimization problem to be solved directly by off-the-shelf optimization solvers. Approximation-based methods simply employ approximate power flow models. Relaxation-based methods replace the original problem by a related simpler problem such as removing constraints. 
A commonly used approximation is to use DC power flow equations that transform the original \ac{minlp} model into a mixed-integer linear programming formulation \citep{4-b1-16, 4-62, 4-63, 4-670}. 
\cite{4-1132} developed two shift factor-based Mixed-Integer Programming (\ac{mip}) formulations for security-constrained steady-state topology control with substation internal configurations, which starts with a topology with candidate breakers disconnected  and a fully closed network, respectively. 
\cite{4-478} derived strong relaxed formulations for steady-state transmission topology control problems, by utilizing a set of convex quadratic relaxations of power flow equations together with disjunctive programming. 
\cite{4-543} revealed that for the big-M technique widely utilized in the approximation/relaxation-based methods, finding the strongest variable bounds to be used on the big-M inequalities is NP-hard. A bound strengthening method was proposed to strengthen the convex relaxation of the optimization problem in certain cases, which can be treated as a preprocessing step independent of the solver to gain remarkable speedup. 
For the same issue, \cite{4-b1-30} recently proposed an iterative tightening strategy to strengthen the big-M inequalities by solving a series of bounding problems. 
\cite{4-2123} developed strengthened quadratic convex relaxations for AC \ac{ots}, where the relaxation is tightened with several new valid
inequalities by taking advantage of the network structure. This work is the first AC \ac{ots} relaxation-based approach to demonstrate near-optimal switching solutions on realistic large-scale power grid instances. 
For the approximation/relaxation-based methods, control solutions obtained by solving the convex mixed-integer programming model once may not be feasible with respect to the AC power flow equations. To address this issue, \cite{4-1128} proposed a two-level iterative framework to ensure AC feasibility of control solutions, where the upper level solves a mixed integer second-order cone programming model to provide candidate solutions, and the lower level check AC feasibility. \cite{4-61} proposed a new exact formulation of steady-state transmission topology control problems, and its mixed-integer second-order cone programming relaxation improved via strong valid inequalities. A practical algorithm was also proposed to obtain high-quality solutions with provably tight bounds and guaranteed AC feasibility. 
\cite{4-b1-34} proposed an enhanced DC \ac{ots} formulation that facilitates AC feasibility by optimizing DC power flow parameters using machine learning techniques. 
\cite{4-2124} utilized the method called modeling to generate alternatives, to find alternative solutions of DC \ac{ots} problems, thus reducing the chance of finding no AC feasible solutions.

To enhance computational efficiency, various techniques have been explored. To address redundant computation caused by the symmetry from identical transmission lines between two buses, \cite{4-1154} proposed a disjunction-based branching method and also found that adding symmetry-breaking constraint may actually be worse than ignoring symmetry. 
To reduce the number of binary variables in the mixed-integer programming model, \cite{4-1134} proposed a prescreening method to select a few switchable line candidates before solving the model. 
\cite{4-1135} proposed a decomposition approach to solve the large-scale seasonal transmission switching problem, in which the seasonal problem is decomposed into small-scale one-hour problems. 
\cite{4-1129} accelerated an existing mixed-integer linear programming heuristic for AC optimal transmission switching for load shed prevention, by modifying certain computationally-costly constraints in the linear programming model for AC power flows. 
\cite{4-1935} derived the formulas of bus split distribution factors to efficiently compute the effects of busbar splitting on the DC power flow, enabling much faster screening of topological remedial actions at the substation level for congestion management. 
\cite{4-1936} utilized GPUs to accelerate low-rank update based DC loadflow calculation, yielding several orders of magnitudes speed improvement in the frame of topology optimization. 
\cite{4-b1-32} proposed an efficient algorithm by iteratively solving the \ac{milp} model incorporating with a fast heuristic to improve the incumbent solution.

\subsubsection{Heuristic methods}

By exploiting the structural properties of optimization problems, more efficient heuristic algorithms can be designed but at the expense of global optimality. For steady-state transmission topology control problems, the common heuristic pattern begins with the initial topology in which all transmission lines are closed, and then sequentially selects lines to open in order to ultimately obtain the optimal topology. Following this heuristic pattern, \cite{4-85}, \cite{4-841}, \cite{4-361}, and \cite{4-859} selected lines to open based on different kinds of sensitivity analysis. In addition, \cite{4-736} selected lines to open using a line-ranking parameter that is computed by solving a sequence of linear programs or mixed-integer linear programs. 
Recently, \cite{4-2125} addressed the \ac{ots} problem by modeling the binary decisions as a continuous switch model. The continuous switch formulation offers physical insights to develop strong heuristics in the form of homotopy methods and Newton-Raphson dampening that provide scalable and robust convergence. 
\cite{4-2126} developed the theory of linearizing the power flow equations around changes in the complex network admittance parameters. The theory was further utilized to design a greedy algorithm based on continuous relaxation and rounding for solving AC \ac{ots} problems. 
\cite{4-2160} proposed a randomized switching algorithm that forms a configuration distribution concentrating congestion near its optimum. The congestion metric is shown to be a generalized self-concordant function over the switching probability space, enabling efficient conditional gradient optimization. This algorithm can efficiently computes nearly-optimal solutions to a class of combinatorial reconfiguration problems on weighted, undirected graphs, including some \ac{ots} problems. 
\cite{4-2003} utilized a set of metrics to identify and rank candidate buses for busbar splitting where topology optimization can effectively reduce generation costs. 
\cite{4-2178} developed a new \ac{milp} formulation for \ac{ots} with de-energization that represents post-contingency loss of connectivity without requiring additional binary variables. This formulation provides the foundation for a fast algorithm for finding feasible solutions by integrating a column-and-constraint algorithm and variable neighborhood local search strategies.

\subsubsection{Learning-based methods}

Machine learning techniques, especially reinforcement learning and graph neural networks, have been recently employed to solve steady-state transmission topology control problems. \cite{4-1156} applied machine learning algorithms to prioritize the possible line switching actions, and also developed learning-based algorithm selectors among existing transmission switching algorithms. 
\cite{4-b1-37} proposed a framework to learn the solution of steady-state transmission topology control through imitation and reinforcement learning. 
\cite{4-1743} proposed an AlphaZero-based grid topology optimization agent for congestion management. 
\cite{4-b1-36} and \cite{4-b1-38} both proposed an hierarchical reinforcement learning framework for steady-state transmission topology control, to address the challenge posed by the combinatorial nature of the action space. 
\cite{4-b1-27} trained a graph neural network offline to predict the best topology before entering the optimization stage, thereby accelerating the overall solution process. 
\cite{4-b1-26} developed a graph-enhanced model-free reinforcement learning agent for steady-state transmission topology control. 
\cite{4-b1-28} proposed a graph-based soft-label imitation leaning approach for steady-state transmission topology control. 
\cite{4-b1-29} proposed the first graph neural network model for steady-state topology control that uses only graph neural network layers. By identifying the busbar information asymmetry problem that the homogeneous graph representation suffers from, a heterogeneous graph representation was proposed to resolve it. 
\cite{4-b1-33} proposed a centrally coordinated multi-agent architecture for action space factorization of the reinforcement learning-based steady-state transmission topology control. 
\cite{4-2177} proposed a dispatch-aware deep neural network that accelerates DC-OTS without relying on pre-solved labels and embeds an OPF layer enabling generalization to untrained system configurations, such as changes in line flow limits. 
\cite{4-2171} developed a safe and intelligent transmission topology control framework that integrates large language models (LLMs) with a safety soft actor-critic architecture. In this architecture, a knowledge-based Safety-LLM module refines unsafe or suboptimal transitions through domain knowledge and state-informed reasoning, guiding the learning agent toward safer and more effective switching actions. 
Moreover, the French Transmission System Operator (\ac{tso}) RTE and collaborators launched a series of competitions, i.e., the Learning to Run a Power Network challenge, providing an open benchmark for applying learning-based methods to realistic power network operations with topology control \citep{4-1937}. 
Participants employed reinforcement learning and hybrid artificial intelligence approaches. While demonstrating the potential of artificial intelligence for topology control, these approaches still fail in a significant fraction of scenarios, leaving substantial room for improvement \citep{4-1941}. 
\cite{4-2007} provided a comprehensive and structured overview of the competitions and reinforcement learning applications for steady-state transmission topology control, along with comparative numerical study of commonly applied reinforcement-learning-based methods.

\section{Recent Advance: Topology Control with Multiple Uncertainties}

For steady-state transmission topology control in power systems with high renewable penetration, a major challenge arises from the coexisting uncertainties in renewable generation and contingencies related to security constraints. To address this issue, \cite{4-1308} proposed a three-stage \ac{ots} method considering both uncertainties and their associated corrective controls. 
This section presents an introduction and numerical examples on practical transmission networks of this method. 

\subsection{Overall Framework and Uncertainty Modeling}

The overall framework of the three-stage \ac{ots} model is illustrated in Fig. \ref{book-1-4-1}. Each stage is described as follows: 
\begin{itemize}
    \item The first stage has the primary purpose to schedule the power generation and network topology based on the forecast of \ac{vre}. Voltage magnitudes of voltage-controlled buses and reactive power outputs of \ac{vre}-based generators are optimized simultaneously.
    \item The second stage finds the corrective controls of the power generation and voltage magnitudes of voltage-controlled buses, reacting to the \ac{vre} forecast errors due to \ac{vre} uncertainty. Corrective line switching is not considered since the main target of this stage is to correct the mild power imbalance ensuing from the \ac{vre} forecast errors. 
    \item The third stage corrective control responds to an $N-k$ contingency by corrective controls including line switching, generation redispatch, voltage regulation, and load shedding. When no contingency occurs, the system operates with the topology scheduled at the first stage, and the power generation and voltage magnitudes corrected by the second stage, and the third stage is inactive. 
\end{itemize}

\begin{figure}[h]
	\centering
    \includegraphics[width=1\linewidth]{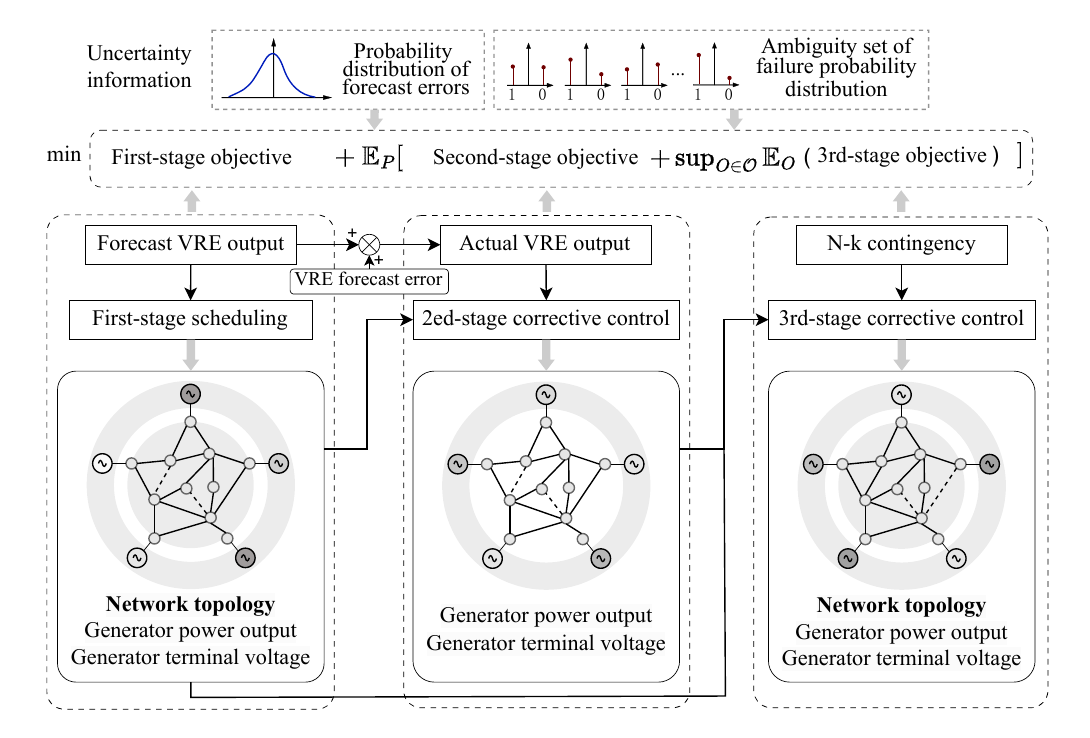}  
	\caption{Overall framework of the three-stage \ac{ots} model.}
	\label{book-1-4-1}
\end{figure}

Regarding the uncertainty information, the \ac{vre} forecast errors, denoted as $\bm{\varepsilon} \in \mathbb{R}^{n_{\rm gv}}$, are assumed to follow a probability distribution denoted as $P$, which can be obtained from historical data. Denote by $\bm{o} \in \mathbb{B}^{n_{\rm g} + n_{\rm e}}$ the parameterization vector of $N-k$ contingencies of generators and
branches, whose entry values of 1 and 0 indicate the normal and failure state of associated components. The contingency uncertainty $\bm{o}$ is assumed to follow a probability distribution denoted as $O$, which belongs to a certain ambiguity set $\mathcal{O}$. Specifically, the support of $\bm{o}$, denoted by $\Omega$, can be formulated as: 
\begin{equation} 
         \Omega = \{ \bm{o} \in \mathbb{B}^{n_{\rm e} + n_{\rm g}}: n_{\rm e} + n_{\rm g} - \bm{1}\T \bm{o} \leq k_{\max}  \} 
\end{equation} 
An accurate probability distribution of components failures is generally unattainable since the failure probabilities are close to 0 and most failures have not actually happened in historical data \citep{4-1214}. Thus, the following moment-based ambiguity set is adopted:
\begin{equation}\label{eq-3-6-ambiguity-set-1}
    \mathcal{O} = \{ O \in \mathcal{M}(\Omega, \mathcal{F}):  {\bm{o}}_{\rm min}  \leq   \mathbb{E}_{O}( \bm{1} - \bm{o} )   \leq {\bm{o}}_{\rm max}    \}
\end{equation}
where $\mathcal{F}$ is the $\xi$-field of $\Omega$; 
$\mathcal{M}(\Omega, \mathcal{F})$ is the set of all probability distributions defined on $(\Omega, \mathcal{F})$;
 ${\bm{o}}_{\rm min}$ and ${\bm{o}}_{\rm max}$ are the vectors of lower and upper bounds of probabilities of components failures, respectively. Concisely, (\ref{eq-3-6-ambiguity-set-1}) can be rewritten as
\begin{equation}\label{eq-3-6-ambiguity-set}
    \mathcal{O} = \{ O \in \mathcal{M}(\Omega, \mathcal{F}): \mathbb{E}_{O}( \bm{T} (\bm{1} - \bm{o}) )  \leq \tilde{\bm{o}}  \}
\end{equation}
where $\bm{T} = [ \bm{I}_{n_{\rm e} + n_{\rm g}}, -\bm{I}_{n_{\rm e} + n_{\rm g}} ]\T $ and $\tilde{\bm{o}} = [{\bm{o}}_{\max}\T, -{\bm{o}}_{\min}\T]\T$. 

With the aforementioned uncertainty modeling, the first-stage scheduling aims to optimize the overall performance integrating the objectives of all three stages, with $\bm{\varepsilon}$ and $\bm{o}$ handled by stochastic optimization and distributionally robust optimization respectively. 
Accordingly, the three-stage \ac{ots} model determines the optimal network topology and generation dispatch that jointly optimize normal operations under forecasted \ac{vre}, average performance under \ac{vre} uncertainty, and worst-case average performance under contingency uncertainty.

\subsection{Compact Form of the Three-Stage \ac{ots} Model}

The three-stage \ac{ots} model can be compactly formulated as follows:
\begin{subequations}\label{eq-3-6-final-stage-1}
    \begin{align}
        Q_1 := \min_{\bm{x} \in \mathbb{R}^{n_1} \times \mathbb{B}^{m_1}} ~&
        \bm{c}_{\rm 1}\T \bm{x} +   \mathbb{E}_{{P}}[ Q_{\rm 2}( \bm{x}, \bm{\varepsilon} ) ] \label{eq-3-6-final-stage-1:1}\\
        \text{s.t.} ~& 
        \bm{A} \bm{x} \leq \bm{b}  \label{eq-3-6-final-stage-1:2} 
    \end{align}
\end{subequations}
with the second-stage recourse function
\begin{subequations}\label{eq-3-6-final-stage-2}
    \begin{align}
         Q_2( \bm{x}, \bm{\varepsilon} ) := 
        \min_{\bm{x}_{\bm{\varepsilon}} \in \mathbb{R}^{n_2}} ~ & \bm{c}_{\rm 2}\T \bm{x}_{\bm{\varepsilon}} 
        + 
        \sup_{ O \in \mathcal{O} }   \mathbb{E}_{ O }   
            [ Q_3( \bm{x}_{\bm{\varepsilon}}, \bm{x}, \bm{o} ) ]  \label{eq-3-6-final-stage-2:1} \\
        \text{s.t.} ~ & 
        \bm{C} \bm{x}_{\bm{\varepsilon}} + \bm{D} \bm{x} \geq  \bm{d}( \bm{\varepsilon} )   \label{eq-3-6-final-stage-2:2} 
    \end{align}
\end{subequations}
with the third-stage recourse function
\begin{subequations}\label{eq-3-6-final-stage-3}
    \begin{align}
         Q_3( \bm{x}, \bm{x}_{\bm{\varepsilon}}, \bm{o} )   &  := 
        \min_{\bm{x}_{\bm{o}}^{\bm{\varepsilon}} \in \mathbb{R}^{n_3} \times \mathbb{B}^{m_3}} ~   \bm{c}_{\rm 3}\T \bm{x}_{\bm{o}}^{\bm{\varepsilon}}   \label{eq-3-6-final-stage-3:1}\\
        \text{s.t.} ~&  \bm{F} \bm{x}_{\bm{o}}^{\bm{\varepsilon}} + \bm{H} \bm{x}_{\bm{\varepsilon}} +  \bm{L} \bm{x} \geq \bm{f} \label{eq-3-6-final-stage-3:2} \\
        ~ &  \bm{M} \bm{x}_{\bm{o}}^{\bm{\varepsilon}} + \bm{N} \bm{o} \geq \bm{g}  \label{eq-3-6-final-stage-3:3} \\
        ~ & \bm{R} \bm{x}_{\bm{o}}^{\bm{\varepsilon}} + \bm{S} (\bm{o}) [\bm{x}\T ~\bm{x}_{\bm{\varepsilon}}\T ~(\bm{x}_{\bm{o}}^{\bm{\varepsilon}})\T ]\T   \geq \bm{h} \label{eq-3-6-final-stage-3:4}\\
        ~ & [\bm{x}_{\bm{o}}^{\bm{\varepsilon}}]_{\rm c} \in \argmin_{ [\bm{x}_{\bm{o}}^{\bm{\varepsilon}}]_{\rm c} }  \{ \bm{l}\T [\bm{x}_{\bm{o}}^{\bm{\varepsilon}}]_{\rm c}: {  \bm{V} [\bm{x}_{\bm{o}}^{\bm{\varepsilon}}]_{\rm z} + \bm{W} [\bm{x}_{\bm{o}}^{\bm{\varepsilon}}]_{\rm c} \leq \bm{u}} \} \label{eq-3-6-final-stage-3-5}  
    \end{align}
\end{subequations} 
where 
$\bm{x}$, $\bm{x}_{\bm{\varepsilon}}$ and $\bm{x}_{\bm{o}}^{\bm{\varepsilon}}$ are stage-wise optimization variable vectors, including all control and state variables related to each stage; 
$[\bm{x}_{\bm{o}}^{\bm{\varepsilon}}]_{\rm c}$ denotes the sub-vector of $\bm{x}_{\bm{\varepsilon}}$ corresponding to optimization variables of (\ref{eq-0-2-5}) used to enforce network connectedness, and $[\bm{x}_{\bm{o}}^{\bm{\varepsilon}}]_{\rm z}$ denotes the sub-vector of $\bm{x}_{\bm{o}}^{\bm{\varepsilon}}$ corresponding to $\tilde{\bm{z}} \in \mathbb{B}^{n_{\rm e}}$ that represents the post-contingency topology; 
$n_1$ (or $n_3$) and $m_1$ (or $m_3$) are dimensions of continuous variables and binary variables in $\bm{x}$ (or $\bm{x}_{\bm{o}}^{\bm{\varepsilon}}$), respectively, and $n_2$ is the dimension of $\bm{x}_{\bm{\varepsilon}}$; 
$\bm{A}$, $\bm{C}$, $\bm{D}$, $\bm{F}$, $\bm{H}$, $\bm{L}$, $\bm{M}$, $\bm{N}$, $\bm{R}$, $\bm{S}$, $\bm{V}$ and $\bm{W}$ are proper matrices; $\bm{b}$, $\bm{f}$, $\bm{g}$, $\bm{h}$, $\bm{u}$, $\bm{l}$, and $\bm{c}_i$ are proper vectors; 
$\bm{d}( \bm{\varepsilon} )$ and $\bm{S}(\bm{o})$ denote a proper variable vector and matrix linearly related to $\bm{\varepsilon}$ and $\bm{o}$, respectively; 
$\bm{c}_{\rm 1}\T \bm{x}_{\rm 1}$, $\bm{c}_{\rm 2}\T \bm{x}_{\bm{\varepsilon}}$, and $\bm{c}_{\rm 3}\T \bm{x}_{\bm{o}}^{\bm{\varepsilon}}$ represent the total generation cost of all conventional generators, the regulation cost of active power outputs of all generators, and the corrective control cost for contingency response, respectively. 
More specifically, Eq. (\ref{eq-3-6-final-stage-1:2}) accounts for the power flow equations represented using the linear-programming approximation-based power flow model with variable topology \citep{4-1163}; operational constraints on output power of generators, power factors of VRE-based generators, bus voltage magnitude, branch phase angle difference and branch power; topological constraints on network connectedness and switching actions. Eq. (\ref{eq-3-6-final-stage-2}) incorporates power flow and operational constraints analogously to those in (\ref{eq-3-6-final-stage-1:2}); regulation constraints on generator ramping and equality for the second-stage control variables. Eqs. (\ref{eq-3-6-final-stage-3:2})-(\ref{eq-3-6-final-stage-3-5}) represent power flow and operational constraints that are analogous to those in (\ref{eq-3-6-final-stage-1:2}); topology coupling constraints between the first-stage network topology, branch contingencies, post-contingency topology, corrective steady-state topology control, and third-stage topology; and network connectedness constraints as introduced in Section \ref{sec-b1-3-2}.

\subsection{Tractable Reformulation and Solution}\label{sec-4-2}

The three-stage \ac{ots} model cannot be solved directly due to its complex three-stage and bi-level structure. However, a traceable reformulation can be derived to enable the application of existing solution algorithms. 

First, for the recourse function $Q_3$ formulated as a bi-level optimization problem, the lower-level linear programming model (\ref{eq-3-6-final-stage-3-5}) can be replaced by its necessary and sufficient Karush-Kuhn-Tucker (\ac{kkt}) conditions as follows:
\begin{subequations}\label{eq-3-6-KKT}
    \begin{align}
        & \bm{W} \bm{x}_{\bm{o}}^{\bm{\varepsilon}} \leq \bm{u} - \bm{V} [\bm{x}_{\bm{o}}^{\bm{\varepsilon}}]_{\rm z}, \bm{\lambda}_{\bm{o}}^{\bm{\varepsilon}} \geq \bm{0}, \bm{W}\T \bm{\lambda}_{\bm{o}}^{\bm{\varepsilon}} = \bm{l}  \label{eq-3-6-KKT:1} \\
        & \bm{u} - \bm{V} [\bm{x}_{\bm{\varepsilon}}]_{\rm z} - \bm{W} [\bm{x}_{\bm{\varepsilon}}]_{\rm c}  \leq M (\bm{1} - \bm{\xi}_{\bm{o}}^{\bm{\varepsilon}} ), \bm{\lambda}_{\bm{o}}^{\bm{\varepsilon}} \leq M \bm{\xi}_{\bm{o}}^{\bm{\varepsilon}} \label{eq-3-6-KKT-MI}
    \end{align}
\end{subequations}
where $\bm{\lambda}_{\bm{o}}^{\bm{\varepsilon}} \in \mathbb{R}^{n_{\rm u}'}$ with $n_{\rm u}'$ being the dimension of $\bm{u}$, and $\bm{\xi}_{\bm{o}}^{\bm{\varepsilon}} \in \mathbb{B}^{n_{\rm u}'}$.

Second, considering that relatively complete recourse of the three-stage \ac{ots} model cannot always be ensured, (\ref{eq-3-6-final-stage-2}) and (\ref{eq-3-6-final-stage-3}) are reformulated as penalized problems. Specifically, the second-stage recourse function is reformulated as
\begin{subequations}\label{eq-3-6-reform-2}
    \begin{align}
         Q_2( \bm{x}, \bm{\varepsilon} ) := 
        \min_{\tilde{\bm{x}}_{\bm{\varepsilon}} \in \mathbb{R}^{\tilde{n}_2}} ~ & \tilde{\bm{c}}_{\rm 2}\T \tilde{\bm{x}}_{\bm{\varepsilon}} 
        + \sup_{ O \in \mathcal{O} }   \mathbb{E}_{ O }   
        [ Q_3( \bm{x}, \bm{x}_{\bm{\varepsilon}}, \bm{o} ) ]   \label{eq-3-6-reform-2:1} \\
        \text{s.t.} ~ & 
        \tilde{\bm{C}} \tilde{\bm{x}}_{\bm{\varepsilon}} + \tilde{\bm{D}} \bm{x} \geq  \tilde{\bm{d}}( \bm{\varepsilon} )   \label{eq-3-6-reform-2:2} 
    \end{align}
\end{subequations}
where $\tilde{\bm{x}}_{\bm{\varepsilon}}$ is the augmentation of $\bm{x}_{\bm{\varepsilon}}$ with $y_{\bm{\varepsilon}} \in \mathbb{R}$ denoting the slack variable for the equality between the forecast and actual maximal outputs of \ac{vre}-based generators, and $\tilde{\bm{c}}_{\rm 2}\T \tilde{\bm{x}}_{\bm{\varepsilon}} = \bm{c}_2\T \bm{x}_{\bm{\varepsilon}} + \sigma_2 y_{\bm{\varepsilon}}$ with $\sigma_2$ being the penalty coefficient. 
Similarly, the penalized reformulation of (\ref{eq-3-6-final-stage-3}) can be formed by adding $-y_{\bm{o}}^{\bm{\varepsilon}} \bm{1}$ to the RHS of constraints (\ref{eq-3-6-final-stage-3:2})-(\ref{eq-3-6-final-stage-3:4}), where $y_{\bm{o}}^{\bm{\varepsilon}} \in \mathbb{R}$ is the slack variable; and replacing the objective function in (\ref{eq-3-6-final-stage-3:1}) by $\tilde{\bm{c}}_{\rm 3}\T \tilde{\bm{x}}_{\bm{o}}^{\bm{\varepsilon}} = \bm{c}_3\T \bm{x}_{\bm{o}}^{\bm{\varepsilon}} + \sigma_3 y_{\bm{o}}^{\bm{\varepsilon}} $, with $\sigma_3$ being the penalty coefficient and $\tilde{\bm{x}}_{\bm{o}}^{\bm{\varepsilon}} = [(\bm{x}_{\bm{o}}^{\bm{\varepsilon}})\T ~ \bm{\lambda}\T ~ \bm{\xi}\T ~ y_{\bm{o}}^{\bm{\varepsilon}}]\T$.  
Then, substituting (\ref{eq-3-6-final-stage-3-5}) in the penalized reformulation by (\ref{eq-3-6-KKT:1}) and (\ref{eq-3-6-KKT-MI}) yields: 
\begin{subequations}\label{eq-3-6-reform}
    \begin{align}
         Q_3( \bm{x}, \bm{x}_{\bm{\varepsilon}}, \bm{o} )    := 
        \min_{\tilde{\bm{x}}_{\bm{o}}^{\bm{\varepsilon}} \in \mathbb{R}^{\tilde{n}_3} \times \mathbb{B}^{\tilde{m}_3}} ~ &   \tilde{\bm{c}}_{\rm 3}\T \tilde{\bm{x}}_{\bm{o}}^{\bm{\varepsilon}}  \label{eq-3-6-reform:1}\\
        \text{s.t.} ~&  \tilde{\bm{F}} \tilde{\bm{x}}_{\bm{o}}^{\bm{\varepsilon}} + \tilde{\bm{H}} {\bm{x}}_{\bm{\varepsilon}} +  \tilde{\bm{L}} \bm{x} \geq \tilde{\bm{f}} \label{eq-3-6-reform:2} \\  
        ~ &  \tilde{\bm{M}} \tilde{\bm{x}}_{\bm{o}}^{\bm{\varepsilon}} + \tilde{\bm{N}} \bm{o} \geq \tilde{\bm{g}}  \label{eq-3-6-reform:3} \\ 
        ~ & \tilde{\bm{R}} \tilde{\bm{x}}_{\bm{o}}^{\bm{\varepsilon}} + \tilde{\bm{S}}(\bm{o}) [\bm{x}\T ~\bm{x}_{\bm{\varepsilon}}\T ~(\bm{x}_{\bm{o}}^{\bm{\varepsilon}})\T ]\T  \geq \tilde{\bm{h}}   \label{eq-3-6-reform:4}
    \end{align}
\end{subequations}
where 
(\ref{eq-3-6-reform:2}) represents the slacked Eq. (\ref{eq-3-6-final-stage-3:2}), Eq. (\ref{eq-3-6-KKT:1}), and Eq. (\ref{eq-3-6-KKT-MI}); 
(\ref{eq-3-6-reform:3}) and (\ref{eq-3-6-reform:4}) denote the slacked Eq. (\ref{eq-3-6-final-stage-3:3}) and slacked Eq. (\ref{eq-3-6-final-stage-3:4}), respectively. With slack variables, the three-stage \ac{ots} model has relatively complete recourse, provided Eq. (\ref{eq-3-6-final-stage-1:2}) is feasible.

Third, let the problem $\sup\nolimits_{O \in \mathcal{O}}  \mathbb{E}_{O} [Q_3( \bm{x}, \bm{x}_{\bm{\varepsilon}}, \bm{o} )] $ be denoted as $\mathcal{Q}_3( \bm{x}, \bm{x}_{\bm{\varepsilon}})$. It can then be formulated as: 
\begin{subequations}\label{eq-3-6-before-duality}
    \begin{align}
        \mathcal{Q}_3(\bm{x}, \bm{x}_{\bm{\varepsilon}}) = \sup_{ O \in \mathcal{O} }  &  \sum\nolimits_{\bm{o} \in \Omega_{\bm{\varepsilon}} }  Q_3( \bm{x}, \bm{x}_{\bm{\varepsilon}}, \bm{o} ) O( \bm{o}) \label{eq-3-6-before-duality:1}\\
        \text{s.t.} &  \sum\nolimits_{\bm{o} \in \Omega_{\bm{\varepsilon}}} \bm{T} (\bm{1} - \bm{o}) O(\bm{o}) \leq \tilde{\bm{o}}, \sum\nolimits_{\bm{o} \in \Omega_{\bm{\varepsilon}}}   O(\bm{o}) = 1  \label{eq-3-6-before-duality:2}
    \end{align}
\end{subequations}
Here $\Omega_{\bm{\varepsilon}} = \Omega$ while $\Omega_{\bm{\varepsilon}}$ will be specified differently later. According to the strong duality theory, (\ref{eq-3-6-before-duality}) admits the following dual formulation:
\begin{subequations}\label{eq-3-6-duality}
    \begin{align}
          \mathcal{Q}_3(\bm{x}, \bm{x}_{\bm{\varepsilon}}) =  \min_{ \bm{\lambda}_{\bm{\varepsilon}}, \lambda_{\bm{\varepsilon}}' } ~&  \tilde{\bm{o}}\T \bm{\lambda}_{\bm{\varepsilon}}  + \lambda_{\bm{\varepsilon}}' \label{eq-3-6-duality:1}\\
          \text{s.t.} ~&  \bm{\lambda}_{\bm{\varepsilon}}\T \bm{T} (\bm{1} - \bm{o}) + \lambda_{\bm{\varepsilon}}' \geq Q_3( \bm{x}, \bm{x}_{\bm{\varepsilon}}, \bm{o} ) ~ \forall \bm{o} \in \Omega_{\bm{\varepsilon}} \label{eq-3-6-duality:2}\\
        & \bm{\lambda}_{\bm{\varepsilon}} \geq \bm{0} \label{eq-3-6-duality:3}
    \end{align}
\end{subequations}
where $\bm{\lambda}_{\bm{\varepsilon}}$ and $\lambda_{\bm{\varepsilon}}'$ are dual variables for the two constraints of (\ref{eq-3-6-before-duality:2}), respectively. By further replacing $Q_3( \bm{x}, \bm{x}_{\bm{\varepsilon}}, \bm{o} )$ in (\ref{eq-3-6-duality:1}) with the objective function in (\ref{eq-3-6-reform}) and augment constraints of (\ref{eq-3-6-duality}) with that of (\ref{eq-3-6-reform}), $\mathcal{Q}_3( \bm{x}, \bm{x}_{\bm{\varepsilon}})$ can be equivalently formulated as: 
\begin{subequations}\label{eq-3-6-duality-sub}
    \begin{align}
         \mathcal{Q}_3( \bm{x}, \bm{x}_{\bm{\varepsilon}}) & =  \min_{ \bm{\lambda}_{\bm{\varepsilon}}, \lambda_{\bm{\varepsilon}}', \{ \tilde{\bm{x}}_{\bm{o}}^{\bm{\varepsilon}} \} } ~ \tilde{\bm{o}}\T \bm{\lambda}_{\bm{\varepsilon}}  + \lambda_{\bm{\varepsilon}}' \label{eq-3-6-duality-sub:1}\\
        \text{s.t.} ~&  \bm{\lambda}_{\bm{\varepsilon}}\T \bm{T} (\bm{1} - \bm{o}) + \lambda_{\bm{\varepsilon}}' \geq  \tilde{\bm{c}}_{\rm 3}\T \tilde{\bm{x}}_{\bm{o}}^{\bm{\varepsilon}}, \text{(\ref{eq-3-6-reform:3}), (\ref{eq-3-6-reform:4})} ~ \forall \bm{o} \in \Omega_{\bm{\varepsilon}} \label{eq-3-6-duality-sub:2}\\
        & \text{(\ref{eq-3-6-duality:3}), (\ref{eq-3-6-reform:2}) }  \label{eq-3-6-duality-sub:3} 
    \end{align}
\end{subequations}
Furthermore, substituting (\ref{eq-3-6-duality-sub}) into (\ref{eq-3-6-reform-2}) yields the equivalent second-stage recourse function:
\begin{equation}\label{eq-3-6-stage2-final} 
    \begin{aligned}
        Q_2( \bm{x}, \bm{\varepsilon} ) :=
        \min_{ \tilde{\bm{x}}_{\bm{\varepsilon}}, \{\tilde{\bm{x}}_{\bm{o}}^{\bm{\varepsilon}} \}, \bm{\lambda}_{\bm{\varepsilon}}, \lambda_{\bm{\varepsilon}}' }  & \tilde{\bm{c}}_2\T \tilde{\bm{x}}_{\bm{\varepsilon}} + \tilde{\bm{o}}\T \bm{\lambda}_{\bm{\varepsilon}}  + \lambda_{\bm{\varepsilon}}' \\
          \text{s.t.} & 
        \text{ (\ref{eq-3-6-reform-2:2}), (\ref{eq-3-6-duality-sub:2}), (\ref{eq-3-6-duality-sub:3}) } 
    \end{aligned}
\end{equation}

Finally, by replacing the expectation in (\ref{eq-3-6-final-stage-1}) with its scenario-based approximation, the final tractable reformulation of the three-stage \ac{ots} model is obtained as follows:
\begin{equation}\label{eq-3-6-final-opt-model}
    \begin{aligned}
        & Q_1 = \min_{\bm{x}, \{ \tilde{\bm{x}}_{\bm{\varepsilon}} \}, \{ \tilde{\bm{x}}_{\bm{o}}^{\bm{\varepsilon}} \}, \{\bm{\lambda}_{\bm{\varepsilon}}\}, \{\lambda_{\bm{\varepsilon}}' \} } 
        \bm{c}_{\rm 1}\T \bm{x} + \sum_{ \bm{\varepsilon} \in \Xi }   {P}( \bm{\varepsilon} ) \left[ \tilde{\bm{c}}_2\T \tilde{\bm{x}}_{\bm{\varepsilon}} + \tilde{\bm{o}}\T \bm{\lambda}_{\bm{\varepsilon}}  + \lambda_{\bm{\varepsilon}}' \right]  \\
        &~~~~~~~~~~~~~~~~~~~ \text{s.t.}   \text{(\ref{eq-3-6-final-stage-1:2})}, \{ \text{(\ref{eq-3-6-reform-2:2}), (\ref{eq-3-6-duality-sub:2}), (\ref{eq-3-6-duality-sub:3}) } | \forall \bm{\varepsilon} \in \Xi  \}
    \end{aligned}
\end{equation}
where $\Xi$ is the set of finite number of scenarios of $\bm{\varepsilon}$.

The tractable reformulation (\ref{eq-3-6-final-opt-model}) can be solved by the nested Column-and-Constraint Generation (\ac{ccg}) algorithm proposed by \cite{4-1265}. To overcome the slow convergence issue of the \ac{ccg} algorithm \citep{4-1213}, Dantzig–Wolfe procedure can be employed to update the subset of contingency scenarios under which cutting planes are defined. Moreover, for the inner \ac{ccg} loop, instead of using the \ac{kkt} condition, strong duality theory can be employed to derive a tractable formulation of the master problem to avoid introducing massive auxiliary binary variables.

\subsection{Numerical Example}\label{sec-4-4}

The three-stage \ac{ots} method is tested on two real-world transmission networks: the 50Hertz control area of the German transmission network, and the transmission network of Liaoning province in China. The 50Hertz transmission network is the 380 and 220 kV transmission area in the north-eastern Germany (colored in red and blue in Fig. \ref{book-1-4-2}). The network contains 76 buses, 166 transmission lines, 16 transformers, 93 conventional generators, 87 wind farms and 85 solar farms. The Liaoning transmission network comprises the 220kV and 550kV transmission system located in Liaoning province, China,  as illustrated in Fig. \ref{book-1-4-2-add}. It contains 772 buses, 970 transmission lines, 143 two-phase transformers, 62 three-phase transformers, 118 conventional generators and 38 wind farms.

\begin{figure}[h]
	\centering
    \includegraphics[width=0.66\linewidth]{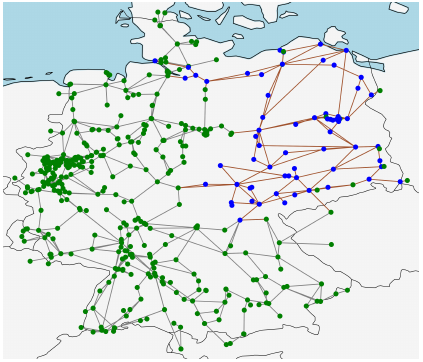}  
	\caption{Diagram of the 50Hertz transmission network.}
	\label{book-1-4-2}
\end{figure}

\begin{figure}[h]
	\centering
    \includegraphics[width=0.9\linewidth]{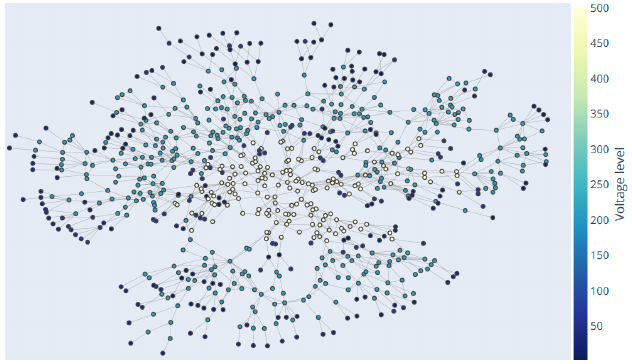}  
	\caption{Diagram of the Liaoning transmission network.}
	\label{book-1-4-2-add}
\end{figure}

For each transmission network, the three-stage \ac{ots} model and its variation where line switching is not allowed are both solved for 100 time-series scenarios with a 1-hour interval. 
The associated results are given by Fig. \ref{fig-3-6-r-3} and Fig. \ref{fig-3-6-r-4}, where $\Delta Q_1$ denotes the relative change from the value of $Q_1$ without lie switching to that with line switching. 
In Fig. \ref{fig-3-6-r-4}, the missing points in the blue curve indicate that values of $Q_1$ are abnormally large or the optimization model is infeasible, and the orange bars indicate invalid values of $\Delta Q_1$. 

\begin{figure}[t!]
	\centering
    \includegraphics[width=0.85\linewidth]{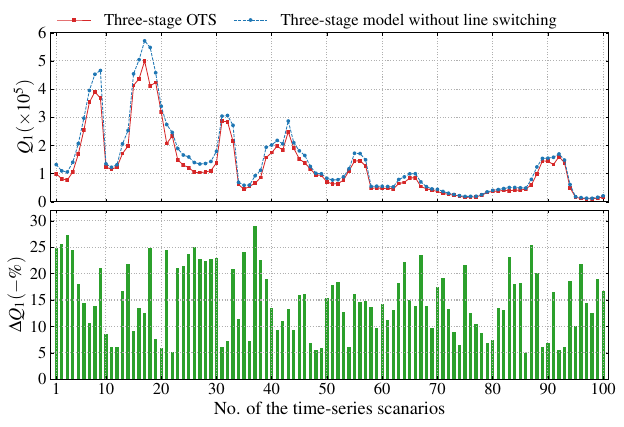}  
	\caption{Results of the three-stage \ac{ots} method applied to the 50Hertz network.}
	\label{fig-3-6-r-3}
\end{figure}
\begin{figure}[t!]
	\centering
    \includegraphics[width=0.85\linewidth]{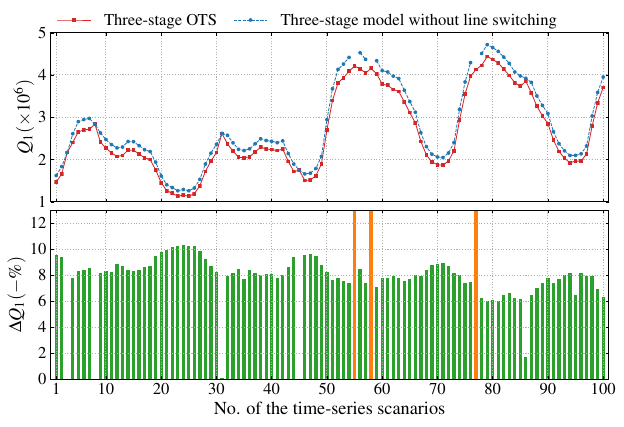}  
	\caption{Results of the three-stage \ac{ots} method applied to the Liaoning network.}
	\label{fig-3-6-r-4}
\end{figure}

According to Fig. \ref{fig-3-6-r-3}, for the 50Hertz transmission network, the three-stage \ac{ots} method reduces total operating cost $Q_1$ for all the scenarios compared with the case where the network topology is not optimized. The maximal rate of decrease is close to 30\%, while the minimum is still about 5\%. According to Fig. \ref{fig-3-6-r-4}, for the Liaoning transmission network,  the three-stage \ac{ots} method reduces $Q_1$ by 6\% to 10\% for the majority of the scenarios. In particular, for some scenarios, i.e., the 55th, 58th, and 77th ones, no feasible solution exists when the network topology is fixed, while the three-stage \ac{ots} method ensures the feasibility by leveraging the flexibility of network topology.  

Furthermore, the three-stage \ac{ots} method is compared to the two-stage stochastic \ac{ots} established following the framework developed by \cite{4-1139} and \cite{4-1142}. 
Under this framework, the first stage is the same as that in the three-stage \ac{ots} model; the second stage is the corrective control reacting to probabilistic scenarios representing the simultaneous occurrence of \ac{vre} forecast errors and contingencies. 
The second-stage corrective control in the two-stage model is set as the third-stage corrective control in the three-stage \ac{ots} model since otherwise the two-stage model is unobtainable. 
Ten $N-1$ contingencies with the probabilities of component failures being $(\bm{o}_{\min} + \bm{o}_{\max})/2$ are considered in the two-stage \ac{ots} model for computational tractability and the same contingencies plus the normal state are used as the support $\Omega$ in the three-stage \ac{ots} model for comparability. Meanwhile, the values of $\bm{o}_{\min}$ and $\bm{o}_{\max}$ are increased to compensate for the impact of reducing $\Omega$. 
Assume that $O^*: \mathbb{E}_{O}(\bm{1} - \bm{o}) = (\bm{o}_{\min} + \bm{o}_{\max})/2$ is the true probability distribution of $\bm{o}$. Considering that the \ac{vre} uncertainty and contingency happen separately in practical, the performance of each \ac{ots} model is evaluated as follows: solve the \ac{ots} model to obtain the first-stage scheduling solution with the associated objective value being $c_1$; with this scheduling solution, for each $\bm{\varepsilon} \in \Xi$, solve the second-stage corrective control problem to obtain the associated regulation cost denoted by $c_{2}^{\bm{\varepsilon}}$; following the second-stage corrective control induced by $\bm{\varepsilon}$, for each $\bm{o} \in \Omega$, solve the third-stage corrective control problem to obtain the associated corrective control cost $c_{3}^{\bm{\varepsilon}, \bm{o}}$; compute $Q_1^* = c_1 + \sum_{\bm{\varepsilon} \in \Xi} P(\bm{\varepsilon}) [ c_{2}^{\bm{\varepsilon}} + \sum_{\bm{o} \in \Omega} O^*(\bm{\varepsilon}) c_{3}^{\bm{\varepsilon}, \bm{o}} ]$, which is the total operational cost with the true probability distribution of $\bm{o}$ and practical sequence of uncertainty occurrence.

\begin{table}[h]
    \setlength{\tabcolsep}{5pt}
	\centering
    \caption{Values of $Q_1^*$ for different \ac{ots} models under some scenarios} 
    \begin{tabular}{|ccccccc|} 
        \hline\hline
        \parbox[t]{4mm}{\multirow{3}{*}{~\rotatebox[origin=c]{90}{{German~}}}} 
               &  Model   & No.7 & No.8 & No.22  & No.29 & No.47 \\ \cline{2-7}
               &  Three-stage  & 382314  & 348972  & 138533   & 131472   & 92456  \\
               &  Two-stage   & 397881  & 357411  & 144780   & 138409   & 96980   \\ \hline\hline
               \parbox[t]{4mm}{\multirow{3}{*}{~\rotatebox[origin=c]{90}{Liaoning~$\!$}}}
                &  Model   & No.4 & No.25 & No.42  & No.77 & No.81 \\ \cline{2-7}
                &  Three-stage & 2600620 & 1129983 & 1817773  & 4023560  & 4014464  \\
                &  Two-stage  & 2663538 & 1162542 & 1844586  & 4181332  & 4080097  \\[0.3mm] 
        \hline\hline
    \end{tabular}  
    \label{table-3-6-3}  
\end{table}

Table \ref{table-3-6-3} compares the values of $Q_1^*$ associated with different \ac{ots} models, for some time-series scenarios of the German and Liaoning networks. The true probability distribution of $\bm{o}$ is employed by the two-stage \ac{ots} model but is not necessarily the worst-case one associated with the optimal first-stage scheduling solution yield by the three-stage \ac{ots} model. Thus, $Q_1^*$ of the two-stage \ac{ots} model should not be larger than that of the three-stage \ac{ots} model for the same time-series scenario. However, for each time-series scenario in Table \ref{table-3-6-3}, the opposite of this inference is observed. This indicates that with the practical sequence of uncertainty occurrence and separate corrective controls, the two-stage \ac{ots} method can lead to inferior first-stage scheduling solutions compared to the three-stage \ac{ots} method.

\graphicspath{{chapter_5/Figs/}}
\chapter{Steady-State Distribution Topology Control}\label{chapter-5}

This chapter presents the steady-state topology control of distribution networks, commonly termed Distribution Network Reconfiguration (\ac{dnr}). First, a canonical formulation, a representative example model, and a classical heuristic solution method for steady-state distribution topology control are introduced. Then, a comprehensive state-of-the-art review of existing solution methodologies for steady-state distribution topology control is provided. Finally, a recent advance in solution methodologies, namely a hybrid learning-heuristic solution paradigm, is discussed.

\section{Fundamentals}

\subsection{Canonical Formulation and Representative Example Model}

We first present a canonical formulation of steady-state distribution topology control problems. Let $\mathcal{G}_{\rm v}(\mathcal{V}, \mathcal{E}_{\rm us} \cup \mathcal{S}, \mathcal{R})$ denote distribution network graph with variable topology, where $\mathcal{S} \subseteq \mathcal{E} \backslash \mathcal{E}_{\rm us}$ is the variable set containing the branches selected to close, and $\mathcal{R}$ is the set collecting all other parameters and variables associated with the problem. 
Set $\mathcal{R}$ is partitioned into the following four subsets: 
\begin{itemize}
    \item[(i)] The constant parameter set $\mathcal{R}_{\rm cp}$, containing the associated parameters invariant across executions of steady-state topology control; 
    \item[(ii)] The time-varying input parameter set $\mathcal{R}_{\rm vp}$, typically including load powers,  Variable Renewable Energy (\ac{vre}) generation power limits, etc.; 
    \item[(iii)] The state variable set $\mathcal{R}_{\rm sv}$, including bus voltage phase angles, load bus voltage magnitudes, and reactive power outputs of generators, which may vary depending on power flow models; 
    \item[(iv)] The coordinated variable set $\mathcal{R}_{\rm cv}$, containing control variables jointly optimized with the network topology. 
\end{itemize}

Then, steady-state distribution topology control problems can be represented in the following canonical form:
\begin{subequations}\label{eq-10-1-1}
    \begin{align}
        \min_{\mathcal{S}, \mathcal{R}_{\rm cv}, \mathcal{R}_{\rm sv}} ~~ &  f( \mathcal{S}, \mathcal{R}_{\rm cv}, \mathcal{R}_{\rm sv}, \mathcal{R}_{\rm vp}, \mathcal{R}_{\rm cp} ) \label{eq-10-1-1:0} \\
        \rm{s.t.} ~~~~~&   
        g( \mathcal{S}, \mathcal{R}_{\rm cv}, \mathcal{R}_{\rm sv}, \mathcal{R}_{\rm vp}, \mathcal{R}_{\rm cp} ) = 0   \label{eq-10-1-1:1} \\  
        ~~ & 
		h( \mathcal{S}, \mathcal{R}_{\rm cv}, \mathcal{R}_{\rm sv}, \mathcal{R}_{\rm vp}, \mathcal{R}_{\rm cp} ) \leq 0 \label{eq-10-1-1:2} \\ 
		~~ & 
		s( \mathcal{S}, \mathcal{R}_{\rm cp} ) \leq 0    \label{eq-10-1-1:3}
    \end{align}
\end{subequations} 
where $f$ is the objective function such as the network loss and operation cost; Eq. (\ref{eq-10-1-1:1}) describes system properties, including power flow equations and dynamic equations if dynamic performance is considered; 
Eq. (\ref{eq-10-1-1:2}) ensures operational feasibility; Eq. (\ref{eq-10-1-1:3}) is specially for topological feasibility. 
In addition to branch switchability where $\mathcal{S} \subseteq \mathcal{E} \backslash \mathcal{E}_{\rm us}$, topological feasibility also includes network radiality constraints and, if multiple substations are present, ensures that each bus is connected to a single substation. 
For subsequent expression, let $\Psi ( \mathcal{S}, \mathcal{R}_{\rm cp}, \mathcal{R}_{\rm vp} )$ denote the reduced form of problem (\ref{eq-10-1-1}), where $\mathcal{S}$ is a known parameter and Eq. (\ref{eq-10-1-1:3}) is omitted. The optimal value of $f$ in this reduced problem is then given by $f^* = \Psi ( \mathcal{S}, \mathcal{R}_{\rm cp}, \mathcal{R}_{\rm vp} )$. Also, let $\Phi ( \mathcal{S}, \mathcal{R}_{\rm cp}, \mathcal{R}_{\rm vp} )$ denote the corresponding optimizer of the reduced problem, such that $\{ \mathcal{R}_{\rm sv}^*, \mathcal{R}_{\rm cv}^*\} = \Phi ( \mathcal{S}, \mathcal{R}_{\rm cp}, \mathcal{R}_{\rm vp} )$, where $\{ \mathcal{R}_{\rm sv}^*, \mathcal{R}_{\rm cv}^*\}$ represents the optimal solution pair associated with $f^*$.

To obtain a representative example model of steady-state distribution topology control, consider a distribution network containing distributed generators whose active power outputs can be adjusted together with the network topology. For simplicity, assume that all distributed generators are photovoltaic-based with constant power factors, and the distribution network is supplied externally by a single substation. 
Moreover, the network is assumed to operate in a radial topology, as is typical for distribution systems. 
Then, the LinDistFlow-based \citep{4-811, 4-308} steady-state distribution topology control for loss reduction can be formulated as a mixed-integer quadratic program as follows: 
\begin{subequations}\label{eq-4-1-dnropt}
    \begin{align}
        \min_{ \bm{z} \in \mathbb{B}^{ n_{\rm e} }, \bm{p}_{\rm g} \in \mathbb{R}^{ n_{\rm g}  } } ~& \bm{r}_{\rm b}\T (\bm{p}_{\rm b}^{ 2} + \bm{q}_{\rm b}^{ 2} ) \label{eq-4-1-dnropt:1} \\
        \text{s.t.} 
        ~& \bm{E}_{\mathcal{G}}\T \bm{v}_{\rm s} - 2( \bm{p}_{\rm b} \circ \bm{r}_{\rm b} + \bm{q}_{\rm b} \circ \bm{x}_{\rm b}) \leq M(\bm{1} - \bm{z}) \label{eq-4-1-dnropt:2}\\
        ~& \bm{E}_{\mathcal{G}}\T \bm{v}_{\rm s} - 2( \bm{p}_{\rm b} \circ \bm{r}_{\rm b} + \bm{q}_{\rm b} \circ \bm{x}_{\rm b}) \geq M(\bm{z} - \bm{1}) \label{eq-4-1-dnropt:3} \\
        ~& \bm{E}_{\rm g} \bm{p}_{\rm g} + \bm{E}_{\rm sub} p_{\rm sub} - \bm{E}_{\rm d} \bm{p}_{\rm d} = \bm{E}_{\mathcal{G}} \bm{p}_{\rm b}  \label{eq-4-1-dnropt:4}\\
        ~& \bm{E}_{\rm g} [ \bm{p}_{\rm g} \circ \tan( \arccos \bm{\phi}_{\rm g} ) ] + \bm{E}_{\rm sub} q_{\rm sub} \!-\! \bm{E}_{\rm d} \bm{q}_{\rm d} \!=\! \bm{E}_{\mathcal{G}} \bm{q}_{\rm b} \label{eq-4-1-dnropt:5} \\
        ~& ( \bm{p}_{\rm b}\PW + \bm{q}_{\rm b}\PW )\HR  \leq \bm{z} \circ \bm{s}_{\rm b}\U \label{eq-4-1-dnropt:6}\\
        ~& (\bm{v}\B)\PW \leq \bm{v}_{\rm s} \leq (\bm{v}\U)\PW \label{eq-4-1-dnropt:7-pre}\\
        ~& \bm{E}_{\rm sub}\T \bm{v}_{\rm s} = v_{\rm sub}^2 \label{eq-4-1-dnropt:7}\\
        ~& \bm{p}_{\rm g}\B  \leq  \bm{p}_{\rm g} \leq \bm{p}_{\rm g}\U \label{eq-4-1-dnropt:8}\\
        ~& [ \bm{p}_{\rm g}\PW + [\bm{p}_{\rm g} \circ \tan( \arccos \bm{\phi}_{\rm g} ) ]\PW ]\HR \leq \bm{s}_{\rm g}\U  \label{eq-4-1-dnropt:9}\\ 
        ~& {p}_{\rm sub}\B \leq p_{\rm sub} \leq {p}_{\rm sub}\U, \label{eq-4-1-dnropt:10-pre}\\
        ~& {q}_{\rm sub}\B \leq q_{\rm sub} \leq {q}_{\rm sub}\U \label{eq-4-1-dnropt:10}\\
        ~& \bm{\zeta} + \tilde{\bm{\zeta}} = \bm{z}  \label{eq-4-1-dnropt:11-pree} \\
        ~& \bm{E}_{\mathcal{G},\rm f} \bm{\zeta} - \bm{E}_{\mathcal{G}, \rm t} \tilde{\bm{\zeta}} = \bm{1}^0_{n_{\rm n}} \label{eq-4-1-dnropt:11-pre} \\
        ~&  \bm{E}_{\rm us} \bm{z} = \bm{1} \label{eq-4-1-dnropt:11} 
    \end{align}
\end{subequations}
where the vector $\bm{r}_{\rm b} + j \bm{x}_{\rm b} \in \mathbb{C}^{n_{\rm e}}$ denotes the impedance of branches, $\bm{\phi}_{\rm g} \in \mathbb{R}^{n_{\rm g}}$ denotes the fixed power factors of generators, $v_{\rm sub}$ denotes the setpoint of voltage magnitude of the substation bus; 
$\bm{E}_{\rm sub} \in \mathbb{R}^{n_{\rm n}}$ is incidence matrix between all buses and the substation bus, $\bm{1}^0_{n_{\rm n}} \in \mathbb{R}^{n_{\rm n}}$ is the vector with one zero entry and the rest being one; 
$\bm{v}_{\rm s} \in \mathbb{R}^{n_{\rm n}}$ denotes the square of bus voltage magnitudes, 
$p_{\rm sub} + j q_{\rm sub} \!\in\! \mathbb{C}$ denotes the power injection from the substation;  
$\bm{\zeta} \in \mathbb{B}^{n_{\rm e}}$ and $\tilde{\bm{\zeta}} \in \mathbb{R}^{n_{\rm e}}$ are auxiliary variables introduced for ensuring network radiality. 
The objective function (\ref{eq-4-1-dnropt:1}) represents the approximated network loss; 
Eqs. (\ref{eq-4-1-dnropt:2}) and (\ref{eq-4-1-dnropt:3}) are the power flow constraints that enforce the voltage drop relationship along each line using a Big-M formulation, and the equality form of the LinDistFlow voltage equation \citep{4-811, 4-308} is activated when the line is in service, while it is relaxed when the line is disconnected; 
Eq. (\ref{eq-4-1-dnropt:4}) represents the active power balance at each bus, ensuring that total active power injections and withdrawals are balanced with the power flows in the network; 
Eq. (\ref{eq-4-1-dnropt:5}) represents the reactive power balance at each bus, where reactive power injections are linked to active power through a fixed power factor assumption; 
Eqs. (\ref{eq-4-1-dnropt:6})-(\ref{eq-4-1-dnropt:10}) are the operational feasibility constraints; 
Eqs. (\ref{eq-4-1-dnropt:11-pree}) and (\ref{eq-4-1-dnropt:11-pre}) ensure radiality of the network topology, which is the matrix form of the parent-child relation based network radiality formulation given by (\ref{eq-b1-3-9}); 
and Eq. (\ref{eq-4-1-dnropt:11}) ensures the unswitchable branches remain closed.

\subsection{Classical Heuristic Solution Method}

The special structure of the steady-state distribution topology control problem (\ref{eq-4-1-dnropt}) allows the solution to be directly addressed using off-the-shelf mixed-integer optimization solvers such as Gurobi and CPLEX. However, solving large-scale instances of this problem using these general-purpose solvers often becomes computationally inefficient, even when seeking a feasible solution. Moreover, most mixed-integer optimization solvers are limited to convex formulations, such as problem (\ref{eq-4-1-dnropt}) using the LinDisFlow model, and therefore are not suitable when the AC power flow model is incorporated into the optimization problems. 

To address these limitations, various heuristic methods have been developed by exploiting structural characteristics of the problem. Although these heuristics compromise optimality, they offer enhanced scalability and applicability. One of the classical heuristics for solving steady-state distribution topology control problems is the successive branch reduction method \citep{4-790, 4-847, 4-835, 4-819, 4-1677, 4-b1-1}. The common heuristic pattern of existing variations of this method can be given as Algorithm \ref{algo-4-1-0} based on the canonical form (\ref{eq-10-1-1}). 

\begin{algorithm}
    \KwInput{$\mathcal{R}_{\rm cp}$, $\mathcal{R}_{\rm vp}$}
    \KwOutput{$\mathcal{S}^*$, $\mathcal{R}_{\rm cv}^*$} 

    Define problem (\ref{eq-10-1-1}) with given $\mathcal{R}_{\rm cp}$ and $\mathcal{R}_{\rm vp}$

    Define evaluation function $Q(\mathcal{S}, s, \mathcal{R}_{\rm cp}, \mathcal{R}_{\rm vp} )$ and termination condition $\tau(\mathcal{S})$

    Initialize the redundant solution as $\mathcal{S} \gets \mathcal{E} \backslash \mathcal{E}_{\rm us}$

    \Repeat{$\tau(\mathcal{S})$ is satisfied}{
    $s^* \gets \argmin_{ s \in \mathcal{S}  }  Q( \mathcal{S}, s, \mathcal{R}_{\rm cp}, \mathcal{R}_{\rm vp} )$

    $\mathcal{S} \gets \mathcal{S} \backslash s^* $
    }

    $\mathcal{S}^* \gets \mathcal{S} $ 
    
    $\{ \mathcal{R}_{\rm sv}^*, \mathcal{R}_{\rm cv}^*\} \gets \Phi ( \mathcal{S}^*, \mathcal{R}_{\rm cp}, \mathcal{R}_{\rm vp} ) $ \label{algo-4-1-0:line-8} 
    \caption{The pattern of successive branch reduction methods}\label{algo-4-1-0}
\end{algorithm}

In Algorithm \ref{algo-4-1-0}, the solution of $\mathcal{S}$ is constructed by sequentially removing branches from the redundant topology where all branches are closed. 
At each iteration, the selection of the branch to be removed is determined by minimizing an evaluation function $Q$, which quantifies the impact of branch removal within the context of the current redundant topology. 
The iteration ends when certain termination condition $\tau$ is satisfied, which is typically defined as the graph $\mathcal{G}_{\rm v}(\mathcal{V}, \mathcal{E}_{\rm us} \cup \mathcal{S}, \cdot )$ becoming radial. 
The primary differences among various successive branch reduction methods lie in the evaluation functions $Q$. Taking the successive branch reduction method developed by \cite{4-819} as an example, the evaluation function $Q$ can be expressed as follows: 
\begin{equation}
    \!\!\! Q \!=\! \left\{\!\!
            \begin{aligned}
                & M &&  d( \mathcal{G}_{\rm v}(\mathcal{V}, \mathcal{E}_{\rm us} \!\cup\! \mathcal{S}, \cdot ) ) \!\leq\! d( \mathcal{G}_{\rm v}(\mathcal{V}, \mathcal{E}_{\rm us} \!\cup\! \mathcal{S} \backslash s, \cdot ) ) \\ 
                & \Psi(\mathcal{S}\backslash s, \mathcal{P}_{\rm cp}, \mathcal{P}_{\rm vp}) &&  d( * ) > d( \cdot ) \land s \in \mathcal{N}(v^*) \cap \mathcal{E} \backslash \mathcal{E}_{\rm us} \\ 
                & M &&  \text{otherwise}
            \end{aligned}
        \right. 
\end{equation}
where $d(\mathcal{G}_{\rm v})$ is the sum of loops and paths among {the} substation buses in graph $\mathcal{G}_{\rm v}$; $v^*$ is the start bus of branch $e^* = \argmin_{e \in \mathcal{S}} |p ( e , \mathcal{P}_{\rm cp}, \mathcal{P}_{\rm vp} )|$ where $p ( \cdot )$ denotes the real power through branch $e$ in $\mathcal{G}_{\rm v}(\mathcal{V}, \mathcal{E}_{\rm us} \cup \mathcal{S}, \mathcal{P}^* )$, with $\mathcal{P}^* = \{\mathcal{P}_{\rm cp}, \mathcal{P}_{\rm vp}, \mathcal{P}_{\rm sv}^*, \mathcal{P}_{\rm cv}^* \}$ and $\{ \mathcal{P}_{\rm sv}^*, \mathcal{P}_{\rm cv}^*\} = \Phi ( \mathcal{S} \backslash s, \mathcal{P}_{\rm cp}, \mathcal{P}_{\rm vp} ) $; the start bus refers to the one from which power flow $p ( e , \mathcal{P}_{\rm cp}, \mathcal{P}_{\rm vp} )$ is positive; and $\mathcal{N}(v^*)$ is the set of branches with end node $v^*$.

\section{State-of-the-Art Review} 

Compared with steady-state transmission topology control, steady-state distribution topology control has received more attention from both academic and industry. There are several comprehensive reviews such as those by \cite{4-808}, \cite{4-809}, \cite{4-810}, and \cite{4-b1-40}. Similar to steady-state transmission topology control problems, steady-state distribution topology control problems are also essentially characterized as highly non-convex Mixed-Integer Nonlinear Programming (\ac{minlp}) problems. 
Evolving from the basic steady-state distribution topology control problems involving only power flow-related terms to more recent formulations that incorporate uncertainties and system dynamics, it has become increasingly challenging to solve the corresponding optimization models efficiently. 
Thus, this section focuses on the state-of-the-art solution methodologies for steady-state distribution topology control. 

\subsubsection{Approximation/relaxation-based methods}

Similar to the approximation/relaxation-based methods for steady-state transmission topology control, this type of methods for steady-state distribution topology control also transforms the non-convex \ac{minlp} model into a convex formulation using approximate or relaxed distribution power flow models, enabling direct solution via off-the-shelf optimization solvers. However,  unlike transmission networks, resistance of distribution networks cannot be ignored, and accordingly, DistFlow equations proposed by \cite{4-2201} are commonly employed to transform the original \ac{minlp} model into a convex mixed-integer quadratic programming or mixed-integer quadratically constrained programming formulation \citep{4-818, 4-1366}. \cite{4-b1-56} proposed a cold-start linear branch flow model termed modified DistFlow, with reduced errors introduced by conventional branch flow linearization approaches and branch power flows and nodal voltage magnitudes expressed in succinct matrix forms. The \ac{dnr} problem with the modified DistFlow model becomes a much more computationally efficient mixed-integer quadratic program. 

Regarding the relaxation-based methods, \cite{4-817} proposed the first mixed-integer conic programming formulation of \ac{dnr} problems. Compared to the previous approximation-based methods, this formulation employs a convex representation of the power flow model ensuing global optimality of the obtained optimal solution. 
Nearly at the same time, \cite{4-818} derived mixed-integer quadratic, quadratically constrained, and second-order cone programming models of \ac{dnr}, also stated as the first formulations of the AC \ac{dnr} problem that have convex and continuous relaxations. 
The approximation/relaxation-based methods also enable the tractable solution of more complex \ac{dnr} problems, such as two-stage robust \ac{dnr} \citep{4-800} and multi-timescale coordinated \ac{dnr} \citep{4-b1-58}, by ensuring favorable computational properties. 
\cite{4-2010} established that the fundamental bottleneck of existing convex relaxation-based methods to solve \ac{dnr} problems lies in the static cone structure of standard second-order cone relaxation. They further proposed a hybrid perspective-disjunctive \ac{dnr} formulation that achieves the closure of the convex hull for branch flow models.

In the approximation/relaxation-based methods, Big-M and McCormick linearization techniques are commonly employed, which however may also cause issues related to tightness as in the approximation/ relaxation-based methods for steady-state transmission topology control. 
\cite{4-b1-57} proposed a disjunctive convex hull approach which treats continuous parent-child relationship variables in spanning tree constraints as disjunctive variables to represent disjunctive convex hull of DistFlow equations. This disjunctive convex hull yields a tighter relaxation than the Big-M and McCormick linearization techniques.

\subsubsection{Heuristic methods}

Heuristic algorithms designed by exploiting the structural properties of steady-state distribution topology control problems mainly include three types: iterative branch exchange methods \citep{4-838, 4-b1-53}, constructive methods \citep{4-848, 4-b1-54}, and successive branch reduction methods \citep{4-840, 4-790, 4-847, 4-835, 4-819, 4-b1-1}. 
The iterative branch exchange method was first proposed by \cite{4-838}, in which the optimal power flow pattern of a single loop formed by closing a normally open switch is identified, and a radial network topology is restored by opening a closed switch. This process is repeated iteratively until a distribution network topology minimizing the objective function is obtained. 
\cite{4-b1-53} further proposed an efficient algorithm to optimize the branch exchange step based on the convex relaxation of optimal power flow. Notably, a bound on the gap between the optimal objective function value and that of the proposed method was provided. 
The constructive method first proposed by \cite{4-848} starts with all switchable switches open, and at each step, closes the switch that results in the least increase in the objective function. A backtracking option is designed to mitigate the method's greedy search. 
\cite{4-b1-54} developed a uniform voltage distribution based constructive algorithm, which initiates by expanding a subnetwork through tracking the maximum bus voltage and concurrently performing a series of branch exchange operations. 

The idea of successive branch reduction methods was first proposed by \cite{4-840}, where all network switches are initially closed converting the radial distribution network into a meshed network, and network switches are then open at a time until the network topology becomes radial. \cite{4-790} further enhanced the original successive branch reduction method by addressing its major limitations, including the oversimplification of the power flow model and the omission of network constraints. 
\cite{4-847} made a refinement on the successive branch reduction method using the branch exchange technique involving neighboring open switches. 
\cite{4-835} employed an Optimal Power Flow (\ac{opf}) program to determine the sensitivity of the switches that have positions close to zero and determine in the successive branch reduction method. 
\cite{4-819} optimized the successive branch reduction method by using the convex relaxation-based \ac{opf} model and realizing constant complexity. 
\cite{4-b1-1} extended the successive branch reduction method to solve stochastic \ac{dnr} problems. 
Fundamentally, all three aforementioned types of heuristic methods follow the principle of greedy algorithms, making locally optimal decisions at each step with the aim of achieving a globally optimal solution \citep{4-842}. 

Different from the above mentioned common heuristics, \cite{4-2001} recently proposed a feasibility-oriented algorithm inspired by random walks, which incorporates several novel techniques, including strategic graph partitioning, dual-graph condensation, and capacity-aware edge swapping. Notably, this algorithm provides theoretical guarantees of feasibility and supports parallel processing while maintaining desirable optimality properties. 

In addition, a variety of meta-heuristics, i.e., high-level problem-independent algorithmic optimization frameworks, have been employed to solve the steady-state distribution topology control problems, including the particle swarm algorithm \citep{4-b1-52}, the equilibrium optimization algorithm \citep{4-1359}, the genetic algorithm \citep{4-1358}, etc. 
The generalized Benders decomposition, originally for solving convex \ac{minlp} problems, has been utilized by \cite{4-828} to solve the non-convex \ac{minlp} models of \ac{dnr} with the complex voltage volatility constraint. 
However, the generalized Benders decomposition is unable to guarantee the convergence to the optimal solution of non-convex \ac{minlp} model due to the loss of strong duality \citep{4-1360}. 

\subsubsection{Learning-based methods}

Similar to steady-state transmission topology control, machine learning techniques, particularly reinforcement learning, have recently been applied to steady-state distribution topology control problems, primarily to address complex operational constraints and meet the requirement for real-time solutions. 
\cite{4-b1-43} developed a data-driven batch-constrained reinforcement learning algorithm for the dynamic \ac{dnr} problem. This algorithm learns the distribution topology control policy from a finite historical operational dataset without interacting with the distribution network. 
\cite{4-b1-50} proposed a multi-objective Bayesian learning-based evolutionary algorithm to address the steady-state distribution topology control problem, aiming to simultaneously optimize the wind power absorption rate and voltage stability level. 
\cite{4-b1-51} proposed a deep reinforcement learning-assisted multi-objective bacterial foraging optimization algorithm for solving the many-objective \ac{dnr} problem. 
\cite{4-b1-47} utilized a long short-term memory network to learn the mapping mechanism between load distribution and optimal reconfiguration strategies, coordinated with a model-based module hierarchically to reduce the problem size. 
\cite{4-b1-42} developed a cloud-edge collaboration framework based on multi-agent deep reinforcement learning, where the reinforcement learning model can be trained centrally in the cloud center while executed in edge servers in a decentralized manner to reduce the training cost and execution latency.  
\cite{4-b1-44} proposed a bi-graph neural network-based deep reinforcement learning framework to enable efficient real-time coordination of \ac{dnr} and volt-var control. 

For the steady-state distribution topology control considering short-term voltage stability enhancement, \cite{4-b1-45} utilized deep convolution neural networks to learn the relationship between network topology and stability performance from historical data, which after off-line training, was employed to form the computationally efficient stability constraint in the optimization model. 
Similarly, \cite{4-b1-46} utilized a trained convolutional neural network model to predict the voltage stability index value in the stochastic \ac{dnr} problem. 
\cite{4-b1-49} developed a physics-informed end-to-end machine learning framework for steady-state distribution topology control, enabling simultaneous optimization of grid topology and generator dispatch with certified satisfiability of safety-critical constraints. 
\cite{4-b1-41} proposed a deep learning-based approach termed LLM4DistReconfig, utilizing a fine-tuned large language model to solve the steady-state distribution topology control problem.

\section{Recent Advance: Hybrid Learning-Heuristic Solution Paradigm}

\begin{figure}[h]
	\centering
	\includegraphics[width=0.95\linewidth]{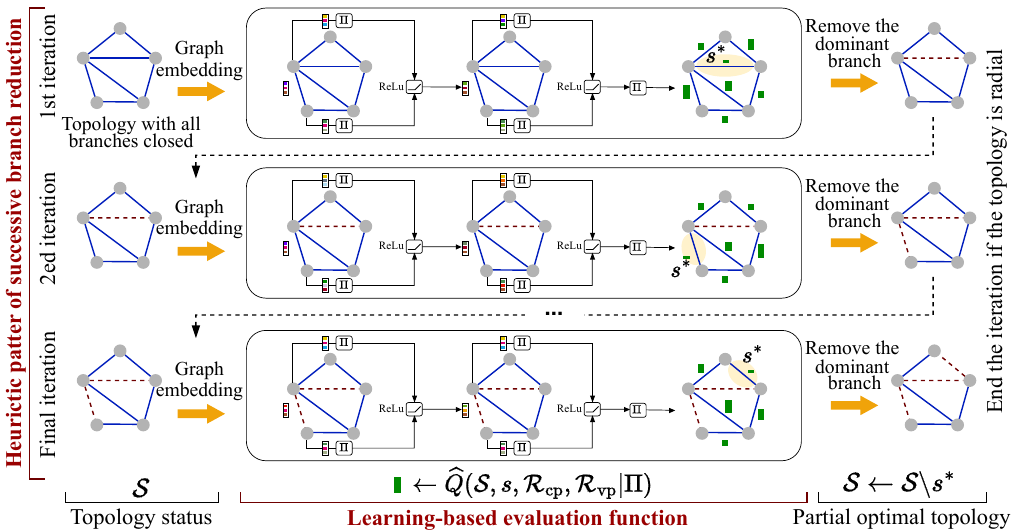}
	\caption{{Schematic diagram of the hybrid learning-heuristic solution paradigm.}}
    \label{fig-b1-5-4} 
\end{figure}

In addressing steady-state distribution topology control problems, the greatest challenge from a solution standpoint is to simultaneously achieve optimality, scalability, and applicability. Most heuristic solution methods generally meet the requirements of scalability and applicability; however, they can only be expected to get 'close to optimality' at best. 
For instance, in existing successive branch reduction methods, the hand-crafted evaluation function $Q$ typically considers only the immediate impact of branch removal within the current redundant topology. This myopic perspective lacks foresight regarding the influence of present switching decisions on future iterations, thereby leading to suboptimal solutions. 
Recently, \cite{4-1371} proposed a hybrid learning-heuristic solution paradigm as illustrated in Fig. \ref{fig-b1-5-4}. This paradigm combines the heuristic pattern of classical successive branch reduction methods with the reinforcement learning technique. The evaluation function $Q(\mathcal{S}, s, \mathcal{R}_{\rm cp}, \mathcal{R}_{\rm vp})$ in Algorithm \ref{algo-4-1-0} is replaced with a parameterized function $\widehat{Q}(\mathcal{S}, s, \mathcal{R}_{\rm cp}, \mathcal{R}_{\rm vp}| \Pi )$, with parameters $\Pi$ yielded by reinforcement learning. This solution paradigm preserves the applicability of successive branch reduction methods, while enhancing computational efficiency and particularly optimality.

\subsection{Parameterization Architecture of the Evaluation Function}\label{sec-5-1-2}

The architecture of $\widehat{Q}$ is first introduced. Intuitively, $\widehat{Q}$ should summarize the power network states with topology given by $\mathcal{S}$, and figure out the value of each branch in $\mathcal{S}$ to designate whether it is to be removed in the context of the current redundant topology. However, both power network states and the consequence of removing a branch are hard to describe in closed form and involve complex implicit computation. Accordingly, deep neural networks are naturally a promising option for $\widehat{Q}$ due to their strong representation power. Among them, the Gated Graph Neural Network (\ac{ggnn}), as a kind of graph neural networks operating on the graph domain, is suitable for processing graph-structured data \citep{4-961}, and more importantly, it can contain many more layers and allow for complex information about global graph structure to be propagated to all nodes \citep{4-985}. Therefore, for representing the complex power network states over combinatorial graph structures and involving global interactions, the architecture of $\widehat{Q}$ will be designed mainly employing the \ac{ggnn}.

\subsubsection{Pretreatments}
 
\textit{Features}: 
First, the time-varying inputs of $\widehat{Q}$ and topology status are associated with the features of each bus and each branch. Specifically, introduce $\bm{x}_v \in \mathbb{R}^{n_v}$, $\forall v \in \mathcal{V}$, to collect parameters in $\mathcal{R}_{\rm vp}$ which are associated with {bus $v$}; and $\bm{x}_e \in \mathbb{R}^{n_e}, \forall e \in \mathcal{E}$, containing parameters in $\mathcal{R}_{\rm vp}$ which are associated with branch $e$, and a binary feature (called \textit{topology feature}) only for $e \in \mathcal{E} \backslash \mathcal{E}_{\rm us}$, which equals to 1 if $e \in \mathcal{S}$ and 0 otherwise. For branches and nodes with neither parameters in $\mathcal{R}_{\rm vp}$ nor topology features, such as a bus without any load and generator, we pad a fictitious feature for each of them and set the feature to a constant to preserve the network structure of power networks.

\begin{figure}[h]
	\centering
	\includegraphics[width=0.85\linewidth]{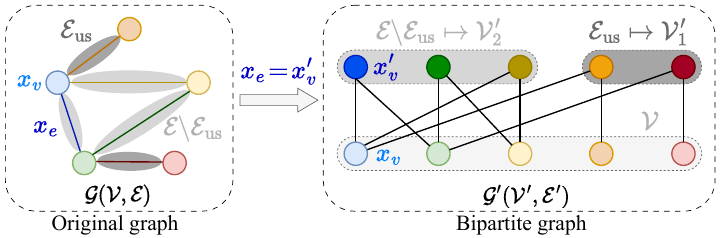}
	\caption{{Illustration of graph conversion.}}
    \label{fig-4-1-2} 
\end{figure}

\textit{Graph conversion}: 
Consider graph $\mathcal{G}(\mathcal{V}, \mathcal{E}) $ with node features $\{\bm{x}_v\}$ and edge features$\{\bm{x}_e\}$. Except for the topology feature, $\bm{x}_e$ may also contain branch impedances for transformer branches or branches with controllable impedance. However, the \ac{ggnn} is incapable of effectively handling graphs with informative edge features \citep{4-956}. 
To address this issue, we convert the original graph into a bipartite graph by treating edges as additional nodes, so that edge attributes are represented as node features. This reformulation allows the influence of edge features to be captured indirectly through their corresponding nodes. 
Accordingly, $\mathcal{G}(\mathcal{V}, \mathcal{E}) $ is converted into a bipartite graph $\mathcal{G}'(\mathcal{V}', \mathcal{E}') $, as illustrated in Fig. \ref{fig-4-1-2}. 
{In the original graph in Fig. \ref{fig-4-1-2}, each circle and line represent a node in $\mathcal{V}$ and an edge in $\mathcal{E}$, respectively.} 
In the bipartite graph with $\mathcal{V}' = \mathcal{V} \cup \mathcal{V}'_1 \cup \mathcal{V}'_2$, 
{the original nodes $\mathcal{V}$ are preserved;} 
{the original edges $\mathcal{E}$ is converted into nodes $\mathcal{V}'_1 \cup \mathcal{V}'_2$ called edge nodes;} 
and each original edge is split into two new edges between the edge node and terminal nodes of the original edge, {forming the edges $\mathcal{E}'$ represented by the black edges in the bipartite graph in Fig. \ref{fig-4-1-2}.} 
For simplicity, introduce the function $\psi$ that maps an edge in graph $\mathcal{G}$ to the associated node in graph $\mathcal{G}'$. In this way, removing edge $s$ from $\mathcal{G}$ is equivalent to removing nodes $\psi(s)$ in $\mathcal{G}'$ such that only nodes in $\mathcal{G}'$ need to be handled. Denote by $\{ \bm{x}_{v}|  v \in \mathcal{V}' \}$ node features of graph $\mathcal{G}'$, where $\forall v \in \mathcal{V}$, $\bm{x}_v$ is the corresponding node feature in $\mathcal{G}$, and $\forall v \in \mathcal{V}'_1 \cup \mathcal{V}'_2$, $\bm{x}_v = \bm{x}_e$ with $e$ being the corresponding edge of node $v$ in $\mathcal{G}$.

\textit{Feature projection}: 
Graph $\mathcal{G}'$ is heterogeneous as it not only inherits node and edge heterogeneity of graph $\mathcal{G}$ but also introduces new heterogeneity by graph conversion. Thus nodes of $\mathcal{G}'$ have different traits and their features may fall in different feature spaces. For example, for load buses, $\bm{x}_v$ generally only contains load powers while for generator buses whose real power is also optimized with network topology, $\bm{x}_v$ can contain load power and voltage magnitude. It is assumed that there are $n_{\rm t}$ types of nodes in $\mathcal{V}'$, and let $\phi_i(\mathcal{V}')$ denote the node set of the $i$-th type. Then for each type of nodes, introduce a type-specific transformation matrix $\bm{M}_{\phi_i}$ to project the features of different types of nodes into the same feature space. Specifically, this feature projection is formulated as follows:
\begin{equation}\label{eq-4-1-3}
    \bm{x}'_{v} = \text{rrelu}( \bm{M}_{\phi_i} \cdot \bm{x}_{v} ) ~~~ \forall v \in \phi_i(\mathcal{V}'), i \in \llbracket 1, n_{\rm t} \rrbracket 
\end{equation} 
where $\bm{x}'_v \in \mathbb{R}^{n_{\rm x}} $ is the projected feature of node $v$; $\bm{M}_{\phi_i}$ are with the same row dimension but type-specific column dimension; and $\rm{rrelu}(\cdot)$ is the element-wise randomized leaky rectified linear unit, defined as: 
\begin{equation}
    \text{rrelu}( [\bm{M}_{\phi_i} \cdot \bm{x}_{v} ]_j ) = \left\{
        \begin{aligned}
            & [\bm{M}_{\phi_i} \cdot \bm{x}_{v} ]_j    && \text{if}~ [\bm{M}_{\phi_i} \cdot \bm{x}_{v} ]_j \geq 0 \\
            & \eta [\bm{M}_{\phi_i} \cdot \bm{x}_{v} ]_j  && \text{otherwise} 
        \end{aligned}    
    \right.
\end{equation}
with $[\bm{M}_{\phi_i} \cdot \bm{x}_{v} ]_j$ being the $j$-th entry of vector $\bm{M}_{\phi_i} \cdot \bm{x}_{v}$, and $\eta$ randomly sampled from a uniform distribution.

\subsubsection{Gated graph neural network} 

The \ac{ggnn} encodes the graph $\mathcal{G}'$ into a vector space while maximally preserving its structural information and properties, such that useful information can be extracted more easily when learning $\widehat{Q}(\mathcal{S}, s, \mathcal{R}_{\rm cp}, \mathcal{R}_{\rm vp}; \Pi )$. Specifically, given parameters $\mathcal{R}_{\rm vp}$ and the current redundant solution $\mathcal{S}$, the \ac{ggnn} computes an $n_{\rm h}$-dimensional feature embedding for each node $v \in \mathcal{V}'$, which will be used as inputs of the later parameterized function $\widehat{Q}$. The architecture of the \ac{ggnn} is expressed as follows: 
\begin{subequations}\label{eq-4-1-4}
    \begin{align}
        & \bm{h}_v^1 = [ \bm{x}'_{v} , \bm{0} ]^T \label{eq-4-1-4:1} \\ 
        & \bm{a}_v^{t} = [ \bm{h}_1^{t-1} \cdots \bm{h}_{n_{\rm n} + n_{\rm e}}^{t-1} ] \bm{A}_{v} + \bm{b} \label{eq-4-1-4:2} \\  
        & \bm{z}_v^t = \sigma ( \bm{W}_z \bm{a}_v^{t} + \bm{U}_z \bm{h}_v^{t-1}  ) \label{eq-4-1-4:3} \\ 
        & \bm{r}_v^t = \sigma ( \bm{W}_r \bm{a}_v^{t} + \bm{U}_r \bm{h}_v^{t-1}  )  \label{eq-4-1-4:4} \\ 
        & \tilde{\bm{h}}_v^{t} = {\rm{tanh}} \big( \bm{W} \bm{a}_v^{t} + \bm{U} ( \bm{r}_v^t \circ \bm{h}_v^{t-1} )  \big) \label{eq-4-1-4:5} \\   
        & \bm{h}_v^{t} = ( 1 - \bm{z}_v^t ) \circ \bm{h}_v^{t-1} + \bm{z}_v^t \circ \tilde{\bm{h}}_v^{t} \label{eq-4-1-4:6}
    \end{align}
\end{subequations} 
where $v \in \mathcal{V}'$ and $t \in \llbracket 2, N_{\rm T}  \rrbracket$ with $N_{\rm T}$ being the maximal number of time steps; $\bm{h}_v \in \mathbb{R}^{n_{\rm h}} $ and  $\bm{a}_v \in \mathbb{R}^{n_{\rm h}} $ are the state vector and aggregated message vector of node $v$, respectively; $\bm{A}_{v} \in \mathbb{R}^{ n_{\rm n} + n_{\rm e} } $ is the column of the adjacency matrix of graph $\mathcal{G}'$ corresponding to node $v$; $\bm{z}_v$ and $\bm{r}_v$ are the update and reset gate, respectively; 
and $\sigma(\cdot)$ and $\tanh(\cdot)$ denote the logistic sigmoid function and $\tanh$ function, respectively. 
(\ref{eq-4-1-4:1}) is the initialization step, which copies node features into the first components of the state vector and pads the rest with zeros; (\ref{eq-4-1-4:2}) aggregates message from its neighbors; and given the aggregated message $\bm{a}_v^{t}$ and the previous state vectors $\bm{h}_v^{t-1}$, the state of the current time step $\bm{h}_v^{t}$ is computed by the recurrent cell function of the gated recurrent unit, as given by (\ref{eq-4-1-4:3}) to (\ref{eq-4-1-4:6}). The dynamics defined by (\ref{eq-4-1-4:2}) to (\ref{eq-4-1-4:6}) are repeated for $N_{\rm T}$ time steps. Then, the state vectors from the last time step, i.e., $\{\bm{h}_v^{N_{\rm T}} | v \in \mathcal{V}' \}$, are used as feature embeddings.

\subsubsection{Parameterization of the evaluation function}

With the feature embedding of each node, pooled maps of the feature embedding and projected features over the entire graph $\mathcal{G}'$ are used as the surrogate for the collection of $\mathcal{S}$ and $\mathcal{R}_{\rm vp}$, and the pooled maps of feature embedding and projected features over nodes $\psi(s)$ are used as the surrogate for $s$. Then the parameterization $\widehat{Q}$ is designed as follows: 
\begin{equation}\label{eq-4-1-5}  
    \begin{aligned}
        & \widehat{Q}(\mathcal{S}, s, \mathcal{R}_{\rm cp}, \mathcal{R}_{\rm vp}| \Pi ) = \\
        &~~~~~~~~ \bm{W}_{q} {\rm{rrelu}} \left(
            \begin{bmatrix}
                 \sum\limits_{v \in \mathcal{V}'} \sigma  \big( g_1( \bm{h}_v^{N_{\rm T}}, \bm{x}'_v ) \big) \circ \tanh \big( g_2( \bm{h}_v^{N_{\rm T}}, \bm{x}'_v ) \big) \\
                 \sum\limits_{v \in \psi(s) } \sigma  \big( g_3( \bm{h}_v^{N_{\rm T}}, \bm{x}'_v ) \big) \circ \tanh \big( g_4( \bm{h}_v^{N_{\rm T}}, \bm{x}'_v ) \big)
            \end{bmatrix}
            \right) 
    \end{aligned}
\end{equation} 
where $\{ g_i(\cdot) | i \in \llbracket 1, 4 \rrbracket \}$ denotes multilayer perceptrons with $N_{{\rm g}, i}$ hidden layers, among which $\{ g_i(\cdot) | i \!\in\! \{1, 3\} \}$ maps the concatenation of $\bm{h}_v^{N_{\rm T}}$ and $\bm{x}'_v$ to a scaler, and $\{ g_i(\cdot) | i \!\in\! \{2, 4\} \}$ maps the same concatenation to vectors of $n_{{\rm g}, i}$ dimension. The map from the original features $\bm{x}_v$ of graph $\mathcal{G}'$ to $\bm{h}_v^{N_{\rm T}}$ in (\ref{eq-4-1-5}) is based on the feature projection by (\ref{eq-4-1-3}) and the \ac{ggnn} computation by (\ref{eq-4-1-4}), thus $\Pi$ is expressed as follows:
\begin{equation}\label{eq-4-1-6}
    \begin{aligned}
        \Pi = & \{ \bm{M}_{\phi_i} | i \in \llbracket 1, {n_{\rm t}} \rrbracket \}  \cup \{ \bm{b}, \bm{W}_z, \bm{W}_r, \bm{W}, \bm{U}_z, \bm{U}_r, U \} \\ 
              & \cup \{\bm{W}_q\}   \cup \{ \bm{W}_{(i, j)}, \bm{b}_{(i, j)} | j \in \llbracket 1, N_{{\rm g}, i} + 1 \rrbracket, i \in \llbracket 1, 4 \rrbracket \}
    \end{aligned}
\end{equation} 
where the last set contains weight matrices $\bm{W}_{(i, j)}$ and bias vectors $\bm{b}_{(i, j)}$ for all multilayer perceptrons.

\subsection{Learning and Learning-Based Heuristic}\label{sec-5-1-3}

The focus now is on learning the function $\widehat{Q}$. Specifically, for any given power networks with constant parameters $\mathcal{R}_{\rm cp}$ and a set of realizations of $\mathcal{R}_{\rm vp}$ sampled from its uncertainty set$\mathbb{P}$, parameters $\Pi$ are trained following the framework of reinforcement learning.

\subsubsection{Reinforcement learning formulation}

\begin{figure}[h]
	\centering
	\includegraphics[width=0.9\linewidth]{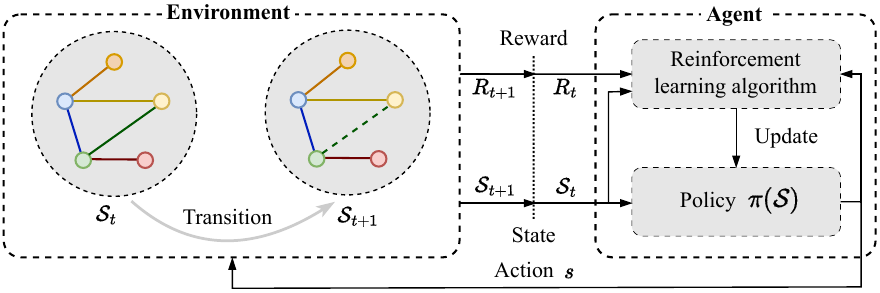}
	\caption{\small{General framework of reinforcement learning.}}
    \label{fig-4-1-3} 
\end{figure} 

As illustrated in Fig. \ref{fig-4-1-3}, the reinforcement learning problem can be formalized in terms of an agent that performs an action in an environment providing a state and a reward, to optimize a given notion of cumulative rewards. 
Transition predicts the next state after taking an action. The agent contains two components: a policy and a learning algorithm. The policy is a mapping that selects actions based on the state of the environment. The learning algorithm continuously updates the policy parameters based on the actions, states, and rewards \citep{4-954}. According to the definition of $\widehat{Q}$, reinforcement learning is a natural framework to train its parameters $\Pi$ with elements in reinforcement learning defined as follows: 
\begin{itemize}
    \item{\textit{States}}: A state is the redundant solution $\mathcal{S}$. It corresponds to the topology features of $\bm{h}_v, \forall v \in \psi(\mathcal{E} \backslash \mathcal{E}_{\rm us} )  $. The terminal state depends on the termination condition $\tau'(\mathcal{S})$. 
    \item{\textit{Actions}}: An action is defined by $s \in \mathcal{S}$, namely removing branch $s$ from $\mathcal{S}$. It is represented by the concatenation of the node embedding and project feature of each $v \in \psi(s)$ as shown in (\ref{eq-4-1-5}).
    \item{\textit{Transition}}: Transition $T(\mathcal{S}_{t}, s_t, \mathcal{S}_{t+1} )$ is deterministic. With a current state $\mathcal{S}_t$ and taking an action $s_t$, the next state $\mathcal{S}_{t+1} = \mathcal{S}_t \backslash s_t $.
    \item{\textit{Rewards}}: $R(\mathcal{S}, s) = q( \mathcal{S} \backslash s, \mathcal{R}_{\rm cp}, \mathcal{R}_{\rm vp} ) - q( \mathcal{S}, \mathcal{R}_{\rm cp}, \mathcal{R}_{\rm vp} )$.
    \item{\textit{Policy}}: The agent here aims to find a policy $\pi( \mathcal{S} )$ to minimize the Q-value function. Denoting the optimal Q-value function as $Q^*( \mathcal{S}, s )$, the optimal policy $\pi^*( \mathcal{S} ) =  \argmin_{ s \in \mathcal{S}  } Q^* ( \mathcal{S}, s )$.  
\end{itemize}
Recalling the selection of branch $s^*$ in Algorithm \ref{algo-4-1-0}, it can be found that $\widehat{Q}(\mathcal{S}, s, \mathcal{R}_{\rm cp}, \mathcal{R}_{\rm vp} | \Pi )$ is exactly a parameterized function approximator for $Q^*( \mathcal{S}, s)$ under different realizations of $\mathcal{R}_{\rm vp}$. Thus, $\widehat{Q}$ can be learned by value-based methods for reinforcement learning.

\subsubsection{Learning algorithm}

\begin{algorithm}[t!]
 
    \KwInput{$\mathcal{R}_{\rm cp}$, $\mathbb{P}, N_{\rm p}, N_{\rm r}, N_{\rm m}, N_{\rm c}, N_{\rm b}, \epsilon_{\rm s}, \epsilon_{\rm e}, \epsilon_{\rm d}, \varrho$}
    \KwOutput{$\Pi^*$} 

    Initialize experience replay memory $\mathcal{M}$ to capacity $N_{\rm r}$

    $e \gets 0$, $k \gets 0$, $\Pi_{k} \gets $ Initialize $\Pi_k$ randomly

    \Repeat{$e>200$ and $\tilde{\tau}(e, \varrho)$ is satisfied}{

        Generate $\mathcal{R}_{\rm vp}$ from uncertainty set $\mathbb{P}$ \label{algo-4-1-1-s1}

        \Repeat{$\tau(\mathcal{S}_t)$ is satisfied}{

            Initialize the state to $\mathcal{S}_1 = \mathcal{E} \backslash \mathcal{E}_{\rm us} $, $t \gets 1$

            \Repeat{$\tau'(\mathcal{S}_t)$ is satisfied}{ 

                $\rho \gets U(0, 1)$, $\epsilon \gets \epsilon_{\rm e} + (\epsilon_{\rm s} - \epsilon_{\rm e}) \exp( - \frac{k}{\epsilon_{\rm d}} )  $

                \eIf{$\rho < \epsilon$}{
                    $s_t  \gets \xi( \{ s \bm{|}  s \subset \mathcal{S}_t \}, 1)$ \label{alg-4-1-1-line-1}
                }
                {
                    $s_t  \gets \argmin~ _{  s \subset \mathcal{S}_t }  \widehat{Q}( \mathcal{S}_t, s, \mathcal{R}_{\rm cp}, \mathcal{R}_{\rm vp} | \Pi_k ) $
                }

                $\mathcal{S}_{t + 1} \gets \mathcal{S}_t \backslash s_t $, $t \gets t + 1$

                \If{$t > N_{\rm m}$}{
                    $\mathcal{M} \gets \mathcal{M} \cup \{( \mathcal{S}_{t - N_{\rm m}}, s_{t- N_{\rm m}}, R_{t - N_{\rm m}, t}^{\Sigma}, \mathcal{S}_t )\}  $ \label{alg-4-1-1-line-2}

                    $\mathcal{B} \gets \xi( \mathcal{M}, N_{\rm b}) $ \label{alg-4-1-1-line-3}

                    \If{$k ~{\rm{mod}}~ N_{\rm c} = 0 $}{
                        $\Pi_k^- \gets \Pi_k$
                    }

                    $\Pi_{k + 1} \gets$ Update $\Pi$ by minimizing $L$ over $\mathcal{B}$

                    $k \gets k + 1$
                }

            }

            $e \gets e + 1 $    
        } 
    }
    $\Pi^* \gets \Pi_k $ 

\caption{ The learning algorithm for {\small{$\widehat{Q}$}} }\label{algo-4-1-1} 
\end{algorithm}

The learning algorithm for $\widehat{Q}$ is designed by combining double deep Q-network and multi-step learning \citep{4-963, 4-965, 4-954}, as provided by Algorithm \ref{algo-4-1-1}. In lines \ref{alg-4-1-1-line-1} and \ref{alg-4-1-1-line-3}, $\xi(\mathcal{X}, n)$ randomly selects $\min\{ |\mathcal{X}|, n \}$ elements from $\mathcal{X}$; in line \ref{alg-4-1-1-line-2}, the cardinality of $\mathcal{M}$ is maintained equal to $N_{\rm r}$ by the last-in-first-out principle if it exceeds $N_r$ after addition. The algorithm is further explained as follows:
\begin{itemize}
    \item A complete sequence of branch removal from the redundant solution being $\mathcal{E} \backslash \mathcal{E}_{\rm us}$ to that satisfying $\tau'(\cdot)$ is called an episode. A step within an episode is a single action (branch removal). The termination condition $\tau'(\cdot)$ is formed by augmenting $\tau(\mathcal{S}_t)$ with the condition that $q(\mathcal{S}_t, \mathcal{R}_{\rm cp}, \mathcal{R}_{\rm vp}) = M$ with the OR operator, and $\tau''(\cdot)$ is formed by augmenting $\tau(\mathcal{S}_t)$ with the condition that $q(\mathcal{S}_t, \mathcal{R}_{\rm cp}, \mathcal{R}_{\rm vp}) \neq M$ with the AND operator. An episode with the terminal state satisfying $\tau''(\cdot)$ is termed a feasible episode, corresponding to a topology that is feasible to problem (\ref{eq-10-1-1}). For each sample of $\mathcal{R}_{\rm vp}$, one feasible episode is experienced.
    \item The double deep Q-network can reduce overestimation of the Q-learning values and improve stability \citep{4-954}. The multi-step learning helps to deal with the issue of delayed rewards and thus empower $\widehat{Q}$ with foresight that is lacking in $Q$. Since the objective value of an episode is only revealed after multiple branch removals, the multi-step learning waits $N_{\rm m}$ steps before updating $\Pi$ to better estimate future rewards and capture the impact of current branch switches on subsequent decisions.
    \item The parameters $\Pi_k$ are updated by the Adam algorithm (with learning rate $l_{\rm r}$) by minimizing the Huber loss: 
    \begin{equation}\label{eq-4-1-8}
        L = L_{\delta}\Big(  \widehat{Q} ( \mathcal{S}_t, s_t, \mathcal{R}_{\rm cp}, \mathcal{R}_{\rm vp} | \Pi_k ) -  Y_k \Big)
    \end{equation} 
    with $Y_k$ representing the $N_{\rm m}$-step target value given by, if $\mathcal{S}_{t + N_{\rm m}}$ is not a terminal state, 
    \begin{equation}\label{eq-4-1-7}
        \begin{aligned}
            Y_k = &  \sum\nolimits_{i = 0}^{N_{\rm m}-1}    R( \mathcal{S}_{t+i}, s_{t+i} ) + \gamma \widehat{Q} \big( \mathcal{S}_{t + N_{\rm m}},   \\  
            &  \argmin\nolimits_{ s'}  \widehat{Q} ( \mathcal{S}_{t + N_{\rm m}}, s', \mathcal{R}_{\rm cp}, \mathcal{R}_{\rm vp} | \Pi_k ), \mathcal{R}_{\rm cp}, \mathcal{R}_{\rm vp} | \Pi_k^- \big)
        \end{aligned}  
    \end{equation}
    or otherwise, $
        Y_k =  \sum\nolimits_{i = 0}^{N_{\rm m}-1}   R( \mathcal{S}_{t+i}, s_{t+i} ) $. 
    Here $\gamma \in [0, 1)$ is the discount factor; $s' \in \mathcal{S}_{t+N_{\rm m}}$; and $ \Pi_k^-$ are parameters of the target network and are updated only every $N_{\rm c}$ iterations by the assignment that $\Pi_k^- \gets \Pi_k$. It is pointed out that experience replay is used for updating $\Pi_k$ with a batch of samples $\mathcal{B}$ from the replay memory $\mathcal{M}$ that gathers experiences in the form of tuples $( \mathcal{S}_{t - N_{\rm m}}, s_{t- N_{\rm m}}, R_{t - N_{\rm m}, t}^{\Sigma}, \mathcal{S}_t )$. Here $R_{t - N_{\rm m}, t}^{\Sigma}$ is the cumulative reward from state $\mathcal{S}_{t - N_{\rm m}}$ to $\mathcal{S}_{t}$, corresponding to the first term of the right-hand side of (\ref{eq-4-1-7}). Specifically, $R_{t - N_{\rm m}, t}^{\Sigma} = \sum_{i = 0}^{N_{\rm m}-1} \!\! R( \mathcal{S}_{t - N_{\rm m} + i}, s_{t - N_{\rm m} + i} ) \!=\! q( \mathcal{S}_{t}, \mathcal{R}_{\rm cp}, \mathcal{R}_{\rm vp} ) - q( \mathcal{S}_{t - N_{\rm m}}, \mathcal{R}_{\rm cp}, \mathcal{R}_{\rm vp} )$.
    \item Stop condition $\tilde{\tau}(e, \varrho)$ is defined as $\min\{ \lambda_i |_{i=e-100}^e \} \geq \varrho$, where {$\varrho \in (0,1]$} is a threshold, and $\lambda_i$ denotes the proportion of episodes whose terminal state is not worse than the network topology computed by Algorithm \ref{algo-4-1-0} regarding objective function $q(\cdot)$, over the latest 100 episodes. More formally,
    \begin{equation}
         \lambda_i = \frac{1}{100} \sum\nolimits_{j = i}^{i - 100} H[ q(\mathcal{S}^{j*}, \mathcal{R}_{\rm cp}, \mathcal{R}_{\rm vp}^j )  - q(\mathcal{S}^j, \mathcal{R}_{\rm cp}, \mathcal{R}_{\rm vp}^j) ] 
    \end{equation}
    where $H[\cdot]$ is the Heaviside step function; $\mathcal{S}^{j*}$ is the solution of $\mathcal{S}$ given by Algorithm \ref{algo-4-1-0} with $\mathcal{R}_{\rm vp}$ being $\mathcal{R}_{\rm vp}^j$, i.e., that generated in episode $j$; and $\mathcal{S}^j$ is the terminal state of episode $j$.  
\end{itemize}

\subsubsection{Learning-based heuristic algorithm}

\begin{algorithm}[t!]
    \KwInput{$\mathcal{R}_{\rm cp}$, $\mathcal{R}_{\rm vp}$, $n_{\rm b}$, $\Pi^*$}
    \KwOutput{$\mathcal{S}^*$, $\mathcal{R}_{\rm cv}^*$} 

    Define problem (\ref{eq-10-1-1}) with given $\mathcal{R}_{\rm cp}$ and $\mathcal{R}_{\rm vp}$.

    $\mathcal{S} \gets \mathcal{E} \backslash \mathcal{E}_{\rm us}$, $\mathcal{S}_{\rm inf, 1} \gets \emptyset $, $K \gets 0$

    \While{True}{
        \Repeat{$\tau(\mathcal{S})$ is satisfied}{
            $\mathcal{S}_{\rm inf, 2} \gets \emptyset$, $\mathcal{S}_{\rm inf} \gets \mathcal{S}_{\rm inf, 1} \cup \mathcal{S}_{\rm inf, 2}$

            \Repeat{$\mathcal{G}_{\rm v}(\mathcal{V}, \mathcal{E}_{\rm us} \cup \mathcal{S} \backslash s^*, \mathcal{R})$ satisfies $\partial\{ \text{(\ref{eq-10-1-1:3})} \}$}{
                {\small{$s^* \gets \argmin_{ s \in \mathcal{S},  s \notin \mathcal{S}_{\rm inf} }  \widehat{Q}( \mathcal{S}, s, \mathcal{R}_{\rm cp}, \mathcal{R}_{\rm vp} | \Pi^* )$} \label{algo-4-1-3:1} }

                $\mathcal{S}_{\rm inf, 2} \gets \mathcal{S}_{\rm inf, 2} \cup \{ s^* \}$, $\mathcal{S}_{\rm inf} \gets \mathcal{S}_{\rm inf, 1} \cup \mathcal{S}_{\rm inf, 2}$
            }
        $\mathcal{S} \gets \mathcal{S} \backslash s^* $
        } 
        \eIf{ $\Phi ( \mathcal{S}, \mathcal{R}_{\rm cp}, \mathcal{R}_{\rm vp} )$ is feasible}{\textbf{Break}}{
            \eIf{$|\mathcal{S}_{\rm inf, 1}| > n_{\rm b}$ }{$K \gets K + 1$, $\mathcal{S}_{\rm inf, 1} \gets \emptyset$}{
                $\mathcal{S} \gets \mathcal{S} \cup \bigcup_{k=0}^{K} s^{*(-k)} $, 
                $\mathcal{S}_{\rm inf, 1} \gets \mathcal{S}_{\rm inf, 1} \cup \{s^{*(-K)}\} $
            }  
        }
    }
    $\mathcal{S}^* \gets \mathcal{S} $, $\{ \mathcal{R}_{\rm sv}^*, \mathcal{R}_{\rm cv}^*\} \gets \Phi ( \mathcal{S}^*, \mathcal{R}_{\rm cp}, \mathcal{R}_{\rm vp} ) $  
    \caption{Learning-based heuristic algorithm}\label{algo-4-1-3} 
\end{algorithm}

By substituting $Q$ in Algorithm \ref{algo-4-1-0} with the learned parameterization $\widehat{Q}( \cdot | \Pi^*)$, we can obtain the learning-based heuristic algorithm for steady-state distribution topology control. However, experiments find a small probability that this learning-based heuristic produces an infeasible topology, i.e., $\Phi ( \mathcal{S}^*, \mathcal{R}_{\rm cp}, \mathcal{R}_{\rm vp} ) $ in step \ref{algo-4-1-0:line-8} of Algorithm \ref{algo-4-1-0} is infeasible or (\ref{eq-10-1-1:3}) is violated. To prevent such infeasibility, the final learning-based heuristic algorithm contains a feasibility recovery procedure as given by Algorithm \ref{algo-4-1-3}. The basic idea of this feasibility recovery is to repeat step \ref{algo-4-1-3:1} with infeasible or suspiciously infeasible branch $s$, that are collected by $\mathcal{S}_{\rm inf}$, being excluded. Notably, when $\Phi$ is infeasible, branches $s^{*(0)} $ being removed from $\mathcal{S}$ most recently are added to $\mathcal{S}_{\rm inf}$, and then the algorithm returns to the point just before removing $s^{*(0)} $. This process is repeated at most $n_{\rm b}$ times, and if $\Phi$ is still infeasible, the branch $s^{*(-1)} $ that is removed from $\mathcal{S}$ just before $s^{*(0)} $ is added to $\mathcal{S}_{\rm inf}$ and the algorithm returns to the point just before removing $s^{*(-1)}$. This backtracking loop is broken until a feasible topology is found.

\subsection{Numerical Example}\label{sec-5-1-4}

The hybrid learning-heuristic solution  method is tested on the steady-state distribution topology control problem given by (\ref{eq-4-1-dnropt}). The modified IEEE 33-bus radial distribution network shown in Fig. \ref{fig-4-1-5} is employed. In this system, except for line 1-2, all lines are switchable; four buses, including buses 14, 22, 25 and 33, connect with distributed generators whose power outputs are optimized coordinating with network reconfiguration.  
Time-varying parameters $\mathcal{P}_{\rm vp}$ include:
$\bm{p}_{\rm d}$, $\bm{q}_{\rm d}$, $\bm{p}_{\rm g}\U$ and $v_{\rm sub}$. 
The uncertainty set $\mathbb{P}$ is defined as: 
$\bm{p}_{\rm d} \!\in\! [0.5 \tilde{\bm{p}}_{\rm d}, 2 \tilde{\bm{p}}_{\rm d} ] $, 
$\bm{q}_{\rm d} \!\in\! [0.5 \tilde{\bm{q}}_{\rm d}, 2 \tilde{\bm{q}}_{\rm d} ] $, 
$\bm{p}_{\rm g}\U \!\in\! [\bm{0}, \tilde{\bm{p}}_{\rm g}\U ] $, 
$v_{\rm sub} \!\in\! [0.975, 1.025]$, 
and the optimization model is feasible in terms of solving by both successive branch reduction methods proposed by \cite{4-847} and \cite{4-819}. Here 
$\tilde{\bm{p}}_{\rm d}$, $\tilde{\bm{q}}_{\rm d}$, $\tilde{\bm{p}}_{\rm g}\U$ denote the values of ${\bm{p}}_{\rm d}$, ${\bm{q}}_{\rm d}$, ${\bm{p}}_{\rm g}\U$ given in the dataset of the IEEE 33-bus system, respectively; and these two methods will be also used for comparison in the simulation. 

\begin{figure}[h]
	\centering
	\includegraphics[width=0.98\linewidth]{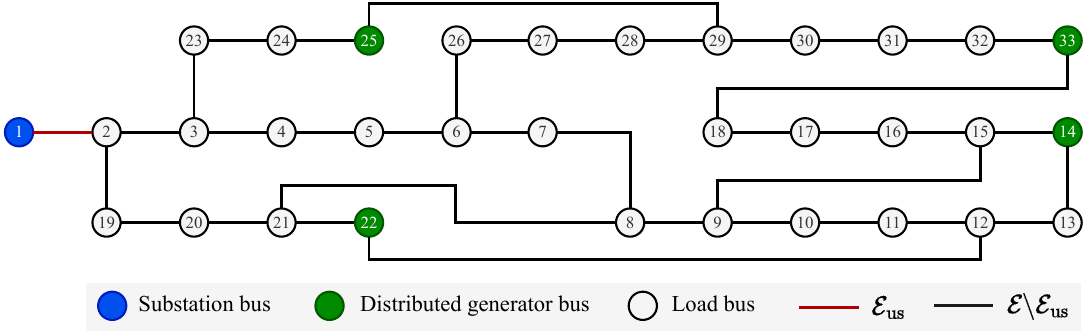}
	\caption[Diagram of the modified IEEE 33-bus radial distribution network.]{Diagram of the modified IEEE 33-bus radial distribution network. }
	\label{fig-4-1-5}
\end{figure}

\subsubsection{Hyperparameters and learning curves}\label{appendix-4-1-2}

The hyperparameter values for $\widehat{Q}$ and Algorithm \ref{algo-4-1-1} adopted in the simulation are given in Table \ref{table-0-1-1}. In the table, $\!n_{(i,j)}\!$ denotes the number of nodes in the $\!j$-th hidden layer of $g_{i}(\cdot)$. Also, $n_{\rm b} \!=\! 3 $ in Algorithm \ref{algo-4-1-3}. 

\begin{table}[h]
    \centering
    \caption{Hyperparameters of $\widehat{Q}$ and its learning algorithm.}
    \begin{tabular}{|cc|}
        \hline\hline
        Parameter & Value  \\ \hline\hline
        ($n_{\rm x}$,  $n_{\rm h}$, $N_{\rm T}$) & (3, 12, 8) \\
        $(N_{{\rm g}, i} | i \!\in\! \llbracket 1, 4 \rrbracket )$  & (5,5,3,3) \\
        $(n_{{\rm g},i}  | i \!\in\! \{2, 4\} )$ & (32,32) \\
        $(n_{(1,i)}| i \!\in\! \llbracket 1, N_{{\rm g}, 1} \rrbracket )$ & (15,12,15,6,3) \\
        $(n_{(2,i)}| i \!\in\! \llbracket 1, N_{{\rm g}, 2} \rrbracket )$ & (15,6,15,24,27) \\
        $(n_{(3,i)}| i \!\in\! \llbracket 1, N_{{\rm g}, 3} \rrbracket )$ & (12,15,3)  \\
        $(n_{(4,i)}| i \!\in\! \llbracket 1, N_{{\rm g}, 4} \rrbracket )$ & (15,15,27)  \\
        $(N_{\rm r}, N_{\rm m}, N_{\rm c}, N_{\rm b})$ & (128,3,50,32)   \\
        $(\epsilon_{\rm s}, \epsilon_{\rm e}, \epsilon_{\rm d})$ & (0.9,0.001,400) \\
        ($\gamma$, $l_{\rm r}$, $\varrho$) & (0.9, $3\!\times\! 10^{-2}$, 0.9) \\
        \hline\hline
    \end{tabular}
    \label{table-0-1-1}
\end{table}

\begin{figure}[h]
    \centering
        \includegraphics[width=0.53\linewidth]{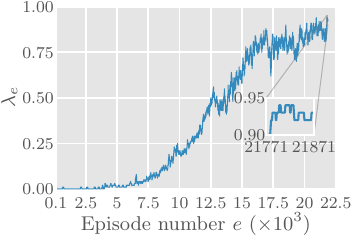} 
    \caption{Learning curves of the learning algorithm for $\widehat{Q}$.}\label{fig-4-1-curve}
\end{figure}

Fig. \ref{fig-4-1-curve} shows the learning curves, i.e., changes of $\lambda_i$, during the process of training in Algorithm \ref{algo-4-1-1}. It is observed that after about 7500 episodes, $\lambda_e$ starts to noticeably increase, which indicates that $\widehat{Q}$ has partially learned the evaluation of the impact of removing a branch in the context of a given topology, behaving like evaluation function $Q$ but potentially being more foresighted. At the 21871-th episode, $\lambda_e$ of the latest 100 episodes are all larger than $\varrho = 0.9$, and the stop condition is satisfied.

\subsubsection{Evaluation on the test dataset}

Performance of the learning-based heuristic method is evaluated on a test dataset which contains 2000 optimization model instances with different samples of $\mathcal{R}_{\rm vp}$. The solution quality of heuristic algorithms can be measured by the average relative optimality gap defined as 
\begin{equation}
    G_{\rm o} = \frac{1}{n_{\rm st}} \sum\nolimits_{i=1}^{n_{\rm st}} G_{{\rm o}, i} = \frac{1}{n_{\rm st}} \sum\nolimits_{i=1}^{n_{\rm st}} \frac{ f_{{\rm h}, i} - f_{{\rm go}, i}  }{ f_{{\rm go}, i} }  
\end{equation} 
where $n_{\rm st}$ is the size of test dataset; $G_{{\rm o}, i}$ represents the relative optimality gap of the $i$-th solution; $f_{{\rm go}, i}$ denotes the globally minimum value of $f(\cdot)$ that is obtained by Gurobi; and $f_{{\rm h}, i}$ denotes the minimum value of $f(\cdot)$ obtained by the learning-based heuristic or the successive branch reduction methods.

\begin{figure}[h]
    \centering
    \includegraphics[width=0.98\linewidth]{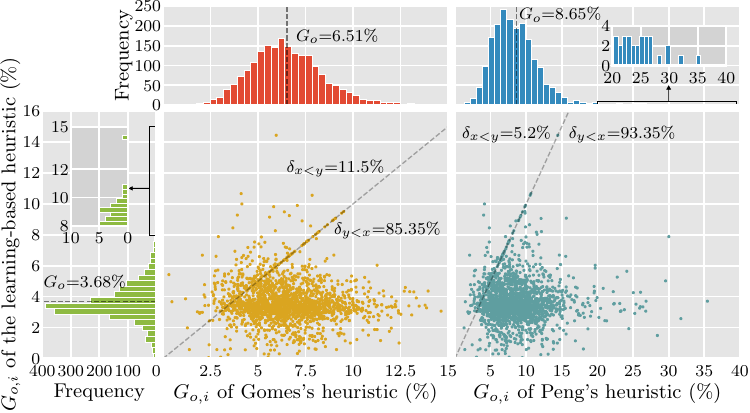} 
    \caption{Comparison of $G_{{\rm o}, i}$ of solutions found by different methods.}
    \label{fig-4-1-dnr-r21:evaluation}
\end{figure} 

Fig. \ref{fig-4-1-dnr-r21:evaluation} shows scatter plots and histograms of $G_{{\rm o}, i}$ of solutions found by the learning-based heuristic and the heuristic methods developed by \cite{4-847} and \cite{4-819}. In this figure, the dashed lines in the histograms indicate $G_{\rm o}$ for each heuristic; the dashed lines in the scatter plots locate the position where x-axis and y-axis values are equal; $\delta_{x<y}$ denotes the proportion of points whose x-axis values are smaller than y-axis values and $\delta_{y<x}$ is analogous. By Fig. \ref{fig-4-1-dnr-r21:evaluation}, the following conclusions for the evaluation on the test dataset can be drawn:
\begin{itemize}
    \item Overall, the learning-based heuristic outperforms the two successive branch reduction methods developed by \cite{4-847} and \cite{4-819} in terms of solution optimality, and {$G_o = 3.68\%$} for the learning-based heuristic which is noticeably smaller than that of the two successive branch reduction methods.
    \item Compared with the successive branch reduction methods, solution optimality of the learning-based heuristic is more robust with changes of time-varying parameters $\mathcal{R}_{\rm vp}$, which is indicated by the tall and skinny histogram for the learning-based heuristic.
    \item For some samples of $\mathcal{R}_{\rm vp}$, the learning-based heuristic even finds solutions which almost equal to the global optimum.
    \item The successive branch reduction methods may produce solutions with a significantly large optimality gap. The worse-case optimality gap of solutions found by the heuristic developed by \cite{4-819} is more than {35.00\%}.
    \item The learning-based heuristic underperforms compared with the two successive branch reduction methods for a small portion of samples, whose corresponding relative optimality gaps for the learning-based heuristic, however, are mostly less than {10.00\%}.
\end{itemize}

\subsubsection{Solution time}

\begin{figure}[h]
    \centering
    \includegraphics[width=0.85\linewidth]{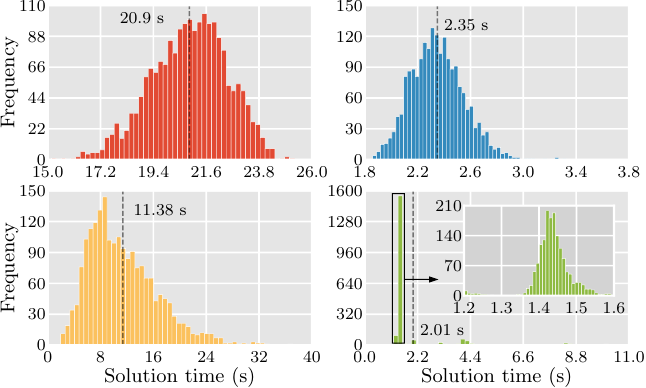} 
    \caption{Comparison of solution time of different heuristics and Gurobi.}
    \label{fig-4-1-dnr-r21:time}
\end{figure}

Fig. \ref{fig-4-1-dnr-r21:time} shows the histograms of solution time of the heuristics developed by \cite{4-847} (red plot) and \cite{4-819} (blue plot), Gurobi (yellow plot) and the learning-based heuristic (green plot), where the dashed lines indicate the average solution time. For the sake of fairness, the parallel algorithm of Gurobi is disabled since the three heuristics are not with a parallel design. The learning-based heuristic dominates Gurobi and the successive branch reduction method developed by \cite{4-847} regarding the average solution time. Although the heuristic developed by \cite{4-819} and the learning-based heuristic are with close average solution time and the worse-case solution time of the former is even much less than that of the other three algorithms, the heuristic developed by \cite{4-819} performs worst in solution optimality. In addition, the solution time of most samples solved by the learning-based heuristic concentrates in a narrow range, i.e., about 1.3 s to 1.6 s. This indicates that the learning-based heuristic is also more robust in terms of solution time with changes of $\mathcal{R}_{\rm vp}$.

\subsubsection{Solution construction}

\begin{figure}[h] 
	\centering
    \includegraphics[width=1\linewidth]{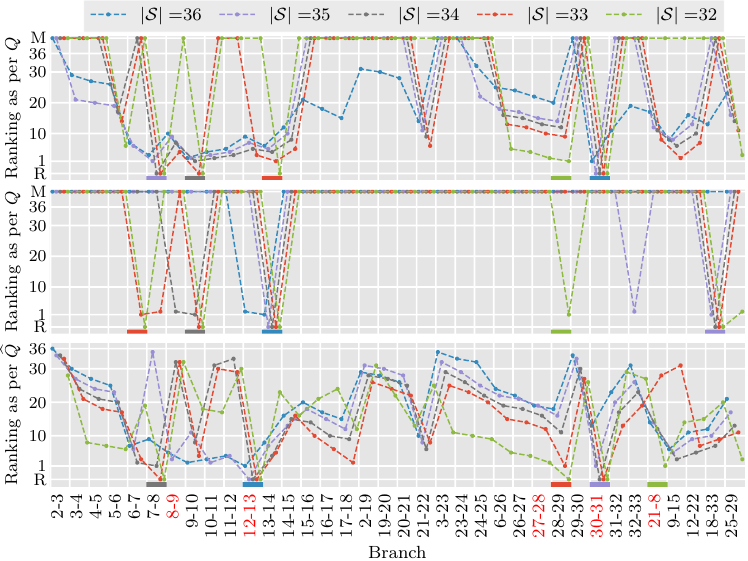} 
    \caption{Solution construction processes of different heuristics.}
    \label{fig-4-1-dnr-r4} 
\end{figure}

\begin{figure}[h]
	\centering
    \includegraphics[width=1\linewidth]{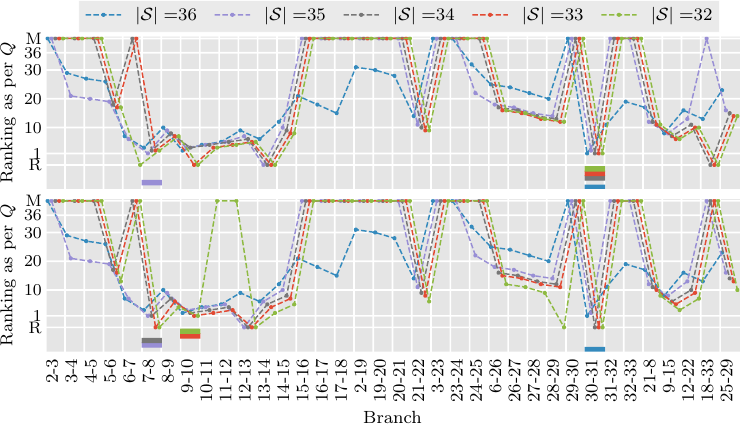} 
    \caption{Branch ranking according to $Q$ used in \cite{4-847}, during different solution construction processes.}
    \label{fig-4-1-dnr-r5} 
\end{figure}
 
Finally, the solution construction of the three heuristics are compared for one selected test sample where the minimum network losses given by the heuristics developed by \cite{4-819} and \cite{4-847}, the learning-based heuristic and Gurobi are 162.86 kW, 173.32 kW, 154.56 kW and 149.34 kW, respectively. 
Fig. \ref{fig-4-1-dnr-r4} shows the solution construction processes according to branch ranking evaluated by $Q$ of the heuristics developed by \cite{4-847} (top plot) and \cite{4-819} (middle plot), and by $\hat{Q}$ of the learning-based heuristic (bottom plot), for the selected test sample. In the figure, the ranking is in ascending order; the branch being selected to remove at each step is marked by a rectangular block with the same color as the ranking curve of that step; the ranking being ``R" indicates branches have already been removed, and that being ``M" indicate $Q=M$; branches to remove corresponding to the solution given by Gurobi are red-colored in the x-axis. 
In addition, Fig. \ref{fig-4-1-dnr-r5} gives the branch ranking during the solution construction processes of the heuristic developed by \cite{4-819} (top plot) and the learning-based heuristic (bottom plot), but evaluated by $Q$ used by \cite{4-847} which finds the exact locally optimum for each individual step. 

According to Fig. \ref{fig-4-1-dnr-r4} and Fig. \ref{fig-4-1-dnr-r5}, it can be found that: 
\begin{itemize}
    \item In the first step, the heuristic developed by \cite{4-847} selects branch 30-31 that is also one of the removed branches in the globally optimal solution. However, neither of the next 4 branches selected by the heuristic developed by \cite{4-847} are removed by the globally optimal solution. This verifies the short-sightedness of the hand-crafted heuristic, namely that locally optimal selection at each step results in a final solution being relatively far away from the global optimum.
    \item For the heuristic developed by \cite{4-819}, none of the branches being selected to remove is switched off in the globally optimal topology. The $Q$ function for this heuristic is computationally cheap compared with that used by \cite{4-847}, while at least for the selected sample, this cheap $Q$ leads to neither a locally optimal topology for each individual step nor a competitive final optimum.
    \item For the learning-based heuristic, except for the step where $|\mathcal{S}| = 34$, none of the branches selected to remove according to $\widehat{Q}$ are locally optimal for the current step. For example, the branch selected at the first step, i.e., branch 12-13, only ranks 9th in terms of local optimality. However, the final solution, where the three removed branches are also switched off in the global optimum, is with lower network loss compared with the other two hand-crafted heuristics. This verifies the foresight of the learning-based heuristic, namely that function $\widehat{Q}$ is embedded with consideration of the long-term impacts of removing a branch instead of merely a local impact.
    \item The learning-based heuristic possibly makes obvious mistakes at some step. For instance, for the last step where $|\mathcal{S}|= 32$, the learning-based heuristic selects branch 21-8 to remove, but the best branch to switch off is 9-10 by Fig. \ref{fig-4-1-dnr-r5}. However, this kind of mistake occurring at the last step can be overcome by a combination of the learning-based and hand-crafted heuristics, where only the last branch is selected based on $Q$ used by \cite{4-847}.
\end{itemize}

\graphicspath{{chapter_6/Figs/}}
\chapter{Network Topology Transition}\label{chapter-6}

After obtaining the optimal topology for steady-state topology control, the subsequent problem is \textit{how to transition from the initial topology to the target one (i.e., the optimal topology)}. This topology transition problem, which arises in the context of implementing steady-state topology control, has been largely overlooked in conventional power systems. However, it is becoming increasingly critical in future high-renewable power systems where steady-state topology control is expected to be deployed more broadly and actively. Recently, the concepts of \textit{optimal topology transition} and \textit{bumpless topology transition}, along with corresponding methodologies to address the topology transition problem, have been proposed by \cite{4-1309} and \cite{4-1372}. The former takes only static factors into consideration, whereas the latter incorporates both static and dynamic aspects of the topology transition problem. 
In this chapter, the topology transition problem is introduced, with particular emphasis on the bumpless topology transition method.

\section{Emerging Needs and Key Issues}

The sequence of line switching operations during a topology transition should be carefully planned rather than arbitrary. Although both the initial and target topologies may satisfy operational constraints, intermediate topologies during the transition might violate these constraints. Furthermore, the line-switching actions involved in the transition process constitute significant disturbances to the system, potentially leading to stability issues. However, the network topology transition problem has received little particular attention for conventional power systems due to the following reasons:
\begin{itemize}
    \item Steady-state topology control is traditionally executed infrequently. For example, the execution cycle of Distribution Network Reconfiguration (\ac{dnr}) typically spans several days and often extends to several weeks or even months. This low execution frequency implies a long time window with the feasible region for topology transition being sufficiently large, enabling operators to readily select an ad hoc transition strategy. Furthermore, given the infrequent execution and the short duration of the transition process, slight violations of operational constraints during topology transitions may be considered acceptable. 
    \item Conventional power systems are commonly unstressed with regular power flow patterns, which diminishes the necessity of particular attention to topology transition. Specifically, for these common power flow patterns, operators can rely on preplanned topology control schemes and  their corresponding transition strategies. 
    
    \item For steady-state distribution topology control, topology transition has little impact on the stability of conventional distribution networks with only passive loads. For steady-state transmission topology control, electromechanical stability is usually unaffected by the transition process, provided that sufficient time separation is maintained between consecutive switching actions to allow the system to settle.
\end{itemize}

With the renewable energy transition that leads to high penetration of variable renewable energy, the network  topology transition problem is becoming increasingly critical. This growing importance can be attributed to the following two emerging aspects: 
\begin{itemize}
    \item Constant and dramatic changes in variable renewable energy promote more frequent and even dynamic real-time steady-state topology control to maximize profits \citep{4-583, 4-905}. This complicates the selection of an ad hoc feasible transition strategy and motivates optimization of the transition process. Also, power flow patterns become highly diverse due to uncertainties in variable renewable energy together with changes in demand, and power systems are stressed both locally and globally. Accordingly, the feasible region of topology transition is potentially highly time-varying. 
    
    \item Dynamic properties of converter-dominant power systems complicate topology transition. For transmission networks, converter dynamics enlarge the timescale of structure-dependent stability phenomena to a period from several milliseconds to tens of seconds \citep{4-1302}. Hence, fast transitions can induce a wider range of instability phenomena to prevent. In medium/low voltage networks, inverter-based distributed generators also introduce new dynamics that complicate the overall system dynamic behavior \citep{4-990}. For instance, in AC microgrids, it has been observed that fast line dynamics also have a major influence on the stability of slower modes \citep{4-891}, and network loops can reduce the stability margin \citep{4-620}. These structure-related observations contrast with the conventional view, which highlights the necessity of re-examining stability concerns associated with topology transition. 
\end{itemize}

Although topology transition serves as the implementation phase of the optimal topology determined by steady-state topology control, the transition problem can be even more challenging than the steady-state topology control problem itself. First, both static and transient performance during the topology transition process must be considered. In particular, as a fundamental requirement, the transition methodology should incorporate mechanisms to ensure system stability throughout the transition. 
Second, given that line switching constitutes a significant system disturbance, optimizing only the switching sequence may be insufficient to mitigate the disturbance to an acceptable level. The incorporation of additional control resources, such as generation control as discussed by \cite{4-1285}, can further elevate the complexity of the topology transition problem.

\section{Preliminaries}\label{sec-7-1}

This section provides the necessary preliminaries for the network topology transition problem, including a detailed problem description, an analysis of the transition process, and associated system models. Although topology transition issues arise in both transmission and distribution networks, the remainder of this chapter focuses on transmission systems, where topology transitions pose greater challenges.
 
\subsection{Problem Description}\label{sec-pd-a}

Given that optimizing only the line switching sequence may be insufficient to ensure a satisfactory topology transition process, the Auxiliary Control Variables (\ac{acv}s) are considered, which refer to the electrical properties and dynamic parameters of the transmission system that are allowed to be adjusted to assist the topology transition. 
Considering the practical difficulties of implementation, it is reasonable to assume that \ac{acv} adjustment and line switching are executed asynchronously. Then the topology transition process is determined by a sequence of executions of \ac{acv} adjustment and line switching.  
The concepts of \textit{complete transition episode} and \textit{transition episode} given by Definition \ref{def-6-2-0} can be used to represent any transition process as a unique standard form. These two concepts are illustrated in Fig. \ref{fig-6-2-6} with a transition process. 
For ease of description and modeling, some fictitious executions of \ac{acv} adjustment and line switching are introduced to convert all transition episodes to complete transition episodes. 

\begin{figure}[h]
	\centering
	\includegraphics[width=0.9\linewidth]{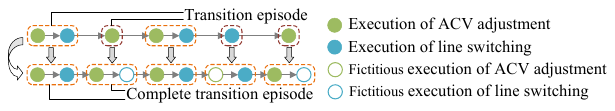}  
	\caption[Transition episodes and complete transition episodes.]{Illustration of transition episodes and complete transition episodes.
    }  
	\label{fig-6-2-6}  
\end{figure}
 
\begin{definition}[Complete transition episode, transition episode]\label{def-6-2-0}
    A complete transition episode is an execution of \ac{acv} adjustment and a following execution of line switching. A transition episode is a complete transition episode, or an execution of \ac{acv} adjustment or line switching in the transition process excluding all complete transition episodes.
\end{definition}

With the above conversion, all transition episodes in the following analysis are treated as complete transition episodes. Next, denote the state of topology and \ac{acv}s by $(\bm{z}, \bm{a})$. 
Then the $i$-th transition episode can be denoted as $(\bm{z}^{i-1}, \bm{a}^{i-1}) \to (\bm{z}^{i-1}, \bm{a}^{i}) \to (\bm{z}^{i}, \bm{a}^{i})$, where $\bm{z}^{i-1} \in \mathbb{B}^{n_{\rm e}}$, $\bm{z}^i \in \mathbb{B}^{n_{\rm e}}$, $\bm{a}^{i-1} \in \mathbb{R}^{n_{\rm a}}$, $\bm{a}^i \in \mathbb{R}^{n_{\rm a}}$, $(\bm{z}^{i-1}, \bm{a}^{i-1}) \to (\bm{z}^{i-1}, \bm{z}^{i})$ means adjusting $\bm{a}$ from $\bm{a}^{i-1}$ to $\bm{a}^{i}$, and $(\bm{z}^{i-1}, \bm{a}^{i}) \to (\bm{z}^{i}, \bm{a}^{i})$ means changing $\bm{z}$ from $\bm{z}^{i-1}$ to $\bm{z}^i$. 
When $\bm{a}^{i-1} = \bm{a}^i$ or $\bm{z}^{i-1} = \bm{z}^i$, the $i$-th transition episode contains a fictitious execution of \ac{acv} adjustment or line switching. Accordingly, the transition process can be represented by the transition trajectory of topology and \ac{acv}s, i.e., $ \cdots \to (\bm{z}^{i-1}, \bm{a}^{i-1}) \to (\bm{z}^{i-1}, \bm{a}^{i}) \to (\bm{z}^{i}, \bm{a}^{i}) \to \cdots$. With the above notations, a formal description of the network topology transition problem is given as follows: 
\begin{problem}[Network topology transition]\label{def-6-2-1}
    Given the initial value of $\bm{a} = \bm{a}^0$, an initial topology $\bm{z}^0$, and a final topology $\bm{z}^T$, under which the systems are operationally feasible, find a feasible transition trajectory of topology and \ac{acv}, i.e., $(\bm{z}^0, \bm{a}^0) \to (\bm{z}^0, \bm{a}^1) \to (\bm{z}^1, \bm{a}^1) \to (\bm{z}^1, \bm{a}^2) \to ... \to (\bm{z}^{T-1}, \bm{a}^{T-1}) \to (\bm{z}^{T-1}, \bm{a}^{T}) \to (\bm{z}^{T}, \bm{a}^{T})$ with $\bm{a}^T = \bm{a}^0$, such that the transition process is as bumpless as possible.
\end{problem}

The following four criteria can be employed to select \ac{acv}s: 
\begin{itemize}
    \item \textit{Fast response time}: To minimize overall system disruption, the topology transition process should be as short as possible. This requires \ac{acv}s to have fast response time.
    \item \textit{Bumpless}: The purpose of adjusting the \ac{acv}s is to facilitate a bumpless topology transition; therefore, the adjustment process of \ac{acv}s should be as bumpless as possible, which generally necessitates the continuity of the \ac{acv}s.
    \item \textit{Negligible or low cost}: Negligible or low cost of adjusting \ac{acv} can maximize the profits from steady-state topology control. 
    \item \textit{Negligible impacts on reliability}: Topology transitions are potentially frequent in future power systems, necessitating frequent adjustments of \ac{acv}s. These adjustments must have negligible impact on asset reliability and overall system operation.
    \item \textit{System-wide impacts}: Given that line switching actions across the network may be involved in numerous topology transition scenarios, the selected \ac{acv}s should be capable of contributing to system-wide impacts.
\end{itemize}

\begin{table}[h]
    \caption{Common control resources in transmission networks}\label{tab-6-2-1}
    \centering 
    \small{
    \begin{tabularx}{1.0\linewidth}{|p{2.5cm}<{\centering}p{2.6cm}<{\centering}p{1.0cm}<{\centering}p{1.0cm}<{\centering}p{0.4cm}<{\centering}p{1.6cm}<{\centering}|}
    \hline\hline   
    Resource & Response time & Bumpless & Cost & RI & Competent 
    \\  
    \hline\hline
    \ac{sg}'s output                &  2-30\%/min$^{\textcolor{blue}{\rm a}}$  & \checkmark & High & Neg. & $\bm{\times}$   \\
    Generator TV     & $<$1s & \checkmark & Negl. &   Negl. & \checkmark   \\
    \ac{ess}'s output               & $>$200\%/s$^{\textcolor{blue}{\rm a}}$      & \checkmark & Low  &  Negl. & \checkmark    \\
    \ac{cig}'s I\&D   & $<$1s & \checkmark & Negl. & Negl. & \checkmark  \\ \hline
    Load demands                       & $<$30s  & \checkmark  & High  & Negl. & $\bm{\times}$  \\
    Shunt capacitors                   & $<$1s                & $\bm{\times}$  & Negl. & Neg. & $\bm{\times}$   \\
    \ac{dvc}                          & $<$5ms, 20-40ms $^{\textcolor{blue}{\rm b}}$  & \checkmark & Negl. & Negl. & \checkmark   \\ \hline
    OLTC                          &  3s-10s  & \checkmark/$\bm{\times}$$^{\textcolor{blue}{\rm c}}$  & Negl.  & Neg. & $\bm{\times}$  \\
    \ac{tcsc}                          & 15-20ms  & \checkmark & Negl. & Negl.   & \checkmark     \\
    Line switching                     & $<$1s  & $\bm{\times}$ &  Negl. & Negl.  & \checkmark    \\
    \hline\hline
    \multicolumn{6}{p{0.965\textwidth}}{
        \footnotesize{  
            Abbreviation: RI (reliability impact), TV (terminal voltage), I\&D (inertia and damping), Negl. (negligible), Neg. (negative).
    }}\\ 
    \multicolumn{6}{p{0.965\linewidth}}{
        \footnotesize{ 
            $^{\textcolor{blue}{\rm a}}$ The value is the ramp rate, expressed in the percent of maximum capacity per minute  or second, that a generator or \ac{ess} changes its output. 
            $^{\textcolor{blue}{\rm b}}$ The two ranges respectively  correspond to \ac{statcom} and \ac{svc}. 
            $^{\textcolor{blue}{\rm c}}$ This depends on the type of \ac{oltc}, namely that the tap adjustment is continuous or discrete. 
     }}
    \end{tabularx} }
\end{table}

The common control resources in transmission networks are listed in Table \ref{tab-6-2-1}. According to their performances regarding the above mentioned criteria, at the generation side, terminal voltages of generators, outputs of Energy Storage Systems (\ac{ess}s), and virtual inertia and damping of Converter-Interfaced Generators (\ac{cig}s) can be selected as \ac{acv}s. 
At the load side, Dynamic VAR Compensators (\ac{dvc}s), mainly including Static VAR Compensator (\ac{svc}) and Static Synchronous Compensator (\ac{statcom}), can be selected. \ac{dvc}s are commonly used to maintain constant bus voltage and therefore the voltage setpoints of \ac{dvc}s can be chosen as \ac{acv}s.

At the network side, Thyristor-Controlled Series Compensator (\ac{tcsc}) and line switching can be selected for auxiliary control. Taking the open-loop impedance control of \ac{tcsc} as an example, \ac{tcsc} is equivalent to a constant series reactance operating at its setpoint \citep{4-1303}, and thus the reactance setpoints of \ac{tcsc} can be chosen as \ac{acv}s. 
Line switching used as auxiliary control for topology transition refers to the interim line switching. Specifically, let $\mathcal{E}_i \subseteq \mathcal{E}$ be the set containing all closed lines indicated by $\bm{z}^i$. To transition from topology $\mathcal{E}_0$ to $\mathcal{E}_T$, $|\mathcal{E}_0 \backslash \mathcal{E}_T|$ openings of every line in $\mathcal{E}_0 \backslash \mathcal{E}_T$ and $|\mathcal{E}_T \backslash \mathcal{E}_0|$ closures of every line in $\mathcal{E}_T \backslash \mathcal{E}_0$ are necessary. The others are just interim line switchings, which include switching of lines not in $(\mathcal{E}_T \backslash \mathcal{E}_0) \cup (\mathcal{E}_0 \backslash \mathcal{E}_T)$, switching of lines in $\mathcal{E}_0 \backslash \mathcal{E}_T$ which is not the first open, and switching of lines in $\mathcal{E}_T \backslash \mathcal{E}_0$ which is not the first closure. Notably, 
interim line switching can be incorporated into $\bm{z}$ instead of $\bm{a}$, such that all \ac{acv}s are continuous variables. 

In addition, the following assumptions on network topology transition can be adopted: 
\begin{itemize} 
    \item \text{A.1}: In each transition episode, \ac{acv} adjustment is fast enough and the induced dynamic response is smooth enough, such that the associated transient process of system state is negligible for evaluating the transition performance.
    \item \text{A.2}: Line switching is performed asynchronously, with at most one line switched per transition episode; both line switching and \ac{acv} adjustment occur after the system reaches steady state.
    \item \text{A.3}: All control variables and system parameters, except for \ac{acv}s and network topology, remain constant during the transition.
    \item \text{A.4}: Let $n_{\rm ad}$ and $n_{\rm is}$ be the number of executions of \ac{acv} adjustment and interim line switching actions performed, respectively. They are bounded as follows:  
    \begin{equation}\label{eq-6-2-a4} 
        n_{\rm ad} T_{\rm ad}  +  n_{\rm is}  T_{\rm is} \leq T_{\max}
    \end{equation} 
    where $T_{\rm ad}$ and $T_{\rm is}$ are the estimated increases of transition time due to an execution of \ac{acv} adjustment and an interim line switching action, respectively; and $T_{\max}$ is the maximal allowable increase of transition time from both actions.
\end{itemize} 

\subsection{Phases of the Topology Transition Process}

\begin{figure}[h]
	\centering
	\includegraphics[width=1\linewidth]{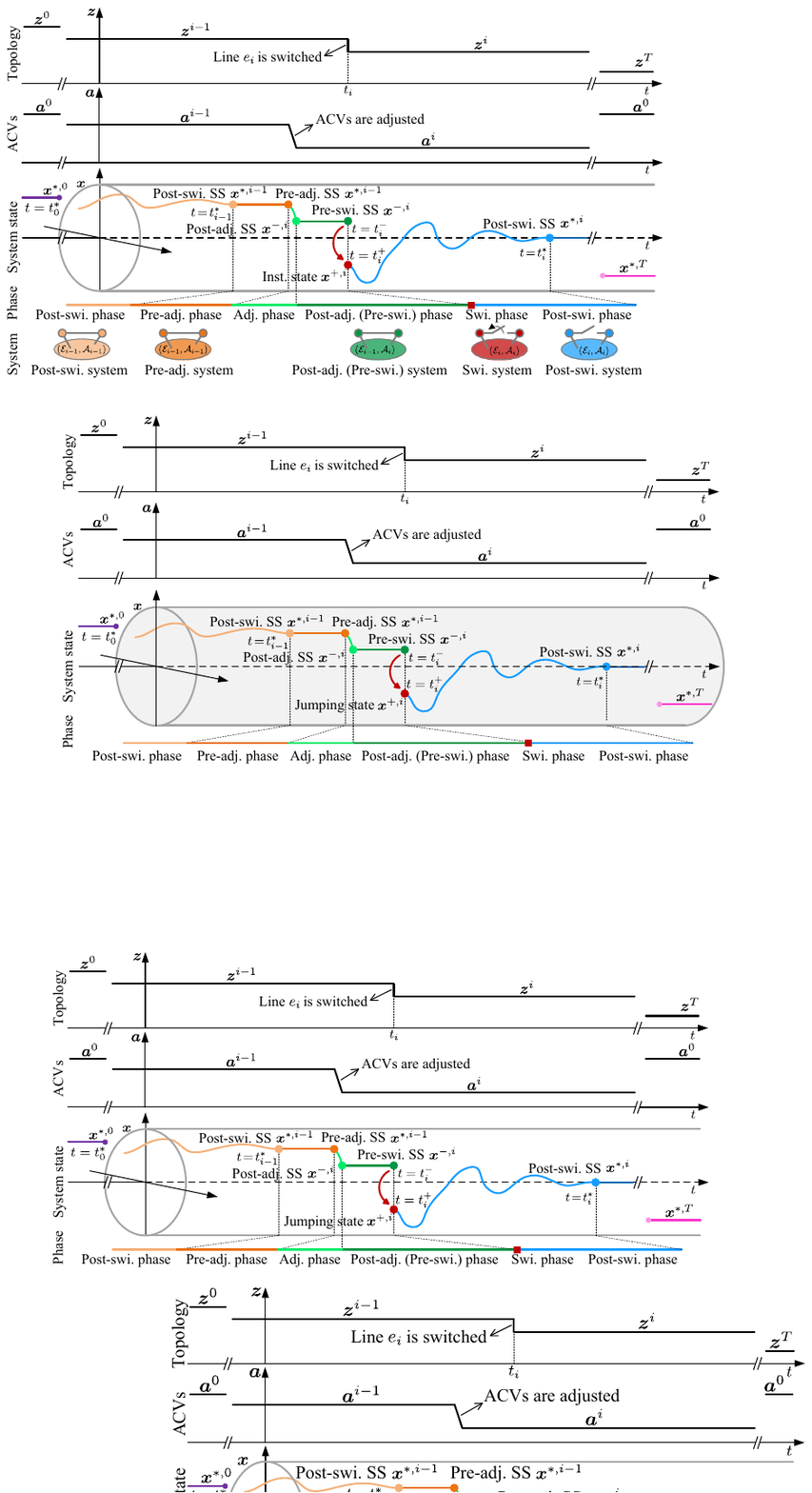}  
	\caption{Illustration of the process of network topology transition (SS: steady state; adj.: adjustment; swi.: switching). }  
	\label{fig-6-2-1}  
\end{figure}

The process of topology transition is illustrated in Fig. \ref{fig-6-2-1} that takes the $i$-th transition episode as an example. This process, starting at $t = t_0^*$ with $(\bm{z}, \bm{a})=(\bm{z}^0, \bm{a}^0)$ and $\bm{x} = \bm{x}^{*,0}$ consists of the following phases: 
\begin{itemize}
    \item \textit{Pre-adjustment phase}: This phase refers to the period after the system reaches steady state $\bm{x}^{*,i-1}$, called post-switching steady state, at $t=t_{i-1}^*$, and before \ac{acv} adjustment. 
    \item \textit{Adjustment phase}: This phase refers to the period where \ac{acv}s are adjusted, during which the system state changes from the pre-adjustment steady state, identical to the post-switching steady state by assumption A.3, to a new steady state $\bm{x}^{-, i}$, termed the post-adjustment steady state. The transient process of this phase is neglected per assumption A.1. 
    \item \textit{Post-adjustment or pre-switching phase}: This phase spans from the post-adjustment steady state and the switching of any line $e_i$.
    \item \textit{Switching phase}: This phase refers to the instant $t_i$ where the line $e_i$ is switched. In this phase, the system state jumps from the pre-switching steady state at $t=t_i^-$, identical to the post-adjustment steady state, to the state $\bm{x}^{+, i}$ at $t = t_i^+$, called the jumping state. 
    \item \textit{Post-switching phase}: This phase is the period after the system jumps to the jumping state and if the system is stable, ending with reaching the next post-switching steady state $\bm{x}^{*, i}$ at $t = t_{i}^*$.
\end{itemize}

The topology transition process ends with the network topology switched to $\bm{z}^T$ and \ac{acv} reset to its initial value $\bm{a}^0$. The system in each phase henceforth is named after that phase.

\subsection{System Models}\label{sec-7-2}

Consider a transmission network containing 
$n_{\rm ng}$ generator buses, 
$n_{\rm ce}$ \ac{cig}s with \ac{ess}s, 
$n_{\rm tcsc}$ \ac{tcsc}s, 
and $n_{\rm dvc}$ \ac{dvc}s that include $n_{\rm svc}$ \ac{svc}s and $n_{\rm svg}$ \ac{statcom}s. 
Let $\bm{a} \in \mathbb{R}^{n_{\rm a}}$ denote the vector of all $n_{\rm a}$ \ac{acv}s, which is further divided into two subvectors, i.e., $\bm{a}_{\rm s}$ consisting of the \ac{acv}s which impact steady states, and $\bm{a}_{\rm t}$ consisting of the \ac{acv}s which have an impact on transients but no impact on steady states. Specifically, 
\begin{equation}
    \bm{a}_{\rm s} = [\bm{v}_{\rm g}\T, \bm{p}_{\rm g, ess}\T, \bm{v}_{\rm dvc}\T, \bm{b}_{\rm b, tcsc}\T]\T, 
    \bm{a}_{\rm t} = [\bm{m}_{\rm cg}\T, \bm{d}_{\rm cg}\T ]\T
\end{equation}
where 
$\bm{v}_{\rm g} \in \mathbb{R}^{n_{\rm ng}}$ is the subvector of $\bm{v}$ associated with the generator buses, 
$\bm{p}_{\rm g, ess} \in \mathbb{R}^{n_{\rm ce}}$ is the subvector of $\bm{p}_{\rm g}$ associated with \ac{cig}s with \ac{ess}s, 
$\bm{v}_{\rm dvc} \in \mathbb{R}^{n_{\rm dvc}}$ is the subvector of $\bm{v}$ associated with buses with \ac{dvc}s, 
$\bm{b}_{\rm b, tcsc} \in \mathbb{R}^{n_{\rm tcsc}}$ is the subvector of $\bm{b}_{\rm b}$ associated with the lines with \ac{tcsc}, 
$\bm{m}_{\rm cg} \in \mathbb{R}^{n_{\rm gv}}$ and $\bm{d}_{\rm cg} \in \mathbb{R}^{n_{\rm gv}}$ are respectively the inertia and damping coefficients of \ac{cig}s. Also, let $\bm{a}_{\rm s}^i$ and $\bm{a}_{\rm t}^i$ be the values of $\bm{a}_{\rm s}$ and $\bm{a}_{\rm t}$ corresponding to $\bm{a}^i$, respectively.

\textit{Power flow models}: 
The AC power flow model of the entire system, consisting of that for the network, generators (both \ac{cig}s and Synchronous Generators (\ac{sg}s)), loads, \ac{svc} and \ac{statcom} can be represented in a descriptor form as follows: 
\begin{equation}\label{eq-6-2-cfnonlinear}
    f_{\rm p}( \bm{x}_{\rm p} | \bm{z}, \bm{a}_{\rm s} | \bm{y}_{\rm p}) = \bm{0}  
\end{equation}
where $\bm{x}_{\rm p}$ is the vector of all voltage variables, and $\bm{y}_{\rm p}$ is the vector of observed variables which are needed in the formulation of the topology transition problem. Specifically, 
\begin{subequations}
    \begin{align}
        & \bm{x}_{\rm p} = [\bm{v}\T, \bm{\theta}\T, \bm{e}\T, \bm{\delta}\T, \bm{e}_{\rm s}\T, \bm{\delta}_{\rm s}\T, \bm{v}_{\rm m}\T, \bm{\theta}_{\rm m}\T, \bm{e}_{\rm m}\T, \bm{\delta}_{\rm m}\T, \bm{v}_{\rm svg}\T, \bm{\theta}_{\rm svg}\T]\T \\ 
        & \bm{y}_{\rm p} = [ \bm{p}_{\rm fb}\T, \bm{q}_{\rm fb}\T, \bm{p}_{\rm tb}\T, \bm{q}_{\rm tb}\T, \bm{p}_{\rm go}\T, \bm{q}_{\rm g}\T, \bm{\epsilon}_{\rm p}\T, \bm{\epsilon}_{\rm q}\T, \bm{q}_{\rm c}\T, \bm{b}_{\rm svc}, \bm{b}_{\rm b, tcsc}\T ]\T
    \end{align}
\end{subequations} 
where $\bm{e} \angle \bm{\delta} \in \mathbb{C}^{n_{\rm gs}}$ is the electromotive force (emf) of \ac{sg}s, 
$\bm{e}_{\rm s} \angle \bm{\delta}_{\rm s} \in \mathbb{C}^{n_{\rm gs}}$ is the subtransient emf of \ac{sg}s, 
$\bm{v}_{\rm m} \angle \bm{\theta}_{\rm m} \in \mathbb{C}^{n_{\rm gv}}$ is the modulation voltages at the outputs of \ac{cig}s; 
$\bm{\epsilon}_{\rm p} \in \mathbb{R}^{n_{\rm d}}$ and $\bm{\epsilon}_{\rm q} \in \mathbb{R}^{n_{\rm d}}$ are the weights of induction motor components of active and reactive loads, respectively; 
$\bm{p}_{\rm go} \in \mathbb{R}^{n_{\rm g} - n_{\rm ce}}$ is formed by removing $\bm{p}_{\rm g, ess}$ from $\bm{p}_{\rm g}$,     
$\bm{b}_{\rm svc} \in \mathbb{R}^{n_{\rm svc}}$ is the susceptances of \ac{svc}s, 
$\bm{q}_{\rm c} \in \mathbb{R}^{n_{\rm dvc}}$ is the reactive power outputs of \ac{dvc}s, 
$\bm{e}_{\rm m}\angle \bm{\delta}_{\rm m} \in \mathbb{C}^{n_{\rm d}}$ is the internal voltages behind the subtransient impedances of the induction motor component of loads, 
$\bm{v}_{\rm svg} \angle \bm{\theta}_{\rm svg} \in \mathbb{C}^{n_{\rm svg}}$ is the modulation voltages at the outputs of \ac{statcom}s. 

To enhance computational tractability, linearized formulations of the aforementioned AC power flow model may be employed. 
Without loss of generality, the following descriptor form can be used to represent any such linearized version of the AC power flow model: 
\begin{equation}\label{eq-6-2-cflinear}
    \tilde{f}_{\rm p}( \bm{x}_{\rm p} | \bm{z}, \bm{a}_{\rm s} | \bm{y}_{\rm p}) \leq \bm{0}
\end{equation}
where $\tilde{f}(\cdot)$ is linear in terms of $\bm{x}_{\rm p}$, $\bm{z}$, $\bm{a}_{\rm s}$ and $\bm{y}_{\rm p}$, and the inequality arises from the inequality constraints introduced in the linearization. 

\textit{Dynamic models}:
The system dynamics can be represented by a state-space descriptor form as follows: 
\begin{subequations}\label{eq-6-2-13}
    \begin{align}
        & \begin{bmatrix}
            \dot{\bm{x}} \\
            \bm{0}
        \end{bmatrix}
        = 
        \begin{bmatrix}
            f(\bm{x}, \bm{\xi}, \bm{z}, \bm{a}) \\
            g(\bm{x}, \bm{\xi}, \bm{z}, \bm{a})
        \end{bmatrix}
        \\
        & \bm{y} = h(\bm{x}, \bm{\xi}, \bm{z}, \bm{a})
    \end{align}
\end{subequations}
where $\bm{x} \in \mathbb{R}^{n_{\rm x}}$ and $\bm{\xi} \in \mathbb{R}^{n_{\rm xi}}$ are the vectors of $n_{\rm x}$ state variables and $n_{\rm xi}$ algebraic variables, respectively; 
and $\bm{y} \in \mathbb{R}^{n_{\rm y}}$ is the vector of $n_{\rm y}$ performance outputs.
The linearized model of (\ref{eq-6-2-13}) around a given steady state $(\bm{x}, \bm{\xi}) = (\bm{x}^*, \bm{\xi}^*)$, with $\bm{\xi}$ eliminated and an input term added, is written as 
\begin{subequations}\label{eq-6-2-14} 
    \begin{align}
        & \Delta \dot{\bm{x}}  = \bm{A}(\bm{z}, \bm{a}, \bm{x}^*, \bm{\xi}^*) \Delta \bm{x} + \bm{B}(\bm{x}^{\Delta}) \bm{u} \\
        & \Delta \bm{y} = \bm{C}(\bm{z}, \bm{a}, \bm{x}^*, \bm{\xi}^*)  \Delta \bm{x}   
    \end{align}
\end{subequations}
where $\Delta\bm{x} = \bm{x} - \bm{x}^*$, $\Delta\bm{y} = \bm{y} - h(\bm{x}^*, \bm{\xi}^*, \bm{z}, \bm{a})$, $\bm{B}(\bm{x}^{\Delta}) \in \mathbb{R}^{n_{\rm x} \times n_{\rm x}}$ is the input matrix determined by $\bm{x}^{\Delta} \in \mathbb{R}^{n_{\rm x}}$, $\bm{u} \in \mathbb{R}^{n_{\rm x}}$ is the input vector; and
\begin{subequations}
    \begin{align}
        & \bm{A} = \frac{\partial f}{\partial \bm{x}} - \frac{\partial f}{\partial \bm{\xi}}  \left( \frac{\partial g}{\partial \bm{\xi}} \right)^{-1} \frac{\partial g}{\partial \bm{x}} \\
        & \bm{C} =  
        \begin{bmatrix}
            \frac{\partial h}{\partial \bm{x}}- \frac{\partial h}{\partial \bm{\xi}}  \left( \frac{\partial g}{\partial \bm{\xi}} \right)^{-1} \frac{\partial g}{\partial \bm{x}} &
            - \frac{\partial h}{\partial \bm{\xi}} \left( \frac{\partial g}{\partial \bm{\xi}} \right)^{-1} \frac{\partial g}{\partial \bm{u}} 
        \end{bmatrix}
    \end{align}
\end{subequations}
all at $(\bm{x}, \bm{\xi}) = (\bm{x}^*, \bm{\xi}^*)$. Hereinafter, $\Psi(\bm{z}, \bm{a})$ refers to the system given by (\ref{eq-6-2-13}), and the transfer function of (\ref{eq-6-2-14}), denote as $G(s, \bm{z}, \bm{a}, \bm{x}^*, \bm{\xi}^*, \bm{x}^{\Delta})$, refers to the system (\ref{eq-6-2-14}). Also, denote by $\bm{y}^{*, i}$ and $\bm{\xi}^{*, i}$ the points of $\bm{y}$ and $\bm{\xi}$ for $\bm{x}^{*, i}$, respectively; and $\bm{y}^{-, i}$, $\bm{y}^{+, i}$, and $\bm{\xi}^{+, i}$ are analogous.

\section{Bumpiness of Network Topology Transition}\label{sec-7-3}

This section introduces the metrics used to evaluate the bumpiness of the topology transition process, along with their tractable surrogates that can be integrated into mathematical models of the transition problem. 

\subsection{Bumpiness Metric}

\begin{figure}[t]
	\centering
	\includegraphics[width=0.55\linewidth]{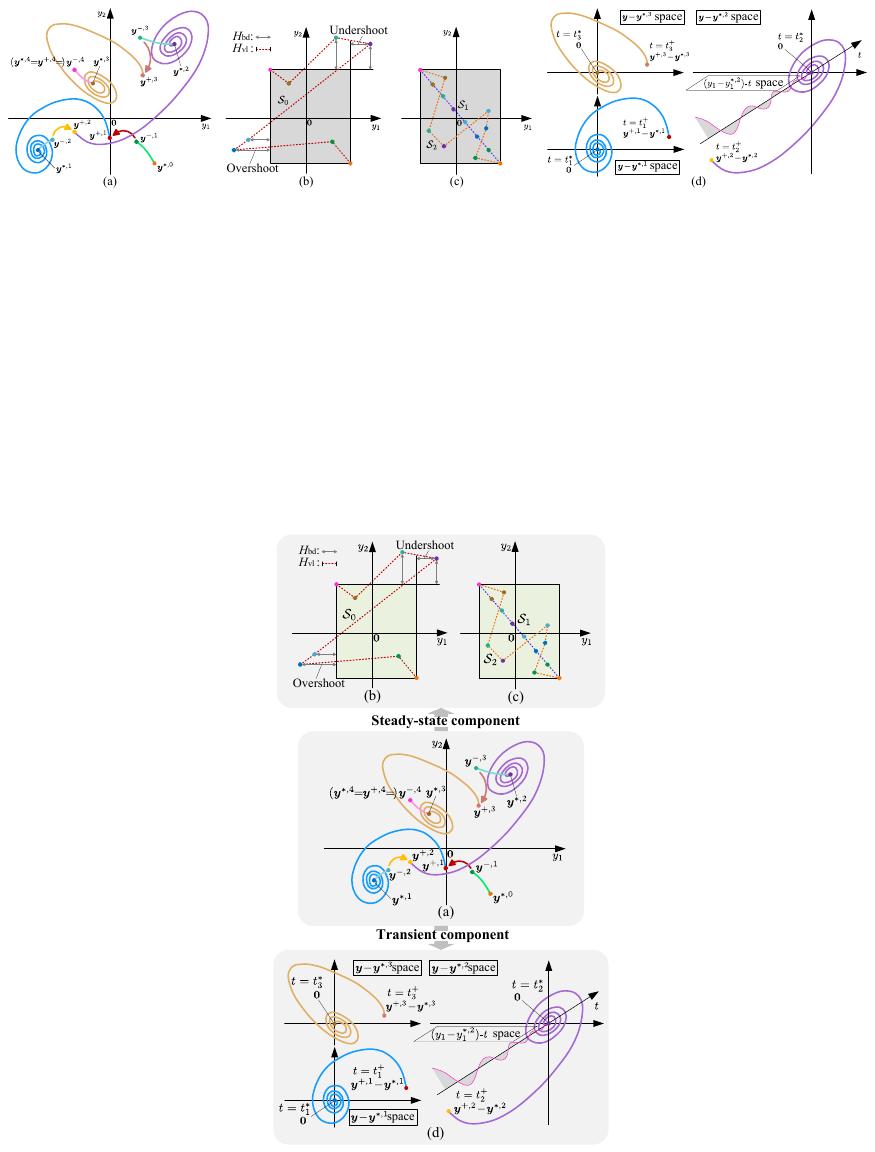}  
	\caption{Illustration of the bumpiness metric. }
	\label{fig-6-2-23} 
\end{figure}

To provide a more intuitive understanding, the bumpiness metric is illustrated using an example of topology transition with four transition episodes, i.e., $T=4$. The schematic of the trajectory of $\bm{y}$ during the transition is shown in Fig. \ref{fig-6-2-23}-(a). 
The trajectory of $\bm{y}$ is decomposed into two components, i.e., the steady-state component as shown in Fig. \ref{fig-6-2-23}-(b), and the transient-state component as shown in Fig. \ref{fig-6-2-23}-(d). Bumpiness of the trajectory can be analyzed following this decomposition. 

\textit{Steady-state component}:
Bumpiness of the steady-state component contains two aspects, called boundedness and volatility. Define the \textit{optimal region} as the intersection of 
all axis-aligned  surfaces that pass through $\bm{y}^{*, 0}$ and $\bm{y}^{*,T}$. 
The boundedness refers to how well the steady-state component of $\bm{y}$ is bounded by the optimal region. This property is associated with the concepts of overshoot and undershoot. In control theory, overshoot is the occurrence of an output exceeding its target and undershoot is the same phenomenon in the opposite direction. Here the target for undershoot is relaxed such that it occurs only when an output falls behind the starting point. 
Overshoot and undershoot (with a slight abuse of terminology, these two terms are used to refer to the similar phenomenons here) of the steady-state component of $\bm{y}$ of $\mathcal{S}_0$ is illustrated in Fig. \ref{fig-6-2-23}-(b), and the green rectangle is the optimal region to bound the steady-state component of $\bm{y}$. It can be seen that better boundedness indicates smaller overshoot and undershoot. Comparatively, the steady-state components of $\bm{y}$ of $\mathcal{S}_1$ and $\mathcal{S}_2$ in Fig. \ref{fig-6-2-23}-(c) are bounded by the green rectangle, with neither overshoot nor undershoot, and thus being more bumpless.

The volatility refers to the degree of the number of changes of the steady-state component of $\bm{y}$ during the topology transition. As shown in Fig. \ref{fig-6-2-23}-(c), the most bumpless topology transition is $\mathcal{S}_1$ where the path of the steady-state points of $\bm{y}$ is the shortest one between $\bm{y}^{*, 0}$ and $\bm{y}^{*, T}$, and all changes of $\bm{y}$ are necessary to realize the topology transition. For the topology transition where the length of this path is longer, such as $\mathcal{S}_0$ and $\mathcal{S}_1$, more and unnecessary changes of the steady-state component of $\bm{y}$ are involved and thus reducing bumpiness of the topology transition. Thus, the deviation between the length of the path of steady-state points of $\bm{y}$ and the shortest one can be an indicator of the volatility.

\textit{Transient-state component}:
Boundedness and volatility of the steady-state component capture the global bumpiness, while the transient-state component contains information about the local bumpiness after each line switching. Taking the projection of the transient-state component of $\bm{y}$ from $t=t_2^+$ to $t=t_2^*$, i.e., the pink trajectory in Fig. \ref{fig-6-2-23}-(d) for example, a smaller gray area indicates that the line switching at $t = t_2$ is more bumpless. Therefore, bumpiness of the transient-state component can be characterized by analogous area-type measures in the $(\bm{y}-\bm{y}^{*, i})$-$t$ space, which generalize the gray-area metric.

By integrating the steady-state and transient-state components, the bumpiness metric, denoted by $H$, is formulated as follows: 
\begin{equation}\label{eq-6-2-16}
    H((\bm{z}^i, \bm{a}^i | i \in \llbracket 1, T \rrbracket )) = H_{\rm bd} + H_{\rm vl} + H_{\rm ts} 
\end{equation}
with 
\begin{subequations}\label{eq-6-2-17}
    \begin{align}
        &
        \begin{aligned}
            & H_{\rm bd} =
            {\sum\limits_{i=1}^{T-1}}  \big[ \Vert \phi_i^*(\bm{y}\B \!-\! \bm{y}^{*,i}) \Vert_{2, \bm{w}_{\rm bd}\B }^2 \!+\! \Vert \phi_i^*(\bm{y}^{*,i} \!-\! \bm{y}\U)  \Vert_{2, \bm{w}_{\rm bd}\U }^2 \big]
            \\
            &
             + {\sum\limits_{i=1}^{T}}  \big[ \Vert \phi_i^-(\bm{y}\B - \bm{y}^{-,i}) \Vert_{2, \bm{w}_{\rm bd}\B }^2 + \Vert \phi_i^-(\bm{y}^{-,i} - \bm{y}\U) \Vert_{2, \bm{w}_{\rm bd}\U}^2 \big] 
        \end{aligned}
        \\
        & 
        \begin{aligned}
            H_{\rm vl} = &  {\sum\limits_{i=1}^{T}} \big[ \Vert \bm{y}^{*,i-1} - \bm{y}^{-,i} \Vert_{2, \bm{w}_{\rm vl}} + \Vert \bm{y}^{-,i} - \bm{y}^{*,i} \Vert_{2, \bm{w}_{\rm vl}} \big]
            \\
            & - \Vert \bm{y}^{*,0} - \bm{y}^{*,T} \Vert_{2, \bm{w}_{\rm vl}} 
        \end{aligned}
        \\
        & H_{\rm ts} = {\sum\limits_{i=1}^{T}} \int_{t_{i}^+}^{t_{i}^*} \Vert \bm{y} - \bm{y}^{*, i} \Vert_{2, \bm{w}_{\rm ts}}^2 \text{d} t
    \end{align}
\end{subequations} 
where ${H}_{\rm bd}$ and ${H}_{\rm vl}$ denote boundedness and volatility of the steady-state components of $\bm{y}$, respectively; ${H}_{\rm ts}$ represents bumpiness of the transient-state component; $\bm{w}_{\rm bd}\B$, $\bm{w}_{\rm bd}\U$, $\bm{w}_{\rm vl}$, and $\bm{w}_{\rm ts}$ are weight vectors for associated components of $H$; $\bm{y}\B = \min( \bm{y}^{*,0}, \bm{y}^{*, T} )$, $\bm{y}\U = \max( \bm{y}^{*,0}, \bm{y}^{*, T})$, where $\max(\cdot, \cdot)$ and $\min(\cdot, \cdot)$ are the entry-wise maximum and minimum of the two vectors, respectively; $\phi_i^*(\cdot) = \bm{0}$ if the $i$-th transition episode contains a fictitious execution of line switching, and $\phi_i^*(\cdot) = \max(\cdot, \bm{0})$ otherwise; $\phi_i^-(\cdot)$ is analogous to $\phi_i^*(\cdot)$ but depends on if the $i$-th transition episode contains a fictitious execution of \ac{acv} adjustment.

\subsection{Tractable Surrogates}

Two surrogates for $H_{\rm ts}$, termed $\mathcal{H}_2$-norm surrogate and jumping-state-based surrogate, can be employed to ensure  tractable modeling of the topology transition problem.

\textit{$\mathcal{H}_2$-norm surrogate}:
The transient-state component $\bm{y} - \bm{y}^{*, i}$ during the $i$-th post-switching phase is the free output response of system $\Psi(\bm{z}^{i}, \bm{a}^{i})$ to the initial state $(\bm{x}(t_i^+), \bm{\xi}(t_i^+)) = (\bm{x}^{+, i}, \bm{\xi}^{+, i})$ with a shift $- \bm{y}^{*, i}$. 
Denote this transient-state component by $\Delta \bm{y}_{\rm nl}^i (t)$ with $t \in [t_{i}^+, t_i^*]$. 
Let $\Delta \bm{y}_{\rm fr}^i(t)$ with $t \in [t_{i}^+, t_{i}^*]$ be the free output response of 
$G(s, \bm{z}^{i}, \bm{a}^{i}, \bm{x}^{*, i}, \bm{\xi}^{*, i}, \cdot)$ to the initial state $\Delta \bm{x}(t_{i}^+) = \bm{x}^{+, i} - \bm{x}^{*, i}$. 
If the jumping state is sufficiently close to the following post-switching steady state, $G$ can approximate $\Psi$ regarding the free output response. Formally, the following assumption will be used: 
\begin{itemize} 
    \item \textit{A.5}: $\forall i \in \llbracket 1, T \rrbracket $, $\Vert \bm{x}^{+, i} - \bm{x}^{*, i} \Vert_2$ is sufficiently small such that $\Delta \bm{y}_{\rm nl}^i (t) = \Delta \bm{y}_{\rm fr}^i(t) $ with $t \in [t_{i}^+, t_{i}^*]$. 
\end{itemize}

Let $\bm{B}(\bm{x}^{\Delta}) = (\bm{x}^{\Delta}) \D $, $\bm{u} = \bm{1}_{n_{\rm x}} \tilde{\delta}(t - t_{i}^+) $ 
with $\tilde{\delta}(t)$ being the unit impulse function, and $\Delta \bm{y}_{\rm im}^i(t)$ with $t \in [t_i^+, t_{i}^*]$ be the response of $G(s,\bm{z}^{i}, \bm{a}^{i}, \bm{x}^{*, i}, \bm{\xi}^{*, i}, \bm{x}^{+, i} - \bm{x}^{*, i} )$ with $\bm{x}(t_i^+) = \bm{0}$ to the inputs $\bm{u}$. 
Then the following proposition holds:
\begin{proposition}\label{prop-6-2-1}
    $\forall i \in \llbracket 1, T \rrbracket$, $\Delta \bm{y}_{\rm fr}^i (t) = \Delta \bm{y}_{\rm im}^i(t) $ with $t \in [t_i^+, t_{i}^*]$.
\end{proposition}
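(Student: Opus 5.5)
The plan is to write both responses in closed form for the linear system (\ref{eq-6-2-14}) and show that they coincide. Fix $i \in \llbracket 1, T \rrbracket$. Abbreviate $\bm{A}_i = \bm{A}(\bm{z}^i, \bm{a}^i, \bm{x}^{*,i}, \bm{\xi}^{*,i})$, $\bm{C}_i$ analogously, and $\Delta\bm{x}_0 = \bm{x}^{+,i} - \bm{x}^{*,i}$. Since the system is time-invariant, we may also shift time so that $t_i^+$ becomes $0$.

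\textbf{Step 1 (free response).} The free response is the solution of $\Delta\dot{\bm{x}} = \bm{A}_i \Delta\bm{x}$ with $\bm{u} = \bm{0}$ and $\Delta\bm{x}(t_i^+) = \Delta\bm{x}_0$. This gives
\[
\Delta\bm{y}^i_{\rm fr}(t) = \bm{C}_i e^{\bm{A}_i (t - t_i^+)} \Delta\bm{x}_0, \qquad t \in [t_i^+, t_i^*].
\]
Only the block of $\bm{C}_i$ acting on $\Delta\bm{x}$ matters here, because the output equation in (\ref{eq-6-2-14}) contains only $\Delta\bm{x}$.

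\textbf{Step 2 (impulse response).} Take zero initial state, $\bm{B} = (\Delta\bm{x}_0)\D$ and $\bm{u} = \bm{1}_{n_{\rm x}} \tilde{\delta}(t - t_i^+)$. The variation-of-constants formula gives
\[
\Delta\bm{x}(t) = \int_{t_i^+}^{t} e^{\bm{A}_i (t-\tau)} (\Delta\bm{x}_0)\D \bm{1}_{n_{\rm x}} \tilde{\delta}(\tau - t_i^+)\, \mathrm{d}\tau .
\]
Two facts finish this step. First, $(\Delta\bm{x}_0)\D \bm{1}_{n_{\rm x}} = \Delta\bm{x}_0$, since a diagonal matrix times the all-ones vector returns its diagonal. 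Second, the sifting property of $\tilde{\delta}$ evaluates the integral at $\tau = t_i^+$. Together they give $\Delta\bm{x}(t) = e^{\bm{A}_i(t - t_i^+)} \Delta\bm{x}_0$ for $t > t_i^+$, so $\Delta\bm{y}^i_{\rm im}(t) = \bm{C}_i e^{\bm{A}_i(t-t_i^+)} \Delta\bm{x}_0$. An equivalent frequency-domain check is that both Laplace transforms equal $\bm{C}_i (s\bm{I} - \bm{A}_i)^{-1} \Delta\bm{x}_0$.

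\textbf{Step 3 (comparison).} Comparing Steps 1 and 2 gives the identity on $(t_i^+, t_i^*]$. Because $i$ was arbitrary, this proves Proposition \ref{prop-6-2-1}.

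\textbf{Main obstacle.} The only delicate point is the endpoint $t = t_i^+$, where the impulse is applied. Here I would use the standard convention that $\bm{x}(t_i^+) = \bm{0}$ means the state just before the impulse. The impulse then transfers the state to $\Delta\bm{x}_0$, so the responses agree on the closed interval, read as limits from the right. This is the usual equivalence between an initial condition and an impulsive input $\bm{B}\bm{u} = \Delta\bm{x}_0 \tilde{\delta}$.
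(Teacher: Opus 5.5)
Your proof is correct. Since the paper states Proposition~\ref{prop-6-2-1} without a written proof, it is exactly the standard initial-condition/impulse equivalence the paper relies on: with $\bm{B} = (\bm{x}^{+,i}-\bm{x}^{*,i})\D$ and $\bm{u} = \bm{1}_{n_{\rm x}}\tilde{\delta}(t-t_i^+)$ the impulse injects $\bm{B}\bm{1}_{n_{\rm x}} = \bm{x}^{+,i}-\bm{x}^{*,i}$, so both responses equal $\bm{C}_i e^{\bm{A}_i(t-t_i^+)}(\bm{x}^{+,i}-\bm{x}^{*,i})$, which is what licenses the passage to the $\mathcal{H}_2$ norm and the observability Gramian. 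Your treatment of the endpoint $t=t_i^+$ is appropriate, and so is ignoring the second block of $\bm{C}$, which vanishes anyway because $g$ does not depend on $\bm{u}$.
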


Under assumption A.5, Proposition \ref{prop-6-2-1} yields 
\begin{equation}\label{eq-6-2-21} 
        H_{\rm ts}   = {\sum\limits_{i=1}^{T}}  \Vert  \tilde{G}(s,\bm{z}^{i}, \bm{a}^{i}, \bm{x}^{*, i}, \bm{\xi}^{*, i}, \bm{x}^{+, i} - \bm{x}^{*, i} )  \Vert_{\mathcal{H}_2}^2 
\end{equation}
where $\tilde{G}$ is the transfer function formed by replacing $\bm{C}$ by $\tilde{\bm{C}} = (\bm{w}_{\rm ts}\D)^{\frac{1}{2}} \bm{C}$ in $G$. When $G$ is asymptotically stable, $\mathcal{H}_2$ norm of the transfer function can be computed using the observability Gramian, yielding the tractable $\mathcal{H}_2$-norm surrogate for $H_{\rm ts}$, expressed as:  
\begin{equation}\label{6-2-22} 
     H_{\rm ts}'' = {\sum\limits_{i=1}^{T}} \Tr( ( \bm{x}^{+, i} - \bm{x}^{*, i} )\D \bm{Q}_i ( \bm{x}^{+, i} - \bm{x}^{*, i} )\D)  
\end{equation}
with $\bm{Q}_i \in \mathbb{R}^{n_{\rm x} \times n_{\rm x}}$ satisfying the Lyapunov equation: 
\begin{equation}\label{eq-6-2-lya}
    \bm{A}(\bm{z}^{i}, \bm{a}^{i}, \bm{x}^{*, i}, \bm{\xi}^{*, i})\T \bm{Q}_i + \bm{Q}_i \bm{A} = - \bm{C}\T \bm{C}
\end{equation}

\textit{Jumping-state-based surrogate}:
This surrogate is motivated by observations of the transient-state behavior in post-switching trajectories. Consider the  4-bus system shown in Fig. \ref{fig-6-2-4}, where line 1-4 is to be opened. Five cases with different network topology before opening line 1-4 are examined. 

\begin{figure}[h]
	\centering
	\includegraphics[width=0.95\linewidth]{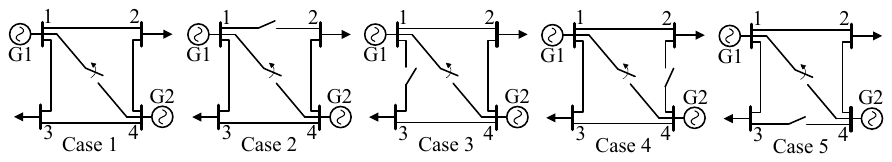} 
	\caption[The 4-bus system with different pre-switching network topology.]{Illustration of the 4-bus system with different pre-switching topology.}
	\label{fig-6-2-4} 
\end{figure}
\begin{figure}[h]
	\centering
	\includegraphics[width=0.75\linewidth]{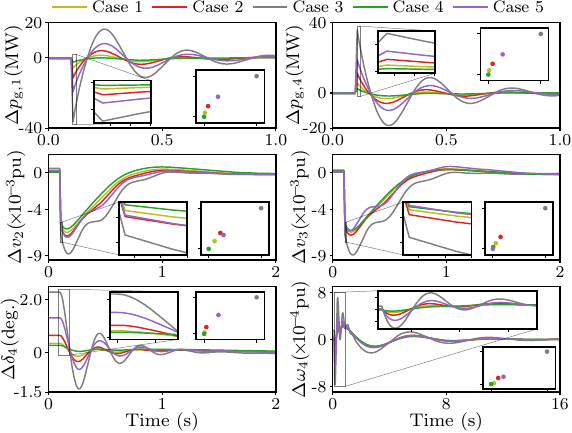} 
	\caption{Simulation results for the five switching cases of the 4-bus system.}
	\label{fig-6-2-r1} 
\end{figure}

Fig. \ref{fig-6-2-r1} gives the post-switching trajectories of transient-state components of different performance outputs for cases 1 to 5; the performance outputs here include active power outputs of G1 and G2, voltages at bus 2 and 3, rotor angle of G2 with reference to the rotor angle of G1, and rotor angle speed of G2, whose transient-state components are denoted by $\Delta p_{\rm g, 1}$, $\Delta p_{\rm g, 4}$, $\Delta v_2$, $\Delta v_4$, $\Delta \delta_4$, and $\Delta \omega_4$, respectively; the left-side small figure of each subfigure shows the zoom of the trajectories around the jumping state; in the right-side small figure of each subfigure, taking the subfigure for $\Delta p_{\rm g, i}$ as an example, the scatter plot shows the relationship between bumpiness $H_{\rm ts}$ for $\Delta p_{\rm g, 1}$ and the value of $|\Delta p_{\rm g, 1}|$ under the jumping state, and each marker is associated with the case with the same trajectory color.

It can be observed that bumpiness of each transient-state component is proportional to the absolute value of the transient-state component under the jumping state, except for the bumpiness of $\Delta v_2$ of cases 2 and 5. However, $H_{\rm ts}$ for $\Delta v_2$ of case 2 and that for case 5 are very close. Motivated by this observation, the minimization of bumpiness of the transient-state component can be approximated by minimizing its absolute value under the jumping state. 

Accordingly, the jumping-state-based surrogate for $H_{\rm ts}$ is given as: 
\begin{equation}
    H_{\rm ts}' = {\sum\limits_{i=1}^{T}} \Vert \bm{y}^{+, i} - \bm{y}^{*, i} \Vert_{2, \bm{w}_{\rm ts}\D \bm{w}'_{\rm ts} }^2 
\end{equation}
where $\bm{w}'_{\rm ts}$ is the vector of estimated scale factors between $\int_{t_i^+}^{t_{i}^*} (\bm{y}_j - \bm{y}^{*, i}_j)^2  \text{d} t$ and $(\bm{y}^{+, i}_j - \bm{y}^{*, i}_j)^2$.

\section{Modeling of Network Topology Transition}\label{sec-7-4}

\begin{figure}[h]
    \vspace{-8pt}
	\centering
	\includegraphics[width=0.95\linewidth]{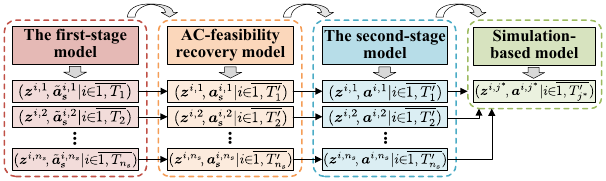} 
	\caption{High-level formulation of the network topology transition model.}
	\label{fig-6-2-7} 
\end{figure}

This section introduces the mathematical model of the network topology transition problem. First, the high-level formulation of the model is illustrated in Fig. \ref{fig-6-2-7}. It contains four submodels as follows:

\begin{itemize}
    \item \textit{The first-stage model}: This is a Mixed-Integer Second-Order Cone Programming (\ac{misocp}) model to find the optimal and suboptimal transition trajectories of topology and \ac{acv}s in $\bm{a}_{\rm s}$, using the linearized power flow model and the surrogate $H_{\rm ts}'$ for $H_{\rm ts}$. In Fig. \ref{fig-6-2-7}, $n_{\rm s}$ is the number of optimal and suboptimal solutions, $(\bm{z}^{i,j}, \tilde{\bm{a}}_{\rm s}^{i, j}| i \in \llbracket 1, T_j \rrbracket )$ with $j \in \llbracket 1, n_{\rm s} \rrbracket$ denotes the $j$-th solution of $(\bm{z}^{i}, \bm{a}_{\rm s}^{i}| i \in \llbracket 1, T \rrbracket )$ produced by the model and $T_j$ is the value of $T$ for this solution.

    \item \textit{AC-feasibility recovery model}: The solutions yielded by the first-stage topology transition model may be AC infeasible due to the linearization of power flow models \citep{4-1301}. The model here is a Nonlinear Programming (\ac{nlp}) model to find the AC feasible solution of $(\bm{z}^{i}, \bm{a}_{\rm s}^{i}| i \in \llbracket 1, T \rrbracket )$, denoted by $(\bm{z}^{i,j}, \bm{a}_{\rm s}^{i,j}| i \in \llbracket 1, T_j' \rrbracket )$ with $T_j'$ being the value of $T$ for it, which is closest to $(\bm{z}^{i,j}, \tilde{\bm{a}}_{\rm s}^{i, j}| i \in \llbracket 1, T_j \rrbracket )$, for $j \in \llbracket 1, n_{\rm s} \rrbracket $. 

    \item \textit{The second-stage model}: Given $(\bm{z}^{i,j}, \bm{a}_{\rm s}^{i,j}| i \in \llbracket 1, T_j' \rrbracket )$, this model finds the optimal transition trajectory of \ac{acv}s in $\bm{a}_{\rm t}$, denoted by $(\bm{a}_{\rm t}^{i, j} | i \in \llbracket 1, T_j' \rrbracket )$. Then the transition trajectory of topology and all \ac{acv} is $(\bm{z}^{i,j}, \bm{a}^{i, j} | i \in \llbracket 1, T_j' \rrbracket )$, with $\bm{a}^{i, j} = [(\bm{a}_{\rm s}^{i, j})\T, (\bm{a}_{\rm t}^{i, j})\T]\T$. This \ac{nlp}-based model uses the $\mathcal{H}_2$-norm surrogate $H_{\rm ts}''$ for $H_{\rm ts}$ and low-fidelity dynamic models to ensure computational efficiency.

    \item \textit{Simulation-based model}: Based on high-fidelity time-domain simulations, this model accurately evaluates the $n_{\rm s}$ transition trajectories of topology and \ac{acv}s, and finds the optimal one, denoted as $(\bm{z}^{i, j^*}, \bm{a}^{i, j^*} | i \in \llbracket 1, T_{j^*}' \rrbracket )$ with $j^* \in \llbracket 1, n_{\rm s} \rrbracket$. 
    This model captures previously overlooked complexities, including overall transition stability and the presence of multiple post-switching equilibria. 
\end{itemize}

\subsection{The First-Stage Model}

\textit{Pre-treatment}: The number of transition episodes $T$ is unknown beforehand. Without assuming any pattern for each transition episode, 
i.e., each transition episode can be a complete one or one with only \ac{acv} adjustment or line switching, 
$T$ is restricted only by (\ref{eq-6-2-a4}). Thus,
the transition trajectory should be modeled with the maximal possible value of $T$ determined by (\ref{eq-6-2-a4}), i.e., 
\begin{equation} 
    T\U = \max\{  \Vert \bm{z}^0  - \bm{z}^T \Vert_1 + \lfloor \frac{T_{\max}}{T_{\rm is}} \rfloor, \Vert \bm{z}^0  - \bm{z}^T \Vert_1 + \lfloor \frac{T_{\max}}{T_{\rm ad}} \rfloor - 1 \}
\end{equation}

\textit{Objective}: The objective is to minimize the bumpiness metric with the jumping-state-based surrogate, i.e.,
\begin{equation}
    \min\nolimits_{(\bm{z}^i, \bm{a}_{\rm s}^i|i \in \llbracket 1, T\U \rrbracket )}   ~ H' = H_{\rm bd} + H_{\rm vl} + H'_{\rm ts}
\end{equation}
with $T$ substituted by $T\U$.

\textit{Constraints}: 
Power flow constraints for the steady and jumping states are written as follows: 
\begin{subequations} 
    \begin{align}
        & \tilde{f}_{\rm p}( \bm{x}_{\rm p}^{-, i} | \bm{z}^{i-1}, \bm{a}_{\rm s}^i | \bm{y}_{\rm p}^{-,i}  ) \leq \bm{0} \\
        & \tilde{f}_{\rm p}( \bm{x}_{\rm p}^{+, i} | \bm{z}^i, \bm{a}_{\rm s}^i | \bm{y}_{\rm p}^{+,i} ) \leq \bm{0} \\
        & \tilde{f}_{\rm p}( \bm{x}_{\rm p}^{*, i} | \bm{z}^i, \bm{a}_{\rm s}^i | \bm{y}_{\rm p}^{*, i} ) \leq \bm{0}
    \end{align}
\end{subequations}
for all $i \in \llbracket 1, T\U \rrbracket$. Unless otherwise specified, all constraints in the first-stage model are defined for all $i \in \llbracket 1, T\U \rrbracket$.

The \ac{acv}s are required to be adjusted to their initial values and topology to $\bm{z}^{T}$ at the end of the transition process. Thus
\begin{equation}\label{eq-6-2-terminal}
    \bm{a}_{\rm s}^{T\U} = \hat{\bm{a}}^0_{\rm s}, \bm{z}^{T\U} = \bm{z}^{T}
\end{equation}
where $\hat{\bm{a}}^0_{\rm s}$ is a given value of $\bm{a}_{\rm s}$. 
If the system with $(\bm{z}, \bm{a}_{\rm s}) = (\bm{z}^{T}, \bm{a}_{\rm s}^0)$ is operationally feasible with the linearized power flow, $\hat{\bm{a}}^0_{\rm s} = \bm{a}^0_{\rm s}$; and otherwise, $\hat{\bm{a}}^0_{\rm s}$ is the closest one to $\bm{a}^0_{\rm s}$ which makes the system operationally feasible with the linearized power flow. 

For the post-adjustment and post-switching steady states, the \ac{acv} values match their setpoints, $\bm{p}_{\rm go}$ are remains at its initial value, the induction motor components of load powers stay constant. Therefore, 
\begin{subequations}\label{eq-6-2-maintain}
    \begin{align}
        & \bm{a}_{\rm s}^{-, i} = \bm{a}_{\rm s}^i, \bm{p}_{\rm go}^{-, i} = \bm{p}_{\rm go}^{*, 0}, \bm{\epsilon}_{\rm p}^{-,i} = \bm{\epsilon}_{\rm p}^{*,0}, \bm{\epsilon}_{\rm q}^{-,i} = \bm{\epsilon}_{\rm q}^{*,0} \label{eq-6-2-maintain:1} \\
        & \bm{a}_{\rm s}^{*, i} = \bm{a}_{\rm s}^i, \bm{p}_{\rm go}^{*, i} = \bm{p}_{\rm go}^{*, 0}, \bm{\epsilon}_{\rm p}^{*,i} = \bm{\epsilon}_{\rm p}^{*,0}, \bm{\epsilon}_{\rm q}^{*,i} = \bm{\epsilon}_{\rm q}^{*,0}  \label{eq-6-2-maintain:2}
    \end{align}
\end{subequations}

For the jumping states, certain variables, denoted by the vector $\bm{x}_{\rm c}$, retain their pre-switching steady-state values, yielding: 
\begin{equation}\label{eq-6-2-inst-same}
    \bm{x}_{\rm c}^{+, i} = \bm{x}_{\rm c}^{-, i} 
\end{equation}
with $\bm{x}_{\rm c} = [\bm{e}\T, \bm{\delta}\T, \bm{e}_{\rm s}\T, \bm{\delta}_{\rm s}\T, \bm{v}_{\rm m}\T, \bm{\theta}_{\rm m}\T, \bm{e}_{\rm m}\T, \bm{\delta}_{\rm m}\T, \bm{b}_{\rm svc}\T, \bm{v}_{\rm svg}\T, \bm{\theta}_{\rm svg}\T, \bm{b}_{\rm b, tcsc}\T]\T$.

Operational constraints for the steady states include: 
\begin{subequations}\label{eq-6-2-oprconst}
    \begin{align}
        & \bm{q}_{\rm gs}\B  \leq \bm{q}_{\rm gs}^{-, i} \leq   \bm{q}_{\rm gs}\U, [(\bm{p}_{\rm gv}^{-,i})\PW + (\bm{q}_{\rm gv}^{-,i})\PW]\HR \leq \bm{s}_{\rm gv}\U, \langle * \rangle  \label{eq-6-2-oprconst:1} \\
        & \bm{q}_{\rm gv}^{-, i} \leq \bm{p}_{\rm gv}^{-, i} \circ \tan(\arccos( \bm{\phi}_{\rm gv}\B )), \bm{v}\B \leq \bm{v}^{-, i} \leq \bm{v}\U, \langle * \rangle \label{eq-6-2-oprconst:2}\\
        & - \bm{\theta}\U - (\bm{1} - \bm{z}^{i}) M \leq \bm{E}_{\mathcal{G}}\T \bm{\theta}^{-, i} \leq \bm{\theta}\U + (\bm{1} - \bm{z}^{i}) M, \langle * \rangle \label{eq-6-2-oprconst:3}\\
        & [ (\bm{p}^{-, i}_{\rm fb}){}\PW \!+\! (\bm{q}^{-, i}_{\rm fb}){}\PW ]\HR \leq \bm{s}_{\rm b}\U, 
          [ (\bm{p}^{-, i}_{\rm tb})\PW \!+\! (\bm{q}^{-, i}_{\rm tb})\PW ]\HR \leq \bm{s}_{\rm b}\U, \langle * \rangle \label{eq-6-2-oprconst:4} \\
        & \bm{b}_{\rm b, tcsc}\B \!\leq \bm{b}_{\rm b, tcsc}^{-,i} \!\leq \bm{b}_{\rm b, tcsc}\U, 
        \bm{b}_{\rm svc}\B \!\leq\! \bm{b}_{\rm svc}^{-,i} \!\leq\! \bm{b}_{\rm svc}\U, 
        \bm{q}_{\rm c}\B \!\leq\! \bm{q}_{\rm c}^{-,i} \!\leq\! \bm{q}_{\rm c}\U\!, \langle * \rangle  \label{eq-6-2-oprconst:5}
    \end{align}
\end{subequations}
where (\ref{eq-6-2-oprconst:1}) are output power constraints of generators; (\ref{eq-6-2-oprconst:2}) are constraints for power factors of \ac{cig}s and bus voltage magnitudes, respectively; (\ref{eq-6-2-oprconst:3}) and (\ref{eq-6-2-oprconst:4}) bound branch phase angle differences and branch powers, respectively; (\ref{eq-6-2-oprconst:5}) are constraints for equivalent susceptances of lines with \ac{tcsc}s, output susceptances of \ac{svc}s, and reactive power outputs of \ac{dvc}s, respectively; and $\langle * \rangle$ denotes the same constraints as the left-side but for the post-switching steady states.

The abrupt changes in generator electric power during the switching phase can induce rotor shaft impacts, which should be kept within safe levels \citep{4-1285}. The same problem exists for induction motor loads. Thus the following constraints are considered: 
\begin{subequations}\label{eq-6-2-rsi}
    \begin{align}
        & (\bm{p}_{\rm gs}^{+, i} - \bm{p}_{\rm gs}^{-, i})\A \leq  \bm{\varepsilon}_{\rm gs} \circ \bm{p}_{\rm gs}^{\scriptscriptstyle\rm N} \\
        & \bm{p}_{{\rm{d_0}}} \circ ( \bm{\epsilon}_{{\rm p}}^{-, i} - \bm{\epsilon}_{{\rm p}}^{+, i} )\A \leq 
        \bm{\varepsilon}_{\rm im} \circ \bm{p}_{\rm im}^{\scriptscriptstyle\rm N}
    \end{align}
\end{subequations}
where $\bm{\varepsilon}_{\rm gs}$ and $\bm{\varepsilon}_{\rm gs}$ denote the proportions of rated power associated with critical rotor shaft impact levels for \ac{sg}s and induction motors, respectively; $\bm{p}_{\rm gs}^{\scriptscriptstyle\rm N}$ and $\bm{p}_{\rm im}^{\scriptscriptstyle\rm N}$ are their corresponding rated power vectors.

The setpoint of each \ac{acv} in $\bm{a}_{\rm s}$ is typically bounded, and the total adjustment amount of \ac{acv}s in each transition episode is also bounded to ensure fast and seamless \ac{acv} adjustment. Thus
\begin{subequations}\label{eq-6-2-adj-bound}
    \begin{align}
        & \bm{a}_{\rm s}\B \leq \bm{a}_{\rm s}^i \leq \bm{a}_{\rm s}\U \\
        & \Vert \bm{a}_{\rm s}^i - \bm{a}_{\rm s}^{i-1} \Vert_{2, \bm{w}_{\rm as} } \leq  \sigma_{\rm as} 
    \end{align}
\end{subequations}
where $\bm{w}_{\rm as}$ is the weight vector, and $\sigma_{\rm as}$ is the maximal allowed total adjustment amount of $\bm{a}_{\rm s}$ in one transition episode.

Network topology connectedness during the transition process is ensured by the following constraints: 
\begin{subequations}\label{eq-6-2-connectedness}  
     \!\!\!\!\!\begin{align} 
        & M (\bm{z}^i  - \bm{1} ) \leq \bm{E}_{\mathcal{G}}^T \bm{o}^i - \bm{\rho}^i  \leq M (\bm{1} - \bm{z}^i )  ~~ i \in \llbracket 1, T\U - 1 \rrbracket \\  
        & - M \bm{z}^i \!\leq\! \bm{\rho}^i \!\leq\! M \bm{z}^i   ~~ i \!\in\! \llbracket 1, T\U \!\!-\! 1 \rrbracket \\
        & \bm{E}_{\mathcal{G}} \bm{\rho}^i \!=\! \bm{c}_{\rm nc}  ~~ i \!\in\! \llbracket 1, T\U \!\!-\! 1 \rrbracket \\
        & \bm{o}^i \!\in\! \mathbb{R}^{n_{\rm n}}, \bm{\rho}^i \!\in\! \mathbb{R}^{n_{\rm e}} ~~ i \!\in\! \llbracket 1, T\U \!\!-\! 1 \rrbracket
    \end{align} 
\end{subequations}
with $\bm{c}_{\rm nc}$ being an $n_{\rm n}$-dimensional constant uniquely-balanced vector as defined in Definition \ref{cond-0-1-1}. 

Furthermore, at most one line can be switched per transition episode by assumption A.2, and certain branches, denoted as $\mathcal{E}_{\rm np}$, do not participate in the auxiliary control for topology transition. Thus, 
\begin{subequations}
    \begin{align}
        & \Vert \bm{z}^i - \bm{z}^{i-1} \Vert_1 \leq 1\\
        & \bm{E}_{\rm np}  [ ( \bm{z}^0 - \bm{z}^T )\A  -  {\sum\nolimits_{i=1}^{T\U}} (\bm{z}^i - \bm{z}^{i-1})\A ] = \bm{0} 
    \end{align}
\end{subequations}
where $\bm{E}_{\rm np}$ is the adjacent matrix between $\mathcal{E}_{\rm np}$ and $\mathcal{E}$.

To formulate constraint (\ref{eq-6-2-a4}), introduce $\bm{\zeta} = [\zeta_{i}]$ and $\tilde{\bm{\zeta}} = [\tilde{\zeta_{i}}]$, 
with $i \in \llbracket 1, {T\U} \rrbracket $, $\zeta_{i} \in \mathbb{B}$, and $\tilde{\zeta_{i}} \in \mathbb{R}$, to indicate the type of each transition episode. Let $\bm{\zeta}$ and $\tilde{\bm{\zeta}}$ satisfy the following constraints: 
\begin{subequations}
    \begin{align}
        & \Vert \bm{a}^i_{\rm s} -  \bm{a}^{i-1}_{\rm s} \Vert_{2, \bm{w}_{\rm as} }  \leq \sigma_{\rm as} \zeta_i \\
        & \Vert \bm{z}^{i} - \bm{z}^{i-1} \Vert_1 = \tilde{\zeta}_i
    \end{align}
\end{subequations}
with $\delta_{\rm pen} \bm{1}\T \bm{\zeta}$ being penalized in the objective function. Here $\delta_{\rm pen} >0 $ is a small penalty coefficient to prioritize minimizing $H'$ over $\bm{1}\T \bm{\zeta}$. Then, $\zeta_{i}= 1$ indicates \ac{acv} adjustment, and $\tilde{\zeta_{i}}= 1$ indicates line switching in the $i$-th transition episode. Accordingly, (\ref{eq-6-2-a4}) is equivalent to: 
\begin{equation}
    T_{\rm ad}  \bm{1}\T \bm{\zeta} + T_{\rm is} ( \bm{1}\T \tilde{\bm{\zeta}} - \Vert \bm{z}^0  - \bm{z}^T \Vert_1 )   \leq T_{\max}
\end{equation}

Lastly, the structure of the transition episode sequence is considered. 
Introduce variables $\bm{\zeta}' = [\zeta'_{i}]$ with $i \in \llbracket 1, T\U \rrbracket$ and $\zeta'_{i} \in \mathbb{R}$ satisfying
\begin{equation}
    \bm{\zeta}' \geq \bm{\zeta}, \bm{\zeta}' \geq \tilde{\bm{\zeta}}, \bm{\zeta}' \geq \bm{\zeta} + \tilde{\bm{\zeta}}, \bm{\zeta}' \geq  \bm{1} 
\end{equation}
such that $\zeta'_i = 0 \Leftrightarrow$ the $i$-th transition episode is with only fictitious executions, and $\zeta'_i = 1 \Leftrightarrow$ otherwise. 
Transition episodes with only fictitious actions should occur only at the end of the sequence. This indicates that $\forall i \in \llbracket 1, T\U - 2 \rrbracket$, $(\zeta'_i, \zeta'_{i+1}, \zeta'_{i+2}) \notin \{(0,1,1), (0,1,0), (0,0,1), (1,0,1) \}$, which is equivalent to: 
\begin{equation}
    \begin{bmatrix}
        1 & 1 & -1 \\
        1 & -1 & 1\\
        1 & -1 & -1\\
        -1 & 1 & -1
    \end{bmatrix}
    \begin{bmatrix}
        \zeta'_i \\
        \zeta'_{i+1} \\
        \zeta'_{i+2}
    \end{bmatrix}
    \geq 
    \begin{bmatrix}
        0 \\
        0 \\
        -1 \\
        -1
    \end{bmatrix}
    ~~\forall i \in \llbracket 1, T\U - 2 \rrbracket
\end{equation}
Moreover, adjacent episodes with \ac{acv} adjustment followed by line switching are disallowed, as they should be combined into a single transition episode. This indicates that $\forall i \in \llbracket 1, T\U - 1 \rrbracket, (\zeta_i, \tilde{\zeta}_i, \zeta_{i+1}, \tilde{\zeta}_{i+1}) \neq (1,0,0,1)$, which can be ensured by: 
\begin{equation}
    \zeta_i  + \tilde{\zeta}_{i+1} - (\tilde{\zeta}_i  + \zeta_{i+1}) \leq 1 ~~ \forall i \in \llbracket 1, T\U - 1 \rrbracket
\end{equation}

\textit{Post-treatment}: 
For the any $j$-th optimal solution given by the above optimization model, removing all its invalid transition episodes whose associated $\zeta'_i$ equals to 0 yields the associated solution of the first-stage model, i.e., $(\bm{z}^{i,j}, \tilde{\bm{a}}_{\rm s}^{i, j} | i \in \llbracket 1, T_j  \rrbracket)$.

\subsection{AC-Feasibility Recovery Model}

\textit{Pre-treatment}: 
As shown in Fig. \ref{fig-6-2-8}, the AC-feasibility of $(\bm{z}^{i,j}, \tilde{\bm{a}}_{\rm s}^{i, j} | i \in \llbracket 1, T_j  \rrbracket)$ is recovered by altering the values of \ac{acv}s in $\bm{a}_{\rm s}$ for each execution of \ac{acv} adjustment. The structure of the transition process and topology trajectory remains unchanged, except in two cases. The first case is that if the first episode involves only line switching, a potential \ac{acv} adjustment is added. The second case is related to the AC-feasibility recovery model, detailed in the later post-treatment. 

\begin{figure}[b!] 
	\centering
	\includegraphics[width=0.8\linewidth]{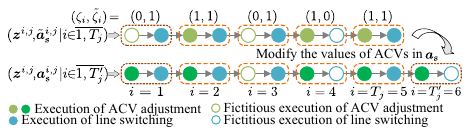} 
	\caption{Illustration of the pre-treatment and post-treatment.}  
	\label{fig-6-2-8}  
\end{figure}
 
Based on the values of $\bm{\zeta}$ and $\tilde{\bm{\zeta}}$ for the $j$-th solution, the set $\llbracket 1, T_j  \rrbracket$ is partitioned into three subsets: $\mathbb{T}_{\rm as}^j$ for transition episodes with both \ac{acv} adjustment and line switching, $\mathbb{T}_{\rm a}^j$ for those with only \ac{acv} adjustment, and $\mathbb{T}_{\rm s}^j$ for those with only line switching. For example, for the $j$-th solution in Fig. \ref{fig-6-2-8}, we have $\mathbb{T}_{\rm as}^j = \{1,2,5\}$, $\mathbb{T}_{\rm a}^j = \{4\}$, and $\mathbb{T}_{\rm s}^j = \{3\}$.

\textit{Formulation}:
The AC-feasibility recovery model is formulated as follows:  
\begin{subequations}\label{eq-6-2-recovery-model}
    \begin{align}
        \min_{ (\bm{a}_{\rm s}^i | i \in \llbracket 1, T_j  \rrbracket)} & ~~  \sum\limits_{i\in \mathbb{T}_{\rm as}^j \cup \mathbb{T}_{\rm a}^j}  w_{{\rm ac}, i}  \Vert  \bm{a}_{\rm s}^i - \tilde{\bm{a}}_{\rm s}^{i, j}  \Vert_{2} 
        +  w'_{\rm ac} \Vert  \bm{a}_{\rm s}^{T_j} - {\bm{a}}_{\rm s}^0  \Vert_{2}   \\
        \text{s.t.} ~~~&  {f}_{\rm p}( \bm{x}_{\rm p}^{-, i} | \bm{z}^{i-1, j}, \bm{a}_{\rm s}^i | \bm{y}_{\rm p}^{-,i}  ) \leq \bm{0}  ~~ \forall i \in \mathbb{T}_{\rm as}^j \cup \mathbb{T}_{\rm a}^j   \\
        & {f}_{\rm p}( \bm{x}_{\rm p}^{+, i} | \bm{z}^{i, j}, \bm{a}_{\rm s}^i | \bm{y}_{\rm p}^{+,i} ) \leq \bm{0} ~~ \forall i \in \mathbb{T}_{\rm as}^j \cup \mathbb{T}_{\rm s}^j  \\
        & {f}_{\rm p}( \bm{x}_{\rm p}^{*, i} | \bm{z}^{i, j}, \bm{a}_{\rm s}^i | \bm{y}_{\rm p}^{*, i} ) \leq \bm{0} ~~\forall  i \in \mathbb{T}_{\rm as}^j \cup \mathbb{T}_{\rm s}^j  \\
        & \{ \text{(\ref{eq-6-2-maintain:1})}, \text{(\ref{eq-6-2-oprconst}) excluding~} \langle * \rangle, \text{(\ref{eq-6-2-adj-bound})} \} ~~ \forall i \!\in\! \mathbb{T}_{\rm as}^j \cup \mathbb{T}_{\rm a}^j    \\
        & \{\text{(\ref{eq-6-2-maintain:2})}, \langle * \rangle ~\text{in (\ref{eq-6-2-oprconst})}\}  ~~ \forall i \in \mathbb{T}_{\rm as}^j \cup \mathbb{T}_{\rm s}^j  \\
        & \bm{a}_{\rm s}^i = \bm{a}_{\rm s}^{i-1}  ~~ \forall i \in \mathbb{T}_{\rm s}^j \\
        & \{\text{(\ref{eq-6-2-inst-same}), (\ref{eq-6-2-rsi})} \} ~~ \forall i \in \mathbb{T}_{\rm as}^j \\
        &\!\! \begin{bmatrix}
            \bm{x}_{\rm c}^{+, i} = \bm{x}_{\rm c}^{-, i-1} \\
             (\bm{p}_{\rm gs}^{+, i} - \bm{p}_{\rm gs}^{-, i-1})\A \leq  \bm{\varepsilon}_{\rm gs} \circ   \bm{p}_{\rm gs}^{\scriptscriptstyle\rm N} \\
             \bm{p}_{{\rm{d_0}}} \circ ( \bm{\epsilon}_{{\rm p}}^{-, i-1} - \bm{\epsilon}_{{\rm p}}^{+, i} )\A \leq \bm{\varepsilon}_{\rm im} \circ \bm{p}_{\rm im}^{\scriptscriptstyle\rm N}  
        \end{bmatrix} 
        \begin{aligned}
            & \forall (i, i-1) \in \\[-1mm] & \mathbb{T}_{\rm s}^j \times \mathbb{T}_{\rm a}^j
        \end{aligned}
          \\
        &\!\! \begin{bmatrix}
            \bm{x}_{\rm c}^{+, i} = \bm{x}_{\rm c}^{*, i-1} \\
            (\bm{p}_{\rm gs}^{+, i} - \bm{p}_{\rm gs}^{*, i-1})\A \leq  \bm{\varepsilon}_{\rm gs} \circ  \bm{p}_{\rm gs}^{\scriptscriptstyle\rm N} \\
             \bm{p}_{{\rm{d_0}}} \!\circ\! ( \bm{\epsilon}_{{\rm p}}^{*, i-1} - \bm{\epsilon}_{{\rm p}}^{+, i} )\A \leq \bm{\varepsilon}_{\rm im} \!\circ\! \bm{p}_{\rm im}^{\scriptscriptstyle\rm N}  
        \end{bmatrix}  
        \begin{aligned}
            & \forall (i, i-1) \in \\[-1mm] & \mathbb{T}_{\rm s}^j  \!\times\! (\mathbb{T}_{\rm as}^j \!\cup\! \mathbb{T}_{\rm s}^j)
        \end{aligned}
    \end{align}
\end{subequations}
where if $i=1$ and its associated \ac{acv} adjustment is added in the pre-treatment, $w_{{\rm ac}, i} \gg 1$, and otherwise, $w_{{\rm ac}, i} = 1$; $w'_{\rm ac} \gg 1$.

\textit{Post-treatment}: 
Let $(\bm{a}_{\rm s}^{i, j} | i \in \llbracket 1, T_j  \rrbracket)$ be the optimal solution of (\ref{eq-6-2-recovery-model}). When $\bm{a}_{\rm s}^{T_j, j} \neq \bm{a}_{\rm s}^0$, as illustrated in Fig. \ref{fig-6-2-8}, we add a transition episode with only execution of \ac{acv} adjustment following the $T_j$-th transition episode, which adjusts $\bm{a}_{\rm s}$ from $\bm{a}_{\rm s}^{T_j, j}$ to $\bm{a}_{\rm s}^{T_j + 1, j} = \bm{a}_{\rm s}^{0}$. Let $\bm{z}^{T_j + 1,j} = \bm{z}^{T_j,j}$, and $T_j' = T_j + 1$ if $\bm{a}_{\rm s}^{T_j, j} \neq \bm{a}_{\rm s}^0$ and $T_j' = T_j$ otherwise. Then the AC-feasible solution for $(\bm{z}^{i,j}, \tilde{\bm{a}}_{\rm s}^{i, j} | i \in \llbracket 1, T_j  \rrbracket)$ is $(\bm{z}^{i,j}, \bm{a}_{\rm s}^{i, j} | i \in \llbracket 1, T_j' \rrbracket)$.

\subsection{The Second-Stage Model}

For each $i \in \mathbb{T}_{\rm as}^j \cup \mathbb{T}_{\rm s}^j$, denote the values of $\bm{x}_{\rm p}^{+, i}$ and $\bm{y}_{\rm p}^{+, i}$ associated with the AC-feasible solution $(\bm{z}^{i,j}, \bm{a}_{\rm s}^{i, j} | i \in \llbracket 1, T_j' \rrbracket)$ as $\bm{x}_{\rm p}^{+, i, j}$ and $\bm{y}_{\rm p}^{+, i, j}$, respectively. They are by-products of solving (\ref{eq-6-2-recovery-model}). Then the jumping state associated the $i$-th transition episode and $j$-th solution, denoted as $\bm{x}^{+, i, j}$, can be obtained by solving
\begin{equation}
    \begin{aligned}
        & \bm{0} = f_{\rm g}( \bm{x}^{+, i, j}, \bm{\xi}^{+, i, j}, \bm{z}^{i, j}, \cdot )  \\
        & \Xi(\bm{x}^{+, i, j}, \bm{\xi}^{+, i, j}) =  \Xi_{\rm p} (\bm{x}_{\rm p}^{+, i, j}, \bm{y}_{\rm p}^{+, i, j}) 
    \end{aligned}
\end{equation}
where $\Xi(\bm{x}^{+}, \bm{\xi}^{+})$ and $\Xi_{\rm p}(\bm{x}_{\rm p}^{+}, \bm{y}_{\rm p}^{+})$ return the shared variables among $(\bm{x}^{+}, \bm{\xi}^{+})$ and $(\bm{x}_{\rm p}^{+}, \bm{y}_{\rm p}^{+})$; $f_{\rm g}(\cdot)$ is formed by removing components of $f(\cdot)$ and $g(\cdot)$ in (\ref{eq-6-2-13}) which only depend on $\Xi(\bm{x}^{+}, \bm{\xi}^{+})$. Analogously, the pre-switching steady state, denoted as $\bm{x}^{*, i, j}$ can be yielded.

Then, the second-stage model is formulated as follows:
\begin{subequations}\label{eq-6-2-second-stage-model}
    \begin{align}
        \min_{(\bm{a}_{\rm t}^i | i \in \llbracket 1, T_j' \rrbracket)} &  \sum\limits_{i \in \mathbb{T}_{\rm as}^j \cup \mathbb{T}_{\rm s}^j }  \Tr( ( \bm{x}^{+, i, j} - \bm{x}^{*, i, j} )\D \bm{Q}_i ( \bm{x}^{+, i, j} - \bm{x}^{*, i, j} )\D) 
        \label{eq-6-2-second-stage-model:1} \\[-1mm]
         \text{s.t.} ~&  
         \begin{bmatrix}
             \bm{A}(\bm{z}^{i, j}, [(\bm{a}_{\rm s}^{i, j}){}\T , (\bm{a}_{\rm t}^i){}\T]\T, \bm{x}^{*, i, j}, \bm{\xi}^{*, i, j})\T \bm{Q}_i  \\
            + \bm{Q}_i \bm{A} = - \bm{C}\T \bm{C} 
         \end{bmatrix} 
         \begin{aligned}
            &\forall i \in \\[-1mm] &\mathbb{T}_{\rm as}^j \cup \mathbb{T}_{\rm s}^j
         \end{aligned}
         \label{eq-6-2-second-stage-model:2} \\
         & \bm{a}_{\rm t}\B \leq \bm{a}_{\rm t}^i \leq \bm{a}_{\rm t}\U ~~ \forall i \in \mathbb{T}_{\rm as}^j \cup \mathbb{T}_{\rm s}^j \label{eq-6-2-second-stage-model:3} \\
         & \bm{a}_{\rm t}^{i} = \bm{a}_{\rm t}^{i-1}  ~ \forall i \in \mathbb{T}_{\rm a}^j \label{eq-6-2-second-stage-model:3-a} \\
         & \bm{Q}_{i} \succ 0, \bm{Q}_{i} \in \mathbb{R}^{n_{\rm x} \times n_{\rm x}} ~~ \forall i \in \mathbb{T}_{\rm as}^j \cup \mathbb{T}_{\rm s}^j  \label{eq-6-2-second-stage-model:4}
    \end{align}
\end{subequations}
where the objective function is $H_{\rm ts}''$ with zero terms removed; 
(\ref{eq-6-2-second-stage-model:2}) are the Lyapunov equations as (\ref{eq-6-2-lya}); 
(\ref{eq-6-2-second-stage-model:3}) are the bound constraints for $\bm{a}_{\rm t}$; (\ref{eq-6-2-second-stage-model:3-a}) is the equality constraint for $\bm{a}_{\rm t}$ in transition episodes without line switching; and in (\ref{eq-6-2-second-stage-model:4}), $\bm{Q}_i \succ 0$ ensures the asymptotically stability of $G$. Solving (\ref{eq-6-2-second-stage-model}) yields $(\bm{a}_{\rm t}^{i, j} | i \in \llbracket 1, T_j' \rrbracket)$.

\subsection{Simulation-Based Model}

The simulation-based model produces the final optimal solution, indexed by $j^*$, as follows: 
\begin{subequations}
    \begin{align}
        j^* = \argmin\nolimits_{j \in \llbracket 1, n_{\rm s} \rrbracket } ~&  H((\bm{a}^{i, j}, \bm{z}^{i, j} | i \in \llbracket 1, T_j' \rrbracket))  \\[-1mm]
        \text{s.t.}  ~& \text{The transition process is stable} \\
        & \{\text{(\ref{eq-6-2-oprconst}), (\ref{eq-6-2-rsi})} \} ~~ \forall i \in \llbracket 1, T_j' \rrbracket
        \end{align}
\end{subequations}
where the objective function and constraints are all evaluated by high-fidelity time-domain simulations.

\section{Numerical Example}\label{sec-7-5}

The methodology of network topology transition is demonstrated using the modified IEEE 9-bus system shown in Fig. \ref{fig-6-2-9}. The system includes 1 \ac{sg}, 2 \ac{cig}s both with \ac{ess}s, an \ac{svc} at bus 7, and a \ac{tcsc} on line 5-9. The initial and target topologies correspond to the left and right networks in Fig. \ref{fig-6-2-9}. For clarity, only two performance outputs are considered: voltage magnitudes at bus 2 ($v_5$) and modulation voltage angles of G2 relative to G1's rotor angle ($\tilde{\theta}_{\rm m, 2}$). 

\begin{figure}[h]
	\centering
	\includegraphics[width=1\linewidth]{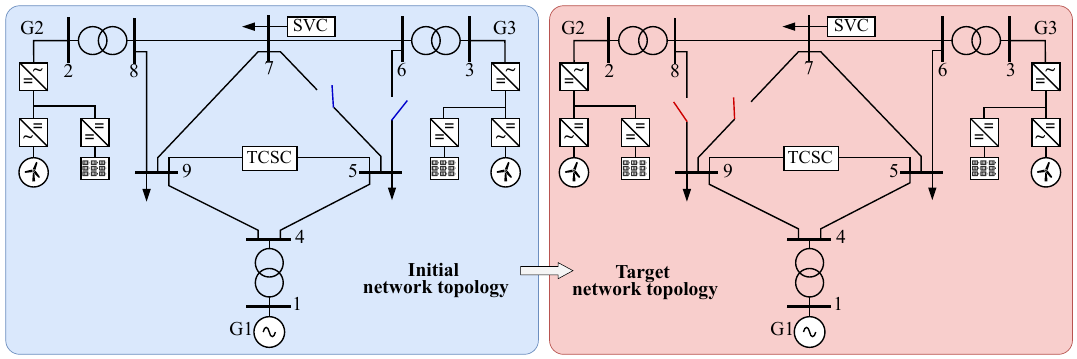} 
	\caption{Diagram of the modified IEEE 9-bus system.
    }  
	\label{fig-6-2-9}
\end{figure}

\begin{table}[h]
    \caption{The optimal transition scheme S1}\label{tab-6-2-2}
    \centering 
    \small{
    \begin{tabular*}{0.887\linewidth}{|cccccccc|}
    \hline\hline   
    TE & $v_1$  & $v_2$  & $v_3$  & $p_{\rm g, 2}$ &  $p_{\rm g, 3}$ & $v_7$  & $b_{\rm b, 5\text{-}9}$ \hspace{19pt}\\ \hline
    0  & 1.0400 & 1.0250 & 1.0250 &  163.00        &  85.00          & 1.0250 & -9.050  \\
    1  & 1.0782 & 1.0231 & 0.9575 &  154.52        &  93.48          & 1.0156 & -39.65  \\
    2  & 1.0790 & 1.0236 & 0.9614 &  158.97        &  89.03          & 1.0119 & -25.21  \\
    3  & 1.0644 & 1.0171 & 0.9636 &  153.28        &  94.72          & 1.0321 & -39.65  \\
    4  & 1.0434 & 1.0090 & 1.0175 &  149.30        &  98.70          & 1.0367 & -6.511  \\
    5  & 1.0400 & 1.0250 & 1.0250 &  163.00        &  85.00          & 1.0250 & -9.050     
    \end{tabular*}
    \\
    \hspace{2pt}
    \begin{tabular*}{0.887\linewidth}{|ccccccc|}
        \hline\hline 
    TE & $m_{\rm cg, 2}$ & $m_{\rm cg, 3}$ & $d_{\rm cg, 2}$ & $d_{\rm cg, 3}$ &  Close line & Open line \hspace{19pt}\\ \hline
    0  &  3.0000         &  4.0000         & 10.000          &  20.000         & ---  & --- \\
    1  &  0.4244         &  0.6056         & 7.2692          &  30.000         & ---  & 8-9  \\
    2  &  0.3758         &  0.7332         & 9.1833          &  24.638         & 5-6  & ---  \\
    3  &  0.3499         &  0.3741         & 6.2941          &  12.764         & 5-7  & ---  \\
    4  &  1.1746         &  0.8563         & 13.539          &  30.000         & ---  & 7-9  \\
    5  &  3.0000         &  4.0000         & 10.000          &  20.000         & ---  & ---  \\ 
    \hline\hline
    \end{tabular*}
    }
\end{table}

\begin{table}[h]
    \caption{Transition schemes S2, S3 and S4}\label{tab-6-2-3}
    \centering
    \setlength{\tabcolsep}{4pt}
    \small{
        \resizebox{0.98\linewidth}{!}{
    \begin{tabular}{|c|cc|cc|cc|}
    \hline\hline   
    \multirow{2}{*}{TE} & \multicolumn{2}{c|}{S2}  & \multicolumn{2}{c|}{S3} & \multicolumn{2}{c|}{S4} \\ 
    \cline{2-7}
      &  Close line & Open line & Close line & Open line & Close line & Open line   \\ \hline
    1  &  --- & 8-9 &  5-7 & ---  &  --- & 7-9   \\ 
    2  & 5-6  & --- &  --- & 7-9  &  5-6 & ---   \\
    3  & 5-7  & --- &  --- & 8-9  &  --- & 8-9   \\
    4  & ---  & 7-9 &  5-6 & ---  &  5-7 & ---   \\  \hline\hline
    \end{tabular} }
    }
\end{table}

\begin{table}[h]
    \caption{Bumpiness metric values for S1 to S4}\label{tab-6-2-4}
    \centering
    \small{
    \begin{tabular}{|ccccc|}
    \hline\hline   
    Scheme &   $H$  & $H_{\rm bd}$ & $H_{\rm vl}$ & $H_{\rm ts}$  \\ \hline
    S1  & 0.55 & 0.000895 &  0.148 & 0.397       \\ 
    S2  & 1.11 & 0.0299   &  0.174 & 0.908    \\
    S3  & 1.14 & 0.00959  &  0.160 & 0.973     \\ 
    S4  & 10.6 & 0.3977   &  0.653 & 9.51    \\  \hline\hline
    \end{tabular}
    }
\end{table}

\begin{figure}[h!]
	\centering
	\includegraphics[width=0.75\linewidth]{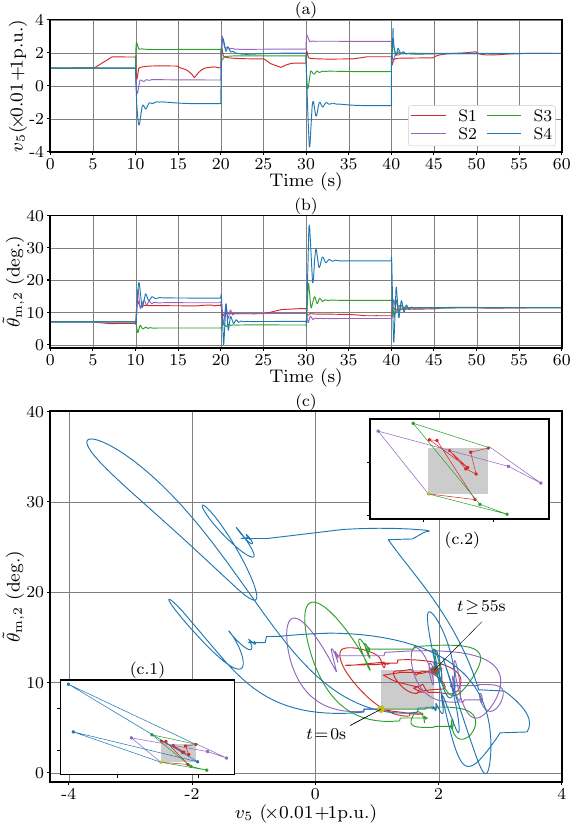} 
	\caption{Time-domain trajectories and steady-state components with S1 to S4.} 
	\label{fig-6-2-r2} 
\end{figure}

The transition scheme yielded by the introduced network topology transition methodology is shown in Table \ref{tab-6-2-2}. This transition scheme contains 5 transition episodes among which the first four are complete transition episodes containing both \ac{acv} adjustment and line switching and the 5th  transition episode contains only \ac{acv} adjustment. In the 1st  transition episode, all \ac{acv}s are together adjusted from the initial values to the target values smoothly. Taking \ac{acv} $v_1$ for example, it is adjusted from the initial value 1.04 to 1.0782. After the system reaches a SS, line 8-9 is switched off. When the system reaches the post-switching SS, the 2ed transition episode is performed similarly. In the last transition episode, adjusting all \ac{acv}s to their initial values completes the topology transition. For example, $v_1$ is adjusted from 1.0434 to its initial value 1.04 in the 5th transition episode.

To gain more insight into the transition process, the following four transition schemes are examined:
\begin{itemize}[itemsep=1pt, parsep=0pt, topsep=2pt]
    \item S1: the transition scheme yielded by the introduced methodology;
    \item S2: the feasible one without \ac{acv} adjustment and minimizing $H$;
    \item S3: the one given by the topology transition model which ignores transients during transition processes and \ac{acv} adjustment;
    \item S4: the feasible one without \ac{acv} adjustment and maximizing $H$.  
\end{itemize}
Table \ref{tab-6-2-3} shows the transition schemes S2 to S4, and Table \ref{tab-6-2-4} gives the values of $H$, $H_{\rm bd}$, $H_{\rm vl}$ and $H_{\rm ts}$ for them. The following observations and conclusions are drawn from Table \ref{tab-6-2-4}: 

\textit{(1)} The values of the bumpiness metric and its components of S4 are much larger than the other schemes. This indicates that ensuring only operational feasibility can lead to poor bumpiness performance, highlighting the need for the introduced topology transition mechanism. 

\textit{(2)} The values of $H_{\rm bd}$ and $H_{\rm vl}$ of S3 are smaller than that of S2, while the values of $H_{\rm ts}$ and $H$ are the opposite. Hence, neglecting transients during topology transitions can lead to suboptimal transient and overall bumpiness performance.

\textit{(3)} Transition scheme S1 outperforms all other schemes in terms of bumpiness, with over 90\% improvement in boundedness and 50\% in transient bumpiness. Thus, optimally adjusting \ac{acv}s during topology transitions ensures a significantly bumpless topology transition.

Fig. \ref{fig-6-2-r2} shows the trajectories of the performance outputs during the transition process with the 4 schemes, and the associated steady-state components. 
According to Fig. \ref{fig-6-2-r2} and Table \ref{tab-6-2-2}, with the transition scheme S1, in the 1st transition episode, \ac{acv}s are adjusted starting at $t=5$s and reach the target values by $t=7$s. The performance outputs $v_5$ and $\tilde{\theta}_{\rm m, 2}$ are changed during this period and reach steady states before $t=10$s. At $t=10$s, line 8-9 is switched off, inducing fast changes of $v_5$ and $\tilde{\theta}_{\rm m, 2}$. The next transition episode is performed similarly at $t = 15$s where $v_5$ and $\tilde{\theta}_{\rm m, 2}$ reach steady states. The transition process concludes at about $t = 55$s, when the system reaches steady state after \ac{acv}s are reset to their initial values in the 5th episode.

In addition, in Fig. \ref{fig-6-2-r2}-(c), the trajectory of $v_5$-$\tilde{\theta}_{\rm m, 2}$ and its steady-state component for S1 are better bounded by the optimal region compared to the other schemes. For the transient-state components of $v_5$ or $\tilde{\theta}_{\rm m, 2}$, Fig. \ref{fig-6-2-r2}-(a) and Fig. \ref{fig-6-2-r2}-(b) show clear post-switching oscillations for S2 to S4. In contrast, S1 results in minimal oscillations and the smallest overshoots in the transient-state components. 
\graphicspath{{chapter_7/Figs/}}
\chapter{Transient Topology Control}\label{chapter-7}

This chapter introduces the transient topology control of power grids. Unlike steady-state topology control, which changes the network topology under steady states, transient topology control alters the network topology during the transient period following disturbances, typically with the objective of stabilizing the system. 
In contrast to the bumpless topology transition, which aims to mitigate transients caused by line switching actions for realizing topology transitions, transient topology control actively utilizes line switching as a control action to enhance transient stability.

\section{Introduction}

The primary strategy of transient topology control for transmission networks is Intentional Controlled Islanding (\ac{ici}), an adaptive wide-area protection scheme employed as the last resort to prevent system collapse. The core idea of \ac{ici} is to partition the existing network into predefined islands, ensuring that the initial disturbance, which could otherwise cause system collapse, remains contained within a single island \citep{4-1881}.

The \ac{ici} problem consists of two key subproblems: determining when to island and where to island. The first stage identifies the ``point of no return'' beyond which only \ac{ici} can prevent grid collapse. The second stage focuses on partitioning the network to ensure stable islanded operation while meeting all restoration constraints. This section introduces the fundamental method for addressing the second subproblem \citep{4-1615}.

\subsection{Slow-Coherency Grouping Method}

One requirement of \ac{ici} is to ensure the generators in each island synchronize after the network is separated. The most common approach to meet this requirement is to use the slow coherency method for grouping all generators \citep{4-1882}. 

Consider an $n_{\rm g}$-machine transmission network, which is reduced by Kron reduction, preserving only the generator internal buses. 
Some assumptions are adopted, including the constant mechanical power input, the classical generator model, no damping and no phase shifters. 
As the coherent groups are independent of the detail level in modeling the generators, it is usually reasonable to neglect the damping in the study of generator coherency. 
Then the linearized state-space dynamic model of the network is given as follows:
\begin{subequations}\label{eq-b1-7-1}
    \begin{align}
        \ddot{\Delta \bm{\delta}} = \bm{A} \Delta \bm{\delta} \\
        \bm{A} = \bm{M}^{-1} \bm{K}
    \end{align}
\end{subequations}
where $\Delta \bm{\delta} = [\Delta \delta_1, ..., \Delta \delta_{n_{\rm g}}]$ is the state vector with $\Delta \delta_i$ denoting the rotor angle deviation from an equilibrium point; the Jacobian matrix $\bm{K}$ represents electrical coupling; $\bm{M} = \diag(2M_1/\omega_0, ..., 2M_{}/\omega_0)$ is the inertia matrix, with $M_i$ being the inertia of generator $i$ and $\omega_0$ being the base frequency.  

The slow-coherency grouping method using matrix $\bm{A}$ is summarized as Algorithm \ref{alg-b1-7-1}. The basic idea is that the eigenvectors of the matrix $\bm{A}$ provide insights into the mode shapes of electromechanical modes. Generators that exhibit similar entries in the eigenvector corresponding to a particular mode are considered coherent with respect to that mode. Thus, for a group of generators to be classified as slow coherent, their mode shapes must align closely in relation to the low-frequency interarea modes \citep{4-1905}.

\begin{algorithm}[t]
    \DontPrintSemicolon  
    \SetAlgoLined  
  
    \KwIn{Matrix $\bm{A}$}
    \KwOut{Generator grouping result $\mathcal{M}_1, \mathcal{M}_2, ..., \mathcal{M}_{k}$}
    
    Initialization: $k \gets 1$\;

    $\{\lambda_1, \lambda_2, ..., \lambda_{n_{\rm g}} \vert~ |\lambda_i| \leq |\lambda_{i+1}|  \}$ $\gets$ Eigenvalues of matrix $\bm{A}$ \;
    $\bm{v}_{1}, \bm{v}_2, ..., \bm{v}_{k}$ $\gets$ Eigenvectors for $\lambda_1, \lambda_2, ..., \lambda_{n_{\rm g}}$ \;

    \Repeat{$| \cup_{i=1}^k \mathcal{M}_i | = n_{\rm g}$}{

    $k \gets k + 1$\;

    \For{$\alpha = 1$ \KwTo $k$}{
        $\mathcal{M}_{\alpha} \gets \emptyset$ \;
    }

    $\bm{V}_s \gets [\bm{v}_{1}, \bm{v}_2, ..., \bm{v}_{k}]$ \;

    $\bm{V}_s \gets$ Gaussian elimination with complete pivoting to $\bm{V}_s$ \;

    \For{$\alpha = 1$ \KwTo $k$}{
        $\mathcal{M}_{\alpha} \gets$ Add the generator for the $\alpha$-th pivot to $\mathcal{M}_{\alpha}$ \;
    }

    $\begin{bmatrix} \bm{V}_{s1} \\[-1mm] \bm{V}_{s2} \end{bmatrix}$ $\gets$ $\begin{aligned}
        & \text{Permute the rows of} ~\bm{V}_s~ \text{such that} ~\bm{V}_{s1}~ \text{contains the} \\[-1.5mm]
        & \text{rows associated with the generators in} ~\cup_{i=1}^k \mathcal{M}_i
    \end{aligned}$

    $\bm{L} \!\in\! \mathbb{R}^{(n_{\rm g} - k) \times k}$ $\!\gets\!$ Solve $\bm{V}_{s1}\T \bm{L}\T \!\!=\! \bm{V}_{s2}\T$  \;

    \For{$i = 1$ \KwTo $n_{\rm g} - k$}{
    $\alpha \gets \argmax_{j = 1, ..., k} L_{ij} ~\text{s.t.}~ L_{ij} >0$ \;

    $\mathcal{M}_{\alpha}$ $\gets$ Add the generator for the $i$-th row of $\bm{L}$ to $\mathcal{M}_{\alpha}$

    } 
    }
    \caption{Slow-coherency grouping algorithm}\label{alg-b1-7-1}
\end{algorithm}

\subsection{Optimization Model for Network Separation}

Following the determination of the generator grouping scheme, the network separation scheme can be derived by solving optimization models that typically minimize the total load shedding of all islands while satisfying the associated constraints. Following the notation introduced in Section \ref{sec-3-3}, consider a generator grouping scheme that partitions the network into $n_{\rm is}$ islands, represented by the set $\mathcal{K} = \llbracket 1, n_{\rm is} \rrbracket$. Let $\mathcal{V}_k^{\rm gen}$ denote the set of generator buses that are required to be located in island $k$. Then by adopting the DC power flow model, a fundamental optimization model for network separation can be expressed in matrix form as follows: 
\begin{subequations}\label{eq-b1-7-2}
    \begin{align}
        \min ~& \bm{1}\T (\bm{p}_{\rm d}^* - \bm{p}_{\rm d}) \label{eq-b1-7-2:0}\\
        {\rm s.t.} ~& - (\bm{1} - \bm{z}) M \leq \bm{b}_{\rm b}\D \bm{E}_{\mathcal{G}}\T \bm{\theta} - \bm{p}_{\rm b} \leq  (\bm{1} - \bm{z}) M \label{eq-b1-7-2:1}\\
        ~& \bm{E}_{\rm g} \bm{p}_{\rm g} - \bm{E}_{\rm d} \bm{p}_{\rm d} = \bm{E}_{\mathcal{G}} \bm{p}_{\rm b} \label{eq-b1-7-2:2}\\
        ~&  \bm{\theta}_{\rm b}\B - (1 - \bm{z}) M \leq \bm{E}_{\mathcal{G}}\T \bm{\theta} \leq \bm{\theta}_{\rm b}\U + (1 - \bm{z}) M \label{eq-b1-7-2:3}\\
        ~& \bm{p}_{\rm b}\B \circ \bm{z} \leq \bm{p}_{\rm b} \leq  \bm{p}_{\rm b}\U  \circ \bm{z} \label{eq-b1-7-2:4}\\
        ~& \bm{p}_{\rm d}\B \leq \bm{p}_{\rm d} \leq \bm{p}_{\rm d}^* \label{eq-b1-7-2:5}\\
        ~& \bm{p}_{\rm g}\B \leq \bm{p}_{\rm g} \leq \bm{p}_{\rm g}\U \label{eq-b1-7-2:6}\\
        ~& \bm{E}_{\rm us}\T \bm{z} = \bm{1} \label{eq-b1-7-2:7}\\
        ~& \bm{E}_{\rm ref}\T \bm{\theta} = \bm{0}  \label{eq-b1-7-2:8}\\
        ~& \bm{W} \leq \bm{E}_{\mathcal{G}, \rm f}\T \bm{X} \label{eq-b1-7-2:9}\\
        ~& \bm{W} \leq - \bm{E}_{\mathcal{G}, \rm t}\T \bm{X} \label{eq-b1-7-2:10}\\
        ~& \bm{z} = \bm{W} \cdot \bm{1} \label{eq-b1-7-2:11}\\
        ~& \bm{X} \cdot \bm{1} = \bm{1} \label{eq-b1-7-2:12}\\ 
        ~& \frac{1}{n_{\rm n}} \cdot \bm{L}_{\rm lt} \bm{X} \leq \bm{Y} \leq \bm{L}_{\rm lt} \bm{X} \label{eq-b1-7-2:13}\\
        ~& \bm{Y} \geq \bm{X} \label{eq-b1-7-2:14}\\
        ~& \bm{U} = \bm{L}_{\rm lb} \bm{Y} \label{eq-b1-7-2:15}\\
        ~& \bm{U}\T \bm{1} = \bm{1} \label{eq-b1-7-2:16} \\
        ~& \bm{U} \circ (\bm{J}_{n_{\rm n}} \bm{X}) - \bm{X} = \bm{E}_{\mathcal{G}} \bm{F} \label{eq-b1-7-2:17}\\
        ~& \bm{0} \leq \frac{1}{n_{\rm n}} \cdot \bm{F} \leq \bm{z} \cdot \bm{1}\T \label{eq-b1-7-2:18}\\
        ~& \bm{E}_{\rm gp} \circ \bm{X} = \bm{E}_{\rm gp} \label{eq-b1-7-2:19}
    \end{align}
\end{subequations}
where the objective function (\ref{eq-b1-7-2:0}) represents the total amount of load shedding resulting from the islanding process, with $\bm{p}_{\rm d}^* \in \mathbb{R}^{n_{\rm d}}$ being the vector of active load powers before islanding; analogous to the constraints in the Optimal Transmission Switching (\ac{ots}) model (\ref{eq-0-1-0}) in Section \ref{sec-4-1a}, Eqs. (\ref{eq-b1-7-2:1})-(\ref{eq-b1-7-2:7}) form the physical and operational constraints of the system, including the DC power flow constraint (\ref{eq-b1-7-2:1}), the power balance constraint (\ref{eq-b1-7-2:2}), the phase angle difference constraint (\ref{eq-b1-7-2:3}), the branch power flow constraint (\ref{eq-b1-7-2:4}), the load power constraint (\ref{eq-b1-7-2:5}), the generation power constraint (\ref{eq-b1-7-2:6}), and the unswitchable branch constraint (\ref{eq-b1-7-2:7}); the constraint (\ref{eq-b1-7-2:8}) sets up an angle reference bus for each island to ensure a unique solution for the phase angles; 
the constraints (\ref{eq-b1-7-2:9})-(\ref{eq-b1-7-2:19}) are the matrix form of the constraints (\ref{eq-b1-3-5})-(\ref{eq-b1-3-7}) introduced in Section \ref{sec-3-3}, ensuring a multi-islanding topological structure according to the specified generator grouping scheme. 
In the optimization model, $\bm{E}_{\rm ref} \in \mathbb{R}^{n_{\rm n} \times n_{\rm is}}$ is the incidence matrix between all buses in $\mathcal{V}$ and the phase angle reference buses determined by arbitrarily selecting one generator bus from each island; 
$\bm{L}_{\rm lt} \in \mathbb{R}^{n_{\rm n} \times n_{\rm n}}$ is the lower triangular matrix with all elements in the lower triangle (including the diagonal) equal to 1, and all other elements equal to 0; 
$\bm{E}_{\rm lb} \in \mathbb{R}^{n_{\rm n} \times n_{\rm n}}$ is the lower bidiagonal matrix with unit diagonal and -1 subdiagonal; 
$\bm{E}_{\rm gp} \in \mathbb{R}^{n_{\rm n} \times n_{\rm is}}$ is the incidence matrix between all buses in $\mathcal{V}$ and the set of generator buses located in each islanded, i.e., $\mathcal{V}_{k}^{\rm gen}$ for all $k \in \mathcal{K}$. The matrices $\bm{X} \in \mathbb{R}^{n_{\rm n} \times n_{\rm is}}$, $\bm{W} \in \mathbb{R}^{n_{\rm e} \times n_{\rm is}}$, $\bm{Y} \in \mathbb{R}^{n_{\rm n} \times n_{\rm is}}$, $\bm{U} \in \mathbb{R}^{n_{\rm n} \times n_{\rm is}}$, $\bm{F} \in \mathbb{R}^{n_{\rm e} \times n_{\rm is}}$ correspond to the variables $X_{i, h}$,  $W_{ij, h}$,  $Y_{i, h}$,  $U_{i, h}$,  and $F_{ij, h}$ introduced in Eq. (\ref{eq-b1-3-5}) and (\ref{eq-b1-3-6}) of Section \ref{sec-3-3}.

\subsection{Numerical Example}

\begin{figure}[t]
	\centering
 	\includegraphics[scale=0.75]{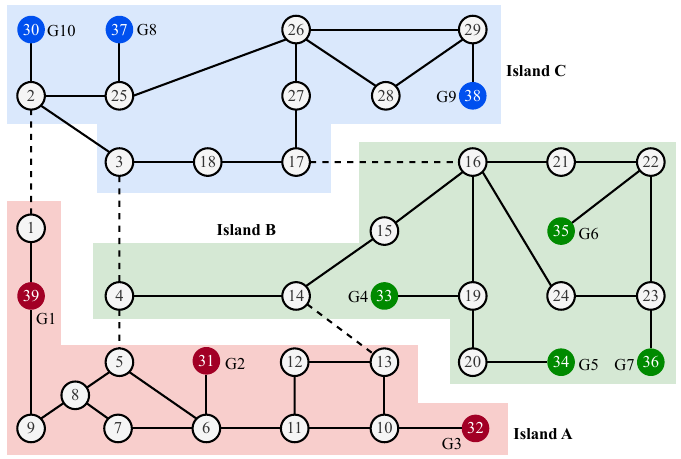}  
	\caption{Diagram of the IEEE 39-bus system and the network separation scheme.}
	\label{book1-7-1}
\end{figure}

The \ac{ici} of transmission networks is demonstrated using the IEEE 39-bus system, as depicted in Fig. \ref{book1-7-1}. A three-phase short-circuit fault is applied at bus 16 at $t=0.1$ s and cleared at $t=0.5$ s. Fig. \ref{book-1-7-r1} (the left two plots) presents the time-domain responses of rotor angle $\delta_i$ and frequency $f_i$ for each generator. It can be seen that the system becomes unstable following the fault clearance, as the generators lose synchronism. 

Fig. \ref{book1-7-1} also illustrates the generator grouping scheme obtained using the slow-coherency grouping method, and the network separation scheme obtained by solving the optimization model (\ref{eq-b1-7-2}). 
All generators are classified into three groups: group 1 consists of G1, G2, and G3; group 2 includes G4, G5, G6, and G7; and group 3 includes G8, G9, and G10. 
Given the generator grouping scheme, switching off line 1-2, line 3-4, line 4-5, line 13-14, line 16-17 separate the network into three islands, which minimizes the total amount of load shedding and satisfies all constraints of the optimization model (\ref{eq-b1-7-2}).

\begin{figure}[t]
	\centering
 	\includegraphics[scale=1]{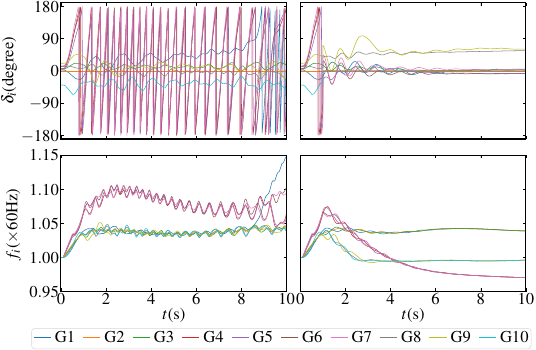}  
	\caption{Time-domain responses of generator rotor angle $\delta_i$ and frequency $f_i$ without \ac{ici} (the left two plots) and with \ac{ici} (the right two plots).}
	\label{book-1-7-r1}
\end{figure}

Fig. \ref{book-1-7-r1} shows the time-domain responses of $\delta_i$ and $f_i$ for each generator, with the network separation scheme implemented at $t=1$ s. The generators with each group achieve synchronization following the network separation. This demonstrates the effectiveness of \ac{ici} in stabilizing the system and thus preventing system collapse.

\section{State-of-the-Art Review}

This section reviews the existing literature on transient topology control, following the structure illustrated in Fig. \ref{book1-7-2}. 
According to the preservation of network connectedness within the control process, transient topology control can be broadly classified into two categories: \ac{ici}, which partitions the network into multiple islands, and connectedness-preserving transient topology control, in which the network remains connected throughout control. For \ac{ici}, six classes of methods for determining where to island are reviewed, and existing approaches for deciding when to island are further examined.
For connectedness-preserving transient topology control, this review discusses existing methods, including preliminary conceptual methods, tree-partitioning, phase-sequence exchange, and topology switching control. 

\begin{figure}[h]
	\centering
 	\includegraphics[scale=0.65]{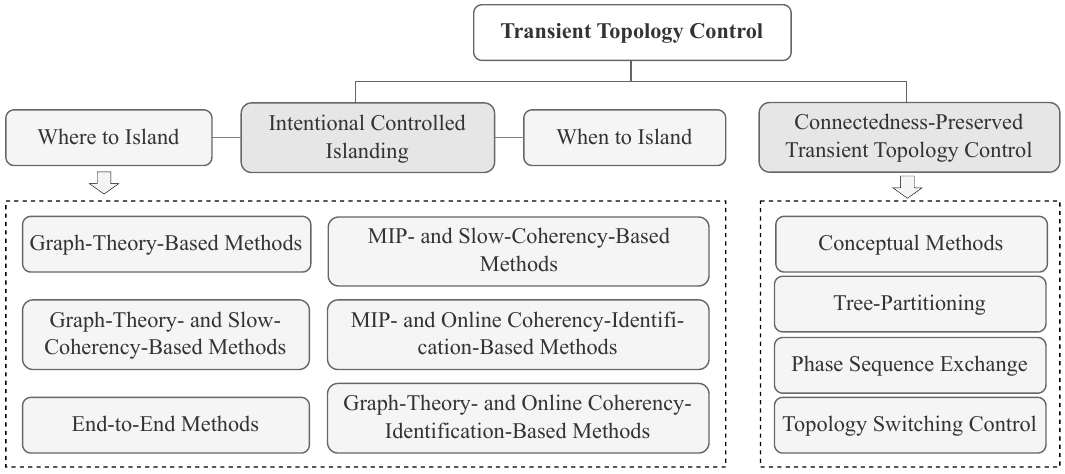}  
	\caption{Structure of the state-of-the-art review on transient topology control.}
	\label{book1-7-2}
\end{figure}

\subsection{Intentional Controlled Islanding}

Most existing studies on transient topology control primarily focus on \ac{ici} of transmission networks. The first key subproblem of \ac{ici} is where to island, for which various network separation methods have been developed. These methods can be divided into the following six categories: 
\begin{itemize}
\item \textit{Graph-theory-based methods}. 
\cite{4-1607} and \cite{4-1606} developed the Ordered Binary Decision Diagram (OBDD) method for network separation. This method represents the power network as a node-weighted undirected graph, where irrelevant nodes and edges are reduced to avoid the curse of dimensionality, and islanding strategies are verified by transient simulations \citep{4-1608}. The central feature of OBDD is its explicit Boolean representation of constraints, allowing the computationally expensive formation process to be handled offline, thereby enhancing online processing speed. Additionally, \cite{4-1895} proposed a multilevel method for optimizing weighted graph clustering, eliminating the need for eigenvector computation. This method was adopted by \cite{4-1625} for network separation with some characteristics of system islanding considered. 
\cite{4-1887} accounted for the real and reactive imbalance in islands by setting the absolute values of power flow as the edge weights of the graph, thus improving the voltage stability of each island. 
However, these graph theory-based methods may fail to maintain synchronization within each island due to the absence of generator coherency information.

\item \textit{Graph-theory- and slow-coherency-based methods}. 
\cite{4-1878} and \cite{4-1877} combined the graph theory and slow coherency technique \citep{4-1882} to address the network separation issue. In this method, coherent generator groups are first identified through singular perturbation analysis, followed by the selection of the cutset with the minimum power flow as the optimal cutset. 
Building upon this method, \cite{4-1879} introduced a two-step islanding strategy. In the first stage, normalized spectral clustering is used to identify coherent generators, while in the second stage, constrained spectral clustering is applied to determine the optimal lines for islanding. 
The slow coherency technique employed in these methods provides only static coherency information, which may vary in real time. Additionally, this technique necessitates detailed power system models.

\item \textit{Graph-theory- and online coherency-identification-based method}. 
\cite{4-1609} integrated the OBDD with online coherency identification using synchronized phasor measurements, yielding superior results compared to the graph-theory-based methods. Similarly, \cite{4-1896} developed a data-driven method incorporating Koopman mode analysis for network partitioning. This method share common features with the graph-theory- and slow-coherency-based methods, given the similarities between Koopman modes and Laplacian eigenvectors \citep{4-1897}. 
Compared to the slow coherency technique, the online coherency identification used in these methods relies solely on synchronized phasor measurements and can identify coherent generators in real time.

\item \textit{MIP- and slow-coherency-based methods}.
\cite{4-1610} developed the first comprehensive MILP-based method for network separation. This established MILP model incorporates connectedness constraints and utilizes DC power flow equations to enable efficient solutions. Subsequently, \cite{4-1889} utilized the second-order cone relaxation of the AC power flow equations to establish a mixed-integer second order cone programming model for network separation. This model ensures better voltage stability within each formed island given the incorporation of reactive power balance. 
The MILP-based method iwas improved by \cite{4-1891} through the use of linearized real/reactive power flow equations and the inclusion of frequency stability constraints in each formed island. 
Compared to determining the separation boundary using graph theory, utilizing MIP offers greater flexibility in incorporating system operational constraints.

\item \textit{MIP- and online coherency-identification-based methods}.
\cite{4-1876}, \cite{4-1888} and \cite{4-1886} developed network separation methods by combining online wide-area coherency identification and MIP optimization. In particular, \cite{4-1876} employed adjustable robust optimization programming to obtain network separation solutions that remain operationally feasible despite the uncertainty of renewable energy sources. \cite{4-1886} established a transient-stability-constrained MIP optimization model to ensure both steady-state and transient stability of the formed islands. 
For power systems with high renewable penetration, this category of network separation methods is more suitable due to the flexibility of the MIP framework in handling renewable uncertainty and the highly time-varying nature of system dynamics \citep{4-999-90}.

\item \textit{End-to-end methods}. 
\cite{4-1623} proposed a weak-submodularity-based method, which jointly captures the minimal generator non-coherency and minimal load-generation imbalance in one objective function.
\cite{4-1750} formulated the \ac{ici} process as a Markov decision process, where the optimal network separation is directly learned using the reinforcement learning approach. The learning-based method offers unique advantages in providing a real-time separation scheme and adapting to varying system conditions. 
Unlike other methods that address the network separation problem in two steps, these end-to-end methods jointly optimize both generator grouping and islanding boundaries, thereby achieving superior optimality. 
\end{itemize}

The second key subproblem in the \ac{ici} process is determining when to island. 
\cite{4-1898} first addressed this subproblem using a decision tree based tool for recognizing conditions existing in the system that warrant \ac{ici}. 
\cite{4-1609} utilized a synchrophasor-based separation risk index for each separation boundary to predict the time to perform \ac{ici}. 
\cite{4-1612} used a controlling unstable equilibrium point-based method to determine the time window within which controlled islanding ensures generator synchronism. 
These three methods however, all need offline estimations of the coherent groups of generators, which may cause inadequate coherent groups as the impact of actual disturbances is neglected. 
To this end, \cite{4-1617} proposed a unified methodology to determine the network separation scheme and time of islanding both in real time. The concept of area-based center of inertia-referred rotor angle index is adopted to determine the actual time for islanding. 
To eliminate the need for wide area measurement system communication and the complex computation, \cite{4-1890} proposed a simplified approach which uses local out-of-step protection relays of generators to identify the overall system instability.

\subsection{Connectedness-Preserved Transient Topology Control}

Different from \ac{ici} which separates the network into multiple islands, some earlier studies proposed transient topology control for transmission networks to maintain network connectedness while stabilizing the system following a disturbance. 
For example, \cite{4-1306} developed a line switching strategy for stabilization regarding transient instability, based on the variable structure systems theory.  \cite{4-88} proposed a switching-over control strategy of lines based on the constructed energy function, where certain lines are switched off in the normal state, and two topologies are switched alternately after fault clearance to stabilize the system. A similar idea of transient topology control was outlined conceptually and illustrated numerically by \cite{4-93}, which however, only needs to simultaneously switch multiple lines once after fault clearance. 
Switching control of tie lines or interconnections can also effectively stabilize weakly interconnected transmission networks \citep{4-1320}. 
More recently, \cite{4-94} developed an algorithm based on the discretization of system's differential algebraic equations and the optimization method of Lagrangian multipliers. The algorithm can identify the switched lines during transients and associated switching time to guarantee system stability. 
\cite{4-1060} proposed a structural emergency control paradigm for system stabilization by rendering post-fault dynamics, which is promisingly able to incorporate transmission line switching actions.  
Basically, these studies are conceptual and initial, with no rigorous theoretical foundation established. 

Recently, \cite{4-1893} proposed tree-partitioning as an emergency measure against cascading failures. Similar to \ac{ici}, this approach prevents fault spread; however, it maintains connectedness between clusters through some of the tie-lines. In practice, a two-step emergency control strategy can leverage both tree partitioning to maintain system integrity and limit load shedding initially, and \ac{ici} to separate clusters if stability is still threatened. \cite{4-1892} further developed an MILP formulation to optimally solve tree-partitioning problems.  
However, tree-partitioning still lacks comprehensive validation through time-domain simulations, and a theoretical foundation is needed to determine the conditions under which it is effective.

The Phase Sequence Exchange (PSE) technology proposed by \cite{4-1054} can also be viewed as a form of transient topology control when detailed three-phase system models are taken into account. 
This technology switches the three-phase sequence from (A, B, C) to (C, A, B), thereby reducing the power angle of the generator by 120$^\circ$ and stabilizing the system. Further refinements of this technology include the design of power electronic devices and the selection criteria for its components to facilitate implementation \citep{4-1900, 4-1901}, its extension to multi-machine systems \citep{4-1902}, the development of PSE strategies based on local transmission line measurements \citep{4-1903}, and its coordination with \ac{ici} \citep{4-1884}. Unlike other transient topology control methods that switch entire transmission lines, PSE technology modifies the phase status of specific lines, thereby preserving network connectedness while causing smaller disturbances in the system. 

For low-voltage power networks, \cite{4-1581, 4-1582, 4-1862} developed topology switching control methods for microgrids and multimicrogrids under emergency state. These methods stabilize the system by switching the system between different connected topologies according to the real-time system state. The topology switching control laws are designed based on rigorous switched system theory. According to \citep{4-1581}, In certain cases, stability of an islanded microgrid can be restored by repeatedly switching the same tie switch following disturbances. 
More recently, \cite{4-b1-48} proposed safe reinforcement learning-based emergency transient stability control approach for islanded microgrids, with transient topology reconfiguration included in the emergency control actions. 
The transient topology control methods for low-voltage networks have the potential to be extended to transmission networks, and the switched system theory can also help to establish rigorous theoretical foundation for early initial studies on transient line switching control of transmission networks in \citep{4-1306, 4-88, 4-93, 4-1320}. 
However, unlike microgrids, transmission networks are generally much larger, and the action time for line switching may be relatively long. Therefore, the scalability and dwell time of different topologies need to be considered.

\section{Recent Advance: Topology Control to Stabilize MMGs}

Topology control for Multi-microgrid (\ac{mmg}) stabilization is recently proposed by \cite{4-1582}. This approach involves switching the interconnection topology of an \ac{mmg} system to stabilize the system following a disturbance. Unlike \ac{ici} of transmission networks, this kind of transient topology control preserves network connectedness and may switch the topology multiple times.

\subsection{System Model and Problem Statement}\label{sec-5-4-2}

\subsubsection{Multimicrogrid with switchable topology}

Consider an \ac{mmg} system consisting of $m$ microgrids interconnected by $l$ lines. 
The topology of the interconnection network of the $m$ microgrids can change by switching on/off the $l$ lines. We refer to this interconnection network topology simply as topology, where the interconnection within each microgrid is not considered. 
Each microgrid is connected to the system at a Point of Common Coupling (\ac{pcc}) via a converter interface. Suppose, without loss of generality, that the $p$-$\omega$ and $q$-$V$ droop control law is deployed in the converter interface of each microgrid for real and reactive power sharing.

Let $\mathcal{M} \coloneqq \{1,2,\cdots, m\}$ denote the set of all $m$ microgrids, and $\mathcal{L}\coloneqq \{1,2,\cdots, l\}$ the set of all $l$ lines. With a slight abuse of notation, let $\mathcal{M}$ denote the set of all $m$ \ac{pcc}s. 
Without loss of generality, the \ac{pcc} of microgrid $1$ is taken as the angle reference. 
Let $\theta_i$ be the relative voltage phase angle w.r.t. the angle reference frame with frequency $\omega_1$, $\omega_i$ and $v_i$ be the angular frequency and voltage magnitude, respectively, and $p_i$ and $q_i$ be the real and reactive power injections, respectively, all at the \ac{pcc} of microgrid $i$.  Note that $\theta_1 = 0$. 
Then, the interface dynamics of microgrid $i$ are
\begin{subequations}\label{eq-5-4-1}
    \begin{align}
        & \dot{\theta}_i = \omega_{\rm b} (\omega_i - \omega_{\rm 1}) ~ \text{~if~} i \neq 1\\
        & \tau_{{\rm p}, i} \dot{\omega}_i = (\omega_i^{\rm s} - \omega_i) + k_{{\rm p}, i} (p_i^{\rm s} - p_i) \\
        & \tau_{{\rm q}, i} \dot{v}_i =  (v_i^{\rm s} - v_i) + k_{{\rm q}, i} (q_i^{\rm s} - q_i)
    \end{align}
\end{subequations}
where $\omega_{\rm b}$ is the base angular frequency; 
$\omega_i^{\rm s}$, $v_i^{\rm s}$, $p_i^{\rm s}$, and $q_i^{\rm s}$ are the setpoints of $\omega_i$, $v_i$, $p_i$, and $q_i$, respectively, which can be designed based on economic dispatch and remain constant during a dispatch interval; 
$\tau_{{\rm p}, i}$ and $\tau_{{\rm q}, i}$ are the time constants of low-pass filters for measuring $p_i$ and $q_i$, respectively; 
and $k_{{\rm p}, i}$ and $k_{{\rm q}, i}$ are the frequency and voltage droop gains, respectively. 
Using the power flow equations, the power injections at the \ac{pcc} of microgrid $i$ are formulated as 
\begin{subequations}\label{eq-5-4-2}
    \begin{align}
        & p_{i} = G_{ii} v_i^2 + \sum\limits_{j \in \mathcal{M}, j\neq i} Y_{ij} v_i v_j  \sin ( \theta_{ij} + \pi/2 - \varphi_{ij} ) 
        \\
        & q_{i} = -B_{ii} v_i^2 + \sum\limits_{j \in \mathcal{M}, j\neq i} Y_{ij} v_i v_j  \sin ( \theta_{ij}  - \varphi_{ij} )
    \end{align}
\end{subequations}
where $G_{ii} + jB_{ii}$ is the self-admittance at the \ac{pcc} of microgrid $i$; $Y_{ij}$ and $\varphi_{ij}$ are the modulus and phase angle, respectively, of the mutual admittance between the \ac{pcc}s of microgrids $i$ and $j$; and $\theta_{ij} = \theta_i - \theta_j$. Here $G_{ii}$, $B_{ii}$, $Y_{ij}$, and $\varphi_{ij}$ are all depend on the network topology. 

Next, denote by $\mathcal{Q} \coloneqq \{1, 2, \cdots, n\}$ the set of $n$ eligible network topologies, and refer to the \ac{mmg} system with topology $k \in \mathcal{Q}$ as subsystem $k$. 
The \ac{mmg} system with switchable network topology can be described in a compact form by the continuous-time switched nonlinear system
\begin{equation}\label{eq-5-4-3}
        \dot{\bm{x}}(t) =  f_{\sigma(t)}(\bm{x}(t)), \sigma(t) \in \mathcal{Q}  
\end{equation}
where $\bm{x} \coloneqq [[\theta_i]_{i \in \mathcal{M} \backslash \{1\} }, [\omega_i]_{i \in \mathcal{M}}, [v_i]_{i \in \mathcal{M}}]\T \in \mathbb{X} \subset \mathbb{R}^{3m -1}$ is the vector of {continuous-time} states of the \ac{mmg}, with $\mathbb{X}$ being its domain; 
$f_k: \mathbb{X} \!\to\! \mathbb{R}^{3m-1}$ is the vector field of {continuous-time} subsystem $k$, given by (\ref{eq-5-4-1}) with $p_i$ and $q_i$ substituted by the right side of (\ref{eq-5-4-2}), for all $i \!\in\! \mathcal{M}$; 
and $\sigma(t)$ denotes the topology switching control signal that determines the topology at time $t \!\in\! \mathbb{R}_+$. In addition, let $\mathcal{X}^* \coloneqq \{ \bm{x}^*_k | k \in \mathcal{Q} \}$ with ${x}^*_k$ being the equilibrium of interest for subsystem $k$.

\subsubsection{Principle and stabilization mechanism of topology control}

Assume that there is a control center for the \ac{mmg} system 
which monitors the state of the interface of each microgrid and the breakers of each interconnection lines, 
and can remotely switch the line breakers to change the network topology in real time. 
Considering the \ac{mmg} system after a disturbance, let $\bm{x}_0$ denote the value of $\bm{x}$ at $t = t_0$, i.e., the instant following disturbance clearance, and $\sigma(t_0) = \sigma_0$ denote the value of $\sigma(t)$ at $t = t_0$. The state $\bm{x}(t)$ is available for the control center for all time $t \in [0, +\infty)$. 

\begin{figure}[h]
	\centering
 	\includegraphics[scale=0.8]{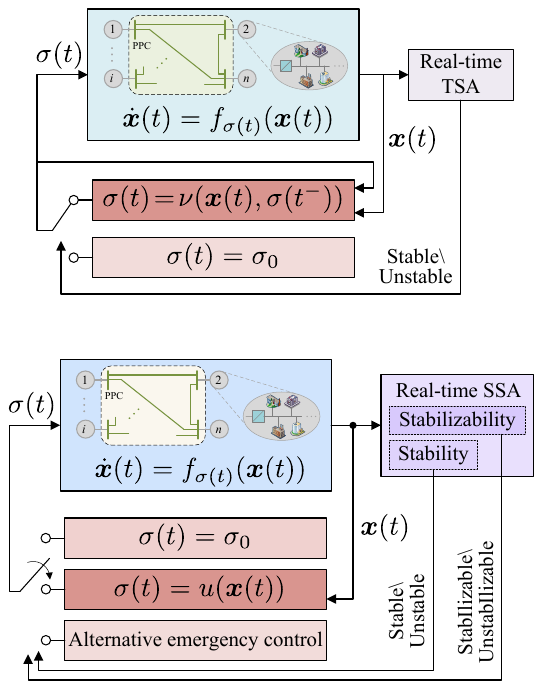}  
	\caption{Principle of using topology control for \ac{mmg} stabilization.}
	\label{fig-5-4-1}
\end{figure}

The principle of using topology control to stabilize the \ac{mmg} system after a disturbance is illustrated in Fig. \ref{fig-5-4-1}. 
After the disturbance clearance, the control center, through the real-time Stability and Stabilizability Assessment (\ac{ssa}), first determines whether the topology control needs to be activated. Given the initial state $\bm{x}_0$ and network topology $\sigma_0$, the stability assessment concludes whether the \ac{mmg} system is asymptotically stable w.r.t. the equilibrium $\bm{x}_{\sigma_0}^*$; the stabilizability assessment concludes whether the \ac{mmg} system can be asymptotically stabilized to an equilibrium $\bm{x}_k^*$ with $k \in \mathcal{Q}$ by the topology control. 

With the results of the real-time \ac{ssa}, 
if the post-disturbance \ac{mmg} system at $t= t_0$ is asymptotically stable, the topology control will not be activated, and the network topology will remain unchanged, i.e., $\sigma(t) = \sigma_0$ until the next disturbance or the following periodic network reconfiguration; 
if the system is not asymptotically stable but is stabilizable, the topology control determined through a state-dependent topology switching law $u: \mathbb{R}^{3m-1} \!\to\!  \mathcal{Q}$, will be activated to stabilize the system; 
and otherwise, alternative emergency control schemes such as load shedding and generator tripping will be activated. 

\begin{figure}[h]
	\centering
 	\includegraphics[scale=0.98]{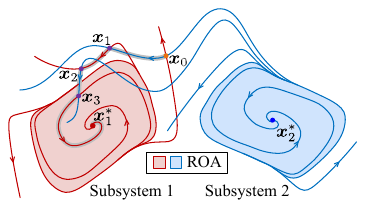}  
	\caption{Illustration of the stabilization mechanism of topology control.}
	\label{fig-5-4-2}
\end{figure}

The stabilization mechanism of the above topology control is to drive the system to the Region of Attraction (\ac{roa}) of one subsystem by switching between different subsystems. Intuitively, consider two subsystems with the phase portrait shown in Fig. \ref{fig-5-4-2}. Subsystem 1 is also the post-disturbance system, i.e., $\sigma(t_0) = 1$. The state $\bm{x}_0$ is outside the \ac{roa} of equilibrium $\bm{x}_1^*$ and thus the \ac{mmg} system is unstable without additional controls. By switching between subsystem 1 and 2 as $(\bm{x}_0, 2) \to (\bm{x}_1, 1) \to (\bm{x}_2, 2) \to (\bm{x}_3, 1)$ where $(\bm{x}_j, i)$ represents switching to subsystem $i$ when $\bm{x}(t) = \bm{x}_j$, the state trajectory, as marked by the gray curves, converges to $\bm{x}_1^*$.

\subsubsection{Topological switching stabilization problem}\label{sec-5-4-tssp}

When the topology control is activated, the closed-loop \ac{mmg} system can be written as 
\begin{subequations}\label{eq-5-4-5}
    \begin{align}
        & \dot{\bm{x}}(t) =  f_{\sigma(t)}(\bm{x}(t)) \\
        & \sigma(t) = u(\bm{x}(t)) \\
        & \bm{x}(t_0) = \bm{x}_0 
    \end{align}
\end{subequations}
where $u$, given the measured system state, decides which network topology in $\mathcal{Q}$ is activated for the \ac{mmg} system at any $t \geq t_0$. 
Accordingly, topology control for \ac{mmg} stabilization boils down to a topological switching stabilization problem. 
Before stating this problem, we first introduce the following assumptions and definitions. 

\begin{assumption}\label{asp-5-4-1}
Let $\{(\sigma_0, t_0), (\sigma_1, t_1), \cdots, (\sigma_j, t_j), \cdots |  j \in \mathbb{N}, \sigma_j \neq \sigma_{j+1} \} $, 
with $\mathbb{N}$ being the nonnegative integer set, 
be the switching sequence characterizing $\sigma(t)$, such that when $t \in [t_j, t_{j+1})$, $\sigma(t) = \sigma_j$. For any $j \in \mathbb{N}$, $\sigma_{j+1}$ can be arbitrarily selected from $\mathcal{Q} \backslash \{ \sigma_j \}$, namely that the topologies in $\mathcal{Q}$ can be directly switched to each other. 
\end{assumption}

\begin{assumption}\label{asp-5-4-2}
    The equilibria in $\mathcal{X}^*$ are isolated and distinct.   
\end{assumption}

\begin{definition}[\textit{Topological state-dependent switching stabilizability, stabilizable region}]
    System (\ref{eq-5-4-3}) is said to be topologically state-dependent switching stabilizable for an initial state $\bm{x}_0$ if there exists a state-dependent topology switching law $u$ under which the closed-loop system (\ref{eq-5-4-5}) is asymptotically stable w.r.t. $\mathcal{X}^*$ (i.e., asymptotically stable w.r.t. an arbitrary equilibrium in $\mathcal{X}^*$). The stabilizable region is the set of all initial state $\bm{x}_0$ for which the system (\ref{eq-5-4-3}) is topologically state-dependent switching stabilizable.
\end{definition}

\begin{definition}[\textit{Finitely piecewise constant topology switching control signal}]
    Topology switching control signal $\sigma(t)$ is said to be finitely piecewise constant if it exhibits a finite number of topology switching events in any finite time interval, and $\exists t'> t_0$, it exhibits no topology switching event for all $t \geq t'$.
\end{definition}

\begin{remark}
    From the switched system viewpoint, signal $\sigma(t)$ can theoretically exhibit infinitely fast switching at the final steady state of the switched system or along the state trajectory due to the existence of sliding modes, and also exhibit infinite number of topology switching events in a finite time interval due to Zeno behavior \citep{4-1400}. Moreover, topology switching events can occur continuously before the switched system reaches the final steady state at $t = \infty$. 
    Clearly, topology switching associated with these cases is practically infeasible. Thus, $\sigma(t)$ needs to be finitely piecewise constant. In the sequel, we assume that sliding modes and Zeno behavior do not exist in system (\ref{eq-5-4-5}) such that $\sigma(t)$ is always piecewise constant for $t \in [t_0, +\infty)$. 
\end{remark}

Moreover, let $\mathcal{X} \subseteq \mathbb{X}$ be a connected and compact domain that covers all possible values of $\bm{x}_0$, or all values of $\bm{x}_0$ being considered to potentially activate topology control. 
Set $\mathcal{X}$ in general is not completely contained in the stabilizable region, namely that for some $\bm{x}_0 \in \mathcal{X}$, system (\ref{eq-5-4-3}) is not topologically state-dependent switching stabilizable. However, a subset of $\mathcal{X}$, denoted as $\mathcal{O}$, may often be a subset of the stabilizable region. 
At this point, the topological switching stabilization problem to be addressed hereinafter can be formally stated as follows. 

\begin{problem}[\textit{Topological switching stabilization problem}]\label{problem-5-4-1}
    Given system (\ref{eq-5-4-3}), $\mathcal{X}$, and $\mathcal{X}^*$,  
    find a finitely piecewise constant state-dependent topology switching law $u$ and a region $\mathcal{O} \subseteq \mathcal{X}$ under which the closed-loop system (\ref{eq-5-4-5}) with $\bm{x}_0 \in \mathcal{O}$ is asymptotically stable w.r.t $\mathcal{X}^*$.
\end{problem}

\subsection{Theoretical Foundations}\label{sec-5-4-3}

This subsection establishes the main theoretical foundations for the topological switching stabilization problem. The stabilization theorem provides a framework for solving Problem \ref{problem-5-4-1}. 
Some properties of a specially-structured Feedforward Neural Network (\ac{fnn}) are used when implementing the solution framework.

\subsubsection{Multiple Lyapunov functions based stabilization theorem}

Multiple Lyapunov Functions (\ac{mlf}s) are a powerful tool for analyzing stability and designing switching laws of switched systems. The existing results, however, cannot be directly implemented to address the topological switching stabilization problem, as they are generally restricted to switched systems that have a common equilibrium of all subsystems. Therefore, the result in \citep{4-1486} to extended to solve Problem \ref{problem-5-4-1}.


\begin{definition}[\text{The class of Metzler matrices $\mathcal{P}$}]
    The class of Metzler matrices, denoted by $\mathcal{P}$, is the set of all matrices $\bm{{\Pi}} \in \mathbb{R}^{n \times n}$ with entry $\pi_{kl}$, $k, l \in \mathcal{Q}$ satisfying
    \begin{subequations}\label{eq-5-4-6} 
        \begin{align}
            & \pi_{kl} \geq 0 ~~ \forall k, l \in \mathcal{Q}, k \neq l \\
            & \sum\limits_{k \in \mathcal{Q}} \pi_{kl} = 0 ~~ \forall l \in \mathcal{Q} 
        \end{align}
    \end{subequations} 
\end{definition}

\begin{definition}[\textit{Lyapunov-like function}]
    Given system (\ref{eq-5-4-3}), $\mathcal{X}$, and $\mathcal{X}^*$, a differentiable function $V_k: \mathcal{X} \to \mathbb{R}$, is said to be a Lyapunov-like function for subsystem $k$ for any $k \in \mathcal{Q}$ if the following conditions hold:
    \begin{subequations}\label{eq-5-4-7}
        \begin{align}
            & V_k(\bm{x}) > 0 ~~ \forall \bm{x} \in \mathcal{X} \backslash \{ \bm{x}_k^* \}   \label{eq-5-4-7:1}\\
            & V_k( \bm{x}_k^* ) = 0 \label{eq-5-4-7:2}\\
            &  \mathcal{L}_{f_k} V_k(\bm{x}) + \sum\limits_{l \in \mathcal{Q}} \pi_{lk} V_l(\bm{x}) < 0  ~~ \forall \bm{x} \in \mathcal{X} \backslash \{ \bm{x}_k^* \} \label{eq-5-4-7:3} 
        \end{align}
    \end{subequations}
    where $\pi_{lk}$ is the element of matrix $\bm{{\Pi}} \in \mathcal{P}$; and 
    $\mathcal{L}_{f_k} V_k(\bm{x}) \coloneqq \left(\frac{ \partial V_k }{ \partial \bm{x} }(\bm{x}) \right)\T \! f_{k}(\bm{x})$ is the Lie derivative of $V_k(\bm{x})$ along $f_k(\bm{x})$, by which the nonlinear system dynamics are incorporated. The conditions (\ref{eq-5-4-7}) are called Lyapunov-Metzler inequalities in stability theory \citep{4-1486}. 
\end{definition}

\begin{lemma}\label{lemma-5-4-1}
    Given system (\ref{eq-5-4-3}), $\mathcal{X}$, $\mathcal{X}^*$, and 
    the Lyapunov-like functions $V_k$, $\forall k \in \mathcal{Q}$, 
    there exists $\delta_k > 0$ s.t. 
    $V_k(\bm{x}) \!\!<\!\!  V_l(\bm{x})$, 
    $\forall \bm{x} \!\in\!  \mathcal{B}_{\delta_k} \!( \bm{x}_k^* )$ and $\forall l \in \mathcal{Q} \backslash \{k\}$, 
    where $\mathcal{B}_{\delta_k}\!( \bm{x}_k^* ) \!\!\coloneqq\!\! \{\bm{x} \!\in\! \mathcal{X} \bm{|} \Vert \bm{x} \!-\! \bm{x}_k^* \Vert_2 \leq \delta_k \}$ is the closed ball of radial $\delta_k$ centered at $\bm{x}_k^*$. 
\end{lemma}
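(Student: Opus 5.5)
The plan is to argue by continuity of the finitely many functions $V_l$, using Assumption \ref{asp-5-4-2} to separate $\bm{x}_k^*$ from the zeros of the other functions. Fix $k \in \mathcal{Q}$. By (\ref{eq-5-4-7:2}) we have $V_k(\bm{x}_k^*) = 0$. For each $l \in \mathcal{Q} \backslash \{k\}$, Assumption \ref{asp-5-4-2} gives $\bm{x}_k^* \neq \bm{x}_l^*$. Condition (\ref{eq-5-4-7:1}) applied to $V_l$ (with $\bm{x}_k^* \in \mathcal{X}$, which we take as implicit since the equilibria of interest lie in the domain considered) then yields $V_l(\bm{x}_k^*) > 0$. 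Define
\begin{equation*}
\varepsilon_k \coloneqq \min_{l \in \mathcal{Q} \backslash \{k\}} V_l(\bm{x}_k^*) > 0,
\end{equation*}
which is a minimum over the finite set $\mathcal{Q}$ and hence strictly positive.

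Next, each $V_l$ is differentiable and therefore continuous on $\mathcal{X}$. Continuity of $V_k$ at $\bm{x}_k^*$ gives $\delta_{k,k} > 0$ such that $V_k(\bm{x}) < \varepsilon_k/2$ on $\mathcal{B}_{\delta_{k,k}}(\bm{x}_k^*)$. Continuity of each $V_l$, $l \neq k$, at $\bm{x}_k^*$ gives $\delta_{k,l} > 0$ such that $V_l(\bm{x}) > V_l(\bm{x}_k^*) - \varepsilon_k/2 \geq \varepsilon_k/2$ on $\mathcal{B}_{\delta_{k,l}}(\bm{x}_k^*)$. Set $\delta_k \coloneqq \tfrac12 \min_{l \in \mathcal{Q}} \delta_{k,l}$; halving makes the strict-inequality neighbourhoods contain the closed ball. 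For every $\bm{x} \in \mathcal{B}_{\delta_k}(\bm{x}_k^*)$ and every $l \neq k$ we then get $V_k(\bm{x}) < \varepsilon_k/2 < V_l(\bm{x})$, which is the claim.

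The only delicate point is ensuring $V_l(\bm{x}_k^*) > 0$ strictly, since this is what makes $\varepsilon_k > 0$. That step uses exactly the distinctness of the equilibria in Assumption \ref{asp-5-4-2} together with $\bm{x}_k^* \in \mathcal{X}$. If $\bm{x}_k^*$ could fall outside $\mathcal{X}$, the ball would instead be intersected with $\mathcal{X}$ and the same argument applied at points of $\mathcal{X}$ near $\bm{x}_k^*$; here we simply assume $\mathcal{X}^* \subset \mathcal{X}$. Condition (\ref{eq-5-4-7:3}) is not needed for this lemma.
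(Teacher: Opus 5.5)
Your proof is correct, but it follows a different route from the paper's.

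You argue purely locally at $\bm{x}_k^*$. Distinctness of the equilibria together with (\ref{eq-5-4-7:1}) gives $V_l(\bm{x}_k^*)>0$ for every $l\neq k$. Continuity of the finitely many $V_l$ at that single point then produces the ball.

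The paper argues globally instead. It first picks separated balls $\mathcal{B}_{\delta_k}(\bm{x}_k^*)\subset\mathcal{X}\backslash\mathcal{B}_{\delta_l}(\bm{x}_l^*)$ around the equilibria. It then uses positivity of $V_l$ away from $\bm{x}_l^*$, together with continuity and compactness of $\mathcal{X}$, to obtain a uniform lower bound $b_l=\inf_{\mathcal{X}\backslash\mathcal{B}_{\delta_l}(\bm{x}_l^*)}V_l>0$. Finally it shrinks $\delta_k$ until $\max_{\mathcal{B}_{\delta_k}(\bm{x}_k^*)}V_k<b_l$.

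The paper's version gives a stronger by-product: $V_l$ is bounded away from zero on all of $\mathcal{X}$ outside a neighbourhood of $\bm{x}_l^*$. However, it depends on compactness of $\mathcal{X}$. It also leaves implicit that $\delta_k$ has to be chosen against $\min_{l\neq k}b_l$ rather than a single $b_l$.

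Your version needs only three things: distinctness of the equilibria from Assumption \ref{asp-5-4-2}, $\mathcal{X}^*\subset\mathcal{X}$ (which is indeed implicit in (\ref{eq-5-4-7:2})), and continuity at one point. It also handles the minimum over $l$ explicitly through $\varepsilon_k$. So it is the more elementary and slightly more general argument. It is also exactly what the later use of the lemma requires, namely inequality (\ref{eq-5-4-7:4}) in the proof of Theorem \ref{theorem-5-4-1}.
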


\begin{proof}
    Since the equilibria $\bm{x}^*_k$ and $\bm{x}^*_l$ are isolated and distinct, there exist $\delta_k >0$ and $\delta_l > 0$ such that $\mathcal{B}_{\delta_k}( \bm{x}_k^* ) \subset  \mathcal{X} \backslash \mathcal{B}_{\delta_l}( \bm{x}_l^* )$, $\forall k, l \in \mathcal{Q}$ and $k \neq l$. Considering conditions (\ref{eq-5-4-7:1}) and (\ref{eq-5-4-7:2}) of the Lyapunov-like functions, $\inf_{\bm{x} \in \mathcal{X} \backslash \mathcal{B}_{\delta_l}( \bm{x}_l^* )} V_l(\bm{x})$ with $l \in \mathcal{Q} \backslash \{k\}$, denoted as $b_l$, is larger than 0. Since the Lyapunov-like functions are also continuous, $\max_{\bm{x} \in \mathcal{B}_{\delta_k}( \bm{x}_k^* )} V_k(\bm{x})$ can be arbitrarily close to 0 and thus by reducing the value of $\delta_k$ and thus there exists $\delta_k >0$ such that $0 < \max_{\bm{x} \in \mathcal{B}_{\delta_k}( \bm{x}_k^* )} V_k(\bm{x}) < b_l$ . Therefore, in this case, $\forall \bm{x} \in  \mathcal{B}_{\delta_k} \!( \bm{x}_k^* ) \subset \mathcal{X} \backslash \mathcal{B}_{\delta_l}( \bm{x}_l^* )$, $V_k(\bm{x}) \!<\!  V_l(\bm{x})$. 
\end{proof}


\begin{theorem}\label{theorem-5-4-1}
    Given system (\ref{eq-5-4-3}), $\mathcal{X}$, and $\mathcal{X}^*$, if the Lyapunov-like functions $V_k(\bm{x})$ exist for all subsystems, then system (\ref{eq-5-4-5}) with $\bm{x}_0 \in \mathcal{O}$ is asymptotically stable w.r.t. $\mathcal{X}^*$ under the state-dependent topology switching law: 
    \begin{equation}\label{eq-5-4-8}
        \sigma(t) = u(\bm{x}(t)) = \argmin\limits_{k \in \mathcal{Q}} \{ V_k(\bm{x}(t)) \}
    \end{equation}
    and the region $\mathcal{O}$ is given as
    \begin{equation}\label{eq-5-4-9}
        \!\!\!\! \mathcal{O} \!=\! \{ \bm{x} \!\in\! \mathcal{X} | \tilde{V}(\bm{x}) \leq \min\limits_{\tilde{\bm{x}} \in \partial \mathcal{X}} \min\limits_{ k \in \mathcal{Q}} V_k(\tilde{\bm{x}})  \}
    \end{equation}
    where $\tilde{V}(\bm{x}) \coloneqq \min\nolimits_{k \in \mathcal{Q}} V_k(\bm{x})$, 
    and $\partial \mathcal{X}$ denotes the boundary of region $\mathcal{X}$. 
    Moreover, $\sigma(t) = u(\bm{x}(t))$ is finitely piecewise constant. 
\end{theorem}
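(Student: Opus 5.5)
The plan is to use the min-switching Lyapunov function $\tilde V(\bm{x}) = \min_{k\in\mathcal{Q}} V_k(\bm{x})$ as a common (nonsmooth) Lyapunov function for the closed loop (\ref{eq-5-4-5}) under the law (\ref{eq-5-4-8}), following the Lyapunov--Metzler argument of \citep{4-1486} but adapted to distinct equilibria via Lemma \ref{lemma-5-4-1}. The key steps are the following.

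\textbf{Strict decrease of $\tilde V$.} Suppose $\sigma(t)=k$ is active on an interval, so $V_k(\bm{x}(t)) = \tilde V(\bm{x}(t)) \le V_l(\bm{x}(t))$ for all $l$. Because $\bm{\Pi}$ is Metzler with zero column sums, $\pi_{kk} = -\sum_{l\neq k}\pi_{lk}$. Hence
\[
\sum_{l} \pi_{lk} V_l = \sum_{l\neq k}\pi_{lk}(V_l - V_k) \ge 0 .
\]
Then (\ref{eq-5-4-7:3}) gives $\mathcal{L}_{f_k}V_k(\bm{x}) < -\sum_l \pi_{lk}V_l(\bm{x}) \le 0$ whenever $\bm{x}\neq \bm{x}_k^*$. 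So along each flow segment, $\tilde V = V_{\sigma}$ strictly decreases. At a switching instant the law selects the new minimizer, so $\tilde V$ is continuous there; it is a minimum of continuous functions. Therefore $\tilde V(\bm{x}(t))$ is continuous and nonincreasing, and strictly decreasing away from the active equilibrium.

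\textbf{Invariance of $\mathcal{O}$.} If $\bm{x}_0\in\mathcal{O}$, then $\tilde V(\bm{x}(t)) \le \tilde V(\bm{x}_0) \le \min_{\partial\mathcal{X}}\tilde V$ for all $t$. Strictness at the boundary, or a sublevel argument, then shows the trajectory cannot reach $\partial\mathcal{X}$. The trajectory thus stays in the compact set $\mathcal{X}$, where all the inequalities hold and solutions exist for all time.

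\textbf{Convergence and finite switching.} $\tilde V(\bm{x}(t))$ is bounded below by $0$, so it converges. A LaSalle-type argument for the switched system shows the $\omega$-limit set lies where $\dot{\tilde V}=0$, which forces $\bm{x}\to\mathcal{X}^*$. By Lemma \ref{lemma-5-4-1}, once $\bm{x}$ enters $\mathcal{B}_{\delta_k}(\bm{x}_k^*)$ we have $V_k<V_l$ for every $l\neq k$, so $u(\bm{x}) = k$. For small enough sublevels of $V_k$ inside that ball, the set is invariant under $f_k$, so no further switching occurs and ordinary Lyapunov stability of $\bm{x}_k^*$ under $f_k$ finishes the argument. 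This gives the finitely piecewise constant property, combined with the standing no-Zeno/no-sliding assumption for finite intervals.

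\textbf{Main obstacle.} The delicate step is the convergence argument: showing the trajectory actually enters some ball $\mathcal{B}_{\delta_k}$ rather than wandering. I would first try a compactness argument. On the set $\mathcal{X}$ minus the union of the $\delta_k$-balls, where the continuous strict inequality (\ref{eq-5-4-7:3}) holds on a compact set, I would extract a uniform negative bound $-\epsilon$ on $\dot{\tilde V}$. That bound would force entry into a ball in finite time, since otherwise $\tilde V$ would decrease without bound, contradicting $\tilde V \ge 0$.
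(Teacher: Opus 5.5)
Your argument follows the paper's proof. Both use $\tilde V=\min_k V_k$ as the closed-loop Lyapunov function and the zero-column-sum identity $\sum_l\pi_{lk}V_l=\sum_{l\neq k}\pi_{lk}(V_l-V_k)\ge 0$ on the active mode to get strict decrease. Both then use sublevel invariance of $\mathcal{O}$, a compactness argument for convergence, and Lemma~\ref{lemma-5-4-1} together with an invariant sublevel set of $V_\kappa$ for finite switching.

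One step needs tightening. In your main-obstacle paragraph, the uniform bound only forces the trajectory into some ball $\mathcal{B}_{\delta_k}(\bm{x}_k^*)$. Nothing prevents it from leaving again, and Lemma~\ref{lemma-5-4-1} gives $u=k$ only while it is inside, so ``no further switching'' does not yet follow. Remove instead small neighborhoods of $\bm{x}_k^*$ contained in the paper's sets $\mathcal{O}_k=\{\bm{x}\in\mathcal{B}_{\delta_k}(\bm{x}_k^*)\,|\,V_k(\bm{x})\le\min_{\partial\mathcal{B}_{\delta_k}(\bm{x}_k^*)}V_k\}$. Such neighborhoods exist because $V_k$ is continuous, zero at $\bm{x}_k^*$, and positive on the sphere. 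What remains of $\mathcal{X}$ after removing these open neighborhoods is still compact and contains no equilibrium, so your bound gives finite-time entry into some $\mathcal{O}_k$. There $u\equiv k$ and $\mathcal{O}_k$ is invariant under $f_k$, so convergence to $\bm{x}_k^*$ and finite switching follow together.

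The paper orders this differently. It first proves convergence by contradiction: if $\tilde V(\bm{x}(t))\to\beta>0$, the trajectory stays in the compact set $\{\beta\le\tilde V\le\tilde V(\bm{x}_0)\}$, where $\tilde V$ decreases at a uniform rate. Only then does it note that the convergent trajectory eventually enters $\mathcal{O}_\kappa$.

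Your mode-wise bound is a more transparent justification of that uniform rate than the paper's bare claim $\sup_{\mathcal{C}}D^+\tilde V=-\lambda<0$. It uses continuity of $h_k=\mathcal{L}_{f_k}V_k+\sum_l\pi_{lk}V_l$ on a compact set excluding $\bm{x}_k^*$, together with $\dot{\tilde V}=\mathcal{L}_{f_k}V_k\le h_k$ on the active mode. It requires $V_k\in C^1$, which the smooth-activation candidates satisfy.

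Drop the LaSalle sentence. Invariance of $\omega$-limit sets is not available off the shelf for a discontinuous state-dependent switching law, and the compactness argument makes it unnecessary.
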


\begin{proof}
     
     The Lyapunov function of system (\ref{eq-5-4-5}) under switching law (\ref{eq-5-4-8}) is $\min_{k \in \mathcal{Q}} V_k(\bm{x})$, i.e., $\tilde{V}(\bm{x})$. From (\ref{eq-5-4-7:1}), we have $\forall \bm{x} \in \mathcal{X} \backslash \mathcal{X}^*$, $\tilde{V}(\bm{x}) > 0$. 
     Similar to the proof of \cite[Theorem~3]{4-1486}, we have the Dini derivative of $\tilde{V}(\bm{x})$, $D^+ \tilde{V}(\bm{x}(t)) \leq \mathcal{L}_{f_k} V_k(\bm{x})$, 
     where 
     $k \in \mathcal{I}(\bm{x}(t)) \coloneqq \{ l | l \in \mathcal{Q}, \tilde{V}(\bm{x}) = V_{l}(\bm{x}) \}$. 
     This inequality together with (\ref{eq-5-4-7:3}) gives
     \begin{equation}
        D^+ \tilde{V}(\bm{x}(t)) < - \sum_{l \in \mathcal{Q}} \pi_{lk} V_l(\bm{x})  ~~ \forall \bm{x} \in \mathcal{X} \backslash \mathcal{X}^*.
     \end{equation}
     Moreover, since $V_l(\bm{x}) \geq V_k(\bm{x})$ for all $l \in \mathcal{Q} \backslash \{k\}$ with $k \in \mathcal{I}(\bm{x}(t))$, 
     we have
     \begin{equation}
        D^+ \tilde{V}(\bm{x}(t)) < - V_k(\bm{x}) \sum_{l \in \mathcal{Q}} \pi_{lk} = 0 ~~ \forall \bm{x} \in \mathcal{X} \backslash \mathcal{X}^*
     \end{equation} 

    Suppose that the trajectory $\bm{x}(t)$ of system (\ref{eq-5-4-5}) under switching law (\ref{eq-5-4-8}) with $\bm{x}(t_0) \in \mathcal{O}$ does not converge to any equilibrium in $\mathcal{X}^*$. Note that region $\mathcal{O}$ given by (\ref{eq-5-4-9}) can be rewritten as $\mathcal{O} \!=\! \{ \bm{z} \in \mathcal{X} | \tilde{V}(\bm{z}) \!\leq\! \min_{\tilde{\bm{z}} \in \partial \mathcal{X}} \tilde{V}(\tilde{\bm{z}}) \}$, which is a compact positively invariant set of the switched system (\ref{eq-5-4-5}). Since $\tilde{V}( \bm{x}(t) )$ is decreasing and positive, it converges to a positive value, denoted as $\beta > 0$, as $t \to + \infty$. Thus, $\forall t \in [t_0, +\infty]$, we have $\beta \leq \tilde{V}(\bm{x}(t)) \leq \tilde{V}(\bm{x}(t_0))$.
    Let 
    $\mathcal{C} \coloneqq \{ \bm{z} \in \mathcal{X} | \beta \leq \tilde{V}(\bm{z}) \leq \tilde{V}(\bm{x}(t_0)) \}$, 
    which is a compact set. 
    We have 
    $\sup_{\bm{z} \in \mathcal{C}} D^+ \tilde{V}(\bm{z}) = -\lambda < 0$, 
    and thus 
    $\forall t \in [t_0, + \infty]$, $D^+ \tilde{V}(\bm{x}(t)) \leq - \lambda$. 
    Also given that $\tilde{V}(\bm{x}(t))$ is continuous, 
    we have 
    \begin{equation}
        \tilde{V}(\bm{x}({t_0 + \Delta t })) = \tilde{V}(\bm{x}(t_0)) + \int_{t_0}^{t_0 + \Delta t} D^+ \tilde{V}(\bm{x}(t)) d t \leq \tilde{V}(\bm{x}(t_0)) - \lambda \Delta t
    \end{equation}
    This inequality, when $\Delta t > \frac{\tilde{V}(t_0)}{\lambda}$, gives $\tilde{V}(\bm{x}({t_0 + \Delta t })) < 0$, contradicting $\tilde{V}(\bm{x}) \geq 0$ for any $\bm{x} \in \mathcal{X}$. 
    Since $\bm{x}^*_k, \forall k \in \mathcal{Q}$, is an isolated equilibrium, 
    the trajectory $\bm{x}(t)$ converges to some equilibrium in $\mathcal{X}^*$, i.e., the asymptotic stability w.r.t. $\mathcal{X}^*$ holds.

    Finally, we shall show that $u(\bm{x}(t))$ is finitely piecewise constant. Without loss of generality, assume $\bm{x}(t)$, as $t \to + \infty$, converges to $\bm{x}^*_{\kappa}$ with $\kappa \in \mathcal{Q}$. According to Lemma \ref{lemma-5-4-1}, we have 
    \begin{equation}\label{eq-5-4-7:4}
        \exists \delta_{\kappa} > 0: \big( V_{\kappa}(\bm{x}) \!<\!\! \min\limits_{l \in \mathcal{Q} \backslash \{{\kappa}\} } \!\! V_l(\bm{x}) ~ \forall \bm{x} \in  \mathcal{B}_{\delta_{\kappa}} \!( \bm{x}_{\kappa}^* ) \big) 
    \end{equation} 
    Let  
    $\mathcal{O}_{\kappa} \coloneqq \{ \bm{x} \in \mathcal{B}_{\delta_{\kappa}}(\bm{x}_{\kappa}^*) | V_{\kappa}(\bm{x}) \leq \min_{\tilde{\bm{x}} \in \partial \mathcal{B}_{\delta_{\kappa}}(\bm{x}_{\kappa}^*)} V_{\kappa}(\tilde{\bm{x}}) \}$. 
    Combining (\ref{eq-5-4-7:3}) and (\ref{eq-5-4-7:4}),  
    we have 
    \begin{equation}
        \mathcal{L}_{f_{\kappa}} V_{\kappa}(\bm{x}) \!<\! - \sum\limits_{l \in \mathcal{Q}} \pi_{l{\kappa}} \!V_l(\!\bm{x}) \!<\! - V_{\kappa}(\bm{x}) \sum_{l \in \mathcal{Q}} \pi_{l{\kappa}} \!=\! 0 ~~ \forall \bm{x} \!\in\!  \mathcal{O}_{\kappa} \backslash \{\bm{x}_1^* \}
    \end{equation}
    Hence $\mathcal{O}_{\kappa}$ is a compact positively invariant set for subsystem $\kappa$. 
    Let $t_{\kappa}$ the instant when trajectory $\bm{x}(t)$ intersects with $\partial \mathcal{O}_{\kappa}$. 
    Further by (\ref{eq-5-4-8}) and (\ref{eq-5-4-7:4}), $\forall t \in [t_{\kappa}, + \infty)$, $u(\bm{x}(t)) \!=\! \kappa$. 
    Given that $\sigma(t)$ is piecewise constant for $t \in [t_0, +\infty)$, it is finitely piecewise constant. 
\end{proof}

\begin{proposition}\label{prop-5-4-0}
    Given system (\ref{eq-5-4-3}), $\mathcal{X}$, and $\mathcal{X}^*$, if any equilibrium in $\mathcal{X}^*$ is unstable, the Lyapunov-like functions $V_k(\bm{x})$ of some subsystems do not exist.
\end{proposition}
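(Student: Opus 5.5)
We argue by contraposition: assume Lyapunov-like functions $V_k$ exist for all $k \in \mathcal{Q}$, i.e., conditions (\ref{eq-5-4-7}) hold for some $\bm{\Pi} \in \mathcal{P}$, and show that every equilibrium $\bm{x}_k^* \in \mathcal{X}^*$ is then (locally asymptotically) stable for its own subsystem $k$. This contradicts the hypothesis that some equilibrium in $\mathcal{X}^*$ is unstable. Throughout, we tacitly assume that $\bm{x}_k^*$ lies in the interior of $\mathcal{X}$, which is needed for local stability statements to make sense.

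\textbf{Step 1: $V_k$ decreases along $f_k$ near $\bm{x}_k^*$.} Fix $k \in \mathcal{Q}$. By Lemma \ref{lemma-5-4-1}, there is $\delta_k>0$ with $V_k(\bm{x}) < V_l(\bm{x})$ for all $\bm{x} \in \mathcal{B}_{\delta_k}(\bm{x}_k^*)$ and all $l \neq k$. Since $\pi_{lk} \geq 0$ for $l \neq k$, we have $\pi_{lk} V_l(\bm{x}) \geq \pi_{lk} V_k(\bm{x})$ on this ball. Hence, using (\ref{eq-5-4-6}),
\[
\sum_{l \in \mathcal{Q}} \pi_{lk} V_l(\bm{x}) \geq V_k(\bm{x}) \sum_{l \in \mathcal{Q}} \pi_{lk} = 0 .
\]
Combining this with (\ref{eq-5-4-7:3}) gives
\[
\mathcal{L}_{f_k} V_k(\bm{x}) < -\sum_{l \in \mathcal{Q}} \pi_{lk} V_l(\bm{x}) \leq 0 \qquad \forall \bm{x} \in \mathcal{B}_{\delta_k}(\bm{x}_k^*) \backslash \{\bm{x}_k^*\}.
\]
This is the same computation used at the end of the proof of Theorem \ref{theorem-5-4-1}.

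\textbf{Step 2: $V_k$ is a strict local Lyapunov function, so $\bm{x}_k^*$ is asymptotically stable.} On $\mathcal{B}_{\delta_k}(\bm{x}_k^*)$, $V_k$ is continuously differentiable, positive definite about $\bm{x}_k^*$ by (\ref{eq-5-4-7:1})--(\ref{eq-5-4-7:2}), and has a negative definite derivative along $f_k$ by Step 1. Lyapunov's direct theorem then yields asymptotic stability of $\bm{x}_k^*$ for $\dot{\bm{x}} = f_k(\bm{x})$.

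To make this concrete, use the sublevel set $\mathcal{O}_k$, defined as in the proof of Theorem \ref{theorem-5-4-1}. It is compact and positively invariant, and it contains a neighborhood of $\bm{x}_k^*$ because $\min_{\partial\mathcal{B}_{\delta_k}} V_k > 0$ and $V_k$ is continuous with $V_k(\bm{x}_k^*) = 0$. For any $\varepsilon$-ball, we shrink $\delta_k$ as needed to obtain stability, and attraction follows as in the contradiction argument of Theorem \ref{theorem-5-4-1}.

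\textbf{Step 3: conclude.} Step 2 holds for every $k \in \mathcal{Q}$, so all equilibria in $\mathcal{X}^*$ are asymptotically stable. Contrapositively, if some $\bm{x}_k^*$ is unstable, the family of Lyapunov-like functions cannot exist for all subsystems.

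In fact the argument shows more: for that particular $k$, no $V_k$ can satisfy (\ref{eq-5-4-7}) jointly with any admissible $\{V_l\}$ and $\bm{\Pi}$. This sharpens the statement's ``some subsystems'' to ``the subsystem $k$ whose equilibrium is unstable''.

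\textbf{Main obstacle.} The delicate step is Step 1. Condition (\ref{eq-5-4-7:3}) constrains only $\mathcal{L}_{f_k} V_k$ plus a coupling term, so the coupling term must be shown to be nonnegative locally. This relies on Lemma \ref{lemma-5-4-1}, which in turn uses Assumption \ref{asp-5-4-2} that the equilibria are isolated and distinct. A secondary technical point is confirming that the local ball lies in the interior of $\mathcal{X}$, so that the standard Lyapunov theorem applies.
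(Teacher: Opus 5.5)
Your proposal is correct and is essentially the paper's argument. The paper also assumes all $V_k$ exist, then uses the neighborhood from Lemma~\ref{lemma-5-4-1} (through the proof of Theorem~\ref{theorem-5-4-1}) with the Metzler column-sum condition to get $\mathcal{L}_{f_k} V_k(\bm{x}) < 0$ on $\mathcal{O}_k \backslash \{\bm{x}_k^*\}$, and invokes Lyapunov's direct method to conclude that $\bm{x}_k^*$ is stable, a contradiction. Your write-up is slightly more careful than the paper's: you use $\leq 0$ for the coupling term, which is what the Metzler structure actually gives, and you flag that $\bm{x}_k^*$ must lie in the interior of $\mathcal{X}$.
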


\begin{proof} 
    Without loss of generality, suppose the equilibrium $\bm{x}^*_1$ is unstable and the Lyapunov-like functions $V_k(\bm{x})$, $\forall k \in \mathcal{Q}$, exist. 
    Then by the proof of Theorem \ref{theorem-5-4-1}, $\forall \bm{x} \in  \mathcal{O}_1 \backslash \{\bm{x}_1^* \}$, $\mathcal{L}_{f_1} V_1(\bm{x}) < 0$, which together with (\ref{eq-5-4-7:1}) and (\ref{eq-5-4-7:2}) implies $\bm{x}^*_1$ is stable. This contradiction indicate the Lyapunov-like functions $V_k(\bm{x})$ of some subsystems do not exist. 
\end{proof}


According to Theorem \ref{theorem-5-4-1} and Proposition \ref{prop-5-4-0}, Problem \ref{problem-5-4-1} amounts to constructing the Lyapunov-like functions satisfying conditions (\ref{eq-5-4-7}) together with conditions (\ref{eq-5-4-6}) for Metzler matrices for all subsystems, provided that each equilibrium in $\mathcal{X}^*$ is stable. Then the topology switching law $u$ and region $\mathcal{O}$ are directly given by (\ref{eq-5-4-8}) and (\ref{eq-5-4-9}), respectively. 
Moreover, compare with the work by \cite{4-1486}, the present result extends the framework to the more general setting where subsystems may admit distinct equilibria, rather than a common equilibrium.

\subsubsection{Unique-zero \ac{fnn}}

\ac{fnn}s are formed by a sequence of function layers and can approximate any continuous function on compact subsets of the input domain with a finite number of parameters. Formally, taking the continuous-time state vector $\bm{x} \in \mathbb{R}^{3m-1}$ as the input for example, \ac{fnn}s can be defined as follows:     

\begin{definition}[\ac{fnn}]\label{definition-5-4-4}
    An \ac{fnn} with $n_{\rm f} + 1$ layers is with the form: 
    $\bm{y}_{0} = \bm{x}$ which represents the input layer;  
    $\bm{y}_{\ell} = \gamma_{\ell}( \bm{W}_{\ell} \bm{y}_{\ell - 1} + \bm{b}_{\ell})$, $\forall \ell \in \{1, 2, \cdots, n_{\rm f}\}$, $\bm{y}_{\ell} \in \mathbb{R}^{d_{\ell}}$ is the output of the $\ell$-th layer, $\gamma_{\ell}$ is a fixed element-wise activation function, $\bm{W}_{\ell} \in \mathbb{R}^{d_{\ell} \times d_{\ell} - 1}$ and $\bm{b}_{\ell} \in \mathbb{R}^{d_{\ell}}$ are respectively the weight matrix and bias vector of the linear transformation at the $\ell$-th layer, $d_{\ell}$ is an integer representing the width of the $\ell$-th layer, and $d_0 = 3m-1$.   
\end{definition}

\begin{definition}[Unique-zero \ac{fnn}]\label{definition-5-4-5}
    Given a constant vector $\bm{x}^* \in \mathbb{R}^{3m-1}$, a unique-zero \ac{fnn} w.r.t. $\bm{x}^*$ is an \ac{fnn} satisfying the following conditions
    \begin{subequations}\label{eq-5-4-10}
        \begin{align} 
            & d_1 = 3m-1 \label{eq-5-4-10:1} \\
            & d_{\ell} \geq d_{\ell - 1} ~~ \forall \ell \in \{2, 3, \cdots, n_{\rm f} \} \label{eq-5-4-10:1a} \\
            & \bm{W}_1 = \bm{I}_{d_1}  
            \label{eq-5-4-10:2} \\ 
            & 
            \bm{W}_{\ell} = 
            \begin{bmatrix}
                \bm{M}_{\ell}\T \bm{M}_{\ell} + \varepsilon \bm{I}_{d_{\ell - 1}} \\
                \bm{K}_{\ell}
            \end{bmatrix}
            ~~ \forall \ell \in \{2, 3, \cdots, n_{\rm f} \} 
            \label{eq-5-4-10:3} \\
            &  \bm{b}_1 = - \bm{x}^* \label{eq-5-4-10:3a} \\ 
            & \bm{b}_{\ell} = \bm{0} ~~ \forall \ell \in \{2, 3, \cdots, n_{\rm f} \} 
            \label{eq-5-4-10:4} \\
            & \gamma_{\ell} \text{~has a trivial nullspace}, \forall \ell \in \{1, 2, \cdots, n_{\rm f} \}  \label{eq-5-4-10:5}
        \end{align}
    \end{subequations}
    where $\bm{I}_{d_1} \!\in\! \mathbb{R}^{d_{1} \times d_{1}}$ and $\bm{I}_{d_{\ell - 1}} \!\in\! \mathbb{R}^{d_{\ell - 1} \times d_{\ell - 1}}$ are the identity matrices, 
    $\bm{M}_{\ell} \!\in\! \mathbb{R}^{r_{\ell} \times d_{\ell - 1}}$ for some $r_{\ell} \!\in\!  \mathbb{N}_{\geq 1}$, $\bm{K}_{\ell} \!\in\! \mathbb{R}^{(d_{\ell} - d_{\ell - 1}) \times d_{\ell - 1}}$, and $\varepsilon \in \mathbb{R}_{>0}$ is a constant. 
\end{definition}

\begin{theorem}\label{theorem-5-4-2}
    Let $\phi(\bm{x}| \bm{\xi}, \bm{x}^*): \mathbb{R}^{d_0} \!\to\! \mathbb{R}^{d_{n_{\rm f}}}$ denote the unique-zero \ac{fnn} defined in Definition \ref{definition-5-4-5} where $\bm{\xi} \coloneqq \{ \bm{M}_{\ell}, \bm{K}_{\ell} | \ell = 2, 3, \cdots, n_{\rm f} \}$ is the parameter vector. 
    Define $\nu(\bm{x}| \bm{\xi}, \bm{x}^*) \coloneqq \phi(\bm{x}| \bm{\xi}, \bm{x}^*)\T \phi(\bm{x}| \bm{\xi}, \bm{x}^*)$. 
    Then $\nu(\bm{x}^*| \bm{\xi}, \bm{x}^*) = 0$ and $\nu(\bm{x}| \bm{\xi}, \bm{x}^*) > 0$, $\forall \bm{x} \in \mathcal{X} \backslash \{ \bm{x}^* \}$.
\end{theorem}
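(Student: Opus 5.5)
The plan is to prove that the map $\bm{x} \mapsto \phi(\bm{x}|\bm{\xi},\bm{x}^*)$ has $\bm{x}^*$ as its only zero. The claims on $\nu$ then follow at once, since $\nu = \Vert \phi \Vert_2^2$ is nonnegative and vanishes exactly where $\phi$ does.

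\textbf{Step 1 (value at $\bm{x}^*$).} By (\ref{eq-5-4-10:2}) and (\ref{eq-5-4-10:3a}), the pre-activation of the first layer is $\bm{x}-\bm{x}^*$. So at $\bm{x}=\bm{x}^*$ we get $\bm{y}_1 = \gamma_1(\bm{0})$. We read condition (\ref{eq-5-4-10:5}) (``trivial nullspace'') as saying that the element-wise activation satisfies $\gamma_\ell(s)=0 \iff s=0$. This is the natural meaning for activations such as tanh or leaky ReLU. Under this reading $\bm{y}_1=\bm{0}$. For $\ell\ge 2$ the biases vanish by (\ref{eq-5-4-10:4}), so $\bm{y}_\ell = \gamma_\ell(\bm{W}_\ell \bm{0}) = \gamma_\ell(\bm{0}) = \bm{0}$ by induction. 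Hence $\phi(\bm{x}^*)=\bm{0}$ and $\nu(\bm{x}^*)=0$.

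\textbf{Step 2 (no other zeros).} Fix $\bm{x}\neq\bm{x}^*$ and show by induction on $\ell$ that $\bm{y}_\ell\neq \bm{0}$.

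\emph{Base case.} $\bm{y}_1=\gamma_1(\bm{x}-\bm{x}^*)$. Some entry of $\bm{x}-\bm{x}^*$ is nonzero, and $\gamma_1$ maps only $0$ to $0$, so the corresponding entry of $\bm{y}_1$ is nonzero.

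\emph{Inductive step.} Let $\ell\ge 2$ and suppose $\bm{y}_{\ell-1}\neq\bm{0}$. By (\ref{eq-5-4-10:3}), the first $d_{\ell-1}$ rows of $\bm{W}_\ell\bm{y}_{\ell-1}$ equal $(\bm{M}_\ell\T\bm{M}_\ell+\varepsilon\bm{I})\bm{y}_{\ell-1}$. The matrix $\bm{M}_\ell\T\bm{M}_\ell+\varepsilon\bm{I}$ is symmetric positive definite because $\varepsilon>0$. Hence it is invertible, and
\[
\bm{y}_{\ell-1}\T(\bm{M}_\ell\T\bm{M}_\ell+\varepsilon\bm{I})\bm{y}_{\ell-1}\ge\varepsilon\Vert\bm{y}_{\ell-1}\Vert_2^2>0 .
\]
So this sub-block is nonzero, and therefore $\bm{W}_\ell\bm{y}_{\ell-1}\neq\bm{0}$. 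Conditions (\ref{eq-5-4-10:1a}) and (\ref{eq-5-4-10:1}) guarantee the dimensions are consistent: the stacked block form makes sense only because $d_\ell\ge d_{\ell-1}$. Applying $\gamma_\ell$ entry-wise with its trivial nullspace and zero bias gives $\bm{y}_\ell\neq\bm{0}$.

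At $\ell=n_{\rm f}$ we get $\phi(\bm{x})\neq\bm{0}$, so $\nu(\bm{x})=\Vert\phi(\bm{x})\Vert_2^2>0$ for every $\bm{x}\in\mathcal{X}\backslash\{\bm{x}^*\}$.

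\textbf{Main obstacle.} The only delicate point is the interpretation of (\ref{eq-5-4-10:5}). We need it to mean $\gamma_\ell(s)=0\iff s=0$ entry-wise. Step 1 uses the direction $\gamma_\ell(0)=0$, and Step 2 uses the converse. Both hold for standard zero-centred strictly monotone activations. If the intended meaning is weaker, we would add the assumption $\gamma_\ell(0)=0$ explicitly. Everything else rests on the positive definiteness of $\bm{M}_\ell\T\bm{M}_\ell+\varepsilon\bm{I}$, which holds whatever the trainable parameters $\bm{\xi}$ are.
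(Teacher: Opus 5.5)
Your proposal is correct and follows essentially the same route as the paper. Both arguments establish $\bm{x}=\bm{x}^* \iff \bm{y}_1=\bm{0}$ from the trivial-nullspace property of $\gamma_1$, then propagate $\bm{y}_{\ell-1}=\bm{0} \iff \bm{y}_\ell=\bm{0}$ layer by layer using zero biases and the injectivity of $\bm{W}_\ell$ (which comes from the positive-definite top block $\bm{M}_\ell\T\bm{M}_\ell+\varepsilon\bm{I}$), and finally conclude via $\nu=\Vert\phi\Vert_2^2$. Your reading of (\ref{eq-5-4-10:5}) as $\gamma_\ell(s)=0 \iff s=0$ is exactly how the paper uses it, and your quadratic-form bound is simply a more explicit version of the paper's full-rank/rank--nullity step.
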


\begin{proof} 
    For the matrix $\bm{W}_{\ell}$ structured as (\ref{eq-5-4-10:2}) and (\ref{eq-5-4-10:3}), the top partition $\bm{M}_{\ell}\T \bm{M}_{\ell} + \varepsilon \bm{I}_{d_{\ell - 1}}$ with $\varepsilon > 0$ is positive-definite since $\bm{M}_{\ell}\T \bm{M}_{\ell}$ is positive semi-definite and $\varepsilon \bm{I}_{d_{\ell - 1}}$ is positive-definite. Thus, $\bm{W}_{\ell}$ has full rank, and also a trivial nullspace as peer the rank-nullity theorem \citep{4-673}. 
    Recall that for $\phi(\bm{x}| \bm{\xi}, \bm{x}^*)$, $\bm{y}_0 = \bm{x}$, and $\bm{y}_{\ell} = \gamma_{\ell}( \bm{W}_{\ell} \bm{y}_{\ell - 1} + \bm{b}_{\ell})$, $\forall \ell \in \{1, 2, \cdots, n_{\rm f}\}$. Given (\ref{eq-5-4-10:3a}), we have $\bm{y}_1 = \gamma_1(\bm{x} - \bm{x}^*)$, which together with (\ref{eq-5-4-10:5}), yields
    \begin{equation}\label{eq-5-4-11}
        \bm{x} = \bm{x}^*  \Leftrightarrow \gamma_1(\bm{x} - \bm{x}^*) = \bm{0} \Leftrightarrow \bm{y}_1 = \bm{0}.
    \end{equation}
    Given (\ref{eq-5-4-10:4}), (\ref{eq-5-4-10:5}), and the fact that $\bm{W}_{\ell}$ has a trivial nullspace, the following sequence of logical statements holds:
    \begin{equation}\label{eq-5-4-12}
        \bm{y}_{\ell - 1} \!=\! \bm{0} \Leftrightarrow\! \bm{W}_{\ell} \bm{y}_{\ell - 1} \!=\! \bm{0} \Leftrightarrow\! \gamma_{\ell}( \bm{W}_{\ell} \bm{y}_{\ell - 1} ) \!=\! 0  \Leftrightarrow \bm{y}_{\ell} \!=\! \bm{0}
    \end{equation}
    with $\ell \in \{2,3, \cdots, n_{\rm f}\}$. Combining (\ref{eq-5-4-11}) and (\ref{eq-5-4-12}), we have 
    \begin{equation}\label{eq-5-4-13}
        \bm{x} = \bm{x}^* \Leftrightarrow \phi(\bm{x}| \bm{\xi}, \bm{x}^*) = \bm{0}
    \end{equation}
    Moreover, since $\phi(\bm{x}| \bm{\xi}, \bm{x}^*)\T \phi(\bm{x}| \bm{\xi}, \bm{x}^*)$ is positive-definite w.r.t. $\phi(\bm{x}| \bm{\xi}, \bm{x}^*)$, $\nu(\bm{x}| \bm{\xi}, \bm{x}^*) = 0 \Leftrightarrow \phi(\bm{x}| \bm{\xi}, \bm{x}^*) = \bm{0}$ and $\nu(\bm{x}| \bm{\xi}, \bm{x}^*) > 0$ otherwise. This together with (\ref{eq-5-4-13}) shows that $\nu(\bm{x}^*| \bm{\xi}, \bm{x}^*) = 0$ and $\nu(\bm{x}| \bm{\xi}, \bm{x}^*) > 0 $, $\forall \bm{x} \in \mathcal{X}\backslash \{ \bm{x}^* \}$.
\end{proof}

\begin{proposition}\label{proposition-5-4-1}
    Consider the same function $\nu(\bm{x}| \bm{\xi}, \bm{x}^*)$ as in Theorem \ref{theorem-5-4-2}.
     If $\gamma_{\ell}$ is differentiable everywhere, $\nu(\bm{x}| \bm{\xi}, \bm{x}^*)$ is differentiable on $\mathbb{R}^{d_0} \backslash \{ \bm{x}^* \}$.
\end{proposition}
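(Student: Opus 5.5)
The plan is a chain-rule argument. We show that $\phi(\bm{x}|\bm{\xi},\bm{x}^*)$ is differentiable wherever every layer is differentiable, and then pass to $\nu = \phi\T\phi$.

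By Definition \ref{definition-5-4-5}, the network is a composition of affine maps and element-wise activations:
\begin{equation*}
\bm{y}_1 = \gamma_1(\bm{x}-\bm{x}^*), \qquad \bm{y}_\ell = \gamma_\ell(\bm{W}_\ell \bm{y}_{\ell-1}), \quad \ell \in \{2,\cdots,n_{\rm f}\}.
\end{equation*}
Each affine map $\bm{y}\mapsto \bm{W}_\ell\bm{y}+\bm{b}_\ell$ is smooth on all of its domain. An element-wise map built from a scalar function that is differentiable everywhere is differentiable as a map $\mathbb{R}^{d_\ell}\to\mathbb{R}^{d_\ell}$, since its Jacobian is the diagonal matrix $\mathrm{diag}(\gamma_\ell'(\cdot))$ and each component depends on a single coordinate.

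The steps are as follows.
\begin{itemize}
\item[(i)] Fix $\bm{x}\in\mathbb{R}^{d_0}\backslash\{\bm{x}^*\}$.
\item[(ii)] By induction on $\ell$, show that $\bm{x}\mapsto\bm{y}_\ell(\bm{x})$ is differentiable at $\bm{x}$. The Jacobian is
\begin{equation*}
\frac{\partial \bm{y}_\ell}{\partial \bm{x}} = \mathrm{diag}\big(\gamma_\ell'(\bm{W}_\ell\bm{y}_{\ell-1}+\bm{b}_\ell)\big)\,\bm{W}_\ell\,\frac{\partial \bm{y}_{\ell-1}}{\partial \bm{x}},
\end{equation*}
starting from $\partial\bm{y}_0/\partial\bm{x}=\bm{I}$. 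This uses the chain rule for compositions of differentiable maps.
\item[(iii)] Conclude that $\phi=\bm{y}_{n_{\rm f}}$ is differentiable at $\bm{x}$.
\item[(iv)] Since $\bm{u}\mapsto\bm{u}\T\bm{u}$ is smooth with gradient $2\bm{u}$, the chain rule gives that $\nu$ is differentiable at $\bm{x}$, with
\begin{equation*}
\nabla\nu(\bm{x}) = 2\left(\frac{\partial\phi}{\partial\bm{x}}\right)\T\phi(\bm{x}).
\end{equation*}
\end{itemize}

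The delicate point is step (ii), specifically the role of $\bm{x}^*$. Under the stated hypothesis that each $\gamma_\ell$ is differentiable everywhere, the argument above gives differentiability on all of $\mathbb{R}^{d_0}$, so excluding $\bm{x}^*$ costs nothing. The exclusion matters only for activations that fail to be differentiable at $0$, such as the rrelu-type or absolute-value-type functions that are natural choices for a trivial-nullspace activation.

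To cover that case, I would add a remark using the proof of Theorem \ref{theorem-5-4-2}, specifically (\ref{eq-5-4-11})–(\ref{eq-5-4-13}). For $\bm{x}\neq\bm{x}^*$, every layer output $\bm{y}_\ell$ is nonzero. However, single coordinates of the pre-activations $\bm{W}_\ell\bm{y}_{\ell-1}$ may still vanish. Because of this, I would state the proposition under the hypothesis as given, which is everywhere differentiability, and keep the exclusion of $\bm{x}^*$ merely as harmless.

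I expect no real obstacle beyond checking carefully that the element-wise map is Fréchet differentiable, and not just separately differentiable in each coordinate. This holds because each output coordinate depends on exactly one input coordinate, so the linear approximation error decomposes coordinate-wise.
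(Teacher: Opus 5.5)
Your proposal is correct and is essentially the paper's proof: the paper applies the chain rule layer by layer, with $\bm{\eta}_\ell = \bm{W}_\ell\bm{y}_{\ell-1}+\bm{b}_\ell$, to get $\partial\nu/\partial\bm{x} = 2\,(\partial\phi/\partial\bm{x})\T\phi$, and concludes from the everywhere-differentiability of each $\gamma_\ell$. Your extra observation is also correct: under the stated hypothesis the argument gives differentiability on all of $\mathbb{R}^{d_0}$, so excluding $\bm{x}^*$ costs nothing.
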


\begin{proof}
    Let $\bm{\eta}_{\ell} = \bm{W}_{\ell} \bm{y}_{\ell-1} + \bm{b}_{\ell}$. 
    By the chain rule, $\frac{\partial \nu(\bm{x}| \bm{\xi}, \bm{x}^*)}{\partial \bm{x}} = 2 ( \prod_{\ell = 1}^{n_{\rm f}}$ $\frac{\partial \gamma_{\ell}(\bm{\eta}_{\ell}) }{\partial \bm{\eta}_{\ell}} \bm{W}_{\ell} )\T \cdot \phi( \bm{x}| \bm{\xi}, \bm{x}^* ) $. Thus, $\frac{\partial \nu(\bm{x}| \bm{\xi}, \bm{x}^*)}{\partial \bm{x}}$ exists at all $\bm{x} \in \mathbb{R}^{d_0} \backslash \{ \bm{x}^* \}$, if $\frac{\partial \gamma_{\ell}(\bm{\eta}_{\ell}) }{\partial \bm{\eta}_{\ell}}$ exists at all $\bm{\eta}_{\ell} \in \mathbb{R}^{n_{\ell}}$, i.e., $\gamma_{\ell}$ is differentiable everywhere, $\forall \ell \in \{1,2, \cdots, n_{\rm f}\}$. 
\end{proof}

\subsection{Neural Multiple Lyapunov Functions Method}\label{sec-5-4-4}

Based on the established theoretical foundations, the neural \ac{mlf}s method can be used to to solve Problem \ref{problem-5-4-1}. This method constructs the Lyapunov-like functions in Theorem \ref{theorem-5-4-1} utilizing the unique-zero \ac{fnn} and search their parameters following the framework of counter-example guided inductive synthesis \citep{4-1045, 4-1650}. 

\subsubsection{Structural construction of neural \ac{mlf}s}

The candidates of the Lyapunov-like functions in Theorem \ref{theorem-5-4-1} is first designed as 
\begin{equation}\label{eq-5-4-14}
    V_k(\bm{x}| \bm{\xi}_k, \bm{x}_k^*) = \phi_k(\bm{x} | \bm{\xi}_k, \bm{x}_k^*)\T \phi_k(\bm{x} | \bm{\xi}_k, \bm{x}_k^*) ~~ \forall k \in \mathcal{Q}
\end{equation}
where $\phi_k(\bm{x} | \bm{\xi}_k, \bm{x}_k^*)\!:\! \mathcal{X} \!\!\to\!\! \mathbb{R}^{D}$ denotes a unique-zero \ac{fnn} w.r.t. $\bm{x}_k^*$ introduced for subsystem $k$, {formulated by Definition \ref{definition-5-4-4} together with (\ref{eq-5-4-10});} 
and $\bm{\xi}_k$ is the parameter vector of the \ac{fnn}. 
It can be found that these candidate functions automatically satisfy conditions (\ref{eq-5-4-7:1}) and (\ref{eq-5-4-7:2}) for Lyapunov-like functions. 

\begin{remark}[Selection of $\gamma_{\ell}$]
    The activation functions $\gamma_{\ell}$ need to satisfy two requirements: having a trivial nullspace as stated by (\ref{eq-5-4-10:5}), and being differentiable everywhere. Accordingly, activation functions such as Tanh and Swish can be used \citep{4-1666}.
    It is noted that the second requirement comes from condition (\ref{eq-5-4-7:3}) which requires the existence of the Lie derivative of $V_k(\bm{x}| \bm{\xi}_k, \bm{x}_k^*)$ at $\bm{x} \in \mathcal{X} \backslash \{ \bm{x}_k^* \}$. As per Proposition \ref{proposition-5-4-1}, the existence can be guaranteed by the second requirement. 
\end{remark}

\begin{remark}[Selection of $r_{\ell}$]
    In condition (\ref{eq-5-4-10:3}), 
    the matrix $\bm{M}_{\ell} \in \mathbb{R}^{r_{\ell} \times d_{\ell - 1}}$ has $r_{\ell} d_{\ell -1}$ free parameters, and the symmetric matrix $\bm{M}_{\ell}\T \bm{M}_{\ell} \in \mathbb{R}^{d_{\ell -1} \times d_{\ell -1}}$ has at most $\sum_{j=1}^{d_{\ell -1}} j = d_{\ell -1}(d_{\ell -1} + 1)/2$ free parameters. The independence of the $d_{\ell -1}(d_{\ell -1} + 1)/2$ free parameters of $\bm{M}_{\ell}\T \bm{M}_{\ell}$ requires $r_{\ell} d_{\ell -1} \geq d_{\ell -1}(d_{\ell -1} + 1)/2$. Thus to minimize the number of free parameters required by the Lyapunov-like candidate functions, the value of $r_{\ell}$ can be selected as $\lceil (d_{\ell -1} + 1)/2 \rceil$. 
\end{remark}


\subsubsection{Parameter searching for neural \ac{mlf}s}

Given the Lyapunov-like candidate functions $V_k(\bm{x}| \bm{\xi}_k, \bm{x}_k^*)$, 
Problem \ref{problem-5-4-1} now amounts to searching the value of $\{ \bm{\xi}_k | k \in \mathcal{Q}\}$ with which conditions (\ref{eq-5-4-7:3}) together with conditions (\ref{eq-5-4-6}) for Metzler matrices for all subsystems holds. For notational simplicity, let $\bm{\Xi}$ denote the vector formed by all entries of $\{ \bm{\xi}_k | k \in \mathcal{Q}\}$; and recall the Metzler matrix $\bm{\Pi} =  [\pi_{lk}] \in \mathbb{R}^{n \times n}$. Then the desired value of $\bm{\Xi}$ can be yielded by solving the following problem:

\begin{problem}[Parameter searching problem]\label{problem-5-4-2}
    Given system (\ref{eq-5-4-3}), $\mathcal{X}$, $\mathcal{X}^*$, and $V_k(\cdot|\cdot, \cdot)$, find $(\bm{{\Xi}}^*, \bm{{\Pi}}^*) \in $
    \begin{equation*}
        \begin{aligned}
            & \arginf_{\bm{{\Xi}}, \bm{{\Pi}} } 
            \sum_{k\in \mathcal{Q}} 
            \Big[ 
            \sum_{l \in \mathcal{Q}\backslash \{k\} } \max(0, -\pi_{lk}) 
            + \Big( \sum_{l \in \mathcal{Q}} \pi_{lk} \Big)^{\!2}
            +
            \\
            & \sup_{\bm{x} \in \mathcal{X}_k }  \Big(  
                \max\limits_{\rho}  \big(
                    0,
                    \mathcal{L}_{f_k} V_k(\bm{x}| \bm{\xi}_k, \bm{x}_k^*) 
                    + \sum_{l \in \mathcal{Q}} \pi_{lk} V_l(\bm{x}| \bm{\xi}_l, \bm{x}_l^*)
                    \big)  
            \Big)  
            \Big]
        \end{aligned}
    \end{equation*}
    where $\mathcal{X}_k \coloneqq \mathcal{X} \backslash \{\bm{x}^*_k\}$, $\rho$ is a positive constant introduced for controlling the numerical error of converting a strict inequality to a nonstrict one, $\max\nolimits_{\rho}(\cdot, \cdot) \coloneqq \max(\cdot, \cdot +\rho)$, and the value of $\rho$ is orders of magnitude smaller than the range of the second term in $\max\nolimits_{\rho}(\cdot, \cdot)$.
\end{problem}

\begin{proposition}
    Given system (\ref{eq-5-4-3}), $\mathcal{X}$, and $\mathcal{X}^*$, if the infimum in Problem \ref{problem-5-4-2} equals to 0, the Lyapunov-like functions exist for all subsystems; and denoting by $\bm{\xi}_k^*$ the subvector of $\bm{\Xi}^*$ for $\bm{\xi}_k$, for $k \in \mathcal{Q}$, $V_k(\bm{x}|\bm{\xi}_k^*, \bm{x}_k^*)$ is a Lyapunov-like function for $f_k$ and trajectory $\bm{x}(t)$ over $\{ t | \sigma(t) = k \}$, with $\bm{\Pi}^*$ being the associated value of the Metzler matrix.
\end{proposition}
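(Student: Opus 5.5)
The approach is to read off each of the three nonnegative summands of the objective in Problem \ref{problem-5-4-2} at an optimum attaining the infimum $0$. Each summand is nonnegative by construction: hinge terms $\max(0,\cdot)$, a square, and a supremum of $\max_\rho(0,\cdot)\ge 0$. So a zero total forces every summand to vanish for every $k\in\mathcal{Q}$. One technical point comes first: the statement speaks of $(\bm{\Xi}^*,\bm{\Pi}^*)$ attaining the infimum, and I will assume the infimum is attained, as the statement implicitly does when it names $\bm{\Xi}^*$. If it is not attained, I would argue along a minimizing sequence and note that only the nonstrict conclusions survive the limit, so attainment is genuinely needed.

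The first two terms are handled directly. From $\sum_{l\neq k}\max(0,-\pi^*_{lk})=0$ we get $\pi^*_{lk}\ge 0$ for all $l\neq k$. From $(\sum_{l}\pi^*_{lk})^2=0$ we get column sums equal to zero. Together these are exactly conditions (\ref{eq-5-4-6}), so $\bm{\Pi}^*\in\mathcal{P}$.

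The third term is where the real content lies. We have $\sup_{\bm{x}\in\mathcal{X}_k}\max(0,\,\mathcal{L}_{f_k}V_k+\sum_l\pi^*_{lk}V_l+\rho)=0$. Hence for every $\bm{x}\in\mathcal{X}\backslash\{\bm{x}_k^*\}$,
\begin{equation*}
\mathcal{L}_{f_k}V_k(\bm{x}|\bm{\xi}_k^*,\bm{x}_k^*)+\sum_{l\in\mathcal{Q}}\pi^*_{lk}V_l(\bm{x}|\bm{\xi}_l^*,\bm{x}_l^*)\le-\rho<0,
\end{equation*}
which is the strict inequality (\ref{eq-5-4-7:3}). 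This is exactly why $\rho>0$ was introduced: it converts a nonstrict optimum condition into the strict Lyapunov--Metzler inequality. I also need the Lie derivative to exist on $\mathcal{X}_k$. That follows from Proposition \ref{proposition-5-4-1}, using the differentiable, trivial-nullspace activation choice in the remark on $\gamma_\ell$.

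The remaining conditions (\ref{eq-5-4-7:1})--(\ref{eq-5-4-7:2}) hold automatically for every parameter value. By the construction (\ref{eq-5-4-14}) and Theorem \ref{theorem-5-4-2}, each $V_k$ vanishes only at $\bm{x}_k^*$ and is positive elsewhere on $\mathcal{X}$. Hence each $V_k(\cdot|\bm{\xi}_k^*,\bm{x}_k^*)$ is a Lyapunov-like function with Metzler matrix $\bm{\Pi}^*$.

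The final clause, about $V_k$ being valid along the trajectory while $\sigma(t)=k$, then follows by restricting the pointwise inequality to the states visited on $\{t\,|\,\sigma(t)=k\}$, which lie in $\mathcal{X}$ by the positive invariance argument of Theorem \ref{theorem-5-4-1}. I expect the only delicate step to be the attainment and sup-to-pointwise passage in the third term; everything else is bookkeeping.
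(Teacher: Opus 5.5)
Your argument is correct and is the same reasoning the paper relies on. The paper states this proposition without a separate proof and leans on two remarks: the candidates (\ref{eq-5-4-14}) automatically satisfy (\ref{eq-5-4-7:1})--(\ref{eq-5-4-7:2}) by Theorem \ref{theorem-5-4-2}, and the $\rho$-shift turns the strict inequality (\ref{eq-5-4-7:3}) into a nonstrict one, so a zero optimum yields (\ref{eq-5-4-6}) and (\ref{eq-5-4-7:3}) term by term.

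One correction to your aside on non-attainment. The margin $\rho$ keeps the inequality in (\ref{eq-5-4-7:3}) strict for all late terms of any minimizing sequence. Also, the objective is lower semicontinuous in $(\bm{\Xi},\bm{\Pi})$, so a convergent minimizing sequence already gives an exact minimizer. The real obstruction is therefore diverging parameters combined with matrices that are only approximately Metzler, not a loss of strictness in the limit. As you note, the statement sidesteps this case by naming $(\bm{\Xi}^*,\bm{\Pi}^*)$.
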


Following the framework of counter-example guided inductive synthesis in \citep{4-1045}, Problem \ref{problem-5-4-2} can be solved by executing two modules interactively, called \textit{learner} and \textit{verifier}, respectively. The learner first solves a surrogate parameter searching problem which is similar to Problem \ref{problem-5-4-2} but can be tackled by the global optimizers used for training neural networks, e.g., the stochastic gradient descent (SGD) algorithm and Adam algorithm \citep{4-1507}. The verifier verifies whether the surrogate problem and Problem \ref{problem-5-4-2} are equivalent. If the equivalence holds, then Problem \ref{problem-5-4-2} is solved; otherwise, the verifier modifies the surrogate parameter searching problem and returns to the learner module. This loop is repeated until the equivalence is verified. 
The topology control will shut off temporarily in response to possible failures of the parameter searching, which will be further clarified in the context of the application framework provided later. 


More specifically, the surrogate parameter searching problem solved by the learner is defined as follows:
\begin{problem}[Surrogate parameter searching problem]\label{problem-5-4-3}
    Given system (\ref{eq-5-4-3}), $\mathcal{X}^*$, $V_k(\cdot|\cdot)$, and a finite set of samples $\mathcal{S} \subset \mathcal{X}$, 
    find $(\bm{\Xi}^{\star}, \bm{\Pi}^{\star}) \in $
    \begin{equation*}
        \begin{aligned}
            & 
            \arginf_{\bm{{\Xi}}, \bm{{\Pi}} }
            L(\bm{\Xi}, \bm{\Pi}) 
            \coloneqq \!\! 
            \sum_{k\in \mathcal{Q}} 
                \Big[ 
                \sum_{l \in \mathcal{Q}\backslash \{k\} } \!\!\! \max(0, -\pi_{lk}) 
                \!+\! \Big( \sum_{l \in \mathcal{Q}} \pi_{lk} \Big)^{\!2} 
            +
            \\
            &
            \frac{1}{ |\mathcal{S}| } \sum_{\bm{x} \in \mathcal{S}} 
            \max\limits_{\rho} \big(
                0,
                \mathcal{L}_{f_k} V_k(\bm{x}| \bm{\xi}_k, \bm{x}_k^*) 
                + \sum_{l \in \mathcal{Q}} \pi_{lk} V_l(\bm{x}| \bm{\xi}_l, \bm{x}_l^*)
            \big)       
            \Big].
        \end{aligned}
    \end{equation*}
\end{problem}

\begin{proposition}\label{proposition-5-4-2}
    Given system (\ref{eq-5-4-3}), $\mathcal{X}^*$, $V_k(\cdot|\cdot)$, and $\mathcal{S}$, denoting by $\bm{\xi}_k^{\star}$ the subvector of $\bm{\Xi}^{\star}$ for $\bm{\xi}_k$ and $\pi_{lk}^{\star}$ the entry of $\bm{\Pi}^{\star}$ for $\pi_{lk}$, 
    if the infimum in Problem \ref{problem-5-4-3} equals to 0 and the equality
    \begin{equation}\label{eq-5-4-a} 
            \sup_{\bm{x} \in \mathcal{X}_k }  \!\!\Big(\!\!  
            \max\limits_{\rho} \big(
                0,
                \mathcal{L}_{f_k} V_k(\bm{x}| \bm{\xi}_k^{\star}, \bm{x}_k^*) 
                +\! \sum_{l \in \mathcal{Q}} \!\pi_{lk}^{\star} V_l(\bm{x}| \bm{\xi}_l^{\star}, \bm{x}_l^*)
                \big)  
        \Big)  
            \!=\! 0 
    \end{equation}
    holds for all $k \in \mathcal{Q}$, then Problem \ref{problem-5-4-2} and Problem \ref{problem-5-4-3} are equivalent, namely that the infimum in Problem \ref{problem-5-4-2} equals to 0 and there exists $(\bm{{\Xi}}^*, \bm{{\Pi}}^*)$ equal to $(\bm{{\Xi}}^{\star}, \bm{{\Pi}}^{\star})$.
\end{proposition}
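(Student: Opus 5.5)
The argument rests on three facts: every term of the objective in Problem \ref{problem-5-4-2} is nonnegative, the first two (Metzler) terms are shared verbatim with Problem \ref{problem-5-4-3}, and the third term of Problem \ref{problem-5-4-2} is exactly the supremum appearing in (\ref{eq-5-4-a}). The plan is therefore to show that $(\bm{\Xi}^{\star},\bm{\Pi}^{\star})$ drives the objective of Problem \ref{problem-5-4-2} to zero, and then use nonnegativity to conclude that this value is the infimum and is attained at $(\bm{\Xi}^{\star},\bm{\Pi}^{\star})$.

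First, denote the objective of Problem \ref{problem-5-4-2} by $J(\bm{\Xi},\bm{\Pi})$. Each summand $\max(0,-\pi_{lk})$, $(\sum_l \pi_{lk})^2$ and $\sup_{\bm{x}}\max_\rho(0,\cdot)$ is nonnegative. Hence $J\ge 0$ everywhere, and likewise $L\ge 0$. This bounds the infimum of Problem \ref{problem-5-4-2} below by $0$.

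Next, extract information from $L(\bm{\Xi}^{\star},\bm{\Pi}^{\star})=0$. I read ``the infimum equals 0'' as attained at $(\bm{\Xi}^{\star},\bm{\Pi}^{\star})$, which is how $(\bm{\Xi}^{\star},\bm{\Pi}^{\star})\in\arginf$ is intended. Since $L$ is a sum of nonnegative terms, each term vanishes at this point. In particular $\max(0,-\pi^{\star}_{lk})=0$ for $l\neq k$, and $\sum_l\pi^{\star}_{lk}=0$ for every $k$. So $\bm{\Pi}^{\star}\in\mathcal{P}$, and the first two terms of $J$ also vanish at $(\bm{\Xi}^{\star},\bm{\Pi}^{\star})$.

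Then apply hypothesis (\ref{eq-5-4-a}). It states that for every $k$ the third term of $J$ is zero at $(\bm{\Xi}^{\star},\bm{\Pi}^{\star})$. Combining this with the previous step gives $J(\bm{\Xi}^{\star},\bm{\Pi}^{\star})=0\le J(\bm{\Xi},\bm{\Pi})$ for all $(\bm{\Xi},\bm{\Pi})$. Therefore the infimum of Problem \ref{problem-5-4-2} equals $0$, it is attained, and one may choose $(\bm{\Xi}^*,\bm{\Pi}^*)=(\bm{\Xi}^{\star},\bm{\Pi}^{\star})$, which proves the claimed equivalence.

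The main subtlety is the attainment issue in the second step: an infimum of $0$ alone would not force the individual terms to vanish at $(\bm{\Xi}^{\star},\bm{\Pi}^{\star})$. I handle it by the convention that $(\bm{\Xi}^{\star},\bm{\Pi}^{\star})$ is a minimizer achieving $L=0$. A secondary observation is that (\ref{eq-5-4-a}) with $\rho>0$ forces the Lyapunov–Metzler expression to be $\le -\rho<0$ on $\mathcal{X}_k$. This is stronger than needed, but it is consistent with the preceding proposition, which then yields Lyapunov-like functions.
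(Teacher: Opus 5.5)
Your proof is correct. The paper states Proposition~\ref{proposition-5-4-2} without proof, treating it as immediate, and your nonnegativity argument is exactly the verification it leaves implicit: the two Metzler penalty terms shared with $L$ vanish because $L(\bm{\Xi}^{\star},\bm{\Pi}^{\star})=0$, the supremum terms vanish by (\ref{eq-5-4-a}), and nonnegativity of the Problem~\ref{problem-5-4-2} objective makes $0$ its attained infimum. Your attainment caveat is not merely a convention but holds by definition: Problem~\ref{problem-5-4-3} asks for $(\bm{\Xi}^{\star},\bm{\Pi}^{\star})$ in the $\arginf$, so $L(\bm{\Xi}^{\star},\bm{\Pi}^{\star})$ equals the infimum $0$.
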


As per Proposition \ref{proposition-5-4-2}, the verifier can verify the equivalence of Problem \ref{problem-5-4-2} and Problem \ref{problem-5-4-3} by checking the validness of equality (\ref{eq-5-4-a}). Let $\bm{\pi}_k^{\star} \coloneqq [\pi_{lk}]_{l \in \mathcal{Q}}$. 
For computational tractability, the following first-order logic formulae is considered:
\begin{equation*}\!\!
    \begin{aligned}
         & F(\bm{x}| \bm{\Xi}^{\star}, \bm{\Pi}^{\star}) \coloneqq \exists \bm{x}. \bigvee_{k \in \mathcal{Q}} F_k(\bm{x} | \bm{\xi}_k^{\star}, \bm{\pi}_k^{\star}) \coloneqq  
         \Big[ \big[ (\bm{x} \in \mathcal{X}) \wedge 
         \\
         &  (\bm{x} \neq\! \bm{x}_k^*) \wedge \Big( \mathcal{L}_{f_k} \!V_k(\bm{x}| \bm{\xi}_k^{\star}, \bm{x}_k^*) 
         + \sum_{l \in \mathcal{Q}} \pi_{lk}^{\star} V_l(\bm{x}| \bm{\xi}_l^{\star}, \bm{x}_l^*) \!\geq\! 0 \!\Big) \big] 
         \Big],
    \end{aligned}
\end{equation*}
which clearly is false iff equality (\ref{eq-5-4-a}) holds for all $k \in \mathcal{Q}$. 
$F(\bm{x}| \bm{\Xi}^{\star}, \bm{\Pi}^{\star})$ can be solved by Satisfiability Modulo Theories (\ac{smt}) solvers \citep{4-1509}. When $F(\bm{x}|\bm{\Xi}^{\star}, \bm{\Pi}^{\star})$ is true, \ac{smt} solvers will yield a satisfying assignment denoted as $\bm{x}_{\rm c}$. The equality (\ref{eq-5-4-a}) does not hold for some $k \in \mathcal{Q}$ since $\mathcal{X}_k$ includes $\bm{x}_{\rm c}$ while the sample set $\mathcal{S}$ does not when solving Problem \ref{problem-5-4-3}. Accordingly, adding $\bm{x}_{\rm c}$ to $\mathcal{S}$ can promote the validness of equality (\ref{eq-5-4-a}) and thus the equivalence of Problem \ref{problem-5-4-2} and Problem \ref{problem-5-4-3}. 

The pseudocode of solving the parameter searching problem is given in Algorithm \ref{alg-5-4-1}. Here $n_{\rm s}$ is the initial size of the sample set $\mathcal{S}$, $n_{c}$ is the number of additional points chosen from the neighborhood of each $\bm{x}_{{\rm c}, k}$ to augment $\mathcal{S}$, and $\mathcal{B}_{\delta}(\bm{x}_{{\rm c}, k}) \coloneqq \{\bm{x} \in \mathcal{X} \bm{|} \Vert \bm{x} - \bm{x}_{{\rm c}, k} \Vert_2 \leq \delta \}$ with $\delta$ being a positive constant that controls the neighborhood size. 
The value of $\delta$ can be an order of magnitude smaller than the range of $\bm{x}$. 

\begin{algorithm}\label{alg-5-4-1}



    \DontPrintSemicolon
    \KwInput{$f_k$, $\mathcal{Q}$, $\mathcal{X}$, $\mathcal{X}^*$, $V_k(\cdot|\cdot)$, $n_{\rm s}$, $n_{\rm c}$, $\rho$, $\delta$}

    $\mathcal{S} \gets$ randomly generate $n_{\rm s}$ samples of $\bm{x} \in \mathcal{X}$; $\mathcal{A} \gets \emptyset$
    
    \Repeat{$F_k(\bm{x}| \bm{\xi}_k^{\star}, \bm{\pi}_k^{\star})$ is false for all $k \in \mathcal{Q}$}
    {   
        $\mathcal{S} \gets \mathcal{S} \cup \mathcal{A}$

        $(\bm{\Xi}^{\star}, \bm{\Pi}^{\star}) \gets $ solve Problem \ref{problem-5-4-3} by the global optimizer

        \lIf{$L(\bm{\Xi}^{\star}, \bm{\Pi}^{\star}) > 0$}{\textbf{Break}}

        \For{$k \in \mathcal{Q}$}    
        { 
            Solve $F_k(\bm{x}| \bm{\xi}_k^{\star}, \bm{\pi}_k^{\star})$ by the \ac{smt} solver

            \If{$F_k(\bm{x}| \bm{\xi}_k^{\star}, \bm{\pi}_k^{\star}) $ is true}
            {   
                $\!\!\!\bm{x}_{{\rm c}, k} \!\gets\!$ a satisfying assignment of $F_k(\bm{x}| \bm{\xi}_k^{\star}, \bm{\pi}_k^{\star}) $

                $\!\!\!\mathcal{A}_k \!\gets\!$ randomly generate $n_{\rm c}$ point from $\mathcal{B}_{\delta}(\bm{x}_{{\rm c}, k})$

                $\!\!\!\mathcal{A} \gets \mathcal{A} \cup \mathcal{A}_k \cup \{ \bm{x}_{{\rm c}, k} \} $
            }
        }
    }
    \lIf{$L(\bm{\Xi}^{\star}, \bm{\Pi}^{\star}) = 0$}{\Return{$\bm{\Xi}^* \gets \bm{\Xi}^{\star}$}}
    \lElse{\Return{Parameter searching fails}}
    \caption{Parameter searching for neural \ac{mlf}s}
    \label{alg-5-4-1}
\end{algorithm}

\subsubsection{Solution of the topological switching stabilization problem}

Given $\bm{\Xi}^*$ returned by Algorithm \ref{alg-5-4-1}, following (\ref{eq-5-4-8}) and (\ref{eq-5-4-9}), the state-dependent topology switching law $u$ and the subset of the stabilizable region $\mathcal{O}$ respectively are 
\begin{subequations}\label{eq-5-4-sol}
    \begin{align}
        &\!\!\! u(\bm{x}(t)) = \argmin_{k \in \mathcal{Q}} \{ V_k(\bm{x}(t) | \bm{\xi}_k^*, \bm{x}_k^* ) \} \label{eq-5-4-sol:1}
        \\
        &\!\!\! 
        \mathcal{O} = \{ \bm{z} \in \mathcal{X} | \min_{k \in \mathcal{Q}}\! V_k(\bm{z} | \bm{\xi}_k^*, \bm{x}_k^* ) \leq \min_{\tilde{\bm{z}} \in \partial \mathcal{X}} \min_{ k \in \mathcal{Q}}  V_k(\tilde{\bm{z}} | \bm{\xi}_k^*, \bm{x}_k^* ) \} \label{eq-5-4-sol:2}
    \end{align}
\end{subequations}
with
\begin{equation}
    V_k(\bm{x}(t) | \bm{\xi}_k^*, \bm{x}_k^* ) = \phi_k(\bm{x}(t) | \bm{\xi}_k^*, \bm{x}_k^* )\T \phi_k(\bm{x}(t) | \bm{\xi}_k^*, \bm{x}_k^*).     
\end{equation}

\subsection{Applicability and Application Framework}\label{sec-5-4-5}

We further analyze the applicability of transient topology control for stabilizing MMGs with the neural MLFs method, and also provide the application framework. 

\subsubsection{Applicability}

The proposed topology control follows the state-feedback control paradigm and the state $\bm{x}(t)$ fed to the controller is in practice a discrete-time signal. Thus, 
the computation time of ${u}(\bm{x}(t))$ given by (\ref{eq-5-4-sol:1}) for a sampling point of $\bm{x}(t)$ is required to be strictly less than the sampling period. This requirement can be satisfied given that the cardinality of $\mathcal{Q}$ is generally small and the Lyapunov-like function $V_k$ is an algebraic function. 
However, $V_k$ for each subsystem should be obtained before computing ${u}(\bm{x}(t))$, which requires Algorithm \ref{alg-5-4-1} to be completed before a disturbance occurs. 
Moreover, according to the basic principle of the topology control for MMG stabilization, the stabilizability assessment needs to be completed almost instantaneously after the disturbance clearance. This requires obtaining the value of the RHS of (\ref{eq-5-4-sol:2}) before a disturbance occurs, with which if the state $\bm{x}_0$ is in $\mathcal{O}$ can be determined instantaneously. 
After a disturbance, to ensure the pre-obtained $V_k$ and the value of the RHS of (\ref{eq-5-4-sol:2}) to be valid, it is required that all network topologies in $\mathcal{Q}$ are still reachable and all parameters in (\ref{eq-5-4-1}) and (\ref{eq-5-4-2}) remain unchanged.  

The above considerations naturally yield the following conditions to employ the proposed transient topology control: 

\begin{condition}\label{cond-5-4-1}
    The setpoints $\omega_i^{\rm s}$, $v_i^{\rm s}$, $p_i^{\rm s}$ and $q_i^{\rm s}$ are adjusted periodically. Assume the adjustment time interval is $T_1$, and the total time of computing $\mathcal{X}^*$ and the RHS of (\ref{eq-5-4-sol:2}) and performing Algorithm \ref{alg-5-4-1} is $T_2$. Then it is required that $T_2 \leq T_1$. 
\end{condition}

\begin{condition}\label{cond-5-4-2}
    The topology control is allowed to be activated only after the disturbance that is cleared without changing the network topology of the MMG system.  
\end{condition}

\begin{remark}
    Let $T_3$ be the time interval between implementing a setpoint value and completing computation of the next setpoint value. 
    In Condition \ref{cond-5-4-1}, $T_2 \leq T_1$ ensures that for each time period of $T_1$ where the setpoints remain constant, there exists a subperiod of $\min \{ 2T_1 - T_2 - T_3, T_1\}$, where the topology control is able to be activated. Clearly, the topology control is always able to be activated if $T_2 \leq T_1 - T_3$. This stronger inequality can still be satisfied given the small number of microgrids in a MMG system and the typical value of $T_1$, e.g., 30 min. Moreover, Condition \ref{cond-5-4-2} holds for disturbances occurring within individual microgrids, which account for the majority of disturbances of the entire MMG system. 
\end{remark}

\subsubsection{Application framework}

\begin{figure}[t]
	\centering
 	\includegraphics[scale=0.6]{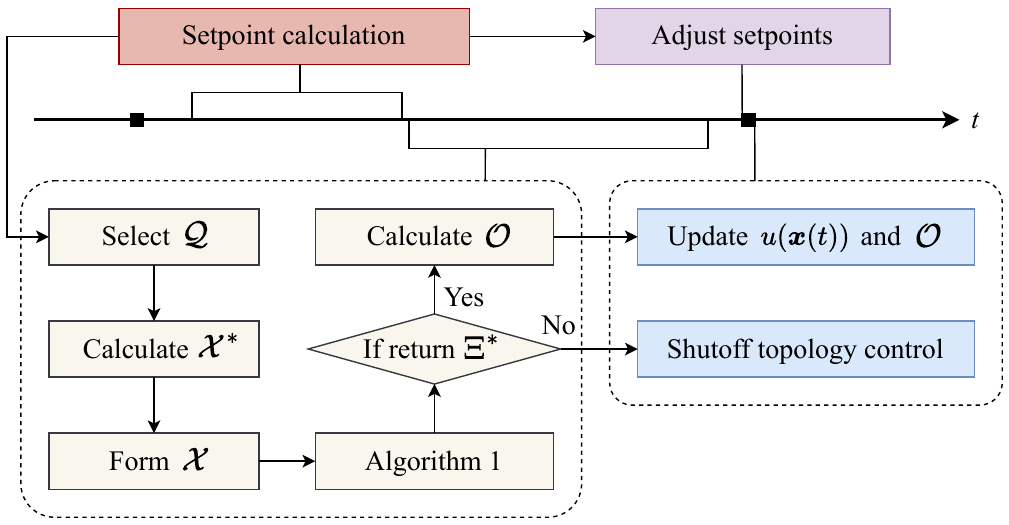}  
	\caption{Application framework of transient topology control for MMG stabilization.}
	\label{fig-5-4-3}
\end{figure}

The application framework of the proposed transient topology control for MMG stabilization is illustrated in Fig. \ref{fig-5-4-3}. Within each operating cycle of the MMG system, the setpoint calculation is executed first. Then the eligible subsystem set $\mathcal{Q}$ is determined based on the current status of lines, and the equilibrium of interest for each subsystem is calculate subsequently to form $\mathcal{X}^*$. The stability of the equilibria needs to be checked. 
{Specifically, the eligible network topologies in $\mathcal{Q}$ are selected according to the following five criteria: 
{(a)} $\mathcal{Q}$ includes the current network topology; 
{(b)} all topologies in $\mathcal{Q}$ are connected; 
{(c)} during the transition between any two topologies in $\mathcal{Q}$, asynchronous line-switching actions do not cause network disconnection; 
{(d)} the equilibrium associated with each topology in $\mathcal{Q}$ is stable; 
and {(e)} the number of topologies in $\mathcal{Q}$ should be maximized. 
For these criteria, 
criterion {(c)} ensures that during the topology transition, when both line opening and line closing are needed, performing only the opening of lines due to the uncertainty of real-world line-switching delays does not disconnect the MMG system. 
Criteria {(b)} and {(c)} together ensure that the network topology remains connected during the topology control. 
Criterion {(d)} is necessary to guarantee the existence of $V_k(\bm{x})$ for each subsystem in $\mathcal{Q}$, according to Proposition \ref{prop-5-4-0}. 
Criterion (e) maximizes the stabilizing capability of the topology control w.r.t. all possible disturbance scenarios. 
It is worth noting that the maximal number of topologies in $\mathcal{Q}$ is generally small given criterion (c). 
} 

Given the expression of $\mathcal{X}$, Algorithm \ref{alg-5-4-1} is executed. If Algorithm \ref{alg-5-4-1} fails to search the parameters $\bm{\Xi}$, the topology control will shut off at the end of the current operating cycle; or otherwise, the region $\mathcal{O}$ is calculated and the topology control will be activated with $u(\bm{x}(t))$ and $\mathcal{O}$ updated and the setpoints adjusted at the end of the current operating cycle.

\subsection{Numerical Example: 4-microgrid System}\label{sec-5-4-6}

The proposed topology control for \ac{mmg} stabilization is demonstrated using a simple 4-microgrid system. Fig. \ref{fig-5-4-4} shows the diagram of the system that consists of four microgrids and five tie lines. 
According to the criteria for determining $\mathcal{Q}$, four subsystems shown in Fig. \ref{fig-5-4-4} are used for topology control. Subsystem 1 corresponds to the topology with all lines closed, and the other three subsystems are formed by switching off a line from subsystem 1. {Detailed data of this test \ac{mmg} system is provided in Table \ref{tab-5-4-ap-1}. 

\begin{figure}[h]
	\centering
 	\includegraphics[scale=1.1]{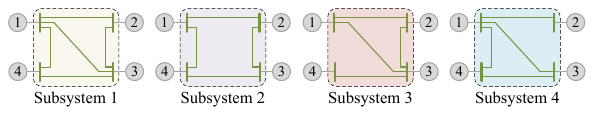}  
	\caption{Diagram of the 4-microgrid system and its subsystems.}
	\label{fig-5-4-4}
\end{figure}

\begin{table}[h!]
    \centering
    \setlength{\tabcolsep}{2.93pt}  
    \setlength\extrarowheight{4pt} 

    \caption{Parameters of the 4-microgrid system}
    \small{
    \begin{tabular*}{0.917\hsize}{|lllllllll|} \hline
    \multicolumn{9}{|c|}{Parameters of the converter interface of each microgrid} \\ \hline
    MG & $\tau_{{\rm p}, i}$(s)  & $k_{{\rm p}, i}$ &  $\tau_{{\rm q}, i}$(s) &  $k_{{\rm q}, i}$ & $\omega_i^{\rm s}$(p.u.) & $v_i^{\rm s}$(p.u.) & $p_i^{\rm s}$(MW) & $q_i^{\rm s}$(MW) \\ \hline 
    MG1 & 0.5 & 0.025 & 0.5 & 0.02 &  1.0 & 1.0400  & 1.3467   &  1.1317  \\
    MG2 & 0.4 & 0.020 & 0.4 & 0.02 &  1.0 & 0.9300 & -0.9000   &  -0.8000 \\
    MG3 & 0.5 & 0.015 & 0.5 & 0.02 &  1.0 & 0.9712 & -0.5000   &  -0.6000 \\
    MG4 & 0.2 & 0.012 & 0.2 & 0.02 &  1.0 & 1.0293 & 0.2000    &  0.4000  \\
    \end{tabular*}  
    \\ 
    \hspace{3pt}
    \begin{tabular*}{0.917\hsize}{|llllll|} \hline\hline
        \multicolumn{6}{|c|}{Parameters of the tie lines} \\ \hline
        Tie line  ~~~~~~~~~~~~~ & 1-2 ~~~~~~ & 1-3 ~~~~~~ & 1-4 ~~~~~~~ & 2-3 ~~~~~~~ & 3-4 ~~~~~~~ \hspace{-4.4pt} \\ \hline
        Resistance (p.u.) & 0.13 & 0.03 & 0.20 & 0.03 & 0.15 \\
        Reactance (p.u.)  & 0.05 & 0.02 & 0.08 & 0.02 & 0.06 \\
        \hline 
        \end{tabular*} 
    }
    \label{tab-5-4-ap-1}  
\end{table}

\subsubsection{{Computation results of neural \ac{mlf}s and region $\mathcal{O}$}}

The obtained value of matrix $\bm{\Pi}$ is given by (\ref{eq-5-4-r1}), which satisfies the conditions in (\ref{eq-5-4-6}). 
Moreover, Fig. \ref{fig-5-4-rr34} illustrates  the projections of the obtained Lyapunov-like functions and the left-hand side of inequality (\ref{eq-5-4-7:3}) onto some planes. It can be observed that for each of these projections, the conditions in (\ref{eq-5-4-7}) for the Lyapunov-like functions and their Lie derivatives are all satisfied. 
\begin{equation}\label{eq-5-4-r1}
    {\bm{\Pi}^{\star} = 
    \begin{bmatrix}
        -0.5017&  0.1058 &  1.7162 &  1.3881  \\
        0.1651 &  -0.2734 &  0.3134 &  0.1931 \\
        0.1358 &   0.0305 & -2.0954 &  0.0414 \\
        0.2008 &   0.1370 &  0.0658 & -1.6225
    \end{bmatrix}
    }
\end{equation}

Moreover, solving the RHS of the inequality in (\ref{eq-5-4-sol:2}) yields that $\min_{\tilde{\bm{z}} \in \partial \mathcal{X}} \min_{ k \in \mathcal{Q}} \! V_k(\tilde{\bm{z}} | \bm{\xi}_k^*, \bm{x}_k^* ) = 0.0265$. Thus the stabilizable region $\mathcal{O} \!=\! \{ \bm{z} \!\in\! \mathcal{X} | \min_{k \in \mathcal{Q}} V_k(\bm{z} | \bm{\xi}_k^*, \bm{x}_k^* ) \!\leq\! 0.0265 \}$. {For greater insight, Fig. \ref{fig-5-4-rr5} shows the contour graphs of projections of function $\min_{k \in \mathcal{Q}} V_k(\bm{x} | \bm{\xi}_k^*, \bm{x}_k^* )$ onto some planes. In the figure, the contour-filled regions at the level of 0.0265, i.e., the gray areas, are the projections of region $\mathcal{O}$ onto the associated planes.} 

\begin{figure}[H]
    \centering
    \begin{subcaptionbox}{Lyapunov-like functions \vspace{5pt}}[0.72\textwidth]
        {\includegraphics[scale=0.79]{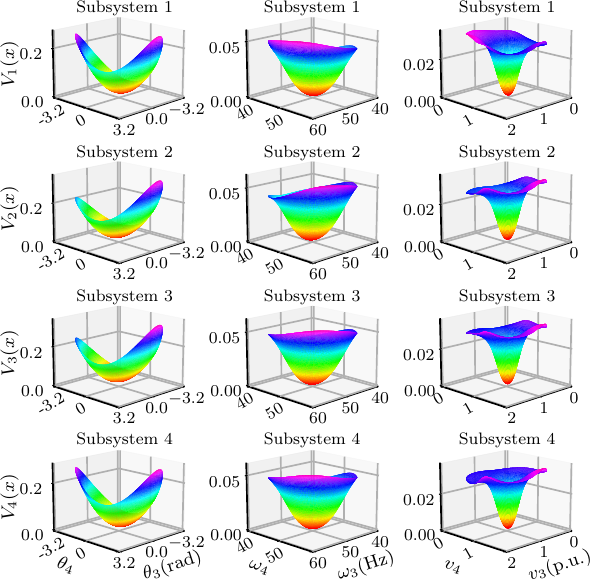}}
    \end{subcaptionbox}
    \hfill
    \begin{subcaptionbox}{$\mathcal{L}_k(\bm{x})$}[0.72\textwidth]
        {\includegraphics[scale=0.79]{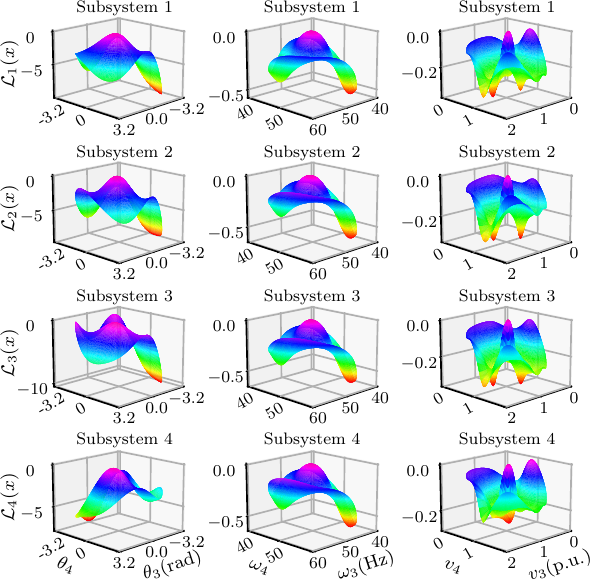}}
    \end{subcaptionbox}
    \caption{Projections of the obtained Lyapunov-like functions and $\mathcal{L}_k(\bm{x})$ (denoting the left-hand side of inequality (\ref{eq-5-4-7:3})) onto the $\theta_3$-$\theta_4$ plane (left side of each subfigure), $\omega_3$-$\omega_4$ plane (middle of each subfigure), and $v_3$-$v_4$ plane (right side of each subfigure), with the other state variables equal to their values in $\bm{x}_k^*$.}
    \label{fig-5-4-rr34}
\end{figure}

\begin{figure}[H]
	\centering
 	\includegraphics[scale=1]{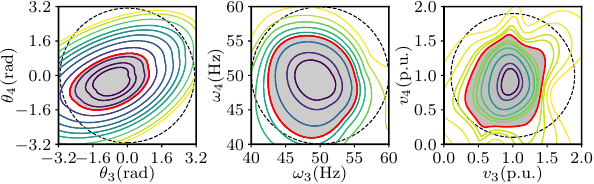}  
	\caption{{ 
    Contour graphs of the projections of function $\min_{k \in \mathcal{Q}} V_k(\bm{x} | \bm{\xi}_k^*, \bm{x}_k^* )$ onto the same planes as in Fig. \ref{fig-5-4-rr34}. The dash black circles are the boundary of region $\mathcal{X}$. The red lines are the contour lines with the level of 0.0265. 
     }}
	\label{fig-5-4-rr5}
\end{figure}

\subsubsection{Stabilization results}
Consider the scenario where the \ac{mmg} operates with the topology of subsystem 1 under the normal state and a 3-phase short-circuit fault occurs within MG 2 and is cleared at $t = 0$. Fig. \ref{fig-5-4-r1} (the left) shows the time-domain simulation results of $\theta_i$, $\omega_i$, and $v_i$ of the \ac{pcc} of each microgrid, without the proposed topology control. The \ac{mmg} is unstable with noticeable oscillations of voltage angles and frequencies in the \ac{pcc}s. 

\begin{figure}[h]
	\centering
 	\includegraphics[scale=1]{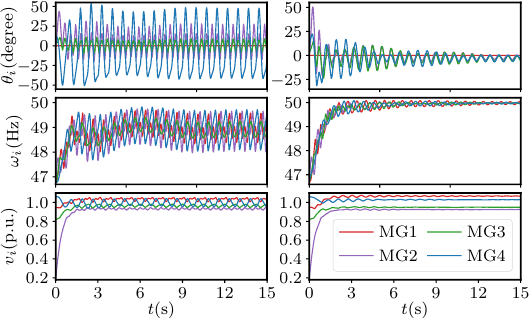}  
	\caption{Dynamic response of the \ac{mmg} without (the left) and with (the right) the topology control.}
	\label{fig-5-4-r1}
\end{figure}

\begin{figure}[h]
	\centering
 	\includegraphics[scale=1]{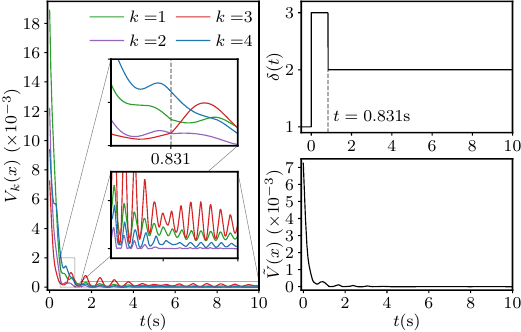}  
	\caption{Curves of $V_k(\bm{x})$-$t$ and $\tilde{V}(\bm{x})$-$t$, and $\sigma(t)$.}
	\label{fig-5-4-r2}
\end{figure}

Fig. \ref{fig-5-4-r2} (the left) shows the changes of the Lyapunov-like functions of all the subsystems after the fault is cleared. We can find that at $t = 0$, the value of $V_3(\bm{x})$ is the smallest and less than 0.0265, and accordingly the topology control is activated to first switch the \ac{mmg} from subsystem 1 to subsystem 3, by opening line 1-4. At $t=0.831$ s, the value of $V_2(\bm{x})$ exceeds that of $V_3(\bm{x})$. Thus the topology control further switches the \ac{mmg} from subsystem 3 to subsystem 2, by closing line 1-4 and opening line 1-3. When $t>0.831$ s, the value of $V_2(\bm{x})$ remains the smallest, indicating no further topology switching actions. 

The corresponding topology switching control signal $\sigma(t)$ is given by Fig. \ref{fig-5-4-r2} (the upper-right). 
Fig. \ref{fig-5-4-r2} (the lower-right) shows the change of the Lyapunov function value of the system with the corresponding topology switching actions. We can find that $\tilde{V}(\bm{x})$ will decay to zero as $t \to \infty$, i.e., the \ac{mmg} is stable. Clearly, according to the dynamic response of the \ac{mmg} with the topology control given by Fig. \ref{fig-5-4-r1} (the right), the system state will converge to an equilibrium as $t \!\to\! \infty$.

\subsection{Numerical Example: 10-Microgrid System}\label{sec-5-4-7}

The proposed topology control for MMG stabilization is further demonstrated using a 10-microgrid system designed based on the IEEE 123-node test feeders. Fig. \ref{fig-5-4-5} shows the system partition of the IEEE 123-node test feeder to form the MMG system, and the diagram at the interconnection level where 10 microgrids are connected through 11 tie lines. 
Detailed data of this test MMG system is given in Table \ref{tab-5-4-ap-2}. 
According to the criteria for determining $\mathcal{Q}$, subsystems 1 to 4 as shown in Fig. \ref{fig-5-4-5} are used for topology control. We denote the eligible topology set $\mathcal{Q}$ that contains all these four subsystems as $\mathcal{Q}_1$. For comparison, we also consider two additional cases of $\mathcal{Q}$, i.e., $\mathcal{Q}_2$ comprising subsystems 1, 2 and 3, and $\mathcal{Q}_3$ consisting of subsystems 1 and 2. Each of them involves fewer subsystems, thereby failing to meet the criterion (e) for determining $\mathcal{Q}$.

\begin{figure}[t]
	\centering
 	\includegraphics[scale=0.9]{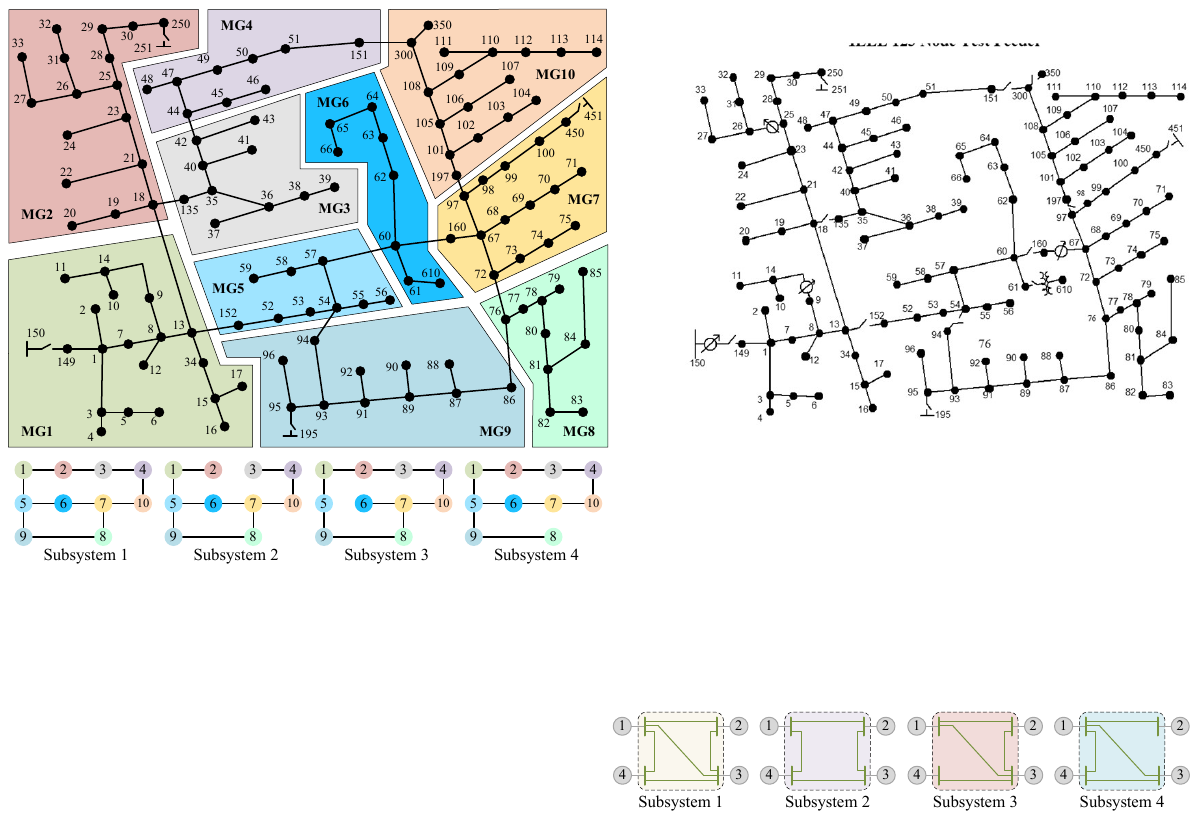}  
	\caption{{Diagram of the 10-microgrid system designed based on the IEEE 123-node test feeder, and its four subsystems.}}
	\label{fig-5-4-5}
\end{figure}

\begin{table}[h!]
    \centering
    \setlength{\tabcolsep}{2.7pt}  
    \setlength\extrarowheight{4pt}

    \caption{{Parameters of the 10-microgrid system}}
    \small{
    \begin{tabular*}{0.92\hsize}{|lllllllll|} \hline
    \multicolumn{9}{|c|}{Parameters of the converter interface of each microgrid} \\ \hline
    MG & $\tau_{{\rm p}, i}$(s)  & $k_{{\rm p}, i}$ &  $\tau_{{\rm q}, i}$(s) &  $k_{{\rm q}, i}$ & $\omega_i^{\rm s}$(p.u.) & $v_i^{\rm s}$(p.u.) & $p_i^{\rm s}$(MW) & $q_i^{\rm s}$(MW) \\ \hline 
    MG1 & 0.2 & 0.015 & 0.2 &  0.02 &  1.0 & 1.0400  & 0.6245    &  0.3068  \\
    MG2 & 0.5 & 0.025 & 0.5 &  0.02 &  1.0 & 0.9590  & -0.2600   &  -0.0800 \\
    MG3 & 0.4 & 0.020 & 0.4 &  0.02 &  1.0 & 0.9699  & -0.2000   &  -0.1000 \\
    MG4 & 0.5 & 0.020 & 0.5 &  0.02 &  1.0 & 1.0297  & 0.5000    &  0.3000  \\
    MG5 & 0.3 & 0.015 & 0.3 &  0.02 &  1.0 & 1.0027  & -0.1600   &  -0.0800 \\
    MG6 & 0.5 & 0.018 & 0.5 &  0.02 &  1.0 & 0.9761  & -0.4100   &  -0.2300 \\
    MG7 & 0.2 & 0.020 & 0.2 &  0.02 &  1.0 & 1.0246  & 0.8000    &  0.4500  \\
    MG8 & 0.5 & 0.024 & 0.5 &  0.02 &  1.0 & 0.9618  & -0.2850   &  -0.2000 \\
    MG9 & 0.3 & 0.018 & 0.3 &  0.02 &  1.0 & 0.9687  & -0.2600   &  -0.1300 \\
    MG10 &0.4 & 0.016 & 0.4 &  0.02  & 1.0 & 1.0088 & -0.3000   &  -0.1500 \\
    \end{tabular*} 
    \\
    \hspace{3pt}
    \begin{tabular*}{0.92\hsize}{|lllllll|} \hline\hline
        \multicolumn{7}{|c|}{Parameters of the tie lines} \\ \hline
        Tie line  ~~~~~~~~~~ & 1-2 ~~~~ & 1-5 ~~~~ &  2-3 ~~~~ & 3-4 ~~~~ & 4-10 ~~~~~ & 5-6 ~~~~   \hspace{5pt}  \\ \hline
        Resistance (p.u.) & 0.3914 &  0.0595 &  0.0695 &  0.0931 &  0.0839 &  0.3519  \\
        Reactance (p.u.)  & 0.8794 & 0.0428  & 0.0638  &  0.2192  & 0.0756  & 0.1231    \\ \hline
        Tie line ~~~~~~~~~~ &  5-9 & 6-7 & 7-8 & 7-10 & 8-9 &  \\ \hline
        Resistance (p.u.) & 0.0869 &  0.0869 &  0.0939 &  0.0848 &  0.3285  &    \\
        Reactance (p.u.)  & 0.0783  & 0.0783  & 0.2166  & 0.0667  & 0.7582  &     \\        
        \hline 
        \end{tabular*} 
    }
    \label{tab-5-4-ap-2}  
\end{table}

\subsubsection{Computation time and stabilizability with different $\mathcal{Q}$} 
For the topology control with $\mathcal{Q} = \mathcal{Q}_1$ which utilizes the largest number of subsystems, computing $\mathcal{X}^*$ takes 0.293 s, computing the RHS of (\ref{eq-5-4-sol:2}) takes 14.257 s, and performing Algorithm \ref{alg-5-4-1} takes 1623.667 s that includes 
314.728 s and 1303.590 s separately for line 4 and line 7 of Algorithm 1. Thus Condition \ref*{cond-5-4-1} is satisfied given $T_1 = 30$ min $>T_2 = 1638.217$ s.

To analyze the stabilizing capability of the topology control with different $\mathcal{Q}$, we consider five different disturbances, named D1 to D5. 
For all the disturbances, the MMG operates with the topology of subsystem 1 under the normal state; and a 3-phase short-circuit fault occurs within a microgrid and is cleared at $t = 0$. 
For D1 to D5, the faulty microgrids are MG1, MG2, MG6, MG8, and MG9, respectively. 
Table \ref{tab-5-4-1} gives the values of $\min_{k \in \mathcal{Q}}\! V_k(\bm{x}_0 | \bm{\xi}_k^*, \bm{x}_k^* )$ for different $\mathcal{Q}$ and disturbances. 
It is found that when $\mathcal{Q} = \mathcal{Q}_1$, for all the disturbances, the value of $\min_{k \in \mathcal{Q}}\! V_k(\bm{x}_0 | \bm{\xi}_k^*, \bm{x}_k^* )$ is smaller than the RHS value of (\ref{eq-5-4-sol:2}), indicating $\bm{x}_0 \in \mathcal{O}$. 
In contrast, when $\mathcal{Q} = \mathcal{Q}_2$, $\bm{x}_0 \in \mathcal{O}$ only for disturbances D1, D2 and D5; and when $\mathcal{Q} = \mathcal{Q}_3$, $\bm{x}_0 \in \mathcal{O}$ only for disturbances D1 and D2. 
Thus the topology control utilizing all the four subsystems is able to stabilize the MMG system after any of the five disturbances, while reducing the number of subsystems utilized makes the system after some disturbances not stabilizable. 
Therefore, it is necessary to include as many subsystems as possible in $\mathcal{Q}$ following criterion (e) such that the stabilizing capability of the topology control can be leveraged maximally.

\begin{table}[t!]
    \centering
    \setlength{\tabcolsep}{2.93pt}  
    \setlength\extrarowheight{3pt} 

    \caption{{Results of region $\mathcal{O}$ and stabilizability with different $\mathcal{Q}$.}}
    \small{
    \begin{tabular*}{0.78\hsize}{ccccccc} \hline\hline
    \multirow{2}{*}{$\mathcal{Q}$} & \multirow{2}{*}{\shortstack{The RHS value \\of (\ref{eq-5-4-sol:2})}}  & \multicolumn{5}{c}{The value of $\min_{k \in \mathcal{Q}}\! V_k(\bm{x}_0 | \bm{\xi}_k^*, \bm{x}_k^* )$    } \vspace*{0.4pt}  \\ \cline{3-7}
    & & D1 & D2 & D3 & D4 & D5   \\ \hline 
    $\mathcal{Q}_1$ & 0.0662 & 0.0451 & 0.0378 & 0.0594 &  0.0601 & 0.0515       \\
    $\mathcal{Q}_2$ & 0.0834 & 0.0627 & 0.0691 & 0.0965 &  0.1137 & 0.0741        \\
    $\mathcal{Q}_3$ & 0.1170 & 0.0993 & 0.0805 & 0.1640 &  0.1509 & 0.1282         \\ \hline\hline
    \end{tabular*} 
    }
    \label{tab-5-4-1}   
\end{table}

\subsubsection{Stabilization results with $\mathcal{Q} = \mathcal{Q}_1$ for different disturbances}

Fig. \ref{fig-5-4-rr1-r2} (the left) shows the time-domain simulation results of $\theta_i$ of the PPC of each microgrid for disturbance D1 to D5, all without the proposed topology control. 
For each disturbance, the MMG system is unstable with voltage angle divergence. 
The switching control signal $\sigma(t)$ of the topology control with $\mathcal{Q} = \mathcal{Q}_1$ for each disturbance is given in Fig. \ref{fig-5-4-rr1-r2} (the right). 
Fig. \ref{fig-5-4-rr1-r2} (the middle) shows the dynamic response of the MMG system with the topology control. It can be found that for all disturbances D1 to D5, the system state converges to an equilibrium point before $t = 9$ s, indicating the proposed topology control effectively stabilizes the MMG system. 
 
Moreover, Fig. \ref{fig-5-4-rr1-r2-vol} provides the dynamic response of $v_i$ of each microgrid's PPC for disturbance D1 to D5. It can be found that without the proposed topology control, all voltage magnitudes exhibit continuous oscillations, and some of them consistently remain below 0.9 p.u.. In this case, the low voltage ride-through control of the converter interface will typically be activated, causing tripping of the associated converters \citep{4-1689}. In contrast, under the proposed topology control, all voltage magnitudes rebound rapidly to above 0.9 p.u. using about a second following disturbance clearance, and further converge to their steady-state values all between 0.9 p.u. and 1.1 p.u.. As a result, converter tripping can be prevented.

\begin{figure}[t!]
	\centering
 	\includegraphics[scale=0.95]{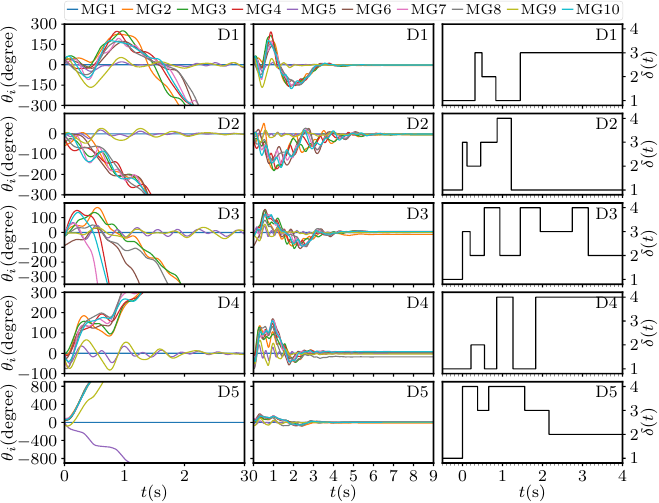} 
	\caption{Dynamic response of $\theta_i$ in the MMG without (left side) and with (middle) the topology control, and $\sigma(t)$ (right side) for disturbance D1 to D5.}
	\label{fig-5-4-rr1-r2}
\end{figure}

\begin{figure}[t!]
	\centering
 	\includegraphics[scale=0.95]{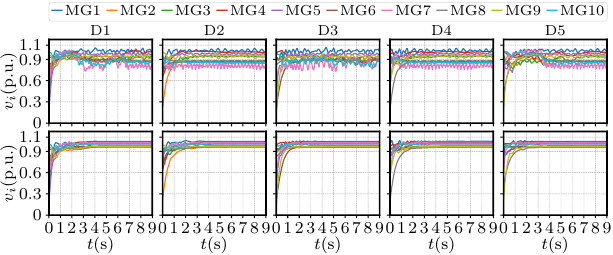} 
	\caption{{Dynamic response of $v_i$ in the MMG without (top) and with (bottom) the topology control, for disturbance D1 to D5.}}
	\label{fig-5-4-rr1-r2-vol}
\end{figure} 
\graphicspath{{chapter_8/Figs/}}
\chapter{Summary}\label{chapter-8}



This monograph has comprehensively introduced the fundamentals and recent progress in the field of power grid topology control. A basic prerequisite of the topology control process is the satisfaction of specific network topology constraints, which varies depending on the particular topology control problem. 
In Chapter \ref{chapter-3}, we have presented the mathematical formulations of four common types of network topology constraints, including network connectedness required by transmission networks, radiality required by distribution networks, network connectedness with contingencies and multi-island structure under emergency conditions. In particular, we have introduced, with detailed theoretical foundations, the electric flow-based connectedness formulation developed by the authors for both contingency-free and contingency cases. 

Topology control in current power grids is mainly performed under steady-state conditions. We have discussed steady-state topology control for both transmission networks and distribution networks in Chapter \ref{chapter-4} and Chapter \ref{chapter-5}, respectively. 
Regarding steady-state transmission topology control, in Chapter \ref{chapter-4}, we have first described the Braess’s paradox-type phenomena in transmission networks using two examples related to economic dispatch performance and system stability. This counterintuitive behaviour provides motivation for the effectiveness of steady-state transmission topology control. 
After introducing two of the most basic forms of steady-state transmission topology control, namely optimal transmission switching, and substation-level transmission topology control, we have reviewed the state-of-the-art developments in the field. Specifically, steady-state topology control has been examined for its potential to enhance system security, stability, resilience and economic efficiency. In particular, the deterministic and uncertainty-aware approaches for steady-state topology control to improve economic efficiency has been reviewed, respectively. Furthermore, various solution methodologies have been surveyed, including approximation/relaxation-based methods, heuristic methods, and emerging learning-based methods. 
A major challenge for steady-state transmission topology control with high renewable penetration arises from the multiple uncertainties in renewable generation and contingencies. Accordingly, we have introduced the three-stage OTS method developed recently by the authors to address this challenge. 

Furthermore, for steady-state distribution topology control, in Chapter \ref{chapter-5}, we have first introduced the basic distribution network reconfiguration model along with the classical heuristic solution method known as the successive branch reduction method. Following that, we have comprehensively reviewed the existing solution methods for steady-state distribution topology control problems, including approximation/relaxation-based methods, heuristic methods, and learning-based methods. 
In view of the challenge in simultaneously achieving optimality, scalability, and applicability in solving steady-state distribution topology control problems, we have presented the hybrid learning-heuristic solution paradigm recently proposed by the authors. By combining the heuristic pattern of classical successive branch reduction methods with the reinforcement learning, this paradigm offers enhanced performance across multiple aspects. 

The practical implementation of steady-state topology control gives rise to the issue of network topology transition, which is discussed in Chapter \ref{chapter-6}. We have emphasized the emerging needs to address the network topology transition problem. Following the formal description of the network topology transition problem and an analysis of the transition process, we have introduced the bumpless topology transition method developed recently by the authors. 
This method achieves superior topology transition performance by considering both static and dynamic system performance during the transition process. 

Compared to steady-state topology control, transient topology control is still at an early stage of research and development. Intentional Controlled Islanding (\ac{ici}) of transmission network is one of the few transient topology control methods that have been extensively developed and applied in practical systems. 
In Chapter \ref{chapter-7}, we have first introduced the fundamental methods for \ac{ici}, including the slow-coherency grouping method and optimization model for network separation. Then, we have reviewed the state-of-the-art methods of \ac{ici} and the development of connectedness-preserved transient topology control techniques. Most connectedness-preserving transient topology control methods remain conceptual and lack a rigorous theoretical foundation. We have provided a detailed introduction of the authors' recent advancements in connectedness-preserved transient topology control. 
Through the integration of switched system theory and machine learning techniques, connectedness-preserving transient topology control for the stabilization of multi-microgrid systems has been achieved, supported by a solid theoretical foundation. 

\begin{acknowledgements}
    The work is largely based on the first author's PhD and postdoctoral work supervised by Prof. David J. Hill and Prof. Yan Xu respectively. 
    We gratefully acknowledge Prof. Yue Song and Prof. Tao Liu, who were collaborators on some of the papers on which parts of this monograph are based. 
    We would also like to thank Prof. Marija D. Ilić (Editor-in-Chief) and anonymous reviewers for their helpful suggestions, as well as Mr. Mark de Jongh and Mr. Danny Kielty for managing the publication of this monograph.
    This work was supported in part by Ministry of Education (MOE), Republic of Singapore, under grant AcRF TIER-1 RT9/22. 
\end{acknowledgements}

\backmatter  

\printbibliography

\end{document}